%% file: paper.tex
\documentclass[citeautoscript,aps,pre,reprint,superscriptaddress,longbibliographyaps,nofootinbib]{revtex4-2}

\usepackage{booktabs}
\usepackage{graphicx}
\graphicspath{{figures/}}
\usepackage[dvipsnames]{xcolor}
\usepackage{upgreek}
\usepackage{float}
\usepackage{lipsum}
\usepackage{mathrsfs}
\usepackage{amsmath, amssymb, amsthm}
\usepackage{mathtools}
\usepackage{bbm}
\usepackage{dsfont}
\usepackage{bm}
\usepackage{enumitem}
\usepackage{braket}
\usepackage{url}
\usepackage{framed}
\usepackage{adjustbox}

\usepackage[most]{tcolorbox}
\usepackage{xparse}
\usepackage{lipsum}
\usepackage{tikz}
\usetikzlibrary{positioning,calc,decorations.pathreplacing,shapes.geometric,matrix}

\usepackage[colorlinks=true,linkcolor=teal,citecolor=teal,urlcolor=teal]{hyperref}

\usepackage[dvipsnames]{xcolor}
\definecolor{BlueRome}{HTML}{4287f5}
\definecolor{cyanRed}{RGB}{252, 5, 120}
\hypersetup{colorlinks=true, linkcolor=BlueRome, citecolor=BlueRome, urlcolor=cyanRed}
\usepackage[capitalise]{cleveref}

\usepackage{thm-restate}

\newcommand{\id}{\mathbbm{1}}
\newcommand{\tr}{\operatorname{Tr}}
\newcommand{\rrangle}{\rangle \! \rangle}
\newcommand{\llangle}{\langle \! \langle}

\newcommand{\T}{\operatorname{T}}

\newcommand{\dd}{\operatorname{d}\!}
\newcommand{\Tr}{\operatorname{Tr}}

\newcommand*{\nn}{\nonumber}

\definecolor{labelgray}{gray}{0.45}

\newtheorem{theorem}{Theorem}
\newtheorem{corollary}{Corollary}
\newtheorem{lemma}{Lemma}

\newtheorem{application}{Application}

\crefname{application}{Application}{Applications}
\Crefname{application}{Application}{Applications}

\input{figures/tn_preamble.tex} %colour definitions etc.

\input{figures/gaussian_conj.tex} %UcUdag
\input{figures/orthog_action.tex} %Rc
\input{figures/string_tensor.tex} %\gamma_s
\input{figures/c_product.tex} %\prod_i c_i
\input{figures/string_conjugation.tex} %U\gamma_s Udag
\input{figures/c_product_conj.tex} %U\prod_i c_i Udag
\input{figures/c_product_expanded_conj.tex} %Uc_1 Udag U c_2 Udag U ...
\input{figures/R_prod_conj.tex} %R c_1 R c_2 ...
\input{figures/c_prod_conj_2_replica.tex}

\input{figures/c_product_2_rep_conj_vert.tex}
\input{figures/two_replica_R_conj.tex}
\input{figures/weingarten_p2_part.tex}
\input{figures/weingarten_p1_part.tex}

\input{figures/weingarten_pairing_examples.tex}
\input{figures/weingarten_pairing_input.tex}
\input{figures/weingarten_pairing_output.tex}
\input{figures/contingency_eg_exp.tex}
\input{figures/contingency_eg_exp_long.tex}
\input{figures/contingency_eg_consolid.tex}
\input{figures/loe_operator.tex}
\input{figures/otoc_operator.tex}
\input{figures/full_loe_contract.tex}

\input{figures/loe_pauli_diagram.tex}
\input{figures/loe_pauli_diagram_expanded.tex}
\input{figures/loe_delta_relations.tex}
\input{figures/loe_delta_relations_col.tex}
\input{figures/loe_class_1.tex}

\input{figures/graph_id.tex}
\input{figures/graph_pair.tex}
\input{figures/graph_looped.tex}
\input{figures/graph_combined.tex}
\input{figures/otoc_contraction_k.tex}

\input{figures/otoc_coeffs_4.tex}
\input{figures/ose_class.tex}

\input{figures/pairing_classes.tex}
\input{figures/pairing_e1_w2.tex}
\input{figures/pairing_e2_w2.tex}
\input{figures/pairing_e3_w2.tex}
\input{figures/pairing_e1_w4.tex}
\input{figures/pairing_e2_w4.tex}
\input{figures/pairing_e3_w4.tex}

\input{figures/colouring_orbit.tex}
\input{figures/application_boundary.tex}
\input{figures/identity_two.tex}
\input{figures/swap.tex}
\input{figures/bell_state.tex}
\input{figures/contingency_inputs.tex}

\begin{document}

\title{Exact Operator Complexity Measures from Random Matchgate Unitaries}

\author{Gregory A. L. White}
\email{gregory.white@fu-berlin.de}
\affiliation{Dahlem Center for Complex Quantum Systems, Freie Universit\"at Berlin, 14195 Berlin, Germany}

\author{Jens Eisert}
\email{jense@zedat.fu-berlin.de}
\affiliation{Dahlem Center for Complex Quantum Systems, Freie Universit\"at Berlin, 14195 Berlin, Germany}
\affiliation{Helmholtz-Zentrum Berlin f{\"u}r Materialien und Energie, Berlin, Germany}

\author{Neil Dowling}
\email{ndowling@uni-koeln.de}
\affiliation{Institut f\"ur Theoretische Physik, Universit\"at zu K\"oln, Z\"ulpicher Strasse 77, 50937 K\"oln, Germany}

\begin{abstract}
Matchgate circuits describe classically tractable dynamics in a physically relevant setting while nonetheless exhibiting features characteristic of complex quantum systems, such as high amounts of magic. This invites the broader question: to what extent are typical complexity measures saturated under trivial dynamics? Here, we address this by studying the behaviour of different Heisenberg-picture complexity measures in random matchgate ensembles. Our main technical contribution is a tractable and generalisible free-fermionic unitary Weingarten calculus formulated entirely in the Majorana basis. This method is tailored to computing moments of operators, and simplifies to a combinatorics problem which depends on the boundary conditions prescribed by a given operator functional. Using this framework, we derive closed-form expressions for the typical local-operator entanglement, operator stabiliser entropy, and higher-order out-of-time-ordered correlators. For extensive initial Majorana strings, both operator entanglement and stabiliser entropy obey volume laws, even though the underlying circuits remain classically simulable. Out-of-time-ordered correlators can likewise become exponentially small, showing that their late-time saturation value alone does not witness computational hardness and constituting a classically tractable example of emergent free independence. Finally, we relate these separations to operator resource theories, showing how operator entanglement beyond the Gaussian baseline lower-bounds the number of non-Gaussian gates in doped matchgate circuits. Our results present a practical toolbox for computing exact averages over random matchgates, while clarifying the distinction between quantum resources versus simulation complexity, and their dependence on the representation being exploited.
\end{abstract}

\maketitle

\section{Introduction}\label{sec:introduction}
Operator dynamics plays a central role in the modern understanding of many-body systems {and their non-equilibrium behaviour~\cite{1408.5148,PolkovnikovReview}.} There are two immediate advantages to working in the Heisenberg picture. On the one hand, we gain access to new physical phenomena, inaccessible {directly} to any Schr\"odinger picture-based protocol. 
This includes those predicted by theories of operator hydrodynamics~\cite{Keyserlingk2018,Khemani2018}, higher-order correlations of ergodic systems~\cite{Pappalardi2022freeETH,Foini2019}, and information scrambling~\cite{Hosur2016,Swingle2016,Mi2021}. On the other hand, as quantified, for example, by the corresponding resource theories of entanglement~\cite{Dowlin2024LOE-OSRE} or non-stabiliserness (`magic')~\cite{dowling2024magicheis}, classical simulation in the Heisenberg picture can offer an exponential advantage {over their Schr{\"o}dinger counterpart}. A particularly emblematic example is constituted by that of free-fermionic dynamics. Any initially local operator leads to at-fastest logarithmic growth of operator entanglement~\cite{Prosen2007a,Prosen2009,Dubail_2017}---implying efficient operator tensor-network simulation~\cite{Hartmann2009,dowling2026classicalsim}---whereas state entanglement in a corresponding quench experiment satisfies a volume law~\cite{Calabrese_2005}, meaning an exponential cost of any matrix-product state-based technique~\cite{Schuch_2008}. What is the source of this stark difference? Does the inherent structure of fermionic unitaries lead to the low-operator-resource phenomenon?

\begin{figure*}[t!]
  \centering
  \includegraphics[width=\linewidth]{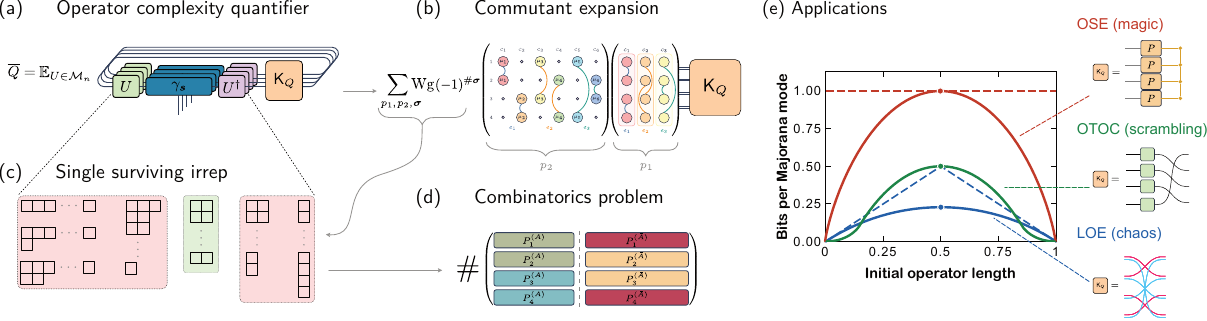}
  \caption{An overview of steps taken and contributions in this work. (a) We consider operational replica quantities of the form $\Tr[(U \gamma_{\bm{\nu}} U^\dagger)^{\otimes k} \mathsf{K}_Q]$ averaged over the ensemble of free-fermionic Gaussian unitaries. $\gamma_{\bm{\nu}}$ is an initial operator whose Heisenberg evolution we probe according to some quantity $Q$ via a boundary operator $\mathsf{K}_Q$. (b) The first panel average is converted into a sum over the free-fermionic commutant, for which we use the Brauer algebra as our basis. (c) We show (\cref{lem:unique-weingarten-sector}) that the resulting expression simplifies substantially, with support on only a single irrep sector, allowing for direct evaluation of the known Weingarten eigenvalue on this sector. (d) The problem thus reduces to a question of counting the basis elements and their prefactors which survive the constraints of the boundary operator $\mathsf{K}_Q$. (e) We compute exact finite and asymptotic expressions for three choices of operator complexity, corresponding to three different $\mathsf{K}_Q$: the \emph{operator stabiliser entropy} (OSE), which quantifies Heisenberg-picture magic; \emph{out-of-time-order correlators}  (OTOCs), which probe scrambling; and \emph{local-operator entanglement}  (LOE), which is conjectured to be a dynamical indicator of many-body chaos. Note that the displayed OTOC data is on a negative-logarithm scale. }
  \label{fig:summary}
\end{figure*}

Gaussian unitaries generalise the notion of free fermions to any Hamiltonian involving only quadratic Majorana terms. Under the Jordan-Wigner transform, such unitaries are one-to-one with matchgate circuits~\cite{Jozsa_2008}. States produced by such circuits/dynamics are characterised by a polynomial-size correlation matrix, making them classically tractable to simulate. A natural avenue for exploring operator complexities for generic Gaussian dynamics is to consider universal, long-time features when averaging over random realizations. Random matchgate circuits of local, two-qubit gates equilibrate to a globally random Gaussian in linear depth~\cite{wan_matchgate_2023,Dias2024classicalsimulation,zhao2023learningopt,braccia2025optimalhaarrandomfermionic}. 
However, averaging even over this global Gaussian quickly becomes difficult. Recent progress in characterising the matchgate commutant has shown that its dimension scales with system size $N$~\cite{sierant2026theorymatchgatecommutant,braccia2026commutantfermionicgaussianunitaries}.
This means that even for a fixed number of replicas, an average moment operator becomes analytically intractable through direct calculation, 
in contrast to the Haar unitary case~\cite{collins_integration_2006}. Moreover, the orthonormal commutant 
basis has non-trivial overlap with the Majorana basis~\cite{sierant2026theorymatchgatecommutant,braccia2026commutantfermionicgaussianunitaries}, and so does not obviously help in the task of computing fermionic operator moments. 

Here, we resolve some of these issues and develop methods to analytically compute operator complexities from ensembles of uniformly random Gaussian unitaries. Namely, we compute the matchgate twirl of replicas of arbitrary-length initial Majorana strings; i.e., any Pauli string under the Jordan-Wigner transform. We adopt a first-principles replica approach, leading to our main technical contribution of a closed formula for the replica twirl operator in the Majorana basis, extending the known constructions for $k\leq 3$ replicas~\cite{wan_matchgate_2023, chapman2025fermioni}. Although we only explicitly calculate $k=4$ replica quantities, our \emph{structural} simplification is fully general, and opens the door to computing such quantities are higher replica count (c.f.~\cref{lem:unique-weingarten-sector}).
A key point in our approach is that the coefficients of the replica operator reduce to a counting problem of certain \textit{contingency tables}, {a well-studied object in combinatorics~\cite{Barvinok2008Contingency}.} We show that the complexity of the calculation also reduces significantly upon imposing boundary conditions from the choice of operator, which corresponds to imposing extra symmetry constraints on the tables. 

To showcase the utility of our method, we proceed to compute examples from many-body physics. We calculate the matchgate average of three distinct operator complexities: \emph{(local-)operator entanglement entropy} (LOE), \emph{operator stabiliser entropy} (OSE), and higher-order \emph{out-of-time-order correlators} (OTOCs). Each of these quantities measures a different aspect of how a Heisenberg operator spreads through a system, respectively: how non-local it is (`entanglement'~\cite{Zanardi2001}), how delocalised it is in the Pauli basis (`magic'~\cite{dowling2024magicheis}), and to what degree it does not commute with a reference operator (`free independence'~\cite{Pappalardi2022freeETH}). Each quantity is considered a dynamical indicator of quantum chaos, and each depends non-trivially on $k \geq 4$ copies of the evolved operator, \textit{a priori} inaccessible using existing techniques. We derive closed-form expressions for each quantity at $k=4$ and provide a general recipe for higher-replica versions. In~\cref{fig:summary} we supply a plot summarising our results as well as the methodology of obtaining them.

Our results have implications for classical simulation and quantum algorithms. LOE is the entropy of the Schmidt spectrum of the \emph{matrix-product operator} (MPO) representation of an operator, and so its scaling measures the cost of the MPO  representation~\cite{dowling2026classicalsim}. The OSE is the entropy of the squared coefficients of an operator expanded in the Pauli basis, a measure of the operator's magic resource. It has an analogous operational interpretation compared to LOE, but for Pauli propagation algorithms~\cite{dowling2024magicheis}, whereby operator Pauli components with small coefficients are iteratively discarded~\cite{Rakovszky2022,begusic2024realtime,schuster2024polynomialtime,rudolph2025paulipropag}. In each case, a volume-law scaling of the R\'enyi entropies with index $\alpha\geq 1$, rigorously implies the asymptotically prohibitive cost of the corresponding simulation algorithm up to inverse polynomial errors~\cite{dowling2024magicheis,dowling2026classicalsim}. We find that for extensive initial Majorana strings, both OSE and LOE saturate a volume law on average for deep matchgate circuits, and therefore cannot be represented efficiently using tensor networks or sparse Pauli strings. A notable feature is that while the OSE saturates the upper bound imposed by the fermionic symmetry constraints, the LOE satisfies a suppressed volume-law, with asymptotically less than half the entanglement compared to the (maximal) unitary Haar operator Page curve~\cite{dowling2026page}; cf. the dashed lines in~\cref{fig:summary}(e). This means that while deep, random matchgates lead to a relatively uniform spread of Pauli strings (comparable to the \emph{operator Porter-Thomas distribution}~\cite{dowling2026noise}), the entanglement of these strings is relatively suppressed.

OTOCs quantify how much a (usually) initially local operator fails to commute with a reference observable. While not directly related to the cost of any algorithm, they have recently been measured in experiments and conjectured to be generically hard to simulate classically~\cite{Abanin2025}, surely in the worst case and possibly also in a meaningful sense of average-case. As such, they are an excellent venue for studying the boundary between classically simulable regimes and those where one can expect a quantum advantage in a precise sense. Most relevant to our setting, at long times higher-order OTOCs provide a bridge between random unitaries and chaotic dynamics: in both cases the OTOCs saturate to a universal expression described by free probability. This includes, for instance, both Haar random unitary circuits~\cite{nica2006lectures,fava2023designsfreeprobability,dowling2025freeindep,Claeys2026} and Hamiltonians satisfying (a generalisation of) the \emph{eigenstate thermalisation hypothesis} (ETH)~\cite{Foini2019,Pappalardi2022freeETH,FritzschLattices2025}. Here, we find an exponentially small saturation value of the $8$-point OTOC of extensive Majorana strings over random matchgates, a universal signature of free independence. Our results affirm the intuition that the long-time saturation of the OTOCs is not inherently hard. When viewed in the context of recent OTOC experiments~\cite{Abanin2025}, this naturally implies that its complexity must lie in its finite spatiotemporal profile, if anywhere. In this sense, random matchgates constitute a classically tractable control ensemble for such experiments. These results make evident that even an exponentially small higher-order OTOC, together with non-trivial interference between exponentially many Pauli components, cannot by itself certify quantum computational advantage.

More broadly, these results provide an exactly solvable setting in which distinct notions of operator complexity can be separated from the complexity of simulating the underlying circuit. Matchgate circuits remain efficiently classically simulable, yet an evolved operator may become highly delocalised in the Pauli basis and, for extensive initial strings, extensively entangled in operator space. Thus, neither large operator magic, large operator entanglement, nor strong scrambling alone constitutes a witness of computational hardness: simulability is sensitive to the structure and representation exploited by a classical algorithm. This provides evidence to the suspicion of non-Gaussianity being an orthogonal resource required for quantum advantage, independent of both magic and entanglement, even in the Heisenberg picture.

The remainder of this work is laid out as follows. In~\cref{sec:majorana} we supply some background on Majorana strings, matchgate unitaries and the graphical calculus we employ. In ~\Cref{sec:twirls} we develop our main technical tool: the reduction of the $k$-fold matchgate twirl of a replicated Majorana string to a signed count over binary contingency tables with fixed margins, wherein we derive the exact respective closed-form coefficient in the case of four replicas. In~\cref{sec:apps} we apply this machinery to the computation of LOE, the OSE and higher-order OTOCs in turn, in each case deriving an exact expression and then extracting its corresponding asymptotics. Next, in~\cref{sec:reource_theories} we discuss the consequences of our results for classical simulation, and contextualise them through an analysis of operator resource theories. Lastly, in ~\cref{sec:discussion} we speculate on some fruitful open directions.

\section{Background and Notation}\label{sec:majorana}

We start by detailing some background on the algebra of Majoranas and their transformation via matchgate unitaries. As we proceed, we introduce a graphical, tensor-network notation alongside the equivalent algebraic expressions that is both pedagogical and will help us develop our main technical results. 
Throughout this work, we consider a Hilbert space of $N$ qubits with total dimension $D=2^N$. Consider a basis of $2N$ Majorana operators, or \emph{spinors}, on $N$ qubits, satisfying the anti-commutation relations
\begin{equation}
  \{c_\alpha,c_\beta \} = 2 \delta_{\alpha,  \beta} \id. \label{eq:antiComm}
\end{equation}
These can be mapped to Pauli strings of a one-dimensional spin chain through the Jordan-Wigner transformation~\cite{Jordan1928-ko,Jozsa2008},
\begin{align}\label{eq:JW}
  &c_{2\alpha-1} = \sigma_z^{(1)} \dots \sigma_z^{(\alpha-1)} \sigma_x^{(\alpha)},\\
  &c_{2\alpha} = \sigma_z^{(1)} \dots \sigma_z^{(\alpha-1)} \sigma_y^{(\alpha)}
\end{align}
where $\sigma_z^{(\alpha)}$ is the Pauli $Z$-operator acting on qubit $\alpha \leq N$. Majorana strings of length $L$ are defined as the product of $L$ distinct Majorana operators, i.e., $\gamma_{\bm{\nu}} = c_{\nu_1}c_{\nu_2}\cdots c_{\nu_L}$. We call a Majorana string finite or extensive when $\min(L,2N-L)=\mathcal{O}(1)$ or $\min(L,2N-L)=\Theta(N)$, respectively. 
We depict these strings graphically as follows:
\begin{equation}
 \stringTensor = \cProduct,
\end{equation}
where the vertical wires represent the Majorana spinor label (each of size $2N$), while the horizontal wires represent matrix (physical) indices of the operator (of size $2^N$). Often, we will apply the anti{-}commutation relation~\cref{eq:antiComm} to canonically order the spinor, $\nu_1 < \nu_2 < \dots < \nu_L$, in which case it picks up a $\pm 1$ phase. However, unlike the usual convention, we do not necessarily take the strings to be canonically ordered, and we denote their label by a vector, $\bm \mu$. 

Important examples of Majorana strings are the single-qubit Pauli operators, which can be expressed as the products:
\begin{align}
  &\sigma_x^{(\alpha)} = (-i)^{\alpha-1}c_1 c_2 \dots c_{2\alpha-2} c_{2\alpha-1}{,}\\
  &\sigma_y^{(\alpha)} = (-i)^{\alpha-1}c_1 c_2 \dots c_{2\alpha-2} c_{2\alpha} {,}\\
  &\sigma_z^{(\alpha)} =-i c_{2\alpha-1} c_{2\alpha}.
\end{align}
Through the Jordan-Wigner transform, it is immediate to deduce that, together with the identity operator $\id$, the full set of ordered and normalised Majorana strings of all lengths form an orthonormal operator basis according to the Hilbert-Schmidt inner product, $\tr[\gamma_{\bm{\nu}}^\dagger \gamma_{\bm{\nu}'}] = \delta_{\bm{\nu} \bm{\nu}'} D$. \par 
Moreover, extending the anti-commutation relations of~\cref{eq:antiComm} to arbitrary, canonically-ordered strings means that we need to keep track of a possible phase. We state the useful property that $\gamma_{\bm{\nu}}\gamma_{\bm{\nu}'} = (-1)^{|\bm{\nu}||\bm{\nu}'| - |\bm{\nu}\cap \bm{\nu}'|}\gamma_{\bm{\nu}'}\gamma_{\bm{\nu}}$, where $|\bm{\nu}\cap \bm{\nu}'|$ is the number of shared spinor components. Alternatively, one can express this as $\gamma_{\bm{\nu}}\gamma_{\bm{\nu}'} = (-1)^{I(\bm{\nu},\bm{\nu}')} \gamma_{\bm{\nu}\Delta \bm{\nu}'}$, where $I(\bm{\nu},\bm{\nu}')$ is the \emph{inversion number} $I(S,T)\coloneqq \#\{(a,b)\in S\times T : a > b\}$. Every member of $\bm{\nu}'$ moves left into $\bm{\nu}$, incurring a sign flip for every $\bm{\nu}$ index greater than it, until it is either in numerical order or it collides with the same mode and produces the identity. Hence the number of transpositions is exactly $I(\bm{\nu},\bm{\nu}')$ and the remaining modes are $\bm{\nu}\Delta \bm{\nu}' = (\bm{\nu}\cup\bm{\nu}')\backslash (\bm{\nu}\cap\bm{\nu}')$.

The ensemble we consider in this work is the matchgate unitaries. These are defined as unitary operators generated from a Hamiltonian that involves only terms which have a Majorana length of $L=2$, together with a parity-flipping operator. Key for us, such a unitary may equivalently be defined through its adjoint action on the basis of Majorana spinors,
\begin{equation}
 U c_\alpha U^\dagger = \sum_\beta R_{\alpha,  \beta} c_\beta, \label{eq:matchgate_action}
\end{equation}
where $R \in \mathrm{O}(2N)$, the orthogonal group of $2N\times 2N$ matrices\footnote{Note that there are related groups that $R$ can be chosen to be, representing different symmetries of the system. $R \in \mathrm{SO}(2N)$ represents the fermionic gaussian unitaries, $U = \mathrm{exp}( \sum_{\alpha,\beta =1}^{2N} h_{\alpha,  \beta} c_\alpha c_\beta) $ for a real antisymmetric matrix $h$. Adjoining a reflection operation leads to what we call the matchgate group in this work, $\mathrm{O}(2N)$. The group $ U(N)$ corresponds to fermionic Gaussian unitaries with a particle-preserving symmetry; $\mathrm{U}(N) \subset \mathrm{SO}(2N) \subset \mathrm{O}(2N)$. }. We define the set of all such unitaries, as characterised by $R \in \mathrm{O}(2N)$, to be the matchgate group $\mathcal{M}(2^N)$, and will interchangeably call these (fermionic) Gaussian unitaries. Graphically, this transformation can be depicted as 
\begin{equation}
 \gaussConj = \orthogAct.
\end{equation}
Here, we notice that matchgate unitaries $U$ propagate through a spinor via their adjoint action, leading to its corresponding Majorana representation as a local (on the spinor indices) matrix multiplication by $R$. This allows us to determine the action of a Gaussian unitary on a Majorana string. Namely, inserting the identity $U^\dagger U$ between each unit-length spinor in the string, and then applying~\cref{eq:matchgate_action} yields
\begin{align}
  U\gamma_{\bm{\nu}}U^\dagger &= \stringConj[0.8]\nn \\
  &= \cProdConjExpanded[0.75] \nn\\[0.8em]
  &=\RProdConj[0.8] \label{eq:stringTransf}\\[0.8em]
   &=\sum_{\nu_1'\nu_2'\cdots \nu_L'}c_{\nu_1'}R_{\nu_1\nu_1'}c_{\nu_2'}R_{\nu_2\nu_2'}\cdots c_{\nu_L'}R_{\nu_L\nu_L'} \nn
\end{align}
The above expression is equivalent to a sum over determinants of a submatrix of $R$~\cite{chapman2025fermioni}
\begin{equation}
  U\gamma_{\bm{\nu}} U^\dagger = \sum_{\bm{\nu'} \in \binom{2N}{|\bm{\nu}|}} \det(R_{\bm{\nu},\bm{\nu'}})\gamma_{\bm{\nu'}}.\label{eq:det_formula}
\end{equation}
The advantage of working in this representation is that it translates the ensemble average over a unitary submanifold of $U(2^N)$ into an ensemble average over the orthogonal group $\mathrm{O}(2N)$, a compact group whose Weingarten properties are well-characterised~\cite{collins_integration_2006,Collins_2009}.

\section{Exact twirls over matchgates} \label{sec:twirls}
We look to analytically evaluate averages of replicas of Majorana operators uniformly over the action of Gaussian unitaries. The relevant channel is the $k$-fold twirl,
\begin{equation}
  \Phi^{(k)}_{\mathcal{M}}(X) := \int_{U} U^{\otimes k} X (U^{\dagger})^{\otimes k},
\end{equation}
{of central interest to the study of unitary designs~\cite{Designs}.} Generally, we will take $X=\gamma_{\bm{\nu}}^{\otimes k}$ for an arbitrary string $\bm{\nu}$ whose length is denoted by $L$, and neglect the subscript $\mathcal{M}$ as we deal solely with matchgate twirls. Our strategy is to rewrite the action of matchgates acting on a replica operator using~\cref{eq:stringTransf}, and then to apply the orthogonal Weingarten calculus via the linear action of the orthogonal matrices $R$ on the Majorana indices. It will turn out that considering the replica operator $\gamma_{\bm{\nu}}^{\otimes k}$ will help us simplify this task, before later using the boundary conditions of relevant properties of the replica operator (its average OTOC, LOE, and OSE) to further reduce the task to closed-form expressions. 

\subsection{Initial operator expression}
To start, we employ the considerations from~\cref{sec:majorana} and convert $\Phi^{(k)}[\gamma_{\bm \nu}^{\otimes k}]$ into an algebraic expression summing over the commutant of the $k$-fold orthogonal group. Consider, for example, the Haar second moment over Gaussian unitaries, starting with an initial operator of two copies of the same string $\Phi^{(2)}[\gamma_{\bm \nu}^{\otimes 2}]$.
Expanding each copy in the Majorana-string basis gives
\begin{equation}
\sum_{\substack{\bm a,\bm b : \\ |\bm{a}|=|\bm{b}|=|\bm{\nu}|}}
 \left[\int_{\mathcal G}\!\dd\mu(U) \prod_{\ell=1}^{L}R_{a_\ell , s_\ell}R_{b_\ell , s_\ell}\right] \gamma_{\bm a}\otimes\gamma_{\bm b} .
\end{equation}
Graphically, this is obtained by orienting the two copies oppositely, inserting identities $U^\dagger U$ between neighbouring Majoranas, and then replacing each dressed Majorana by an $R$ tensor~(\cref{eq:stringTransf}), to find
\begin{equation}
 \begin{aligned}
 \Phi^{(2)} [\gamma_{\bm{\nu}_1}\!\otimes\!\gamma_{\bm{\nu}_2}]&={\int_{\mathcal G}\!\dd\mu(U)}\,\cProdTwoRepConj[0.72]\\[0.5em]
 &={\int_{\mathcal G}\!\dd\mu(R)}\,\twoRepRConj[0.72] .
 \end{aligned}
\end{equation}
One can see from this that the 2-fold matchgate moment operator is equivalent to a 2$L$-fold twirl over the orthogonal group $\mathrm{O}(2N)$. Generalising this gives us that $\Phi^{(k)}[\gamma_{\bm{\nu}}^{\otimes k}]$ is equivalently a $kL$-fold orthogonal twirl.
We can therefore apply Schur-Weyl duality to reduce this averaging to a discrete sum over a basis of the $kL$-replica commutant of the orthogonal group. Note that this expression is zero when $kL$ is odd. Henceforth we assume $kL$ is even.\par 

A convenient (though not orthonormal~\cite{sierant2026theorymatchgatecommutant,braccia2026commutantfermionicgaussianunitaries}) basis for the commutant of the $2n$-fold orthogonal group is the Brauer algebra, which is the set of all perfect pairings over $2n$ vertices. In group theoretic formulation, this is the set of equivalence classes $S_{2n}/H_n$, where $S_{2n}$, $H_n$ are the symmetric and hyperoctahedral groups, respectively. It is well-known that the Haar average of an even number $2n$ matrix elements can be written explicitly as~\cite{collins_integration_2006}
\begin{equation}\label{eq:orthogonal-weingarten-elements}
 \int_{O(d)} \dd\mu(R)\,R_{i_1 , j_1} \cdots R_{i_{2n} , j_{2n}} = \sum_{p_1,p_2\in M_{n}}\delta^{p_1}_{\bm i},\delta^{p_2}_{\bm j}, \operatorname{Wg}_{p_1,p_2}
\end{equation}
Here, $M_n$ denotes the set of perfect matchings of $\{1,\ldots,2n\}$ which, in this instance, may be interpreted as the indices of the different $R$ matrices. The indicator functions follow from the invariant structure of the orthogonal group (i.e., that $RR^{\T}=\id$) and hence not all perfect pairings are valid, only those which pair the same indices together. This restricts us to a set we call \emph{admissible} pairings. That is, for $p_1\in M_n$, we have $\delta^{p_1}_{\bm i}:=\prod_{\{r,s\}\in p_1}\delta_{i_r , i_s}$, so the factor $\delta^{p_1}_{\bm i}$ enforces equality of the indices paired by $p_1$. Lastly, the matrix $\operatorname{Wg}_{p_1,p_2}$ is the \emph{Weingarten} matrix, and is defined as the pseudoinverse of the Gram matrix $G_{p_1,p_2}:=\langle p_1,p_2\rangle$. Since the latter is factorially large in $n$ and the basis is not orthonormal, $\operatorname{Wg}$ quickly becomes unwieldy to evaluate. Nonetheless, as detailed in~\cref{ssec:irreps}, we will exploit the structure of our problem to reduce its influence in the expansion to a simple closed-form expression.\par 

Turning to our application in the Majorana context, we have $d=2N$ and $2n=kL$, and the perfect matchings are spanned by the SWAPs and Bell projections between pairs of replicas. Graphically, then, for the $\Phi^{(2)}[\gamma_{\bm \nu}^{\otimes 2}]$ case we obtain the expansion
\begin{equation}\label{eq:mainTool}
 \begin{aligned}
 &\sum_{p_1,p_2\in M_{kL/2}}
 \operatorname{Wg}_{p_1,p_2}
 \underbrace{\WeinPTwoTerm[0.69]}_{\delta_{\bm\mu}^{p_2}}~
 \underbrace{\WeinPOneTerm[0.69]}_{\delta_{\bm\nu}^{p_1}} ,
 \end{aligned}
\end{equation}
The multi-index $\bm{\nu} = (\nu^{(1)},\nu^{(2)})$ is the concatenation of the two input strings, and similarly for $\bm\mu=(\bm\mu^{(1)},\bm\mu^{(2)})$. Thus, $p_1$ and $p_2$ respectively pair elements of these multi-indices and, due to orthonormality of the basis we have chosen, only survive if the two pairings connect the same Majorana mode. 
In other words, $\delta^{p_2}_{\bm\mu}$ says that the output Majorana labels are compatible with the matching $p_2$, while $\delta^{p_1}_{\bm\nu}$ imposes the analogous condition on the input labels, which is the analogous notion of admissibility to~\cref{eq:orthogonal-weingarten-elements}.\par

For the general $k$-fold tensor product, acting on identical copies of strings of length $L$, write
\begin{equation}
 \bm\nu := (\underbrace{\nu_1,\ldots,\nu_L}_{\text{copy }1},\ldots,
 \underbrace{\nu_1,\ldots,\nu_L}_{\text{copy }k})
\end{equation}
for the concatenated input labels, and similarly write $\bm\mu=(\bm\mu^{(1)},\ldots,\bm\mu^{(k)})$ for the output labels. Applying equivalent logic gives
\begin{equation}\label{eq:k-moment-weingarten-basic}
 \Phi^{(k)}[\gamma_{\bm{\nu}}^{\otimes k}]
 =\!\sum_{p_1,p_2\in M_{kL/2}}\!\operatorname{Wg}_{p_1,p_2}\!
 \sum_{\bm\mu^{(1)},\ldots,\bm\mu^{(k)}}\!
 \delta^{p_2}_{\bm\mu} \delta^{p_1}_{\bm\nu}
 \bigotimes_{r=1}^{k}\gamma_{\bm\mu^{(r)}}\!.
\end{equation}
Before evaluating this expression, we must first characterise the sets of admissible input and output pairings imposed by the delta tensors. 

The input matching constraint $\delta^{p_1}_{\bm{\nu}}$ is straightforward to characterise because the same length-$L$ string is repeated in every replica. If the string $\gamma_{\bm{\nu}} = c_{\nu_1}c_{\nu_2}\cdots c_{\nu_L}$ were written $k$ times in a vertical column then the only permissible pairings must pair vertices only within the same column, i.e., it lies in $\mathcal{Q}_{k,L}:=M_{k/2}^{\times L}$, where $M_{k/2}$ is the set of perfect matchings of the $k$ replicas in a fixed column. In particular this requires $k$ to be even. For four replicas, we have three possible in-column pairing classes: 
\begin{equation}\label{eq:classes}
 \begin{aligned}
 e_1&=\{\{1,2\},\{3,4\}\},\\
 e_2&=\{\{1,3\},\{2,4\}\},\\
 e_3&=\{\{1,4\},\{2,3\}\}.
 \end{aligned}
\end{equation}
Thus $|\mathcal{Q}_{k,L}|=((k-1)!!)^L$. If $k$ is odd and the input string has no repeated Majoranas, the moment vanishes. Graphically, an example of this for $k=4$ and $L=4$ looks like
\begin{equation}\label{eq:input-pairing-eg}\notag
  p_1 \in \mathcal{Q}_{4,4}\quad : \quad \WeingartenInput[0.8].
\end{equation}
The case of the output $p_2$ pairings is much more complicated since it incorporates all \emph{possible} populations of $kL$ vertices with the different modes and then all possible pairings thereof. In particular, each replica need not contain identical modes and so pairings of the same mode between replicas need not appear in the same position in their respective string. Moreover, a single in-column pairing need not contain the same modes, merely the individual pairs within a column. An example is the following output pairing
\begin{equation}\label{eq:output-pairing-eg}\notag
  \text{admissible}~p_2\quad : \quad \WeingartenOutput[0.8].
\end{equation}
Characterising the full set of admissible $p_2$ is the subject of the next section.

\subsection{Contingency table isomorphism}\label{sec:contingency}
It remains to organise the admissible output labels $\bm\mu$ with respect to the $p_2$ pairings. We shall do this by casting the constraints in a structured way that simplifies both the exposition and the combinatorics. We represent possible output Majorana strings by their mode occupancies. From the setup of the problem, we have (i) no more than one occurrence of each mode per replica, by~\cref{eq:antiComm}; (ii) an even number of occupied modes across the total collection of replicas; and (iii) exactly $L$ occupied modes per replica. 
A convenient way to encode these rules is to use binary matrices whose rows and columns sum to a fixed margin -- this object is called a \emph{binary contingency tables}~\cite{miller2011exactenumerationsamplingmatrices,CANFIELD200832}, and we shall use its elements to designate a given mode occupancy in the output strings. Specifically, let $B\in\mathbb{Z}_2^{k\times 2N}$, where $B_{r,j}=1$ if and only if mode $c_j$ appears in output replica $r$. The allowed tables have row sums $L$ and even column sums,
\begin{equation}
 \sum_{j=1}^{2N}B_{r,j}=L, \qquad m_j:=\sum_{r=1}^{k}B_{r,j}\in 2\mathbb{N}.
\end{equation}
The row-sum condition says that each output string has length $L$, while the even-column condition is exactly the existence of at least one perfect matching $p$ compatible with $\bm\mu$. Equivalently, for a fixed margin vector $\bm m=(m_1,\cdots,m_{2N})$, define the table class
\begin{equation}
 \mathcal{C}_{\bm m}\!:=\! \{B\in\mathbb{Z}_2^{k\times2N} \mid \sum_jB_{r,j}=L, \sum_rB_{r,j}=m_j\}
\end{equation}
where $m_j$ is even, $0\leq m_j\leq k$, and $\sum_jm_j=kL$. These corresponding tables are of the form, e.g.,
\[
 \begin{tikzpicture}[
   baseline=(M.center),
   lab/.style={text=labelgray, font=\normalfont},
   mat/.style={
     matrix of math nodes,
     nodes={inner sep=2pt, minimum width=1.3em},
     column sep=0.8em,
     row sep=0.55em
   }
 ]
 % Main binary matrix
 \matrix (M) [mat, left delimiter={(}, right delimiter={)}] {
   1 & 0 & \cdots & 1 & 0 \\
   0 & 1 & \cdots & 0 & 1 \\
   \vdots & \vdots & \ddots & \vdots & \vdots \\
   1 & 1 & \cdots & 0 & 1 \\
 };

 % Left row labels
 \node[lab, left=1.4em of M-1-1] {$1$};
 \node[lab, left=1.4em of M-2-1] {$2$};
 \node[lab, left=1.4em of M-3-1] {$\vdots$};
 \node[lab, left=1.4em of M-4-1] {$k$};

 % Top column labels
 \node[lab, above=0.7em of M-1-1] {$c_1$};
 \node[lab, above=0.7em of M-1-2] {$c_2$};
 \node[lab, above=0.7em of M-1-3] {$\cdots$};
 \node[lab, above=0.7em of M-1-5] {$c_{2N}$};

 % Right vertical rule
 \draw[labelgray]
  ($(M-1-5.north east)+(1.5em,0.35em)$)
  --
  ($(M-4-5.south east)+(1.5em,-0.35em)$);

 % Right labels
 \node[lab, right=1.6em of M-1-5] {$L$};
 \node[lab, right=1.6em of M-2-5] {$L$};
 \node[lab, right=1.6em of M-3-5] {$~\vdots$};
 \node[lab, right=1.6em of M-4-5] {$L$};

 % Bottom horizontal rule
 \draw[labelgray]
  ($(M-4-1.south west)+(-0.25em,-0.6em)$)
  --
  ($(M-4-5.south east)+(0.25em,-0.6em)$);

 % Bottom labels
 \node[lab, below=1.1em of M-4-1] {$m_1$};
 \node[lab, below=1.1em of M-4-2] {$m_2$};
 \node[lab, below=1.1em of M-4-3] {$\cdots$};
 \node[lab, below=1.1em of M-4-5] {$m_{2N}$};
 \end{tikzpicture}
\]
For each valid margin $\bm{m}$, and each table $B\in \mathcal{C}_{\bm m}$, the compatible output matchings are products of matchings inside the occupied entries of each column, and hence constitute an admissible output pairing. For a given $B$, we denote by $\mathcal{P}(B)$ the set of mode pairings within that table
\begin{equation}
 \mathcal{P}(B)=\prod_{j:m_j>0} M_{m_j/2}.
\end{equation}
For example, graphically,
$$\left(\ContOne[0.7]\right) \longrightarrow \left(\ContTwo[0.7]\right),$$
the left matrix is an example contingency table with the Majorana mode identification made explicit, and the right matrix consolidates these modes into their corresponding strings. It is the coordinates of this consolidated table which ultimately characterises the output pairing.

The above example illustrates the importance of the margin vector. Each column-pairing class can be realised either by two complementary weight-two columns or by assigning one of the three matchings to a weight-four column:
\begin{equation}\label{eq:pairing-classes-diagram}
 \PairingEOneW[0.7]\equiv\PairingEOneWW[0.7]~~;~~\PairingETwoW[0.7]\equiv\PairingETwoWW[0.7]~~;~~\PairingEThreeW[0.7]\equiv\PairingEThreeWW[0.7]
\end{equation}

Finally, the last important consideration in this picture is that each row of $B$ determines the unordered support of one output Majorana string. That is to say, the contingency tables only specify \emph{which} mode appears, but not its order. Since any ordering is valid, then a matrix $B$ represents a sum over all permutations $\sigma_r\in S_L$ of the populated modes across each row $\sum_{\bm{\sigma}\in S_L^k} \bigotimes_{r=1}^k \gamma_{\sigma_r(\bm{\mu}_B^{(r)})}$. For example,
\begin{equation}
  \left(\ContOrbOne[0.7]\right)\longrightarrow\ContOrbTwo[0.7]
\end{equation}
is compatible with the permutations
\begin{equation}
    \ContPairOne[0.6]\quad \ContPairTwo[0.6]\quad \ContPairThree[0.6]\quad\ContPairFour[0.6].
\end{equation}
Alternatively, we can always take the output strings to be canonically ordered; anti-commutativity then gives
\begin{equation}
 \gamma_{\sigma_r(\bm\mu^{(r)})}=\operatorname{sgn}(\sigma_r)\gamma_{\bm\mu^{(r)}}.
\end{equation}
Contingency tables are therefore in exact one-to-one correspondence with a canonically ordered Majorana operator in the moment operator expansion, and the admissible pairings they represent are over all within-column pairings of the same modes along with their signed row permutations. We change our notation slightly to let $p_2$ denote the within-mode pairing, and $\bm{\sigma}\cdot p_2$ to indicate the same pairing after the action of a row permutation.\par 

\begin{equation}
  \left(\ContPairTable[0.6]\right)\ContStrOrdered[0.65] \;=\; (-1)^{\bm\sigma}\left(\ContPairTable[0.6]\right)\ContStrCanonical[0.65]
\end{equation}

To summarise, the moment operator reduces to an enumeration over:
\begin{itemize}
  \item Valid input pairings $p_1$; for an initial operator of the form $\gamma_{\bm{\nu}}^{\otimes k}$ these are all in-column pairings, denoted by $\mathcal{Q}_{k,L}$.
  \item Valid margin vectors $\bm{m}$, with each $m_j$ even, $0\leq m_j\leq k$, and $\sum_jm_j=kL$ (solutions to a Diophantine equation).
  \item Contingency tables $B$ with the columns summing to the margin $\bm{m}$, whose class is denoted by $\mathcal{C}_{\bm{m}}$.
  \item Valid between-replica pairings $p_2$ of each $B$. A pairing between replicas is valid for a given mode if the mode is populated in both rows; these pairings form $\mathcal{P}(B)$.
  \item Permutations $\bm{\sigma}$ which enumerate all the different orders that the populated modes can take in the output Majorana string.
\end{itemize}
Every admissible table $B$ thus labels one operator $\bigotimes_r\gamma_{\bm\mu_B^{(r)}}$ in the moment expansion. Utilising~\cref{eq:k-moment-weingarten-basic}, then, yields
\begin{equation}\label{eq:coefficient-table-final}
  \begin{split}
 \Phi^{(k)}[\gamma_{\bm{\nu}}^{\otimes k}] &={\sum_{\bm m}\sum_{B\in\mathcal C_{\bm m}}} C(B)\bigotimes_{r=1}^{k}\gamma_{\bm\mu_B^{(r)}},~\text{where},\\ 
 C(B)&={\sum_{p_1\in\mathcal Q_{k,L}}\sum_{p_2\in\mathcal P(B)}}\sum_{\bm\sigma\in S_L^k}\operatorname{sgn}(\bm\sigma)\operatorname{Wg}_{(\bm\sigma\cdot p_2,p_1)}.
  \end{split}
\end{equation}
With the admissible input and output pairings completely classified, what remains is to derive a readily computable expression for each $C(B)$.

\subsection{Irrep reduction}~\label{ssec:irreps}
The quantity given in~\cref{eq:coefficient-table-final} is a reduction of the moment operator to a characterisation of admissible pairings in the Brauer algebra which survive the specific Majorana string evaluation, but it nevertheless still contains Weingarten terms which are exponentially costly to evaluate in full generality. These, however, have some group structure which may be exploited in certain contexts to simplify the calculation.
A seminal result in the Weingarten calculus~\cite{collins_integration_2006} over the orthogonal group gives a standard closed form for the orthogonal Weingarten function~\cite{Collins_2009} (Theorem 3.1), which we reproduce here. Since the Weingarten function is \emph{central} -- it depends only on the relative pairing $p_1^{-1}p_2$, and therefore commutes with the simultaneous action of $S_{2n}\equiv S_{kL}$ on the diagram vertices -- it can be diagonalised in the irreps of the symmetric group over the $2n$ vertices. Specifically, for all $p_1,p_2\in M_n$,
\begin{equation}\label{eq:weingarten-general}
\begin{split}
 \operatorname{Wg}_{p_1,p_2}&= \frac{2^n n!}{(2n)!} \sum_{\substack{\lambda\vdash n\\ \ell(\lambda)\leq d}} f^{2\lambda}\, \frac{\omega^\lambda(p_1^{-1}p_2)}{Z_\lambda(1^d)},\\
 \text{where}\quad Z_\lambda(1^d)&=\prod_{(i,j)\in\lambda}(d+2j-i-1),\\
\omega^\lambda(\sigma) &= \frac{1}{2^n n!}\sum_{\zeta\in H_n}\chi^{2\lambda}(\sigma\zeta),
 \end{split}
\end{equation}
and $f^{2\lambda} = \chi^{2\lambda}((1^{2n}))$ is the dimension of the $S_{2n}$ irrep labeled by $2\lambda$. Recall that in the present application $2n=kL$ and $d=2N$.

Although~\cref{eq:weingarten-general} provides an in-principle explicit evaluation of the moment operator by decomposing the Weingarten function into a sum over its \emph{irreducible representations} (irreps), there are still too many of them to evaluate in practice. We show now that this obstacle can be removed in the context of our present problem. Specifically, we will see that the two sums in~\cref{eq:coefficient-table-final} are structured enough to leave only a single surviving irrep in the following lemma:
\begin{restatable}[{Unique non-zero irrep}]{lemma}{irrep} \label{lem:unique-weingarten-sector}
Let $k$ be even. The only orthogonal-Weingarten sector contributing to $C(B)$ in~\cref{eq:coefficient-table-final} is the Young diagram $2\lambda_\star=(k^L)\vdash kL$, where 
\begin{equation}
 \lambda_\star=\left(\left(\frac{k}{2}\right)^L\right).
\end{equation}
Thus, the entire sum in~\cref{eq:weingarten-general} reduces to a single term. 
\end{restatable}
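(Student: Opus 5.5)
The plan is to rewrite $C(B)$ in \cref{eq:coefficient-table-final} as a matrix element of the central Weingarten operator between two vectors in the permutation module $\mathbb{C}[M_n]$, with $2n=kL$. I will then show that one of these vectors is annihilated by every isotypic projector except the one labelled by $(k^L)$.

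\textbf{Setting up the grid.} Arrange the $kL$ vertices as a $k\times L$ grid, with row $r$ the replica and column $\ell$ the slot in the string. Input and output pairings then live on the same vertex set. Let $C:=S_k^{\times L}$ be the group permuting vertices within each column, and $R:=S_L^{\times k}$ the group permuting vertices within each row. In the group algebra $\mathbb{C}[S_{kL}]$ put
\begin{equation}
 s_C:=\sum_{\tau\in C}\tau, \qquad a_R:=\sum_{\sigma\in R}\operatorname{sgn}(\sigma)\,\sigma .
\end{equation}

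\textbf{Step 1: the input sum.} $C$ acts transitively on $\mathcal{Q}_{k,L}=M_{k/2}^{\times L}$, and every orbit point has the same stabiliser size. Hence
\begin{equation}
 \sum_{p_1\in\mathcal{Q}_{k,L}}p_1=c\, s_C\cdot p_0
\end{equation}
for a fixed in-column pairing $p_0$ and a constant $c>0$.

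\textbf{Step 2: the output sum.} The output side of \cref{eq:coefficient-table-final} is $\sum_{p_2\in\mathcal{P}(B)}\sum_{\bm\sigma}\operatorname{sgn}(\bm\sigma)\,\bm\sigma\cdot p_2$. This equals $a_R\cdot x_B$ with $x_B:=\sum_{p_2\in\mathcal{P}(B)}p_2$.

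\textbf{Step 3: expressing $C(B)$ as a matrix element.} With $\operatorname{Wg}$ viewed as an operator on $\mathbb{C}[M_n]$, we get
\begin{equation}
 C(B)\propto\langle a_R x_B,\ \operatorname{Wg}\, s_C p_0\rangle .
\end{equation}
Since $\operatorname{Wg}$ is central, it commutes with the $S_{kL}$ action. The module $\mathbb{C}[M_n]\cong\operatorname{Ind}_{H_n}^{S_{2n}}\mathbf 1$ decomposes multiplicity-freely as $\bigoplus_{\lambda\vdash n}S^{2\lambda}$. By \cref{eq:weingarten-general}, $\operatorname{Wg}=\sum_\lambda w_\lambda P_{2\lambda}$, where $P_{2\lambda}$ is the central isotypic idempotent of $\mathbb{C}[S_{kL}]$ and $w_\lambda$ is the eigenvalue on that sector. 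Because $\operatorname{sgn}(\sigma^{-1})=\operatorname{sgn}(\sigma)$, $a_R$ is self-adjoint for the permutation-invariant inner product. Moving it across and using centrality gives
\begin{equation}
 C(B)\propto\sum_\lambda w_\lambda\,\langle x_B,\ P_{2\lambda}\,a_R s_C\, p_0\rangle .
\end{equation}

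\textbf{Step 4: a single sector survives.} Read the grid transposed, as a Young tableau $T$ of shape $(k^L)$: $L$ rows of length $k$, whose rows are the grid columns. Then $s_C$ is the row symmetriser of $T$ and $a_R$ its column antisymmetriser. So $a_R s_C$ is a Young symmetriser in the reversed-order convention. By Young's theorem it is a nonzero quasi-idempotent, and $\mathbb{C}[S_{kL}]\,a_R s_C$ is a minimal left ideal isomorphic to $S^{(k^L)}$. Equivalently, one can run von Neumann's lemma: for a non-dominated shape some transposition lies in both a row and a column group. Either way $a_R s_C$ lies in the $(k^L)$ block, so $P_{2\lambda}a_R s_C=0$ unless $2\lambda=(k^L)$. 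Since $k$ is even, $(k^L)=2\lambda_\star$ with $\lambda_\star=((k/2)^L)$. The condition $\ell(\lambda_\star)=L\le 2N=d$ holds automatically, so this sector does appear in the sum \cref{eq:weingarten-general}. This completes the argument.

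\textbf{Expected obstacle.} The main difficulty is bookkeeping rather than representation theory. I must check that the paper's signed reordering $\gamma_{\sigma_r(\bm\mu^{(r)})}=\operatorname{sgn}(\sigma_r)\gamma_{\bm\mu^{(r)}}$ really is the action of $R$ on the output vertices of the Brauer diagram, acting on the same grid positions as the column group. I must also check that the pairing $\operatorname{Wg}_{p_1,p_2}=\langle p_1,\operatorname{Wg}\,p_2\rangle$ is compatible with the vertex-permutation action in the way required for the adjoint step in Step 3. If the conventions instead place the antisymmetriser on the input side, the same argument goes through with the roles of $s_C$ and $a_R$ swapped.
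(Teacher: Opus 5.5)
Your argument is correct, but it packages the representation theory differently from the paper.

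\emph{The paper's route.} It keeps the two boundary vectors separate. $\Phi_{\rm out}(B)$ is a $\operatorname{sgn}_L^{\boxtimes k}$-vector for $S_L^k$, and $\Phi_{\rm in}$ is $S_k^L$-invariant. Frobenius characteristics with Kostka numbers ($e_L^k=\sum_\Lambda K_{\Lambda',(L^k)}s_\Lambda$ and $h_k^L=\sum_\Lambda K_{\Lambda,(k^L)}s_\Lambda$) then give the one-sided bounds $\Lambda\unlhd(k^L)$ and $\Lambda\unrhd(k^L)$, which squeeze $\Lambda=(k^L)$.

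\emph{Your route.} You move the antisymmetriser across the $S_{kL}$-equivariant $\operatorname{Wg}$ and fuse both symmetries into the single element $a_Rs_C$. This is the Young symmetriser (column antisymmetriser times row symmetriser) of the rectangular tableau, so Young's theorem settles everything at once. The paper's dominance sandwich is essentially the standard proof that such a symmetriser lies in one Wedderburn block.

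\emph{Comparison.} Your route is shorter. It also lands directly on the form $A\Pi_\lambda\Phi_{\rm in}=0$ for $\lambda\neq\lambda_\star$, which is what the paper actually uses in proving \cref{cor:reduced-moment-operator}. The paper's route has the advantage that each boundary yields its own dominance bound independently, so it would still prune the sector sum if only one side carried its symmetry.

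\emph{Your bookkeeping concerns.} They resolve as you hoped. In \cref{eq:orthogonal-weingarten-elements} both $p_1$ and $p_2$ are matchings of the same $kL$ factor labels $(r,\ell)$. The signed reordering permutes the slot labels within each replica row of that same set. $\operatorname{Wg}_{p_1,p_2}$ depends only on the coset type of $p_1^{-1}p_2$. The paper does place the antisymmetriser on the output side. Your check that $\ell(\lambda_\star)=L\le 2N$, so that the surviving sector is not removed by the pseudo-inverse, is a point the paper leaves implicit.
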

\begin{proof}[Proof sketch]
Let $\Lambda\vdash kL$ label an $S_{kL}$ irrep. The signed output sum over $S_L^k$ transforms as $\operatorname{sgn}_L^{\boxtimes k}$. The corresponding projection therefore requires $\Lambda\unlhd(k^L)$. On the input side, $\Phi_{\rm in}:=\sum_{q\in\mathcal Q_{k,L}}\delta_q$ is invariant under $S_k^L$, which instead requires $\Lambda\unrhd(k^L)$. Thus the sandwich from either side forces that $\Lambda=(k^L)$. Since the pairing rep labels this sector by $2\lambda$, the stated $\lambda_\star$ is the unique survivor. The full argument is given in~Appendix~\ref{app:full-coefficient-calculation}.
\end{proof}

Now that we have that $\Phi^{(k)}[\gamma_{\bm{\nu}}^{\otimes k}]$ lives only on one sector, we may use~\cref{eq:weingarten-general} to compute the Weingarten function explicitly. We substitute the aforementioned Young diagram in, thus turning the remaining table dependence into a signed pairing count. Moreover, the remaining $\langle \Phi_{\rm in},\Phi_{\rm out}\rangle$ inner product enforces that the composition of row permutation $\bm{\sigma}$ with mode pairing $p_2$ must be an in-column pairing, i.e., the pairings which survive the inner product must have $\bm{\sigma}\cdot p_2 \in \mathcal{Q}_{k,L}$, which we denote in the following by an indicator function. 
\begin{corollary}[Reduced moment operator]\label{cor:reduced-moment-operator}
For every admissible table $B$,
\begin{equation}\label{eq:CB-overlap-main}
 \begin{aligned}
 C(B)&=\frac{s(B)}{Z_{\lambda_\star}(1^{2N})},\\
 s(B)&\coloneqq\sum_{p_2\in\mathcal P(B)} \sum_{\bm\sigma\in S_L^k}\operatorname{sgn}(\bm\sigma) [\bm\sigma\!\cdot p_2\in\mathcal Q_{k,L}] .
 \end{aligned}
\end{equation}
Consequently, the moment operator acting on $k$ copies of an Majorana string takes the form
\begin{equation}\label{eq:moment-operator-reduced}
 \begin{split}
 \Phi^{(k)}[\gamma_{\bm \nu}^{\otimes k}] =\sum_{\bm m}\sum_{B\in\mathcal C_{\bm m}} \frac{s(B)}{Z_{\lambda_\star}(1^{2N})} \bigotimes_{r=1}^{k}\gamma_{\bm\mu_B^{(r)}},
 \end{split}
\end{equation}
where
\begin{equation}\label{eq:Zlambda-star}
 Z_{\lambda_\star}(1^{2N}) =\prod_{i=1}^{L}\prod_{j=1}^{k/2}(2N+2j-i-1).
\end{equation}
When $k=4$ this becomes $Z_{\lambda_\star}(1^{2N})=(2N)_L(2N+2)_L$. $(y)_n:=\prod_{i=0}^{n-1} (y-i)$ is the falling factorial.
\end{corollary}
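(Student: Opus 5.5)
The corollary follows by inserting the single surviving sector from \cref{lem:unique-weingarten-sector} into the closed form \cref{eq:weingarten-general}, and then recognising what is left as a signed pairing count. Write the Weingarten matrix as an operator on the span of pairings, $\operatorname{Wg}=\sum_\lambda Z_\lambda(1^d)^{-1}E_\lambda$. Here $E_\lambda=\frac{f^{2\lambda}}{(2n)!/(2^n n!)}\sum_{p,p'}\omega^\lambda(p^{-1}p')\,|p\rangle\langle p'|$ is the central idempotent (the zonal-spherical projector) onto the $2\lambda$ isotypic component of the permutation module $\mathbb{C}[S_{2n}/H_n]$.

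The first step is to rewrite the coefficient as a bilinear form. Define
\[
\Phi_{\rm in}=\sum_{p_1\in\mathcal Q_{k,L}}|p_1\rangle,\qquad
\Phi_{\rm out}(B)=\sum_{p_2\in\mathcal P(B)}\sum_{\bm\sigma}\operatorname{sgn}(\bm\sigma)\,|\bm\sigma\cdot p_2\rangle .
\]
With these, \cref{eq:coefficient-table-final} reads $C(B)=\langle\Phi_{\rm out}(B),\operatorname{Wg}\,\Phi_{\rm in}\rangle$. The proof of \cref{lem:unique-weingarten-sector} shows that $\Phi_{\rm in}$ and $\Phi_{\rm out}$ each have support in the sectors $\unrhd(k^L)$ and $\unlhd(k^L)$ respectively. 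The pairing $\langle\Phi_{\rm out},E_\lambda\Phi_{\rm in}\rangle$ is therefore nonzero only for $\lambda=\lambda_\star$, which gives
\[
C(B)=Z_{\lambda_\star}(1^{2N})^{-1}\langle \Phi_{\rm out},E_{\lambda_\star}\Phi_{\rm in}\rangle .
\]

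The second step is to remove $E_{\lambda_\star}$. The idempotents sum to the identity, $\sum_\lambda E_\lambda=\id$, and all other sectors give vanishing cross terms. Hence $\langle\Phi_{\rm out},E_{\lambda_\star}\Phi_{\rm in}\rangle=\langle\Phi_{\rm out},\Phi_{\rm in}\rangle$. For this we take the inner product on pairing space to be the natural one in which distinct pairings are orthonormal. This must be the same normalisation as the one making $\operatorname{Wg}$ in \cref{eq:weingarten-general} the pseudoinverse of the Gram matrix with these $E_\lambda$; checking that this convention is consistent is the step I expect to cost the most care. It amounts to confirming that the prefactor $2^nn!/(2n)!$ times $f^{2\lambda}\omega^\lambda$ gives exactly the idempotent, with no leftover scalar such as a dimension ratio. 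If a scalar does appear, it must be absorbed, and I would fix it by testing the case $k=2$, $L=1$: the twirl $\int R_{ab}R_{ab'}=\delta_{bb'}/d$ gives $C=1/(2N)$, matching $Z_{\lambda_\star}=2N$ with $s=1$.

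The third step evaluates the overlap. Since $\Phi_{\rm in}$ is the indicator of $\mathcal Q_{k,L}$, we get $\langle\Phi_{\rm out},\Phi_{\rm in}\rangle=\sum_{p_2}\sum_{\bm\sigma}\operatorname{sgn}(\bm\sigma)[\bm\sigma\cdot p_2\in\mathcal Q_{k,L}]=s(B)$. Substituting into \cref{eq:coefficient-table-final} yields \cref{eq:moment-operator-reduced}.

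Finally, $Z_{\lambda_\star}$ follows from the content formula over the cells $(i,j)$ of the rectangle with $L$ rows of length $k/2$, which is \cref{eq:Zlambda-star}. For $k=4$, the $j=1$ factors give $\prod_i(2N+1-i)=(2N)_L$ and the $j=2$ factors give $(2N+2)_L$.
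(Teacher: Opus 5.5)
Your argument is correct and is essentially the paper's proof. The paper writes $C(B)=\langle\delta_{\mathcal P(B)},A\,\mathrm{Wg}\,\Phi_{\rm in}\rangle$ with the signed row antisymmetriser $A$, so your $\Phi_{\rm out}(B)$ is just $A\,\delta_{\mathcal P(B)}$. It then kills every sector except $\lambda_\star$ via the two dominance constraints, and uses completeness of the central projectors to replace $\Pi_{\lambda_\star}\Phi_{\rm in}$ by $\Phi_{\rm in}$.

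The normalisation step you flag is not an obstacle. In the orthonormal pairing basis, the central idempotent of $2\lambda$ acting on $\mathbb{C}[S_{2n}/H_n]$ has matrix elements $\frac{f^{2\lambda}}{(2n)!}\sum_{h\in H_n}\chi^{2\lambda}(p^{-1}p'h)=\frac{2^n n!}{(2n)!}f^{2\lambda}\omega^\lambda(p^{-1}p')$. So no leftover scalar appears. Note also that a single $k=2$, $L=1$ check could not have fixed a $\lambda$-dependent scalar in any case.
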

\begin{proof}[Proof sketch]
Let $A$ be the signed row antisymmetriser, $\delta_{\mathcal P(B)}:=\sum_{p\in\mathcal P(B)}\delta_p$, and $\Phi_{\rm in}$ the input vector above. Then $C(B)=\langle\delta_{\mathcal P(B)},A\text{Wg} \Phi_{\rm in}\rangle$. In the central-projector expansion $\operatorname{Wg}=\sum_\lambda\Pi_\lambda/Z_\lambda(1^{2N})$,~\cref{lem:unique-weingarten-sector} gives $A\Pi_\lambda\Phi_{\rm in}=0$ unless $\lambda=\lambda_\star$. Completeness of the projectors then gives $A\Phi_{\rm in}=A\Pi_{\lambda_\star}\Phi_{\rm in}$. Expanding the remaining overlap yields exactly $s(B)$; see~Appendix~\ref{app:full-coefficient-calculation}.
\end{proof}
The remaining calculation pertains to the integer $s(B)$, which is characterised by the indicator function $[\bm{\sigma}\cdot p_2\in\mathcal{Q}_{k,L}]$ and the sign of $\bm{\sigma}$, summed over all admissible pairings in $B$. That is, a pairing survives only if it is the composition of an in-column pairing with a row permutation. 
We first establish a colouring language to talk about positions in the string after numerical ordering. That is, a vertex is characterised by its mode identity $c_j$, and its `colour', which is the position of the mode within the string $\gamma_{\bm{\mu}^{(r)}}$.
Intuitively, the non-zero permutations are those which preserve the pairing class of a given mode. If a node is the $\ell$th occupied entry of row $r$, assign it the colour $\chi_{\bm\sigma}(r,j)=\sigma_r(\ell)$. Every colour occurs once per row, and $\bm\sigma\!\cdot p_2\in\mathcal Q_{k,L}$ precisely when every edge of $p_2$ joins equal colours. Hence
\begin{equation}\label{eq:s-colouring}
 s(B)=\sum_{\bm\sigma\in S_L^k}\operatorname{sgn}(\bm\sigma) \prod_{j=1}^{2N}\omega_j(\bm\sigma),
\end{equation}
where $\omega_j$ counts the monochromatic perfect matchings of column $j$ after applying $\bm{\sigma}$. In particular, in the following, each operator is filled by its mode and outlined by its position in the string (its colour, with the three colours corresponding to the three slots). The matching condition is that an edge of $p_2$ survives only if both ends have the same ring colour. For example, here we have a vanishing pairing (dashed red indicates a mode-colour mismatch), 
\begin{equation*}\label{eq:s-colouring-diagram}
  \ContColourStringsSorted[0.8] \;\longleftrightarrow\; \left(\ContColourTableSorted[0.8]\right),
\end{equation*}
and a fully aligned pairing:
\begin{equation*}
  \ContColourStringsAligned[0.8] \;\longleftrightarrow\; \left(\ContColourTableAligned[0.8]\right).
\end{equation*}

We now specialise to $k=4$ such that the expression can be computed exactly. Every nonempty column has weight two or four. Let $q:=\#\{j:m_j=4\}$ and define the complementary pairing classes $\alpha_1,\alpha_2,$ and $\alpha_3$. If $n_{rr'}$ counts weight-two columns occupying rows $\{r,r'\}$, then pairing complements enforce
\begin{equation}\label{eq:class-balance-main}
 \begin{aligned}
 n_{12}=n_{34}&=: \alpha_1,\\
 n_{13}=n_{24}&=: \alpha_2,\\
 n_{14}=n_{23}&=: \alpha_3.\\
 \end{aligned}
\end{equation}
The $\alpha_i$ hence stratify the columns into pairing classes, with $q$ counting the number of weight four columns that could adopt any pairing identity. It follows then that $\alpha_1 + \alpha_2 + \alpha_3 + q = L$. For the special case where $k=4$, observe that the pairing structure of the contingency matrix column (the mode) is the same as the pairing structure when put in canonical form. That is, the mode pairing type is always the same as the colour pairing type. Consequently, the only way that $\bm{\sigma}\cdot p_2$ can be admissible is if it swaps a \emph{pair} of modes within one class with another pair of modes within the \emph{same} class. 
\begin{equation}
  \ContBlocksRepairBefore[0.8] \;\mapsto\; \ContBlocksRepairAfter[0.8]
\end{equation}
It follows then that $s(B)$ factorises into a sign part and a magnitude part, since the sign is constant across an entire table.
\begin{restatable}[Signed-magnitude decomposition]{lemma}{signMag}\label{lem:signed-magnitude}
\begin{equation}\label{eq:s-split-main}
 s(B)=\varepsilon(B)\mathcal{N}(B),\quad \mathcal N(B):=\sum_{\bm\sigma\in S_L^4} \prod_j\omega_j(\bm\sigma)\geq0 .
\end{equation}
\end{restatable}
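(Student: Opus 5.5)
Starting from the colouring form \cref{eq:s-colouring}, $s(B)=\sum_{\bm\sigma}\operatorname{sgn}(\bm\sigma)\prod_j\omega_j(\bm\sigma)$, the claim $s(B)=\varepsilon(B)\mathcal N(B)$ reduces to one statement. The sign $\operatorname{sgn}(\bm\sigma)$ must take the same value $\varepsilon(B)\in\{\pm1\}$ for every $\bm\sigma\in S_L^4$ with $\prod_j\omega_j(\bm\sigma)\neq0$. Given this, the sign factors out of the sum, and the remaining sum of products of non-negative integers $\omega_j$ is $\mathcal N(B)\geq 0$.

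\emph{Step 1: a reference surviving permutation.} Take $\bm\sigma=\mathrm{id}$, so each node's colour is its position in the canonically ordered row. We have to check that some monochromatic matching exists in every column. A weight-two column in rows $\{r,r'\}$ needs the mode to sit at the same position in both rows. This need not hold for the identity, so I would instead build a reference $\bm\sigma^{(0)}$ directly. Fix any labelling of the $L$ colours. Assign one colour to each complementary pair of weight-two columns: by \cref{eq:class-balance-main}, a column in rows $\{1,2\}$ and a column in rows $\{3,4\}$ together cover each row exactly once. Assign one colour to each weight-four column. This uses $\alpha_1+\alpha_2+\alpha_3+q=L$ colours, so each row receives every colour exactly once and a valid $\bm\sigma^{(0)}$ exists. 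If no surviving $\bm\sigma$ existed, $s(B)=\mathcal N(B)=0$ and the lemma would be trivial; the construction shows this case does not arise under \cref{eq:class-balance-main}.

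\emph{Step 2: classify all surviving $\bm\sigma$.} A general surviving $\bm\sigma$ determines a colouring in which each colour class is either a single weight-four column or a complementary pair of weight-two columns of a common class $e_i$. Every surviving colouring is therefore described by two choices. The first is a matching, within each class $i$, of the $\alpha_i$ columns on rows $\{r,r'\}$ with the $\alpha_i$ columns on the complementary rows. The second is a bijection from these blocks (together with the weight-four columns) to the colours. Two surviving $\bm\sigma,\bm\sigma'$ are then related by $\bm\sigma'=\bm\sigma\circ\bm\tau$. Here $\bm\tau$ is generated by two kinds of move. The first relabels colours globally: the same permutation $\pi\in S_L$ acts on all four rows, contributing $\operatorname{sgn}(\pi)^4=+1$. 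The second is the re-pairing move pictured above, which swaps the partners of two blocks within the same class $e_i$.

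\emph{Step 3: sign of a re-pairing move.} Take blocks $(a,b)$ and $(a',b')$ in class $e_1$, where $a,a'$ are columns on rows $\{1,2\}$ and $b,b'$ are columns on rows $\{3,4\}$, with colours $x$ and $y$. Re-pairing to $(a,b'),(a',b)$ while keeping colours on rows $1,2$ transposes $x\leftrightarrow y$ in rows $3$ and $4$. The net sign is $(-1)^2=+1$. The same holds for $e_2$ (rows $2,4$ transposed) and $e_3$ (rows $2,3$ transposed). Weight-four columns admit no such moves beyond the global relabelling of Step 2. Hence $\operatorname{sgn}$ is constant on the support, and $\varepsilon(B):=\operatorname{sgn}(\bm\sigma^{(0)})$ is well defined.

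\emph{Main obstacle.} The delicate point is Step 2: showing that these two move types connect all surviving permutations. The concern is that $\bm\sigma$ might admit a monochromatic matching in some column that is not of the canonical block form, for instance a weight-four column whose four nodes share a colour while the matching $p_2$ varies. I would handle this with the observation stated before the lemma, that for $k=4$ the mode pairing type equals the colour pairing type. Each colour appears once per row, so a colour class consists of four nodes, one per row, lying in columns whose row supports partition $\{1,2,3,4\}$. Such a partition is either a single weight-four column or two complementary weight-two columns. A surviving colouring is therefore exactly a block decomposition with a colour assignment, and the moves of Step 2 act transitively on these.
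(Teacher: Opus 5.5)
Your strategy matches the paper's: show that $\operatorname{sgn}(\bm\sigma)$ is constant on the support by connecting all contributing colourings through global relabellings and within-class re-pairings, each of which transposes colours in an even number of rows. However, there is a genuine gap. Step 2 misidentifies the support, and the justification under ``Main obstacle'' is false.

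A colour class is four nodes, one per row, which a compatible $p_2$ splits into two monochromatic \emph{edges}. An edge can be half of a weight-four column, so the columns carrying a colour class need not have row supports partitioning $\{1,2,3,4\}$. Concretely, take $L=q=2$ (all four rows equal, modes $a<b$) with $\sigma_1=\sigma_2=\mathrm{id}$ and $\sigma_3=\sigma_4=(12)$. Column $a$ carries colours $(1,1,2,2)$ and column $b$ carries $(2,2,1,1)$, so $\omega_a=\omega_b=1$ and this $\bm\sigma$ contributes. Yet colour $1$ sits on rows $1,2$ of $a$ and rows $3,4$ of $b$. Mixed blocks, made of half of a weight-four column plus a weight-two column, occur in the same way.

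Global relabellings keep weight-four columns monochromatic, and your re-pairings never touch weight-four columns. So none of these colourings is reached from $\bm\sigma^{(0)}$, and your argument leaves their sign uncontrolled. In the example only $2$ of the $8$ contributing $\bm\sigma$ have your block form, and your description would give $\mathcal N(B)=18$ instead of the correct $24$. The claim that ``weight-four columns admit no such moves'' is the same error.

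The repair is to build blocks from edges rather than columns. Every contributing $\bm\sigma$ is compatible with some $p\in\mathcal P(B)$. This $p$ supplies one edge per weight-two column and two complementary edges of a chosen class $t_j$ per weight-four column. Each colour class is then two complementary edges of a common class.

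For fixed $p$, the compatible colourings are exactly the following data. Within each class $t$, choose a bijection between its $\alpha_t+b_t$ edges on one row pair and its $\alpha_t+b_t$ edges on the complementary pair. Then assign the $L$ resulting blocks to colours. Your Step 3 computation, now applied to re-pairings that may involve half-columns, shows the sign is constant on each such set.

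Next, consider two matchings differing only in the class of one weight-four column. They share a compatible colouring: pair that column's two edges into one block, so the column is monochromatic and all three of its matchings are compatible. Hence the two constants agree, and chaining over class assignments yields a single $\varepsilon(B)$ on the whole support. Alternatively, the pairwise edge-inversion count used for \cref{lem:table-sign} shows directly that $\sum_r\operatorname{inv}(\sigma_r)$ has colouring-independent parity.
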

\begin{proof}[Proof sketch]
A contributing configuration partitions the nodes into $L$ monochromatic four-node blocks. Exchanging two complete blocks transposes colours in four rows, whereas re-pairing two blocks transposes them in two rows. Both moves preserve $\prod_r\operatorname{sgn}(\sigma_r)$, so the sign is constant. Removing it leaves the unsigned count $\mathcal N(B)$.
\end{proof}
The sign $\varepsilon(B)$ of a given table can be thought of as exactly the sign of the permutation that places it in canonical form from its mode ordering. This is a single property of the contingency table, which can be characterised by the inversion number of the column structure. 
\begin{equation}
  \left(\ContAssemblySorted[0.8]\right) \ContAssemblyArrows[0.8] \left(\ContAssemblyAligned[0.8]\right)
\end{equation}
\begin{restatable}[Class word]{definition}{classword}\label{def:class-word}
Let $W_B$ be the word obtained by reading the modes $j=1,\ldots,2N$ in increasing order and writing down the class $t\in\{1,2,3\}$ of every weight-two column (weight-four and empty columns are skipped). Let $\operatorname{inv}(W_B)$ be its number of inversions.
\end{restatable}
\noindent
For instance, the contingency table
$$\left(~\ContThree[0.65]~\right)$$
has class word $$W_B = (\alpha_1, (\alpha_2)^3,\alpha_3,(\alpha_2)^2, \alpha_1,\alpha_3,(\alpha_2)^2,\alpha_3,\alpha_2,\alpha_3).$$ 
The inversion number of this table is 15, which can be counted by each symbol's contribution: 5 $\alpha_2$ classes before the second $\alpha_1$; $\alpha_3$ at position 5 adds 6, from five later $\alpha_2$ classes and one from the next $\alpha_1$; $\alpha_3$ at position 9 adds 3 from the three later $\alpha_2$ classes; and $\alpha_3$ at position 12 adds one from the later $\alpha_2$.

\begin{restatable}[{Sign of admissible table}]{lemma}{signAd}\label{lem:table-sign}
The sign factor in~\cref{lem:signed-magnitude} is, for a given contingency table $B$,
\begin{equation}\label{eq:EB-main}
 \mathcal E(B)=(-1)^{\operatorname{inv}(W_B)} ,
\end{equation}
where $\operatorname{inv}(W_B)$ is the number of transpositions required to rearrange $W_B$ such that its elements $\{\alpha_i\}$ are in numerical order.
\end{restatable}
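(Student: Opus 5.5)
By \cref{lem:signed-magnitude}, every contributing configuration $(\bm\sigma,p_2)$ of a table $B$ carries the same sign $\prod_r\operatorname{sgn}(\sigma_r)$. So the plan is to exhibit one explicit contributing configuration, compute its sign, and show that it equals $(-1)^{\operatorname{inv}(W_B)}$.

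\textbf{Reference configuration.} The block exchanges and re-pairings in the proof of \cref{lem:signed-magnitude} preserve the sign. We may therefore choose the $L$ monochromatic four-node blocks as follows.
\begin{itemize}
\item Each weight-four column is a block by itself, with any of its three matchings.
\item For each class $t$, the $\alpha_t$ columns with rows $\{r,r'\}$ and the $\alpha_t$ complementary columns are paired in order of appearance: the $i$th column of one type is paired with the $i$th column of the other.
\end{itemize}
Colours are then assigned to blocks through a global order $\pi$ on the blocks, and $\sigma_r$ is the permutation taking the mode order in row $r$ to the colour order. Hence $\operatorname{sgn}(\sigma_r)=(-1)^{\mathrm{inv}_r}$, where $\mathrm{inv}_r$ counts pairs of row-$r$ entries whose mode order and block order disagree. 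Summing over the four rows gives the total parity.

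\textbf{Decomposing the parity.} Choose $\pi$ to list the blocks class by class (class 1, then 2, then 3), and, where possible, to follow mode order inside each class. Weight-four columns should then contribute an even amount, since each such pair is counted in all four rows. Also, two columns of the same class appear in the same rows, so their inversions contribute in pairs. The step I expect to be the main obstacle is the remaining inversions, between weight-two columns of different classes $t<t'$ with a $t'$-column left of a $t$-column. The rows $\{r,r'\}$ of class $t$ and the rows $\{s,s'\}$ of class $t'$ share exactly one row, since for $k=4$ any two distinct pairings $e_t,e_{t'}$ share exactly one row with each other's parts. Each such column pair should therefore contribute exactly one inversion mod 2, in the shared row, and the contributions should sum to $\operatorname{inv}(W_B)$. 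This needs care, because the block of a column also contains its complementary column in other rows. I would check it by tracking, in each row separately, the position of every block's entry. The parity bookkeeping should reduce to counting disordered pairs in $W_B$.

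The consistency checks, and the parts of the bookkeeping I have not verified, are the following.
\begin{itemize}
\item Re-ordering blocks of the same class, and alternative complementary pairings, do not change the parity; this follows from \cref{lem:signed-magnitude}.
\item Weight-four columns interleaved with weight-two columns in mode order change the count only by even amounts, because each such column sits in all four rows. This is why they are skipped in $W_B$.
\item As a test, the formula should give $+1$ when $W_B$ is sorted, i.e.\ when canonical order already agrees with the block order.
\end{itemize}
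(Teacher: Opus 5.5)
Your route is sound and differs from the paper's. As written, though, the decisive parity claim is only asserted (``should''). It follows from one observation: in your reference configuration every column is monochromatic. A weight-two column is a single edge, and each weight-four column is its own block.

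Hence $\sum_r\mathrm{inv}_r$ splits over pairs of columns $j<j'$. Each pair contributes $|\mathrm{rows}(j)\cap\mathrm{rows}(j')|$ times the indicator that their mode order and colour order disagree. The complementary partner of a column inside its block plays no role: colours are compared only within a row, and a column shares no row with its partner.

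The pairs then fall into three cases.
\begin{itemize}
\item A weight-four column meets any other column in $2$ or $4$ rows, so it contributes evenly. The reason is not that a weight-four pair is ``counted in all four rows''. A weight-four and a weight-two column share only two rows, and evenness holds because the weight-four column carries a single colour. Were its two edges in different blocks, such a pair could contribute an odd amount. This is exactly where your choice of reference configuration pays off.
\item Two same-class weight-two columns meet in $2$ or $0$ rows, so they also contribute evenly.
\item Two weight-two columns of different classes meet in exactly one row. Since you list blocks class by class, their colour order is their class order. The pair therefore contributes $1$ exactly when it is an inversion of $W_B$.
\end{itemize}
This gives $\sum_r\mathrm{inv}_r\equiv\operatorname{inv}(W_B)\pmod 2$. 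The ordering of blocks within a class is irrelevant.

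The paper instead works with an arbitrary contributing colouring. It writes $\sum_r\operatorname{inv}(\sigma_r)$ as a sum over pairs of edges weighted by their shared rows, and discards same-class pairs. For cross-class pairs it inserts the class order as an intermediate reference, $[\text{col}\neq\text{colour}]=[\text{col}\neq\text{class}]\oplus[\text{class}\neq\text{colour}]$. It then shows the colour-dependent part contributes $0$ or $4$ per pair of colours.

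Your version is shorter and avoids that cancellation, but it relies on \cref{lem:signed-magnitude} for constancy of the sign. The paper's argument is self-contained and in fact reproves that constancy, since the parity it obtains visibly does not depend on the colouring.
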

\begin{proof}[Proof sketch]
Count $\sum_r\operatorname{inv}(\sigma_r)$ pairwise over matching edges. Two same-class edges meet in zero or two rows and contribute evenly; different classes meet in one row. Modulo two, only cross-class edge pairs remain. The colouring-dependent part occurs four times for each pair of differently classified colours and cancels, leaving $\operatorname{inv}(W_B)$. A weight-four column would add two equal letters, so omitting it does not change the parity.
\end{proof}
This allows the sign of a table to be reduced to an explicit single property of that table, meaning that the sum need not be explicitly performed. It remains only to deal with the magnitude, $\mathcal{N}(B)$, which is a matter of combinatorics. Specifically, the indicator function forces that a surviving row permutation needs to send the configuration to one with in-column pairings preserved. Therefore a row permutation is only valid if it permutes \emph{within} the same column class, leading to a multiplicity of $(\alpha_i + b_i)!$, where $b_i$ is the number of weight-four columns acting as the pairing $e_i$. This argument is summarised in the following lemma.
\begin{restatable}[Magnitude of an admissible table]{lemma}{magAd}\label{lem:table-magnitude}
For an admissible four-replica table $B$, the magnitude coefficient is
\begin{equation}\label{eq:NB}
 \begin{split}
 \mathcal N(B)=L!\!\sum_{\substack{b_1+b_2+b_3=q\\b_t\geq0}} \binom{q}{b_1,b_2,b_3} \prod_{t=1}^{3}(\alpha_t+b_t)! .
 \end{split}
\end{equation}
\end{restatable}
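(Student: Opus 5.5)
We count the pairs $(p_2,\bm\sigma)$ with $p_2\in\mathcal P(B)$ and $\bm\sigma\in S_L^4$ such that every edge of $p_2$ joins equal colours, using the form $\mathcal N(B)=\sum_{\bm\sigma}\prod_j\omega_j(\bm\sigma)$ from \cref{lem:signed-magnitude}. The first step fixes the combinatorial data a surviving configuration must carry. As in the proof sketch of \cref{lem:signed-magnitude}, any surviving configuration partitions the $4L$ occupied nodes into $L$ monochromatic four-node blocks. Each block contains one node per row and is a union of two $p_2$-edges.

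Each block has one of two shapes. It is either a single weight-four column with one of the matchings $e_1,e_2,e_3$. Or it is two complementary weight-two columns of the same class $e_t$: by \cref{eq:class-balance-main}, a column on rows $\{1,2\}$ is paired with one on rows $\{3,4\}$, and similarly for the other classes. So I would label each weight-four column by the matching $e_{t}$ it carries, giving numbers $b_1+b_2+b_3=q$. There are $\binom{q}{b_1,b_2,b_3}$ ways to choose these labels, and this accounts for the sum over $p_2$ restricted to the weight-four columns. Since weight-two columns admit a unique matching, no further choice of $p_2$ remains.

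Next, for fixed labels $(b_t)$, I would count the admissible colourings. Class $t$ contains $\alpha_t$ columns on each of its two complementary row-pairs, plus $b_t$ weight-four columns, and so supplies $\alpha_t+b_t$ nodes in each row. A colouring is admissible exactly when each colour $c\in\{1,\dots,L\}$ is assigned to one block. Such a block consists of a weight-four column of class $t$, or of a weight-two column from one row-pair of class $t$ together with one from the complementary row-pair.

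The count then proceeds in three stages. First, assign colours to classes, so that class $t$ receives $\alpha_t+b_t$ colours. Second, inside each class, choose the bijection between colours and the $b_t$ weight-four columns and the $\alpha_t$ columns on the first row-pair. Third, choose independently the bijection of the remaining colours onto the $\alpha_t$ columns of the complementary row-pair. Each colouring determines $\bm\sigma$ uniquely, because colours are exactly row positions.

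The main obstacle is getting these factors to combine into precisely $L!\prod_t(\alpha_t+b_t)!$, rather than into something like $L!\prod_t(\alpha_t+b_t)!\,\alpha_t!/(\ldots)$. My plan is to fix the colour set of class $t$ and then build the block structure row by row, starting from the reference row-pair. Row $1$ lies in exactly one of the two row-pairs of each class (for example $\{1,2\}$ versus $\{3,4\}$), so the ordering of class-$t$ nodes in row $1$ fixes the colours of their blocks. That ordering together with the global colour allocation contributes the $L!$. The remaining freedom is how the colours attach to the complementary side, which contributes $(\alpha_t+b_t)!$ per class. I would check this by confirming that the rows of a block other than row $1$ are determined up to exactly this bijection, and I would verify the formula on the cases $L=1$ and $L=2$ with $q=0$ against direct enumeration.

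Finally, I would sum over all labels $(b_1,b_2,b_3)$ with the multinomial weight to obtain \cref{eq:NB}.
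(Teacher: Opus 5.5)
Your closing paragraph lands on the right count, and it is essentially the paper's argument: colour row $1$ freely for $L!$, then match the complementary side of each class for $(\alpha_t+b_t)!$. But it contradicts the structural claim you state twice before it, and that claim is false. A monochromatic block need not be either a whole weight-four column or a pair of complementary weight-two columns. Once $p_2$ gives a weight-four column the matching $e_t$, its two edges are independent class-$t$ edges, one on each complementary row-pair. A block may join one of them with the complementary edge of a weight-two column, or of a \emph{different} weight-four column.

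The smallest instance is the OSE-type table with $L=q=2$ and $\alpha_1=\alpha_2=\alpha_3=0$. Consider the six colourings in which the first weight-four column receives each colour in exactly two rows, e.g.\ rows $1,2,3,4$ coloured $(1,1,2,2)$. Each admits exactly one monochromatic $p_2$, and each of its blocks consists of one edge from each weight-four column. Your two-shape picture misses these and gives $3^qL!=18$, while the direct count and the lemma both give $24$. Your three-stage count has the same defect. Forcing both edges of a weight-four column into one block yields $L!\prod_t\alpha_t!$ per pairing rather than $L!\prod_t(\alpha_t+b_t)!$. That is exactly the mismatch you flag as the main obstacle, without locating its source.

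The repair is to drop the two-shape classification. For fixed $p_2$ with labels $(b_t)$, class $t$ has $\alpha_t+b_t$ edges on each of its two complementary row-pairs, each weight-four column contributing one edge to each side. A colouring monochromatic for $p_2$ is then the same thing as two choices. First, for each $t$, a perfect matching between these two edge sets, giving $\prod_t(\alpha_t+b_t)!$. Second, a bijective assignment of the $L$ colours to the resulting blocks, giving $L!$. Conversely, every such choice defines a valid $\bm\sigma$, since each block has exactly one node per row. Summing over $p_2$ with the multinomial weight then gives the stated formula.

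Note also that the sanity checks you propose ($L\le 2$ with $q=0$) cannot detect the error, because mixed blocks only occur when $q\ge 1$. Test a case such as $L=q=2$ instead.
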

\begin{proof}[Proof sketch]
Regard the pairs of $p_2$ as edges. A weight-two column supplies one forced edge, while a weight-four column supplies two complementary edges and may be assigned a class $e_t$. If $b_t$ weight-four columns are assigned class $t$, their assignments contribute the multinomial factor of all of the different ways the weight four columns can be arranged into the different weight-two classes, and then summed over all of the ways $q$ can be decomposed into $b_1$, $b_2$, and $b_3$. Class $t$ then contains $\alpha_t+b_t$ edges of each complementary type. Summing over the within-class row permutations contributes a factor of $(\alpha_t+b_t)!$. Lastly, there is an overall multiplicity of $L!$ that stems from the relabelling of modes. i.e., when $\bm{\sigma} = \sigma^{\times k}$ for $\sigma \in S_L$.
\end{proof}

Assembling all of the above together, we arrive at our main technical result.
\begin{theorem}[{Four-replica matchgate moment operator}]\label{res:CB-four-replica}
For a Majorana string $\gamma_{\bm{\nu}}$ with length $|\bm{\nu}|=L$, its four-replica twirl is 
\begin{equation}
    \Phi^{(4)}[\gamma_{\bm{\nu}}^{\otimes 4}] = \sum_B C(B) \bigotimes_{r=1}^4 \gamma_{\mu_B^{(r)}},
\end{equation}
where the sum runs over admissible contingency tables $B$, and 
\begin{equation}\label{eq:CB-final}
 C(B)=
\frac{(-1)^{\operatorname{inv}(W_B)}
L!(L+2)!\alpha_1!\alpha_2!\alpha_3!}
{(L-q+2)!(2N)_L(2N+2)_L}.
\end{equation}
Here, we take $B$ to have class multiplicities $\alpha_1,\alpha_2$, and $\alpha_3$, and $q$ weight-four columns, with $\alpha_1+\alpha_2+\alpha_3+q=L$.
\end{theorem}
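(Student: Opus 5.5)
The plan is to assemble the pieces already established and evaluate one remaining combinatorial sum in closed form. By \cref{cor:reduced-moment-operator}, every admissible table contributes $C(B)=s(B)/Z_{\lambda_\star}(1^{2N})$. For $k=4$ the denominator has already been computed as $Z_{\lambda_\star}(1^{2N})=(2N)_L(2N+2)_L$. This is the product over the $L\times 2$ Young diagram $\lambda_\star=(2^L)$ in \cref{eq:Zlambda-star}: the $j=1$ factors give $(2N)_L$ and the $j=2$ factors give $(2N+2)_L$. I would only restate this.

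Next, \cref{lem:signed-magnitude} gives $s(B)=\varepsilon(B)\,\mathcal N(B)$, and \cref{lem:table-sign} identifies the sign as $(-1)^{\operatorname{inv}(W_B)}$. This produces the sign prefactor in \cref{eq:CB-final}. What remains is to show that the multinomial sum in \cref{lem:table-magnitude} collapses to
\begin{equation*}
\mathcal N(B)=\frac{L!\,(L+2)!\,\alpha_1!\alpha_2!\alpha_3!}{(L-q+2)!}.
\end{equation*}

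The key step is a generating-function evaluation. First I would pull out $q!\,\alpha_1!\alpha_2!\alpha_3!$ from each term, rewriting
\begin{equation*}
\binom{q}{b_1,b_2,b_3}\prod_{t}(\alpha_t+b_t)!=q!\,\alpha_1!\alpha_2!\alpha_3!\prod_{t=1}^{3}\binom{\alpha_t+b_t}{b_t}.
\end{equation*}
Since $\sum_{b\ge0}\binom{\alpha+b}{b}x^b=(1-x)^{-(\alpha+1)}$, the sum over $b_1+b_2+b_3=q$ of the product of binomials is the coefficient of $x^q$ in $(1-x)^{-(\alpha_1+\alpha_2+\alpha_3+3)}$. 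Using $\alpha_1+\alpha_2+\alpha_3=L-q$, this coefficient is
\begin{equation*}
\binom{L-q+2+q}{q}=\binom{L+2}{q}=\frac{(L+2)!}{q!\,(L-q+2)!}.
\end{equation*}
The $q!$ then cancels against the prefactor, which yields the displayed form of $\mathcal N(B)$. Combining this with the sign and the denominator gives \cref{eq:CB-final}. The operator expansion itself is then just \cref{eq:moment-operator-reduced} specialised to $k=4$.

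I expect no step here to be a serious obstacle, since the hard structural work lives in the lemmas being invoked. The one point needing care is confirming that the lemmas apply uniformly to every admissible table. In particular, the class-balance relations \cref{eq:class-balance-main} must hold for all admissible $B$, so that $(\alpha_1,\alpha_2,\alpha_3,q)$ is well defined and satisfies $\alpha_1+\alpha_2+\alpha_3+q=L$. Tables violating this balance have $s(B)=0$ and simply drop out of the sum. I would also check the degenerate cases $q=0$ and $q=L$ against the formula as a sanity test. For $q=L$ the prediction is $\mathcal N=L!\,3^L$, with the table sign $+1$ since the class word is empty.
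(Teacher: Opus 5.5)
Your argument is the paper's own proof: you combine $C(B)=s(B)/Z_{\lambda_\star}(1^{2N})$ with $Z_{\lambda_\star}(1^{2N})=(2N)_L(2N+2)_L$ and the sign lemma, and you use the same generating-function collapse $\sum_{b_1+b_2+b_3=q}\prod_t\binom{\alpha_t+b_t}{b_t}=[z^q](1-z)^{-(L-q+3)}=\binom{L+2}{q}$ (the class balance follows automatically from the four row sums being $L$, so no admissible table ever has to be discarded). One slip is in your sanity check: at $q=L$ the formula gives $\mathcal N=L!\,(L+2)!/2=(L!)^2\binom{L+2}{2}$, which agrees with the paper's OSE computation, whereas your value $L!\,3^L$ drops the within-class factors $\prod_t b_t!$.
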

\begin{proof}[Proof sketch]
By~\cref{cor:reduced-moment-operator}, $C(B)=s(B)/Z_{\lambda_\star}(1^{2N})$. The preceding lemmas give $s(B)=(-1)^{\operatorname{inv}(W_B)}\mathcal N(B)$ and evaluate $\mathcal N(B)$, while~\cref{eq:Zlambda-star} gives $Z_{\lambda_\star}(1^{2N})=(2N)_L(2N+2)_L$. Substitution and a simplification of the resultant sum prove the result.
\end{proof}

With~\cref{res:CB-four-replica} in hand, we may explicitly compute arbitrary matchgate $4$-replica twirls in the Majorana basis for an initial Majorana string (identical on the four replicas). What remains in any given application is to determine and count, with coefficient multiplicity, which tables survive the relevant boundary condition.~\Cref{res:CB-four-replica} is the main technical result of our paper and should be of independent interest outside the present context. Although this compact form is specialised to the $k=4$ case,~\cref{lem:unique-weingarten-sector} holds for any even $k$, provided we compute $s(B)$ as in~\cref{eq:s-colouring}.
Thus our results pertain to the moment operator in an arbitrary number of replicas, but we leave further generalisations of the higher-$k$ expressions to future work. In addition to finding a nicer closed-form expression, the dominant complexity would then be in characterising the contingency classes, a computation which is thought to grow polynomially in $L$ but exponentially in $k$~\cite{miller2011exactenumerationsamplingmatrices}. 

Before moving on to the applications of Theorem~\ref{res:CB-four-replica}, we make a few comparisons with related literature. First, we have extended the results of Refs.~\cite{wan_matchgate_2023,chapman2025fermioni} beyond the matchgate moment operators for $k\leq3$ replicas. This turns out to be a much simpler problem, as replicas of Majorana strings can be assigned a unique column pairing when there are two replicas, and the three-replica case vanished. Next, our approach is complementary to Refs.~\cite{sierant2026theorymatchgatecommutant,braccia2026commutantfermionicgaussianunitaries}, where a Lie-algebraic approach is adopted to give a general recipe for obtaining an orthogonal basis for the matchgate commutant. Unlike our method, which puts Majorana strings at the forefront, this basis has non-trivial overlap with the Majorana basis. An advantage of the orthogonal commutant approach is that it leads to explicit expressions for the commutant dimension, hidden in our formulas via degeneracies in our Majorana basis, and is therefore useful for computing some properties of random Gaussian states (which are simply equal to the normalised projection onto the commutant). In contrast, by working directly in the Majorana basis, our approach is particularly tailored to finding twirls of replica Majorana operators, having reduced higher moments to a well-studied enumeration problem.

\section{Applications to Many-body Physics}\label{sec:apps}
Now we will apply the tools developed in the previous section to a range of relevant many-body quantities. Namely, we derive the matchgate Haar average of the LOE, OSE, and OTOC. This represents the long-time operator properties resultant from both fermionic dynamics and (through the Jordan-Wigner transform) matchgate circuits. We therefore address the question of how the structure of Gaussian unitaries affects the long-time physics of otherwise generic systems.
\par
All three calculations follow the same rough strategy. If a quantity $Q$ has an $r$-replica representation with respect to a `boundary' operator $\mathsf{K}_Q$, then
\begin{equation}\label{eq:boundary-strategy}
 \begin{aligned}
 \overline{Q(O_U)} &=\Tr[\Phi^{(k)}(O^{\otimes k})\mathsf{K}_Q]=\sum_{\bm m}\sum_{B\in\mathcal C_{\bm m}}C(B)\,w_Q(B),\\
 \text{where}&\quad w_Q(B):=\Tr\!\left[\left(\bigotimes_{r=1}^{k}\gamma_{\bm\mu_B^{(r)}}\right)\mathsf{K}_Q\right] .
 \end{aligned}
\end{equation}
The evaluation of $C(B)$ is universal and depends only on~\cref{eq:CB-final}, but in principle there are exponentially many different contingency tables on which $w_Q(B)$ must be evaluated. As we shall see, however, the choice of boundary operator can greatly simplify even further the classes of contingency tables which survive the trace, via delta functions and {cancellations} {arising from $w_Q(B)$}.

\subsection{Operator entanglement}\label{sec:LOE}
The \emph{(local-)operator entanglement} (LOE) entropies of a Hermitian operator $O_U=U^\dagger O U$ are defined via its state representation,
\begin{equation}
  |{O_U}\rrangle := (O_U \otimes \id)\ket{\phi^+},
\end{equation}
where $\ket{\phi^+}:= D^{-1/2} \sum_i \ket{i i}$ is a maximally 
entangled state across two copies of Hilbert space. 
The LOE entropies are then defined as the $k$-R\'enyi entanglement entropies of $|{O_U}\rrangle$ across a spatial bipartition of doubled Hilbert space, $\mathcal{H}_A^{\otimes 2} \otimes \mathcal{H}_{\bar{A}}^{\otimes 2}$ where $\mathcal{H} = \mathcal{H}_A \otimes \mathcal{H}_{\bar{A}}$ {as}~\footnote{As usual for R\'enyi entropies, the definition is extended to $k=0,1,\infty$ via their respective limits.}
\begin{equation}
\begin{split}
  &{\operatorname{LOE}_A^{(k)}} := \frac{1}{1-k}\log\left(E^{(k)}_A(O_U) \right),\quad\text{where}\\
  &E^{(k)}_A(O_U) := \Tr[ \Tr_{\bar{A}}[|{O_U}\rrangle\llangle O_U |]^k ] \\
  &\qquad \qquad \, \,= D^{-k}\Tr[O_U^{\otimes 2 k} T_{\pi_o}^A T_{\pi_e}^{\bar{A}}].
\end{split}
\end{equation}
Here, in anticipation of finding the matchgate-averaged LOE, we define the LOE $k$-purities, $E^{(k)}_A(O_U)$, and rewrite this quantity as a tensor product over replicas of Hilbert space~\cite{dowling2026page}. We have defined the permutations, written in standard cyclic notation, $\pi_e:=(12)(34)\dots([2k-1] [2k] )$ as a pairing permutation which SWAPs the first and second copies of Hilbert space, the third and fourth, and so on, while $\pi_o:=([2k] 1)(23)\dots([2k-2] [2k-1] )$. $T_{\pi_o}^A$ then refers to the permutation unitary representation of $\pi_o$ across the $2k$ replicas of Hilbert space, but acting only on the $A$ subsystem of each replica. Throughout, we take base $2$ logarithms.

Operationally, the value of the LOE governs the memory cost of representing the corresponding operator as a matrix-product operator~\cite{dowling2026classicalsim}. Beyond this direct interpretation, the LOE has also been studied intensely as a dynamical indicator of many-body chaos. It has been observed to grow at-fastest logarithmically in time for locally interacting integrable spin-chain dynamics~\cite{Prosen2007a,Prosen2009,Dubail_2017,Alba2019,Kos2020II,Alba2021,Murciano_2024}, and (typically) linearly for non-integrable dynamics~\cite{Jonay2018,Kos2020,Bertini_2026}. 
Most relevant to this work, for a fermionic hopping model, an initial extensive Majorana operator (i.e., a Pauli $\sigma_x$) displays logarithmic LOE growth at all times~\cite{Dubail_2017}. Its long-time saturation value has been relatively less studied. It is expected to reach a volume in integrable and non-integrable systems alike (with solvable-model exceptions~\cite{Jacoby_2026}), but with a different characteristic shape which should be sensitive to the exact dynamics and initial operator. For instance, (chaotic) random circuit dynamics without conservation laws reach an operator-independent Page-like volume law~\cite{dowling2026page}. What is missing from this picture is exactly what happens at long times for a generic Gaussian dynamics: how does the operator Page curve differ for Fermions compared to arbitrary, chaotic evolution, and can we analytically derive the (anticipated) volume law? 

Before moving on to compute the LOE via Theorem~\ref{res:CB-four-replica}, we recall a general relation between Gaussian operations and the LOE. Namely, any initial Majorana of string length $L$, evolved by any Gaussian unitary, can be exactly encoded into an MPO of bond dimension $2^L$. This leads to the following bound, which is non-trivial but well known~\cite{Prosen2007,Hubig_2017,Dubail_2017}.

\begin{restatable}[{LOE bound under Gaussian evolution}]{proposition}{loeBound}\label{prop:loe-bound}
Let $\gamma_{\bm{\nu}}$ be a Majorana string of length $L$ and let $U$ be any Gaussian unitary. For every bipartition and every $k>1$, we have
\begin{equation}
  E_A^{(k)}(O_U) \geq 2^{-(k-1)\min\{L,\,2N-L\}}.\label{eq:loe_bound}
\end{equation}
Equivalently, $\mathrm{LOE}_A^{(k)}(O_U)\leq\min\{L,2N-L\}$ for all R\'enyi indices $k\geq0$.
\end{restatable}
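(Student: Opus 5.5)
The plan is to bound the operator Schmidt rank of $O_U$ across the cut, and then convert the rank bound into a bound on the R\'enyi purities. I will work with $U\gamma_{\bm\nu}U^\dagger$; the case $U^\dagger\gamma_{\bm\nu}U$ is identical with $R\mapsto R^{\T}$. I take the bipartition $A|\bar A$ to be contiguous in the Jordan--Wigner ordering, i.e.\ $A$ consists of the first qubits and therefore contains the modes $c_1,\dots,c_{2|A|}$. This is the setting of an MPO cut, and I read ``every bipartition'' in this sense. For a non-contiguous cut the Jordan--Wigner strings no longer factor uniformly across the cut, so the rank-$2$ step below would have to be redone.

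\textbf{Step 1: rank bound $2^L$.} By~\cref{eq:stringTransf}, $U\gamma_{\bm\nu}U^\dagger=\tilde c_{\nu_1}\cdots\tilde c_{\nu_L}$ with $\tilde c_{\nu_\ell}=\sum_\beta R_{\nu_\ell\beta}c_\beta$. Split the sum as $\tilde c_{\nu_\ell}=a_\ell+b_\ell$, where $a_\ell$ collects the modes $\beta\le 2|A|$ and $b_\ell$ the rest.

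\begin{itemize}
\item By~\cref{eq:JW}, every $c_\beta$ with $\beta\le 2|A|$ is supported on $A$, so $a_\ell=X_\ell\otimes\id_{\bar A}$.
\item Every $c_\beta$ with $\beta>2|A|$ equals $P_A\otimes c'_\beta$, where $P_A=\prod_{\alpha\in A}\sigma_z^{(\alpha)}$ is the parity of $A$. Hence $b_\ell=P_A\otimes Y_\ell$.
\end{itemize}

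So each $\tilde c_{\nu_\ell}$ has operator Schmidt rank at most $2$. Operator Schmidt rank is submultiplicative under products: the product of two sums of $r_1$ and $r_2$ product operators is a sum of $r_1r_2$ product operators. It follows that $\operatorname{rank}_A(O_U)\le 2^L$.

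\textbf{Step 2: rank bound $2^{2N-L}$.} Up to a phase, $\gamma_{\bm\nu}=\gamma_{[2N]}\gamma_{\bm\nu^c}$, and $\gamma_{[2N]}$ is proportional to the total parity $\prod_\alpha\sigma_z^{(\alpha)}$. Under the Gaussian adjoint action this gives $U\gamma_{[2N]}U^\dagger=\det(R)\,\gamma_{[2N]}$, by~\cref{eq:det_formula} with $\bm\nu=[2N]$. Hence $O_U\propto \gamma_{[2N]}\,U\gamma_{\bm\nu^c}U^\dagger$. The parity operator is a product operator of rank $1$, and Step 1 applied to $\bm\nu^c$ gives rank at most $2^{2N-L}$. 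Together, $\operatorname{rank}_A(O_U)\le r:=2^{\min\{L,2N-L\}}$.

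\textbf{Step 3: from rank to purity.} The Schmidt rank of $|O_U\rrangle$ across $A^{\otimes2}|\bar A^{\otimes2}$ equals the operator Schmidt rank of $O_U$. The state is normalised because $O_U$ is unitary up to phase, so $\llangle O_U|O_U\rrangle=D^{-1}\Tr[O_U^\dagger O_U]=1$ and the Schmidt weights $\lambda_i$ satisfy $\sum_i\lambda_i=1$ with at most $r$ of them nonzero. For $k>1$, the power-mean (H\"older) inequality gives
\[
E_A^{(k)}=\sum_i\lambda_i^k\ \ge\ r^{1-k}=2^{-(k-1)\min\{L,2N-L\}} ,
\]
which is~\cref{eq:loe_bound}. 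Moreover, every R\'enyi entropy of a distribution with $r$ nonzero entries is at most $\log r$, which yields $\mathrm{LOE}_A^{(k)}\le\min\{L,2N-L\}$ for all $k\ge0$, the limits $k=0,1,\infty$ included.

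\textbf{Main obstacle.} I expect the only real obstacle to be the rank-$2$ factorisation $c_\beta=P_A\otimes c'_\beta$ in Step 1. It relies on the cut being compatible with the Jordan--Wigner ordering, and it is there that any extension to non-contiguous cuts would need extra care. The remaining steps are standard.
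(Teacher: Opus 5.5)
Your proof is correct and follows the paper's argument. Your split $\tilde c_{\nu_\ell}=X_\ell\otimes\id_{\bar A}+P_A\otimes Y_\ell$ is exactly the paper's bond-dimension-$2$ MPO read off at the chosen cut; the $2^L$ product bound and the global-parity step $P=P_A\otimes P_{\bar A}$ for the $2N-L$ branch are the same, and your H\"older step replaces the paper's appeal to $\mathrm{LOE}^{(0)}$ plus R\'enyi monotonicity.

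Your restriction to cuts of the Jordan--Wigner chain is implicit in the paper, which bounds the rank ``across any cut of the chain'', and it is genuinely needed. For example, take $A$ to be the odd-numbered qubits: a single evolved Majorana whose row of $R$ is flat has $N$ equal operator Schmidt weights, so $\mathrm{LOE}^{(k)}_A=\log N>1=L$ for $N>2$.
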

\begin{proof}[Proof sketch]
By~\cref{eq:matchgate_action}, each Heisenberg-evolved Majorana is a linear combination of single Majoranas, $\sum_\beta R_{\alpha\beta}c_\beta$. Under the Jordan-Wigner representation, this is a sum of Pauli strings of the form $Z^{\otimes a}\id^{\otimes b}$, which admits an exact MPO with bond dimension at most $2$. The product of $L$ such operators is therefore an MPO of bond dimension at most $2^L$. 
See~\cref{app:MPO} for full details.
\end{proof}
In fact, although the argument is mostly straightforward~\cref{eq:loe_bound} is tight. In~\cref{app:max_ent} we give an example of a Gaussian unitary $U$, such that the evolved initial Pauli operator is equal to,
\begin{equation}
    O_U= U \gamma_{\bm{\nu}} U^\dagger = \frac{(-i)^{N/2}}{2^{N/2}}
 \prod_{j=1}^{N}(c_j+c_{N+j}).\label{eq:loe_counter_eg}
\end{equation}
We show that this operator is maximally entangled across the half-chain bipartition. This shows that already Gaussian evolutions can in fact attain a maximal operator entanglement. 
It is an interesting property we shall see, however, that on average over matchgate unitaries, such operators are far from being maximally entangled.

For the rest of this section, we will focus on the matchgate twirls of $k=2$ LOE purity, a four-replica quantity (involving four copies of $U$ and $U^{\dagger}$). $E^{(2)}$ is the simplest non-trivial case but already captures the main features of LOE, and indeed is the most studied quantity in the context of many-body physics. Our results generalise directly: averages over higher LOE $k$-purities can be computed similarly, albeit with more complicated combinatorics. Namely, we look to compute,
\begin{equation}\label{eq:LOE-def}
  \begin{split}
    \overline{E^{(2)}(O_U)} &= D^{-2}\Tr[\Phi^{(4)}[O^{\otimes 4}] \, T_{\pi_o}^A T_{\pi_e}^{\bar{A}}],\quad\text{where,}\\
    &T_{\pi_o}^A T_{\pi_e}^{\bar{A}} = \LOEop.
  \end{split}
\end{equation}
We take the initial operator $O$ to be a Majorana string (a Pauli) of string length $L$, and the overline from here on to indicate the Haar averaging of $U$ over the matchgates. In the final line, we have graphically represented the permutation unitaries $T_{\pi_o}^A T_{\pi_e}^{\bar{A}}$ as a tensor network with subsystems $A$ and $\bar{A}$ represented as blue and red wires, respectively.  
Evaluating the above twirl, from~\cref{eq:mainTool} we have that
\begin{equation}\label{eq:LOE-weingarten}
\begin{aligned}
 &\overline{E^{(2)}(O_U)} ={}D^{-2}\sum_{p_1,p_2\in M_{2L}}\operatorname{Wg}_{p_1,p_2}\\[-0.2em]
 &\times\underbrace{\LOEcontract[0.58]}
 _{\delta_{\bm\mu}^{p_2}\delta_{\bm\mu}^{T'}}
 \underbrace{\WeinPOneTerm[0.75]}_{\delta_{\bm\nu}^{p_1}},
\end{aligned}
\end{equation}
where the boundary reduction $\delta{\bm \mu}^{T'}$ is imposed by
\begin{equation}
\begin{split}
  \LOEPaulisExp[0.45]~=~ \LOEPaulisDelta[0.65] ~=~ \LOEPaulisDeltaSimp[0.65].
  \end{split}
\end{equation}
Thus, the boundary operator $T_{\pi_o}^AT_{\pi_e}^{\bar A}$ pairs the four replicas as $(1,2),(3,4)$ on $A$ and as $(2,3),(4,1)$ on $\bar A$, so a table contributes only if
\begin{equation}\label{eq:LOE-pauli-constraints}
\begin{split}
 &P_1^{(A)}=P_2^{(A)},\quad P_3^{(A)}=P_4^{(A)},\\
 \quad &P_2^{(\bar A)}=P_3^{(\bar A)},\quad P_4^{(\bar A)}=P_1^{(\bar A)} ,
 \end{split}
\end{equation}
where $P_r^{(A)}$, $P_r^{(\bar A)}$ are the restrictions to $A$ and $\bar A$ of the Pauli operator $\gamma_{\bm\mu_B^{(r)}}$ obtained from row $r$ by the Jordan--Wigner map. Equivalently, the structure of the contingency class whose multiplicity and Weingarten coefficients we need to track is given by 
\begin{equation}\label{eq:LOE-class-one}
  \left(~\LOEClassOne[0.8]~\right)
\end{equation}
We show in~Appendix~\ref{app:LOE-calc} that this is indeed the only surviving class of contingency tables.

We first compute an expression for the Weingarten coefficient of each table $B$ in this class via~\cref{eq:CB-final}. First, observe that since all of $A$ precedes all of $\bar{A}$ in the mode ordering, the weight-two columns give the class word $e_1,e_1,\cdots,e_1~e_3,e_3,\cdots, e_3$. Equivalently, this is the statement that the LOE boundary forces all tables to already be in an in-column pairing. Hence $\rm{inv}(W_B) = 0$ for all $B$, $\implies \mathcal{E}(B) = 1$ and the quantity is a sum of only positive terms. \par 

To compute $\mathcal{N}(B)$, one simply needs to {substitute} $\alpha_2=0$ into~\cref{eq:NB} which yields, for a valid $(\alpha_1,\alpha_3,q)$ tuple,
\begin{equation}\label{eq:LOE-mag}
  \frac{\mathcal{N}(B)}{L!} = q!\alpha_1!\alpha_3!\binom{L+2}{q}.
\end{equation}
Lastly, then, to finish our computation we simply need to sum this expression for all valid $(\alpha_1,\alpha_3,q)$, which we do in~Appendix~\ref{app:LOE-calc}. 
\begin{figure*}[t!]
  \centering
  \includegraphics[width=0.95\linewidth]{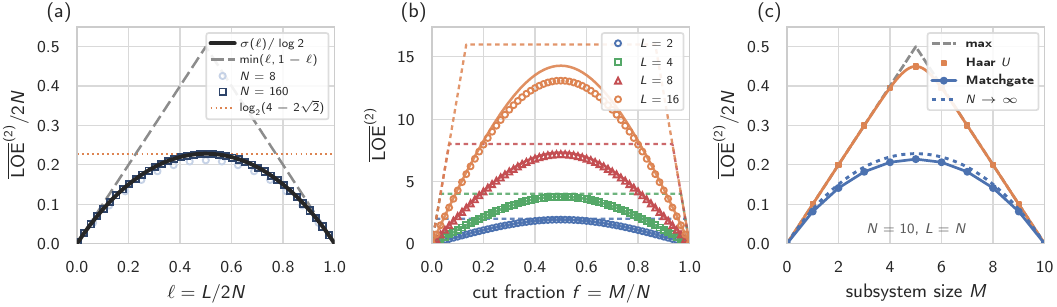}
  \caption{Different parameter slices for the annealed matchgate-average R\'enyi-2 local operator entanglement of a length-$L$ Majorana string across a bipartition with subsystem size $M$, evaluated exactly from Application~\ref{res:LOE-closed-form}. 
    \textbf{(a)} Page-like-curve at the half-chain cut $M=N/2$. We compare exact finite-size results from~\cref{eq:LOE-closed-form}; the asymptotic value~\cref{eq:LOE-page}; and the MPO bond dimension bound,~\cref{eq:loe_bound}.
    \textbf{(b)} Dependence on the cut at fixed string length, for $N=60$ and $L=2,4,8, 16$. Symbols are exact~(\cref{eq:LOE-closed-form}); solid lines are the leading fixed-$L$ expansions~(\cref{eq:LOE-dilute-entropy}) $f=M/N$; dashed lines are the naive upper bound $\min\!\big(L,\,2N-L,\,2M,\,2(N-M)\big)$: the smaller of the Majorana string-length bound (\cref{prop:loe-bound}) and the dimension of the doubled subsystem. 
    \textbf{(c)} A comparison of LOE Page curves for random free-fermionic dynamics ($L=N$ string) with Haar random unitaries.  
    }
  \label{fig:loe-detail}
\end{figure*}

\begin{application}[{Matchgate-averaged LOE purity}]\label{res:LOE-closed-form}
For an initial Majorana string of length $L$ and a bipartition $\mathcal{H}_A\otimes \mathcal{H}_{\bar{A}}$ of the first $M$ versus $\bar{M}:= N-M$ qubits,
\begin{align}\label{eq:LOE-closed-form}
    &\overline{E_A^{(2)}(O_U)}=\frac{L!(L+2)!}{(2N)_{ L}(2N+2)_{L}} \\
\, &\times \sum_{\substack{0\leq w\leq M\\0\leq x\leq\bar M\\w+x\leq\min\{L,2N-L\}}}\frac{(2M)_{{2w}}(2\bar M)_{{2x}}}{w!x!(w+x+2)!}\binom{2N-2w-2x}{L-w-x}. \nn 
\end{align}
\end{application}

Several features of~\cref{eq:LOE-closed-form} are worth remarking on. The right-hand side does not depend on which Majoranas make up the initial string, only on its length. This follows from left invariance of the Haar measure on $\mathrm{O}(2N)$: permuting the selected rows of $R$ is left multiplication by a permutation matrix, an element of $\mathrm{O}(2N)$. Moreover, we see that the expression is invariant under $M\leftrightarrow N-M$ (since LOE is symmetric in the bipartition) and under $L\leftrightarrow 2N-L$ (since $\gamma_{\bm{\nu}}$ and its complement $\gamma_{\bar{\bm{\nu}}}$ differ by the parity operator, also in $\mathrm{O}(2N)$).

While it is exact,~\cref{res:LOE-closed-form} is not particularly transparent. To further analyse~\cref{eq:LOE-closed-form}, we plot it in~\cref{fig:loe-detail} for different parameter regimes. Namely, we plot the annealed average $2$-R\'enyi LOE entropy, 
\begin{equation}\label{eq:LOE-annealed-definition}
    \overline{\mathrm{LOE}}_{A}^{(2)}:= -\log(\overline{E_A^{(2) } }) ,
\end{equation}
against initial string length $L$ and bipartition size, $M=\log_2 (D_A)$. The annealed average has the advantage of only depending on exactly $2k$ replicas, and provides a {lower bound} for the true average by Jensen's inequality, 
\begin{equation}
\int_{U \in \mathcal{M}} {\mathrm{LOE}_{A}^{(k)} (U O U^\dagger)} \geq  \overline{\mathrm{LOE}}_{A}^{(k)}\label{eq:jensens}
\end{equation}
for $k>1$. These differ by the relative fluctuations around the average $k$-purity, and for large $N$ we expect such corrections to be suppressed. Through abuse of notation, as in~\cref{eq:LOE-annealed-definition}, an overline on entropic quantities will always denote the annealed average throughout this work. 

Examining~\cref{fig:loe-detail}, we see some interesting features. First, we saw in~\cref{eq:loe_bound} that the LOE entropies are bounded from above by $\min\{L,2N-L\}$, and the bound is achievable when allowing for arbitrary Gaussian unitaries; see~\cref{eq:loe_counter_eg}. However, from~\cref{fig:loe-detail}, we see that the average case over the matchgate Haar measure is far from saturating the maximal allowable operator entanglement. In part (c), we compare the half-chain LOE, where the fermionic MPO bound of~\cref{eq:loe_bound} is vacuous, to the ``Page curve'' LOE of Haar-evolved operators~\cite{dowling2026page}. We see a gap: while the Page deviation from maximal operator entanglement closes in the thermodynamic limit (at most $\mathcal{O}(1)$), the matchgate curve deviates by $\mathcal{O}(N)$. Such fermionic unitaries therefore typically lead to a volume-law LOE, but with a sub-maximal coefficient. We can make this deviation quantitative by performing an asymptotic expansion of~\cref{eq:LOE-closed-form}. We find that for the half-chain bipartition $M=N/2$ and extensive string length $L=2N\ell$ with $\ell\in(0,1)$ fixed,
\begin{equation}\label{eq:LOE-page}
\overline{\mathrm{LOE}}_{A}^{(2)}=2N\,\sigma(\ell)+\mathcal{O}(1),
\end{equation}
with coefficient,
\begin{equation}
  \sigma(\ell) = \log\!\left[ \frac{ \bigl(\ell-p_\star(\ell)\bigr)^\ell \bigl(1-\ell-p_\star(\ell)\bigr)^{1-\ell}}{ \ell^{2\ell}(1-\ell)^{2(1-\ell)}}
\right],
\end{equation}
where $p_\star(\ell)=({\sqrt{1+4\ell(1-\ell)}-1})/{2}$.
We see that the LOE satisfies a volume law whenever $L$ is extensive, and we see that the maximum is achieved when $L=N$, $\sigma(\frac{1}{2}) = \log\big(4-2\sqrt2\big)\simeq 0.23$. Since the maximal operator entanglement at the half-chain is $N$, random Gaussian dynamics asymptotically yield less than half of the maximal value. 

We should compare the above phenomenon to the Page curve of random Gaussian states~\cite{PhysRevB.103.L241118,Bianchi_Hackl_Kieburg_Rigol_Vidmar_2022}. There, a similar shape is seen, with fermions satisfying a suppressed volume law. There is a key difference: the entanglement density of Fermionic states \textit{decreases} with $N$, before reaching a qualitatively similar shape to our result. Here, the operator entanglement sees the opposite behaviour, where increasing $N$ increases the entanglement density towards a maximum for each $L$; see~\cref{fig:loe-detail}(a).

We may also evaluate~\cref{eq:LOE-closed-form} for few-particle operators, i.e., for $L=\mathcal{O}(1)$. We find that for the half-chain, 
\begin{equation}\label{eq:LOE-dilute-entropy}
\overline{\mathrm{LOE}}_{A}^{(2)} ={}L-\frac{L(L+1)}{2N \ln(2)}+\mathcal O(N^{-2}) .
\end{equation}
with the formula for arbitrary $M$ reported in~\cref{app:LOE-calc}. We see that constant Majorana-length operators are maximally entangled within their number sector, up to polynomially small corrections. This thus characterises the operator entanglement under random matchgate circuits. However, LOE is only one measure of the extent to which a Heisenberg operator can become complex under dynamics. We now apply the same machinery to the OSE, which is sensitive instead to how the operator spreads over the Pauli basis.

\subsection{Operator stabiliser entropies}\label{ssec:OSE}
Consider the set of $N$-qubit Pauli strings $\mathcal{P}_N = \{\id,\sigma_x,\sigma_y,\sigma_z\}^{\otimes N}$, equal to the Pauli group modulo phases. The Pauli strings form an orthonormal basis for the space of operators on $N$ qubits, with respect to the Hilbert-Schmidt inner product. Any operator $O$ can be therefore be expanded in this basis as 
\begin{equation}
  O = \sum_{P \in \mathcal{P}_N} a_P P,
\end{equation}
where we take the operator to be Hilbert-Schmidt normalised, $\Tr[O^\dagger O] = D$, such that $\sum_P |a_P|^2 = 1$ and so the coefficients $\{|a_P|^2 \}_{P \in \mathcal{P}_N}$ form a probability distribution over the Pauli strings for any unitary operator $O$. The \emph{operator stabiliser entropies} (OSE) of a Hermitian operator $O$ are defined as the (classical) R\'enyi entropies of the distribution $\{|a_P|^2 \}_{P \in \mathcal{P}_N}$~\cite{dowling2024magicheis},
\begin{equation}
 \begin{aligned}
 \operatorname{OSE}^{(k)}(O)&:=\frac{1}{1-k}\log M^{(k)},\quad \text{where}\\
 M^{(k)}(O)&:=\sum_{P\in\mathcal P_N}|a_P|^{2k} ,
 \end{aligned}
\end{equation}
where we have defined the operator stabiliser purities $M^{(k)}$, involving exactly $2k$ replicas of $O$. The OSE entropies have been shown to govern the cost of Pauli propagation methods, where a $\Omega(N)$ scaling of $k\geq 1$ entropies indicates an unsuitability of such algorithms. Moreover, the OSE entropies are `magic' monotones, i.e., they measure how many non-Clifford resources are required to implement $U$ to produce the operator $O_U$ for an initial Pauli $O$. It is highly sensitive to noise in a circuit~\cite{dowling2026noise}, and can be shown to scale as $\mathcal{O}(\log(t))$ in certain exactly solvable integrable spin chains~\cite{Alba2019,dowling2024magicheis}.    

As with the LOE purity in the previous section, we can rewrite the OSE $k$-purities $M^{(k)}$ as a $2k$-replica quantity, such that their matchgate twirl is 
\begin{equation}
  \overline{M^{(k)}(O_U)} = D^{-2k} \sum_P \Tr[\Phi^{(2k)}[O^{\otimes 2k}] P^{\otimes 2k}].
\end{equation}
We see that the boundary operator defining the OSE purities for the replica operator is $\sum_P P^{\otimes 2k}$ (an element of the $2k$-replica Clifford commutant~\cite{Bittel_2026}).

\begin{figure}[t!]
  \centering
  \includegraphics[width=0.85\linewidth]{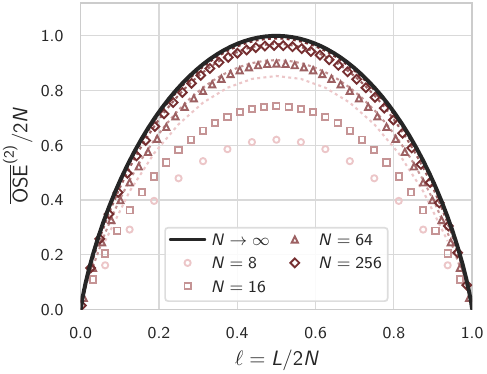}
\caption{Annealed matchgate-averaged R\'enyi-2 OSE from the closed form of~\cref{eq:OSE-closed} in the extensive string-length regime, with the upper bounds of~\cref{eq:ose_bound} shown as dotted lines. }
  \label{fig:ose-numerics}
\end{figure}

We restrict to $k=2$, the simplest non-trivial case, again noting that our methods generalise readily (but tediously) to higher $k$. The strategy is to rewrite the twirl using~\cref{eq:mainTool}, and then to count the contingency tables which have identical Majorana strings across each replica, otherwise the above expression will vanish from the fact that $\Tr[P \gamma_{\bf{\mu}} ] = 0$ if $P\neq\gamma_{\bf{\mu}}$.

Take $O=\gamma_{\bm \nu}$ to be a length-$L$ Majorana string, with its irrelevant phase chosen so that it is a Hermitian Pauli. In the convention of~\cref{sec:majorana}, $c_\alpha^2=\id$, so $O$ is already Hilbert--Schmidt normalised, $\sum_P|a_P|^2=1$. At $k=2$ the twirl is the four-replica moment operator, so~\cref{eq:coefficient-table-final} gives
\begin{equation}
 \overline{M^{(2)}} =D^{-4}\sum_{P\in\mathcal{P}_N}\sum_{B}C(B)\prod_{r=1}^{4}\Tr\!\big[\gamma_{\bm\mu_B^{(r)}}P\big].
\end{equation}
Each Majorana string is a Pauli up to a phase, $\gamma_{\bm\mu}=i^{\theta(\bm\mu)}P_{\bm\mu}$, and $\Tr[\gamma_{\bm\mu}P]=D\,i^{\theta}\delta_{P,P_{\bm\mu}}$. Hence a table survives only if all four rows are identical,
\begin{equation}
 \left(~\OSEClass[0.8]~\right)
\end{equation}
and hence $\overline{M^{(2)}}=\sum_{\bm\mu}C_{B_{\bm\mu}},$
the sum running over all $\binom{2N}{L}$ length-$L$ supports. Such a table has every occupied column of weight four, so in the class notation of~\eqref{eq:classes} $\alpha_1=\alpha_2=\alpha_3=0$, while $q=L$.
This means that the class word $W_B$ is empty and $\mathcal{E}(B)=+1$.~\cref{eq:NB} then gives
\begin{equation}
 \begin{aligned}
 \mathcal N(B_{\bm\mu}) &={L!\!\sum_{b_1+b_2+b_3=L}} \binom{L}{b_1,b_2,b_3}\prod_{t=1}^{3}b_t!\\
 &=(L!)^2\binom{L+2}{2} .
 \end{aligned}
\end{equation}
since every term of the sum equals $L!$ and the number of compositions of $L$ into three non-negative parts is $\binom{L+2}{2}$. Dividing by the `zonal polynomial'~\eqref{eq:Zlambda-star} and multiplying by the number of supports gives us the explicit expression.

\begin{application}[{Matchgate-averaged OSE purity}]\label{res:OSE-closed}
For a Majorana string of length $L$ on $N$ qubits, 
\begin{equation}\label{eq:OSE-closed}
 \overline{M^{(2)}(O_U)} =\frac{(L+2)!}{2\,(2N+2)_L}
\end{equation}
is the value of the operator stabiliser entropy purity averaged over the matchgate ensemble. 
\end{application}
We plot this result in~\cref{fig:ose-numerics} for varying string length $L$ and system size $N$. The random Gaussian unitary spreads $O$ over the subspace of $\binom{2N}{L}$ Majorana strings of the same length. 
The dimension of this sector alone bounds the OSE for every Gaussian unitary, stated in the following proposition and in more detail in~\cref{app:OSE-calc}.

\begin{restatable}[{OSE bound under Gaussian evolution}]{proposition}{oseBound}\label{prop:ose-bound}
Let $\gamma_{\bm{\nu}}$ be a Majorana string of length $L$ and let $U$ be any Gaussian unitary. Then, for every $k>1$,
\begin{equation}
  M^{(k)} (O_U) \geq \binom{2N}{L}^{-k+1},\label{eq:ose_bound}
\end{equation}
with equality if and only if the Pauli distribution $\{|a_P|^2\}$ of $O_U$ is uniform on the $\binom{2N}{L}$ length-$L$ Majorana strings. Equivalently, $\mathrm{OSE}^{(k)}(O_U)\leq\log\binom{2N}{L}$ for all R\'enyi indices $k\geq0$.
\end{restatable}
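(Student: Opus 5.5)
The plan is to reduce the statement to an elementary fact about R\'enyi entropies of a probability vector supported on a set of fixed size.

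\textbf{Step 1: the support of $O_U$.} By \cref{eq:det_formula}, $O_U=U\gamma_{\bm\nu}U^\dagger=\sum_{\bm\nu'}\det(R_{\bm\nu,\bm\nu'})\gamma_{\bm\nu'}$, where the sum runs over the $\binom{2N}{L}$ canonically ordered length-$L$ supports. Each $\gamma_{\bm\nu'}$ equals a distinct Pauli string $P_{\bm\nu'}$ up to a phase $i^{\theta(\bm\nu')}$. Distinct supports give distinct Paulis, because the normalised Majorana strings form an orthonormal basis. Hence $|a_P|^2=\det(R_{\bm\nu,\bm\nu'})^2$ if $P=P_{\bm\nu'}$ for some length-$L$ support $\bm\nu'$, and $a_P=0$ otherwise.

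Normalisation follows from Hilbert--Schmidt norm preservation under unitary conjugation:
\begin{equation*}
\Tr[O_U^\dagger O_U]=\Tr[\gamma_{\bm\nu}^\dagger\gamma_{\bm\nu}]=D .
\end{equation*}
This gives $\sum_{\bm\nu'}\det(R_{\bm\nu,\bm\nu'})^2=1$, which is also the Cauchy--Binet identity for the rows of an orthogonal $R$. So $\{|a_P|^2\}$ is a probability distribution supported on a set $S$ with $|S|\le\binom{2N}{L}$.

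\textbf{Step 2: the purity bound.} For $k>1$, apply H\"older's inequality (or Jensen's inequality for the convex function $x\mapsto x^k$) to the distribution $p$ on $S$:
\begin{equation*}
1=\sum_{P\in S}p_P\le |S|^{(k-1)/k}\Big(\sum_P p_P^k\Big)^{1/k}.
\end{equation*}
Rearranging gives $M^{(k)}\ge|S|^{-(k-1)}\ge\binom{2N}{L}^{-k+1}$.

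\textbf{Step 3: the equality case.} Equality in H\"older requires $p_P$ to be constant on $S$. In addition, $|S|$ must equal $\binom{2N}{L}$, since $|S|<\binom{2N}{L}$ would make the second inequality strict. Together these say that $p$ is uniform on all length-$L$ strings. The converse is immediate by direct substitution.

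\textbf{Step 4: translation to entropies.} For $k>1$, the prefactor $1/(1-k)$ is negative, so the purity bound becomes $\mathrm{OSE}^{(k)}\le\log\binom{2N}{L}$. For $0\le k<1$ the inequality flips in H\"older and the sign of $1/(1-k)$ flips as well, so the same entropy bound holds. Alternatively, use the standard monotonicity of R\'enyi entropies together with $H_0=\log|S|\le\log\binom{2N}{L}$. The cases $k=0,1,\infty$ follow by taking limits.

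The only step needing care is Step 1: it must establish that distinct length-$L$ supports map to distinct Pauli strings, so that no coefficients merge across supports, and that phases do not affect the moduli.
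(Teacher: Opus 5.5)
Your proof is correct and follows essentially the same route as the paper. Both expand $O_U$ via the determinant formula to see that its Pauli distribution lives on at most $\binom{2N}{L}$ distinct strings, use convexity of $x\mapsto x^k$ (Jensen/H\"older) for $k>1$, and use the support-size bound on R\'enyi entropies for general $k\geq 0$. The only cosmetic difference is that the paper applies Jensen directly to all $\binom{2N}{L}$ slots (zeros included), which gives the equality condition in one step, whereas you pass through the actual support $S$ and then use $|S|\le\binom{2N}{L}$.
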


We can analyse how close a typical Matchgate circuit comes to saturating this bound by first rewriting~\cref{eq:OSE-closed} in terms of the annealed average entropy (\cref{eq:LOE-annealed-definition}), 
\begin{equation}\label{eq:OSE-entropy}
 \begin{aligned}
 \overline{\mathrm{OSE}}^{(2)}:=&-\log\overline{M^{(2)}}\\
=&\log\binom{2N}{L} -\log\frac{(L+1)(L+2)}{2(2N+1)}\\
 &-\log\frac{(2N+2-L)(2N+1-L)}{2N+2} .
 \end{aligned}
\end{equation}
From this, we can study the two limits considered earlier for the LOE. First, if we fix {$L=\mathcal{O}(1)$, and expand for large $N$,} then 
\begin{equation}
\log_2\binom{2N}{L}-\overline{\mathrm{OSE}}^{(2)}=\log_2\frac{(L+1)(L+2)}{2}+\mathcal{O}(N^{-1}).    
\end{equation}
We see that a constant-length Majorana is approximately uniformly spread in Pauli strings according to the second moment, up to a constant correction. If we instead consider an extensive initial string, with density $\ell := L/2N$ for $0<\ell <1$, we find 
 \begin{equation}
     \log\! \binom{2N}{L}-\overline{\mathrm{OSE}}^{(2)}
\approx 2\log(2N)+2\log[\ell(1-\ell)]-1,\label{eq:ose-asymp-extensive}
 \end{equation}
 up to corrections of size $\mathcal{O}(N^{-1})$. We see that there is a logarithmic gap between the maximum allowed by the upper bound of~\cref{eq:ose_bound} and the matchgate average quantity in the largest particle sector. 

For fixed $L$, we saw that the LOE is bounded by $L\log2$, whereas the OSE grows as $L\log N$; for both fixed and extensive $L$, these two quantities therefore exhibit an asymptotic separation. Otherwise, for extensive $L$, LOE deviates from the maximum by a constant relative error, whereas the OSE error in~\cref{eq:ose-asymp-extensive} vanishes in the thermodynamic limit, relative to the leading order. Matchgates therefore exhibit a gap between entanglement and magic. We return to this comparison in~\cref{sec:reource_theories}.

\subsection{Out-of-time-ordered correlators} \label{sssec:OTOC-closed}
Higher-order out-of-time-ordered correlators of a Heisenberg operator $O_U$ are defined as 
\begin{equation}\label{eq:otocdef}
 \operatorname{OTOC}^{(k)}(O_U,X)
 :=D^{-1}\Tr[(O_UX)^k],\quad k\in2\mathbb N .
\end{equation}
Here $X$ is some Hermitian and normalised reference operator. If $O_U$ and $X$ commute, the correlator is one, and so the operators are considered unscrambled. The $k=2$ case is the standard four-point diagnostic of information scrambling~\cite{Shenker_Stanford_2014}, whose long-time behaviour is related to LOE for typical probes~\cite{dowling2023scrambling,Dowling2025thesis}. Higher orders $k\ge4$ probe free probability~\cite{Voiculescu1991,nica2006lectures}, a universal long-time property of random~\cite{fava2023designsfreeprobability,dowling2025freeindep} and ergodic~\cite{Pappalardi2022freeETH,Vallini2026longtimefreenessin,Claeys2026} dynamics. In particular, the asymptotic vanishing of all OTOCs for two traceless operators indicates that the observables are \textit{freely independent}, a non-commutative analogue of statistical independence~\cite{Voiculescu1991}. OTOCs are moreover central to recent protocols for quantum advantage, where it has been argued that `Pauli interference' effects are hard to reproduce efficiently on a classical computer~\cite{Abanin2025}.

We again apply the strategy for matchgate averages adopted above. We first rewrite~\cref{eq:otocdef} in $k$-replicas of Hilbert space,
\begin{equation}\label{eq:otoc_op}
    \begin{split}
        {\operatorname{OTOC}^{(k)}} &=  D^{-1} \Tr[O_U^{\otimes k}X^{\otimes k}T_{\sigma}],\quad\text{where} \\[1em]
        T_{\sigma} &:=  \OTOCop
    \end{split}
\end{equation}
is the cyclic permutation unitary between replicas, $\sigma=(12\dots k) $. We take the operators $O$ and $X$ to be Majorana strings of length $L$ and $L'$, respectively. Considering the average $2k$-OTOC over the matchgate group, we have that 
\begin{equation}
  \overline{\operatorname{OTOC}}^{(k)} := D^{-1} \Tr[\Phi^{(k)}[O^{\otimes k}]X^{\otimes k}T_{\sigma}].
\end{equation}

The moment operator $\Phi^{(k)}[O^{\otimes k}]$ is given by~\cref{eq:mainTool}. After undoing the cyclic permutation, conjugation by the probe supplies the Pauli phase $XPX=(-1)^{\langle X,P\rangle}P$. The remaining trace is nonzero exactly when the product of the $k$ replica strings is proportional to the identity. Diagrammatically, the contraction is
$$
\sum_{\vec{P}\in \operatorname{Comm}(U_f(N))}\OTOCk[0.8]
$$
Interestingly, if $X$ is chosen to be Pauli and $k$ is (as is necessary) even, then the above expression is non-zero {for every valid binary contingency table in the expansion of the commutant}, each coming with a non-trivial phase. Far from the substantial symmetry reduction imposed by the LOE and OSE calculations (where many terms reduced to once contracted with the boundary operator), we see that the OTOC appears to probe the entire commutant equally, with non-trivial interference effects between components; cf. the Pauli interference arguments of Ref.~\cite{Abanin2025}. Put differently, the pairing structure of the free-fermionic moment operator \emph{guarantees} that each element in the sum comes with an even number of each appearing Majorana spinor species, which means each term in the expansion is proportional to $\id$. The strategy for calculating the matchgate-averaged OTOC is therefore equivalent to enumerating all admissible contingency tables as described in~\cref{sec:contingency} along with their expansion coefficients, the overall sign stemming from the trace of the ordered Paulis, and the sign from (anti-)commuting the probe operator. We show how to account for these signs in the expression and thus make the overall calculation tractable.

\begin{figure}[t!]
  \centering
  \includegraphics[width=\linewidth]{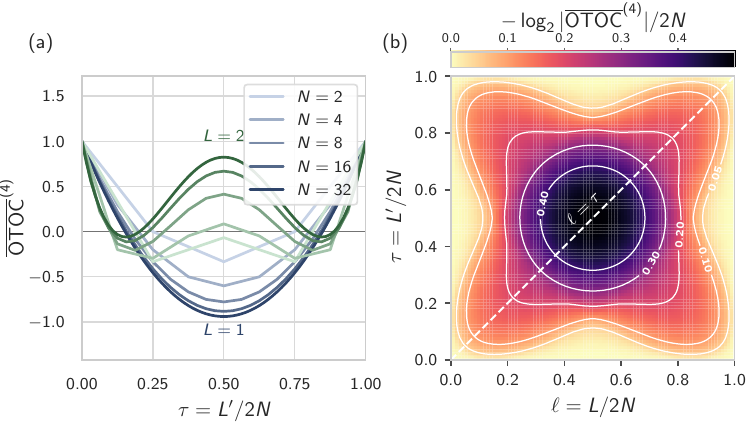}
\caption{Matchgate-averaged four-replica OTOC evaluated from~\cref{eq:otoc-closed}. We write $\ell=L/2N$, $\tau=L'/2N$ for the fractions of the $2N$ Majorana modes occupied by the operator and by the probe, respectively.
    \textbf{(a)} Exact $\overline{\mathrm{OTOC}}^{(4)}$ against $\tau$ for $L=1$ (blue) and $L=2$ (green), with $N=2,4,8,16,32$ shown light to dark. At fixed $L$ the correlator is an $\mathcal{O}(1)$ number for generic probe lengths and does not decay with system size.
    \textbf{(b)} The same extensive decay rate over the whole $(\ell,\tau)$ plane. This is symmetric about the diagonal $\ell=\tau$, and maximal at the most scrambled point $\ell=\tau=\tfrac12$, that is $L=L'=N$, where $\big|\overline{\mathrm{OTOC}}^{(4)}\big|\simeq 2^{-N}$.
    } 
  \label{fig:otoc-numerics}
\end{figure}

We now compute the matchgate-average OTOC for $k=4$. Let $O=\gamma_{\bm{\nu}}$ have length $L$ and let the probe be $X=\gamma_{\bm{\nu}'}$ of length $L'$, with supports $S$ and $S'$ and both observables Hermitian. Inserting~\cref{eq:coefficient-table-final} and commuting the four copies of $X$ to the right using
$X\gamma_{\bm\mu}X=(-1)^{L'L-|S'\cap\bm\mu|}\gamma_{\bm\mu}$ gives
\begin{equation}\label{eq:OTOC-table-sum-main}
 \overline{\mathrm{OTOC}}^{(4)} =\sum_{B\in\mathcal{C}}C(B)\;\varepsilon(B)\;(-1)^{|S'\cap S_2|+|S'\cap S_4|}
\end{equation}
where $S_r$ is row $r$ of the table and
$\varepsilon(B):=D^{-1}\Tr[\gamma_{S_1}\gamma_{S_2}\gamma_{S_3}\gamma_{S_4}]$. The factors of $(-1)^{L'L}$ cancel in pairs. Because the four supports of an admissible table satisfy $S_1\triangle S_2\triangle S_3\triangle S_4=\emptyset$, the product is always proportional to the identity, so $\varepsilon=\pm1$ and every admissible table contributes. Explicitly, sorting the concatenated string, we have $\varepsilon(B) = (-1)^{I(B)}$, where $I(B):=\sum_{r<r'}I(S_r,S_{r'})$ as in prior sections. Once more, we can show that this explicit value depends only on properties of the contingency table alone, and so evaluating the full ensemble average amounts to first computing the coefficient contribution for a given $B$ and a given $X$, and then counting up all the number of ways we can attain those values. First, we have
\begin{restatable}[OTOC table coefficient]{lemma}{OTOCphase}\label{lem:otoc-phase}
The summand in~\cref{eq:OTOC-table-sum-main}, $C(B) \varepsilon(B)(-1)^{|S'\cap S_2|+|S'\cap S_4|}$, evaluates to
  \begin{equation}
\frac{(-1)^{\alpha_2+k_1+k_3}\mathcal{N}(\alpha_1,\alpha_2,\alpha_3,q)}{(2N)_L(2N+2)_L},
\end{equation}
where $k_t$ counts the number of class-$t$ weight-two columns of $B$ whose mode lies in $S'$. 
\end{restatable}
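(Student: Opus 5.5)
The plan is to factor the summand into its three signs and show that, modulo two, their exponents combine to $\alpha_2+k_1+k_3$. By \cref{res:CB-four-replica} (equivalently \cref{cor:reduced-moment-operator} together with \cref{lem:signed-magnitude,lem:table-sign,lem:table-magnitude}) we already have $C(B)=(-1)^{\operatorname{inv}(W_B)}\mathcal N(\alpha_1,\alpha_2,\alpha_3,q)/\big((2N)_L(2N+2)_L\big)$. So it suffices to establish two parity identities:
\begin{equation*}
|S'\cap S_2|+|S'\cap S_4|\equiv k_1+k_3,\qquad I(B)+\operatorname{inv}(W_B)\equiv \alpha_2 \pmod 2 .
\end{equation*}

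The probe identity is a column-by-column count. Each mode $j\in S'$ contributes $B_{2,j}+B_{4,j}$. A weight-four column contributes $2$. A class-$2$ column occupies rows $\{1,3\}$ or $\{2,4\}$ and contributes $0$ or $2$. A class-$1$ column ($\{1,2\}$ or $\{3,4\}$) or a class-$3$ column ($\{1,4\}$ or $\{2,3\}$) contributes exactly $1$. Summing over $j\in S'$ gives $k_1+k_3$ modulo two.

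For the trace sign I would write $\varepsilon(B)$ as the sign of the permutation that sorts the concatenated word $S_1S_2S_3S_4$ (each row already canonically ordered) by mode label. After sorting, each mode appears $m_j\in\{2,4\}$ times consecutively, so these blocks collapse to $\id$. Hence $I(B)$ mod $2$ is a sum over pairs of occupied columns $j<j'$ of $f(A_j,A_{j'}):=\#\{(r,r'): r\in A_{j'},\,r'\in A_j,\,r<r'\}$, where $A_j$ is the row set of column $j$. Since $\operatorname{inv}(W_B)$ is also a sum over pairs of weight-two columns (equal to $1$ exactly when a higher class precedes a lower one), the total exponent is a sum of pairwise terms. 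The reduction would then go by adjacent transpositions of columns. Exchanging two adjacent occupied columns with row sets $A,A'$ changes $I(B)$ by $f(A,A')+f(A',A)=|A||A'|-|A\cap A'|$. Because $|A|,|A'|$ are even, this is $\equiv |A\cap A'|$, which is odd precisely when $A,A'$ are weight-two columns of different classes. That is exactly the change in $\operatorname{inv}(W_B)$, including the convention that weight-four columns are skipped, since they meet every even set evenly. So $I(B)+\operatorname{inv}(W_B)$ is invariant under reordering columns, and we may evaluate it on a convenient canonical arrangement.

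For the canonical arrangement I would place the weight-four columns first, then the class-$1$ columns, then class $2$, then class $3$, and within each class interleave complementary columns as adjacent pairs. Then $\operatorname{inv}(W_B)=0$, and $I(B)$ can be computed directly from the pairwise table $f$. A weight-four column against any later even column gives $f=\sum_{r'\in A'}(r'-1)$ summed appropriately, which I expect to be even in total. Pairs of columns in different complementary blocks should contribute evenly. Within a class-$2$ pair, rows $\{1,3\}$ followed by $\{2,4\}$ give $f=1$, an odd contribution, whereas the class-$1$ and class-$3$ pairs should give even contributions. This yields $I\equiv\alpha_2$. The main obstacle is this last finite bookkeeping: carefully tabulating $f$ for all ordered pairs among the seven column types and checking that every cross-block contribution is even. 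I would verify it by listing $f$ for the six weight-two row sets and the full set $\{1,2,3,4\}$, using that block contributions add over the two complementary columns of a pair.
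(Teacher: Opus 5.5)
Your proposal is correct and follows essentially the paper's route. The probe sign is handled column by column exactly as in the paper. Your adjacent-transposition argument is a sharper version of the paper's claim that the pairwise factor $h$ is symmetric: an exchange changes $I(B)$ by $|A||A'|-|A\cap A'|\equiv|A\cap A'| \pmod 2$, which matches the change in $\operatorname{inv}(W_B)$. After that, both you and the paper evaluate $I(B)$ on a canonical arrangement with $\operatorname{inv}(W_B)=0$.

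The bookkeeping you flag closes immediately in your interleaved arrangement. Every weight-four column and every complementary pair covers each row exactly once, so any two such units contribute $\#\{(x,y)\in[4]^2:x<y\}=6$. Only the intra-pair terms remain, which are $0,1,2$ for classes $1,2,3$, hence $I(B)\equiv\alpha_2 \pmod 2$.
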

\begin{proof}[Proof sketch]
By~\cref{cor:reduced-moment-operator} and~\cref{lem:signed-magnitude,lem:table-sign}, $C(B)=\mathcal{E}(B)\mathcal{N}(B)/[(2N)_L(2N+2)_L]$ with $\mathcal E(B)=(-1)^{\operatorname{inv}(W_B)}$. It therefore suffices to show (i) that $\mathcal{E}(B)\varepsilon(B)=(-1)^{\alpha_2}$ and (ii) that $(-1)^{|S'\cap S_2|+|S'\cap S_4|}=(-1)^{k_1+k_3}$.

For (i), write $\mathcal{E}(B)\varepsilon(B)=\prod_{p<q}h(\tau_p,\tau_q)$ as a product over pairs of columns, where $\tau_p$ is the column type of mode $p$. Both the class-word inversion and the trace-reordering contribution to $h$ are antisymmetric under exchanging the two columns, so $h$ is symmetric and $\mathcal{E}(B)\varepsilon(B)$ does not depend on the order of the columns. We may therefore sort $B$ into canonical order. In this order $\operatorname{inv}(W_B)=0$, and $\varepsilon(B)=(-1)^{I(B)}$ follows from counting the row inversions $I(S_r,S_{r'})$ over the six row pairs. Every contribution that mixes different classes, or that involves the weight-four columns, appears an even number of times. Of the within-class terms, only $\alpha_2^2\equiv\alpha_2 \pmod 2$ appears an odd number of times.

For (ii), only columns whose mode lies in $S'$ contribute, and such a column changes the parity exactly when it occupies precisely one of rows $2$ and $4$. Weight-four columns and class-$2$ columns (rows $\{1,3\}$ or $\{2,4\}$) occupy both or neither of these rows. Class-$1$ columns (rows $\{1,2\}$ or $\{3,4\}$) and class-$3$ columns (rows $\{1,4\}$ or $\{2,3\}$) always occupy exactly one, which gives $(-1)^{k_1+k_3}$. The full calculation is presented in~Appendix~\ref{app:OTOC-calc}.
\end{proof}
We have thus reduced the summand of~\cref{eq:OTOC-table-sum-main} to class data alone, with every table with the same $(\alpha_1,\alpha_2,\alpha_3,q,k_1,k_3)$ values contributing to the sum identically. What remains is simply to count the multiplicity of each of the possible values that produce this tuple, which we do explicitly in~Appendix~\ref{app:OTOC-calc}. This leads to the following result:
\begin{application}[Matchgate-averaged $8$-point OTOC]\label{res:otoc-closed}
  For all $N$ and operators with string lengths $1 \leq L,L' \leq N$,
  \begin{equation}\label{eq:otoc-closed}
      \overline{\mathrm{OTOC}}^{(4)}\!=\! \sum_{j=0}^{\min\{L,L'\}}\! \frac{(-8)^j(L')_{{j}}(2N-L')_{{j}}L_{{j}}(2N-L)_{{j}}}{j!(2N)_{{2j}}(2N+2)_{{j}}}.
    \end{equation}
\end{application}
We note the symmetry of the expression in $L \leftrightarrow L'$, and that it covers all possible initial Pauli operators through the parity operator symmetry, $L \leftrightarrow 2N-L$. In~\cref{fig:otoc-numerics}, we display~\cref{eq:otoc-closed} for different parameter ranges of string length. We see a sign flip with an odd versus even $L$, and for the magnitude, a non-linear dependence on the lengths $L,L'$ across the extensive regime. 

As in the previous results of Applications~\ref{res:LOE-closed-form} and~\ref{res:OSE-closed}, it is instructive to evaluate our result in the regimes of initial operators with a fixed or extensive Majorana string length. However, for the OTOC we need to consider the length of its two arguments, $L$ and $L'$. We identify three relevant regimes. First, for $L,L'=\mathcal{O}(1)$, we find
\begin{equation}
    \overline{\mathrm{OTOC}}^{(4)} = 1 - \frac{4LL'}{N} + \mathcal{O}(N^{-2}). 
\end{equation}
Interestingly, we see that asymptotically the OTOC becomes equal to $1$, the value it attains when $O_U$ and $X$ commute. Such operators, therefore, do \emph{not} become freely independent after evolving one with deep, random matchgates. This adds to a growing body of evidence that saturation to free independence is a property of genuinely complex quantum systems~\cite{Pappalardi2022freeETH,fava2023designsfreeprobability,dowling2025freeindep}. We see a similar behaviour if the probe is extensive, $L=\mathcal{O}(1)$ and $L'=N$, finding 
\begin{equation}
    \overline{\mathrm{OTOC}}^{(4)}=(-1)^L\left[1-\frac{L(L+1)}{N} +O_L(N^{-2})\right],
\end{equation}
which again is not freely independent as $N \to \infty$. The above two asymptotic expressions can be proved immediately through direct summation and expansion. 

 Choosing extensive $L=L'=N$, we see a qualitative change: 
\begin{equation}
    \overline{\mathrm{OTOC}}^{(4)} \approx \frac{1}{2^{N+1}}\!\left[\sin\left(\!\frac{\pi N}{2}\!\right) \!+ \!\cos\left(\!\frac{\pi N}{2}\!\right)+ \frac{(-1)^N}{\sqrt{2}}\right]
\end{equation}
up to a relative error of $\mathcal{O}(N^{-1}) $. The proof of the above asymptotic formula involves reducing the sum to an equivalent integral and applying endpoint Laplace’s method; see Appendix~\ref{app:OTOC-calc}. In contrast to the finite string length situation, in this dual extensive regime, we see that the OTOC vanishes exponentially with $N$. We therefore conclude, at least at the level of $8$-point OTOCs, such operators become freely independent when one is evolved by a random matchgate unitary.

\section{Operator Resource Theories and Classical Simulation Complexity} \label{sec:reource_theories}
The quantities computed in the previous section are not all independent. The OSE and LOE of any (normalised and Hermitian) operator are related by a bound: for any $U$, for any initial Pauli operator $O$, and for $k\geq1$~\cite{Dowlin2024LOE-OSRE},
\begin{equation}
  M^{(k)}(O_U) \leq E^{(k)}(O_U), \label{eq:bridging}
\end{equation}
where we have written the bound in terms of their purity quantities to connect more readily with our results. This is a strict relationship, with no such simple equivalent for the corresponding magic and entanglement monotones in the Sch\"odinger picture. 

It is natural to ask whether some measure of operator non-Gaussianity fits within, or on either side of, the bound hierarchy in~\cref{eq:bridging}. Irrespective of what measure of non-Gaussianity we choose, a first check is what kind of operator entanglement and magic values a `free' Gaussian unitary generate for an initial Pauli operator. First, it is clear that a Pauli string has zero operator entanglement and magic. On the other hand, we have seen in the previous section that a random Gaussian unitary acting on an extensive Majorana string ($L=\Theta(N)$) leads to a volume law for both quantities. We therefore conclude that no immediate analogue of~\cref{eq:bridging} can hold for a non-Gaussianity measure for any initial $O$. 

However, if we restrict the initial operator to have sub-extensive Majorana length, $\min(L,2N-L)=o(N)$, we have seen that there are non-trivial bounds on both OSE and LOE; cf. Eqs.~\eqref{eq:loe_bound} and \eqref{eq:ose_bound}, and the dashed lines in {Fig.~}\ref{fig:ose-loe-gap}. There is a separation between operator entanglement and magic on the Gaussian orbit: a $L=\mathcal{O}(1)$ initial Majorana string has constant-bounded LOE for any Gaussian evolution, but $\Theta(\log(N))$ OSE. One of the more interesting observations to be gleaned from~\cref{fig:ose-loe-gap} is that {this gap persists even when the bounds of Eqs.~\eqref{eq:loe_bound} and \eqref{eq:ose_bound} become vacuous}. The usual intuition is that a uniformly random unitary typically leads to maximal entanglement (allowable by symmetry constraints)~\cite{Popescu2006}. Here, however, we see a suppressed volume law for LOE, far from maximal. This is in stark contrast to the fact that operator magic saturates to a maximal value for random Gaussian unitaries, and that random unitaries lead to a maximal operator entanglement up to a constant Page correction~\cite{dowling2026page}.

Including some small non-Gaussian resources clearly changes this story. Namely, if one introduces local, non-Gaussian gates into a matchgate circuit (e.g. a SWAP gate), or a defect into an otherwise non-interacting Fermionic Hamiltonian, then the dynamics perturb away from the Gaussian case. Such dynamics may clearly generate operator magic and entanglement beyond the bounds of Eqs.~\eqref{eq:loe_bound} and \eqref{eq:ose_bound}. To explore this intuition, consider a simple example of a non-Gaussian gate, $G:=\exp(ic_{\nu_1} c_{\nu_2} c_{\nu_3} c_{\nu_4})$. Such a quartic gate may change the Majorana string length by $\pm 2$ through its conjugate action. Then, take a standard doped circuit architecture,
\begin{equation}
  U_t := U^{(t)} G U^{(t-1)} G \dots U^{(1)} G U^{(0)} 
\end{equation}
where $U^{(i)}$ is an arbitrary matchgate circuit, and $t$ is the number of non-Gaussian gates. For these unitaries, we can refine~\cref{eq:loe_bound} with the following result.
\begin{restatable}[{LOE bound for doped matchgate circuits}]{theorem}{doped} \label{res:doping}
For an initial Majorana string $O=\gamma_{\bm{\nu}}$ of length
$|\bm{\nu}|=L$, let $U_t$ be a matchgate circuit with $t$ quartic gates $G=\exp(i\theta c_{\nu_1} c_{\nu_2} c_{\nu_3} c_{\nu_4})$, for any $\theta\geq 0$.
Across a bipartition of the first $M$ qubits and the next $N-M$, and for $k\geq1$,
\begin{equation}\label{eq:doped-loe-bound}
    \begin{split}
        &\operatorname{LOE}_A^{(k)}(U_t^\dagger O U_t) \leq L+\kappa(\theta)t, \quad \text{where}\\
  &\kappa(\theta):=8h_2\!\left(\frac{2+|\sin(2\theta)|}{8}\right)-4,
    \end{split}
\end{equation}
and $h_2(x):=-x\log_2x-(1-x)\log_2(1-x)$.
\end{restatable}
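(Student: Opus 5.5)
The plan is to bound the von Neumann LOE ($k=1$) and then invoke monotonicity of R\'enyi entropies in $k$, so that $\mathrm{LOE}^{(k)}_A\le\mathrm{LOE}^{(1)}_A$ for all $k\geq1$. The backbone is the mechanism behind \cref{prop:loe-bound}. Push all matchgate layers through to the string using \cref{eq:matchgate_action}. Then $U_t^\dagger O U_t$ becomes a product of dressed linear Majoranas $\tilde c=\sum_\beta R_{\alpha\beta}c_\beta$, interleaved with dressed gates $\tilde G=\exp(i\theta\tilde Q)$. Here $\tilde Q=\tilde c_{1}\tilde c_{2}\tilde c_{3}\tilde c_{4}$ is a product of four dressed linear Majoranas, and $\tilde Q^2=\id$. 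Across the cut after qubit $M$, each dressed linear Majorana splits as
\begin{equation*}
\tilde c=a_A\otimes\id_{\bar A}+Z_A\otimes b_{\bar A},
\end{equation*}
with $Z_A$ the Jordan--Wigner parity string on $A$. It therefore has operator Schmidt rank at most $2$. I would organise the proof as an induction on $t$, with hypothesis that the current operator $X_s$ has $\mathrm{LOE}^{(1)}_A(X_s)\le L+\kappa(\theta)s$.

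For the induction step, I will use $\tilde G=\cos\theta\,\id+i\sin\theta\,\tilde Q$ and expand
\begin{equation*}
\tilde G^\dagger X\tilde G=\cos^2\theta\,X+\sin^2\theta\,\tilde QX\tilde Q+i\sin\theta\cos\theta\,[X,\tilde Q].
\end{equation*}
For a Majorana-type $X$ (a product of dressed linear Majoranas), $X$ either commutes or anticommutes with $\tilde Q$ up to the overlap structure. The commuting case leaves $X$ invariant. The anticommuting case gives $X(\cos 2\theta+i\sin2\theta\,\tilde Q)$, i.e.\ $X$ multiplied by a degree-$\leq4$ polynomial in dressed Majoranas. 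In general $X$ is a sum of such pieces, so I would apply this decomposition termwise, writing $X=X_++X_-$ according to the parity of each piece relative to $\tilde Q$. The growth of entanglement is then controlled by the factor $\cos2\theta\,\id+i\sin2\theta\,\tilde Q$ (or more conservatively the full eight-fold product $\tilde Q X\tilde Q$). Its operator Schmidt decomposition across the cut is obtained by multiplying out the rank-$2$ splittings of the four (or eight) dressed linear factors.

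The quantitative step is to turn this into an entropy bound of size $\kappa(\theta)$ rather than the crude Schmidt-rank bound of $4$ or $8$ bits. Multiplying out the $\le 8$ binary choices ($A$-part versus $\bar A$-part) for each dressed Majorana in the conjugation yields a Schmidt expansion indexed by $\{0,1\}^8$. The weights of these branches are controlled by $\cos^2\theta$, $\sin^2\theta$ and the cross term $\sin\theta\cos\theta$, and normalisation $\sum|a|^2=1$ holds because the $\tilde c$ are Hilbert--Schmidt normalised. The form $8h_2(\cdot)-4$ strongly suggests the following route. Bound the entropy of this $2^8$-branch distribution by that of an i.i.d.\ Bernoulli product with per-factor bias $p=(2+|\sin 2\theta|)/8$, which contributes $8h_2(p)$. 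Then subtract $4$ bits, since half of the eight factors recombine with $X$ through $\tilde Q^2=\id$ and do not create new Schmidt directions. To control the total, I would use that for an operator written as $\sum_i\lambda_i Y_i\otimes Z_i$ with orthogonal families, the operator entanglement is at most the Shannon entropy of $\{|\lambda_i|^2\}$ plus the average entanglement of the pieces. Concavity of $h_2$ then maximises the per-factor contribution at the stated $p$.

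I expect this last step to be the main obstacle. The branches of different dressed Majoranas are not mutually orthogonal, and entropies are not subadditive under operator products, so identifying exactly which majorisation or concavity inequality produces $p=(2+|\sin2\theta|)/8$ will take care. A first attempt would reduce to the worst case by left-invariance of the Haar measure and the freedom in $U^{(i)}$: take the four gate Majoranas positioned symmetrically across the cut and compute the Schmidt spectrum of $\cos2\theta\,\id+i\sin2\theta\,\tilde Q$ explicitly there. This spectrum should equal a product of four two-outcome distributions with the claimed bias, after which the induction closes by adding at most $\kappa(\theta)$ per gate to the base value $L$ from \cref{prop:loe-bound}.
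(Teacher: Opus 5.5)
Your reduction to $k=1$ via R\'enyi monotonicity is fine. The induction itself cannot close, however, because its step (``one dressed quartic gate raises $\mathrm{LOE}^{(1)}_A$ by at most $\kappa(\theta)$'') is false.

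Here is a counterexample. Take $M,N-M\geq 2$, Majoranas $a_1,\dots,a_4$ on $A$ and $b_1,\dots,b_4$ on $\bar A$. Set $\tilde c_i=(a_i+b_i)/\sqrt2$, $\tilde c_1'=(a_1-b_1)/\sqrt2$ and $\tilde Q=\tilde c_1\tilde c_2\tilde c_3\tilde c_4$. By choosing the Gaussian layers you can realise this frame, together with the pre-gate operator $X_0=ia_1b_1=-i\tilde c_1\tilde c_1'$, which is a Gaussian image of a length-$2$ string.

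Now $X_0$ is a product across the cut, so $\mathrm{LOE}_A(X_0)=0$. It anticommutes with $\tilde Q$, so the gate maps it to $X_0e^{2i\theta\tilde Q}$. At $\theta=\pi/4$ this equals $iX_0\tilde Q\propto\tilde c_1'\tilde c_2\tilde c_3\tilde c_4$. Expanding gives sixteen terms $\pm\frac14 a_Sb_{S^c}$. Their $A$-parts ($a_S$ or $a_SP_A$) are pairwise orthogonal, and so are their $\bar A$-parts ($b_{S^c}$). Hence $\mathrm{LOE}_A=4>\kappa(\pi/4)\approx3.64$, and by continuity the jump exceeds $\kappa(\theta)$ for all $\theta$ near $\pi/4$.

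The theorem itself survives, since $4\leq L+\kappa\approx5.64$. But the example shows $\kappa(\theta)$ is not a per-gate entangling power. Consistently, $\kappa(0)=8h_2(1/4)-4\approx2.49>0$, even though $G=\id$ at $\theta=0$. So no single-gate Schmidt-branch count can produce $\kappa$, and your reading of it as ``i.i.d.\ Bernoulli with bias $p$, minus four bits'' has no mechanism behind it. Haar left-invariance is also irrelevant here, because the statement is a worst-case bound over all circuits, not an average.

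The missing idea is a global bookkeeping that tracks more than the LOE itself.

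\begin{itemize}
\item The paper views the vectorised operator as a fermionic state on $2N$ auxiliary modes, with Majorana-string length playing the role of particle number.
\item It bounds $\mathrm{LOE}^{(k)}_A\leq\tr h_2(C/2)$ through the correlation matrix $C$. This uses subadditivity over modes diagonalising $C_A$, the isospectrality of $\Pi_AC\Pi_A$ and $C^{1/2}\Pi_AC^{1/2}$, and trace concavity.
\item It then controls the spectrum of $C$ for the whole circuit at once. After absorbing the Gaussian layers, the $t$ gates involve at most $4t$ Majorana directions. So $L-r$ initially occupied directions orthogonal to them stay exactly occupied, giving unit eigenvalues.
\item The remaining nonzero eigenvalues live on at most $r+4t$ modes. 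They sum to at most $r+2t|\sin2\theta|$, since each gate raises the mean string length by at most $2|\sin2\theta|$.
\item Concavity gives $L-r+(r+4t)\,h_2\!\left(\frac{r+2t|\sin2\theta|}{2(r+4t)}\right)$. This is increasing in $r$, and $r\leq4t$ yields exactly $L+t\,[8h_2((2+|\sin2\theta|)/8)-4]$.
\end{itemize}

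The $8$, the $-4$ and the bias all come from this worst-case allocation of mode and particle budgets across all gates. That is why individual gates may exceed $\kappa$ while the total cannot.
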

\begin{proof}[Proof sketch]
    Using an auxiliary fermion representation $\{ f_\alpha\}$ for the vectorised operator~\cite{Calabrese_2005,Prosen2007a}, entropy inequalities lead to the bound: $\operatorname{LOE}^{(1)}_A\leq\tr [ h_2(C/2)]$, where $C_{\mu \nu }=\llangle{O}|f_\mu^{\dagger} f_\nu | O\rrangle $ is the correlation matrix and $h_2$ is the binary entropy. After absorbing the Gaussian layers into the quartic generators, the non-Gaussian gates involve at most $4t$ Majorana directions and increase the mean string length by at most $2t|\sin(2\theta)|$, while occupied modes orthogonal to these directions remain unchanged. Separating these unchanged modes and applying concavity to the remaining correlation eigenvalues, together with R\'enyi entropy monotonicity, yields the final result.
\end{proof}
We see an interplay between the `free' operator entanglement that can be generated within the weight sectors due to Gaussian rotations, and that which is generated from mixing the subsectors due to the non-Gaussian gates. We expect the above result to generalise to other (continuous) measures of non-Gaussianity, such as the correlation entropy of the vectorised operator~\cite{debertolis2025naturalsuper} or Fermionic anti-flatness~\cite{Sierant_2026}. 

\cref{res:doping} may also be read as a Heisenberg-picture counterpart of the principle that entanglement generation can lower-bound circuit complexity~\cite{Eisert2021EntanglingPower}. Indeed, since operator entanglement is the usual state entanglement after vectorising the operator, the same incremental-entangling logic applies: bounded operator-entangling power per elementary resource limits the LOE that a circuit can generate. To unpack this, it is instructive to reformulate~\cref{res:doping} as a circuit complexity lower-bound.

\begin{figure}[t!]
  \centering
  \includegraphics[width=0.85\linewidth]{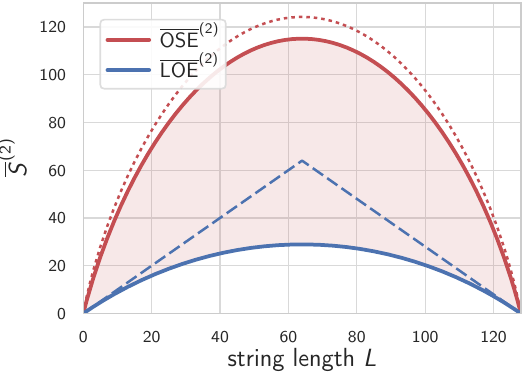}
\caption{
   The magic-versus-entanglement separation for random matchgates on $N=32$ sites and the half-chain cut, with the Fermionic bounds of Eqs.~\eqref{eq:loe_bound} and~\eqref{eq:ose_bound} shown as dashed lines. The shaded region is the window allowed by $\mathrm{LOE}^{(2)}\le \mathrm{OSE}^{(2)}$, the entropic version of~\cref{eq:bridging}, showing a large separation that persists in the thermodynamic limit. Random free-fermionic dynamics therefore rather uniformly spread operators in the Pauli basis, but these strings have suppressed operator entanglement.}
  \label{fig:ose-loe-gap}
\end{figure}

\begin{corollary}[{Non-Gaussian circuit-resource lower bound}]\label{cor:doping-lower-bound}
Under the assumptions of~\cref{res:doping}, the number $t\in\mathbb{N}$ of quartic non-Gaussian gates in $U$ obeys
\begin{equation}
  t\geq \max\!\left\{0,\left\lceil\frac{\mathrm{LOE}^{(k)}(U^\dagger O U)-L}{\kappa(\theta)}\right\rceil\right\}.
  \label{eq:doping-lower-bound}
\end{equation}
\end{corollary}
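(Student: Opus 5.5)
The plan is to invert the bound of \cref{res:doping}, treating $t$ as the unknown. Fix the circuit $U$, with $t$ quartic gates $G=\exp(i\theta c_{\nu_1}c_{\nu_2}c_{\nu_3}c_{\nu_4})$ interleaved with matchgate layers, and fix the R\'enyi index $k\geq1$. By \cref{res:doping}, applied to this same circuit and initial string $O=\gamma_{\bm\nu}$ of length $L$,
\begin{equation*}
  \mathrm{LOE}^{(k)}_A(U^\dagger O U)\leq L+\kappa(\theta)\,t .
\end{equation*}
Since $t$ is a non-negative integer, $t\geq 0$ always holds. This already covers the case $\mathrm{LOE}^{(k)}\leq L$, in which the ceiling term is non-positive and the maximum equals $0$.

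For the nontrivial case $\mathrm{LOE}^{(k)}>L$, the key step is to check that $\kappa(\theta)>0$, so that we can divide by it without reversing the inequality. We have $x:=(2+|\sin 2\theta|)/8\in[1/4,3/8]$, and $h_2$ is increasing on $[0,1/2]$. Hence $\kappa(\theta)\geq 8h_2(1/4)-4=8(2-\tfrac34\log_2 3)-4=12-6\log_2 3\approx 2.49>0$. Rearranging then gives $t\geq(\mathrm{LOE}^{(k)}-L)/\kappa(\theta)$, and because $t$ is an integer we may replace the right-hand side by its ceiling. Combining the two cases yields the stated maximum.

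The main point requiring care is ensuring that the hypotheses of \cref{res:doping} match those of the corollary. These are the same bipartition, the same $k\geq1$, and the same $\theta\geq0$ for every gate. If the gates carried different angles $\theta_i$, the argument would instead sum $\kappa(\theta_i)$ and use the weakest positive lower bound on $\kappa$. Apart from this, the derivation is a direct algebraic inversion; the only substantive check is the strict positivity of $\kappa$.
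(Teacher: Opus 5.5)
Your proof is correct and follows the paper's route: the corollary is the direct inversion of \cref{res:doping}, with integrality of $t$ supplying the ceiling and the $\max\{0,\cdot\}$, and your check $\kappa(\theta)\geq 12-6\log_2 3>0$ is the justification (left implicit in the paper) for dividing by $\kappa(\theta)$.

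The only slip is in your closing aside on unequal angles. The theorem's proof controls only the total length increase $2\sum_i|\sin(2\theta_i)|$, so it yields $L+t\bigl[8h_2\bigl(\tfrac{2+\bar s}{8}\bigr)-4\bigr]$ with $\bar s:=t^{-1}\sum_i|\sin(2\theta_i)|$. By concavity of $h_2$ this dominates $L+\sum_i\kappa(\theta_i)$, so the per-gate sum you propose is not established by that argument. Moreover, to lower-bound $t$ you must divide by an upper bound on the per-gate increment, such as $\kappa(\pi/4)$, not by a lower bound on $\kappa$.
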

More generally, whenever examining low Majorana degree observables and when the permitted non-Gaussian primitives have bounded operator-entangling power (i.e., are local gates), LOE growth supplies a lower bound on the corresponding non-Gaussian circuit complexity. On the one hand, this means that operator entanglement growth provides a witness to non-Gaussian resources. On the other hand, together with~\cref{eq:bridging}~\cite{Dowlin2024LOE-OSRE}, we see that the LOE can constitute a unifying measure of distinct quantum resources. Recalling the interpretation of the linear-in-time scaling of LOE as a faithful dynamical signature of quantum chaos~\cite{Prosen2007,Prosen2007a,Kos2020,dowling2023scrambling}, we conclude that any such `chaotic' system generates at least a linear amount of each of the distinct non-classical resources of entanglement, magic, and non-Gaussianity.

The above results can be interpreted through the lens of concrete classical simulation methods. From~\cref{eq:bridging} we know that, at least in 1D where Heisenberg-picture TEBD is easily implementable, tensor networks will always have at least a comparable memory cost to Pauli propagation~\cite{Dowlin2024LOE-OSRE,dowling2026classicalsim}. Beyond the doping scheme described above, the exact extension to near-Gaussian methods remains an interesting open question. Corollary~\ref{cor:doping-lower-bound} suggests a parameterised approach to the simulation of near-Gaussian circuits, in which the complexity is controlled by the number or strength of the non-Gaussian resources. We expect that, for a suitably defined non-Gaussianity monotone which relates directly to the cost of matchgate-based simulation methods (such as those based on Majorana propagation~\cite{miller2025simulation} or few-body revealing~\cite{2025resolvingspacetimestructuresquantum}), an analogue of~\cref{res:doping} should hold.

We may also consider the relation of OTOCs to simulability. While they do not constitute a formal complexity resource (though note that the $k=2$ case can be related to LOE~\cite{dowling2023scrambling}), it has been heuristically argued that their evolution is inherently difficult to reproduce classically, sample-wise for random unitary circuits~\cite{Abanin2025}. We have provided exact expressions for their value in an analogous circuit setting of random matchgates. 
In particular, each individual sample can be evaluated efficiently in place of a trace over a Hilbert space of dimension $2^N$. This is because Majorana strings are themselves matchgate unitaries, and so $(O_UX)^4$ is a Gaussian unitary whose trace can be evaluated at a cost of $\mathcal O(N^3)$ from the $2N\times 2N$ orthogonal matrix using a phase-resolved trace formula~\cite{Klich_2014}.
Then, from Application~\ref{res:otoc-closed}, we know analytically that the long-time behaviour can mirror the unitary case. This makes random matchgates an efficiently computable control ensemble for higher-order OTOC experiments of the kind considered in Ref.~\cite{Abanin2025}. OTOCs built from extensive strings can be exponentially small, and their Pauli expansion can display non-trivial interference effects, despite the classical tractability of the underlying dynamics. Hence neither a small late-time OTOC nor interference among many Pauli components is, by itself, a certificate of quantum computational advantage. Any hardness in the interacting setting must arise from additional structure absent from the free-fermionic contraction, potentially including the finite-time and spatial profile of the correlator or the non-Gaussianity of the dynamics. A detailed
quantitative comparison of these features is left for future work.

\section{Discussion} \label{sec:discussion}
Our contributions in this work are threefold. From a \textit{mathematical perspective}, we have provided a new construction of matchgate Weingarten calculus, entirely within the Majorana basis and so tailored to computing moments of random operators. This is a general, usable toolbox for performing exact averages, which can be directly applied to relevant applications beyond our proof-of-principle examples across many-body physics and quantum information. Our formulae are explicit for $k=4$, and we have supplied a clear recipe for computing with respect to $k\geq 6$. We therefore offer an alternative perspective on the structure of the matchgate group compared to orthonormal constructions~\cite{sierant2026theorymatchgatecommutant,braccia2026commutantfermionicgaussianunitaries}.

In terms of \textit{many-body physics}, we demonstrated the usefulness of our methods by analytically computing three relevant examples of operator complexities under random Gaussian unitaries. We found that the complexity of random operators depends greatly on their initial Majorana string length, with constant-mode operators leading to constant values of the LOE and OTOC, while extensive strings saturate to Haar-unitary-like scaling. Such operators therefore look complex in this representation despite their underlying tractable Gaussian structure. A notable feature here is that the LOE saturates to less than half the Page value, despite no \textit{a priori} structural restrictions. Our results therefore add insight into the expected long-time behaviour of fermions with random interactions. 

From a \textit{quantum-computational perspective}, our results point to a hierarchy of operator complexities. MPO simulation, Pauli propagation, and Majorana-based methods exploit different structures, and their costs need not track one another. A circuit may generate an operator that is highly complex in one representation while remaining efficient in another. Random matchgate circuits provide an exactly solvable example: although they are efficiently classically simulable by exploiting their fermionic Gaussian structure, they can generate extensive operator entanglement and an essentially maximal spreading over the Pauli basis. Thus, neither volume-law operator entanglement nor large operator magic alone constitutes a representation-independent witness of computational hardness. However, for initially low-weight Majorana observables, LOE beyond the Gaussian baseline certifies non-Gaussian circuit resources. These two regimes together illustrate both sides of the same principle: operator complexity becomes a computationally meaningful witness only after the relevant free structure and representation have been specified. Interestingly, there is a distinction between magic and non-Gaussianity in this sense, we have seen that the latter can create operator entanglement in the `free' sector, whereas the former cannot~\cite{Dowlin2024LOE-OSRE}.

There are several fruitful directions to explore based on the Majorana matchgate Weingarten calculus and its applications. First, it should be straightforward to extend~\cref{res:CB-four-replica} to higher replicas $k\geq 5$, based on the procedure outlined in Section~\ref{sec:twirls}. Having access to higher matchgate designs could, for instance, offer insight into the $k$ scaling of higher-order OTOCs and LOE, or allow shadow tomography over exotic representations of the matchgate group; cf. the use of $6$-designs for quotient group shadows~\cite{chang2026classicalshadowssymmetricspaces}. Moreover, it would be interesting to study matchgate twirls under different symmetries; most immediately those with particle-preserving symmetry, as described by the group $\mathrm{U}(N)$. Such a result would allow one to probe the effect of symmetries on the operator entanglement Page curve, for instance. It would be interesting to compare this to the different volume laws of random fermionic states~\cite{Bianchi_Hackl_Kieburg_Rigol_Vidmar_2022}.

In terms of resource theories, one should explore an operator-focused resource theory of non-Gaussianity, with Gaussian conjugations as free operations and Gaussian-evolved Majorana strings as free operators. \cref{res:doping} may offer insight in this direction, suggesting that an analogue of `correlation entropy'~\cite{debertolis2025naturalsuper,Sierant_2026} may control the cost of near-Gaussian simulation methods. This would parallel LOE and OSE, which respectively control the cost of MPO and Pauli-propagation methods~\cite{dowling2026classicalsim,dowling2024magicheis}.

Meanwhile, to better relate the matchgate averages to physical setups, one should study microscopic models where it is reached and the timescales involved. More broadly, it would be interesting to connect the global matchgate-Haar averages studied here to spatially resolved random Gaussian ensembles such as the Brownian SYK model~\cite{Piroli2019}, and for which continuum field theories retain information about geometry, topology, and disorder~\cite{AltlandMatchgates}. For OTOCs, the $k$ dependence of the (typically exponential) OTOC decay in microscopic models is not yet well understood, and solving a local Gaussian model that eventually reaches our globally averaged results will help uncover the role of free probability in the equilibration of chaotic many-body systems. It is the hope that the present work provides a machinery to further explore these topics at the intersection of quantum many-body physics and quantum information theory.

\section*{Acknowledgements}
We acknowledge useful discussions with Maxwell T. West, Xhek Turkeshi, Antonio Anna Mele, and Lennart Bittel. ND acknowledges funding by the Deutsche Forschungsgemeinschaft (DFG, German Research Foundation) under Germany's Excellence Strategy - Cluster of Excellence Matter and Light for Quantum Computing (ML4Q) EXC 2004/1 - 390534769. GALW acknowledges support from the Alexander von Humboldt foundation. We acknowledge support from the BMFTR (MUNIQC-Atoms, QuSol, Pasquops, Hybrid++, {FermiQP}), the Munich Quantum Valley, the Quantum Flagship (PasQuanS2, MILLENION), QuantERA ({SDPCode}), the Clusters of Excellence MATH+ and ML4Q, the DFG (CRC 183 and SPP 2514), the Einstein Foundation (Einstein Research Unit on Quantum Devices), Berlin Quantum, and the European Research Council (DebuQC). For the CRC 183 and the ML4Q it is the result of a cross-node collaboration involving Berlin and Cologne.

\section*{AI Statement} AI tools were used to assist in simplifying final closed-form expressions, developing the proofs of the asymptotic formulas for the applications, and generating the plots visualising the analytic results. The main ideas, all other proofs, and the writing were all done without the use of AI.

\section*{References}

\clearpage
\appendix

\crefalias{section}{appendix}
\crefname{appendix}{Appendix}{Appendices}
\Crefname{appendix}{Appendix}{Appendices}
\onecolumngrid

% \section{Brauer algebra} \label{app:brauer}
% % \GW{Appendix-worthy}
% Here we describe how to compute the coset type of a relative pairing.
% Since $\omega^\lambda$ is a function of the double coset $H_n\sigma H_n$ only, all that is ever needed of a pair $(p_1,p_2)$ is its \emph{coset type} $\Xi[p_1^{-1}p_2]\vdash n$. This is obtained by overlaying the two matchings and reading off the cycle lengths of the resulting union multi-graph, each cycle counting for half its length. For example, take $p_1,p_2\in M_6$ with $p_1=e$ and $p_2=\{\{1,4\},\{2,3\},\{5,11\},\{6,9\},\{7,8\},\{10,12\}\}$ according to
% \begin{equation}
%  p_1 = \GraphId,\qquad p_2 = \GraphPair.
% \end{equation}
% Overlaying,
% \begin{equation}
%  p_1^{-1}p_2= \GraphCombined\implies \Xi[p_1^{-1}p_2] = \GraphLooped,
% \end{equation}
% The corresponding pairing type is $(3^1,2^1,1^1)$.

\section{MPO representation of low-index Majorana strings} \label{app:MPO}
In this section we prove~\cref{prop:loe-bound}, which we restate here for convenience.
\loeBound*
We will prove that a single Heisenberg-evolved Majorana spinor can be written as an MPO of bond dimension {at most} $2$, under any Gaussian evolution. This means that any initial, finite-index operator with Majorana string length $L$, has LOE bounded by 
\begin{equation}
  \mathrm{LOE}^{(k)}(O_U) \leq L,
\end{equation}
for all R\'enyi indices $k\geq 0$, under any Gaussian dynamics $U$ (including the $\alpha=0$ R\'enyi entropy, equal to the log of the bond dimension). Therefore, the long-time limit of LOE for finite-index operators under any Gaussian unitary dynamics is bounded by a constant. This should be contrasted with the OSE, which may have unbounded growth, scaling with $N$. We note that this result is known; it is mentioned in, e.g., Ref.~\cite{Prosen2007,Dubail_2017}, and the presented proof is adapted from Ref.~\cite{Hubig_2017}. We present it here for completeness, to show that {fermions} offer an example of a system with a logarithmic separation between the LOE and OSE, and as a Lemma which is instrumental in deriving~\cref{res:doping}. 

To start, we revisit the definition of a matrix product operator (MPO). Given a set of $N$ local Hilbert spaces $\mathcal{H}_i$ of dimension $d_i$ each and an operator $O$ which acts on the tensor product space $\mathcal{H}=\bigotimes_{j=1}^N\mathcal{H}_i$ we can write the operator $O$ as 
\begin{equation}
  O = \sum_{\vec{\sigma}\vec{\tau}} \alpha_{\vec{\sigma}\vec{\tau}}|\vec{\tau}\rangle\!\langle \vec{\sigma}|,
\end{equation}
where $|\vec{\tau}\rangle$, $\langle\vec{\sigma}|$ enumerate respectively the product basis states of the Hilbert space $\mathcal{H}$ and its dual. That is, $|\vec{\tau}\rangle = \bigotimes_{i=1}^N|\tau_i\rangle = |\tau_1\cdots\tau_N\rangle.$
Explicitly, then, the operator $O$ can be written as
\begin{equation}
  O = \sum_{\sigma_1\tau_1}\cdots\sum_{\sigma_N\tau_N}\alpha_{\tau_1\cdots\tau_N}^{\sigma_1\cdots \sigma_N}|\tau_1\cdots\tau_N\rangle\!\langle\sigma_1\cdots \sigma_N|.
\end{equation}
The coefficient $\alpha\in\mathbb{C}^{\prod_id_i^2}$ may now be decomposed as a set of matrix products. For each site $i$ and for each combination of local states $\{|\tau_i\rangle,\langle\sigma_i|\}$ we introduce a set of matrices $(W_i^{\sigma_i\tau_i})_{w_{i-1},w_i}$ with the property that their matrix-matrix product equals a specific element of the $\alpha$ tensor.

\begin{equation}
\label{eq:MPO-contract}
  \sum_{\vec{w}}(W_{[1]}^{\sigma_1\tau_1})_{w_0,w_1}\cdot (W_{[2]}^{\sigma_2\tau_2})_{w_1,w_2}\cdots(W_{[N]}^{\sigma_N\tau_N})_{w_{N-1},w_N} = \alpha_{\tau_1\cdots\tau_N}^{\sigma_1\cdots \sigma_N}.
\end{equation}

From this expression, we see that there is an alternate way to express $\alpha$ -- this is known as its \emph{matrix product operator} (MPO) representation.
The $W_{[i]}^{\sigma_i\tau_i}$ matrices are not square in general, and have the shape $\chi_{i-1}\times \chi_i$. Define $\chi$ as 
$\max(\{\chi_i\})$ to be the \emph{bond dimension}, which governs the difficulty of computing~\cref{eq:MPO-contract}. We claim that the MPO of the operator $U c_\nu U^\dagger$ has bond dimension equal to either 1 or 2 for all choices of $U\in \mathcal{G}_{N}$ and all choices of $c_\nu$. Let us denote $U c_\nu U^\dagger\equiv c'_\nu$. Noting from~\cref{eq:matchgate_action} that $c'_\nu = \sum_\mu R_{\nu\mu} c_\mu$, we can write this using the Jordan-Wigner transform as 
\begin{equation}
  \sum_{\mu:\mu~\text{mod}~2 = 0}(R_{\nu\mu})Z^{\otimes \frac{\mu}{2}-1}\otimes X\otimes I^{\otimes (2N - \frac{\mu}{2})} + \sum_{\mu:\mu~\text{mod}~2 = 1}(R_{\nu\mu}-1)Z^{\otimes \frac{\mu-1}{2}}\otimes X\otimes I^{\otimes (2N - \frac{\mu-1}{2})}.
\end{equation}
Equivalently, let us re-express this by collecting the local $X$ and $Y$ terms together. Then we have 
\begin{equation}
  c_\nu' = \sum_{j=1}^N Z^{\otimes j-1}\otimes (\alpha_j^+ X_j + \alpha_j^-Y_j)\otimes I^{N-j},
\end{equation}
where $\alpha_j^+ = R_{\nu,2j-1}$ and $\alpha_j^- = R_{\nu,2j}$. We now make the observation that if we define 
\begin{equation}
  V_{[i]}^{\sigma_i\tau_i} = \begin{pmatrix}
    \mathbb{I}_{[i]}^{\sigma_i\tau_i} & 0 \\ 
    \alpha_i^+ X_{[i]}^{\sigma_i\tau_i} + \alpha_i^-Y_{[i]}^{\sigma_i\tau_i} & Z_{[i]}^{\sigma_i\tau_i}
  \end{pmatrix}.
\end{equation}

Then we have 
\begin{equation}
\label{eq:gaussian-MPO}
  c_\nu' = \begin{pmatrix}0 & 1\end{pmatrix}\sum_{\vec{w}}(V_{[1]}^{\sigma_1\tau_1})_{v_0,v_1}\cdot (V_{[2]}^{\sigma_2\tau_2})_{v_1,v_2}\cdots(V_{[N]}^{\sigma_N\tau_N})_{v_{N-1}v_N}\begin{pmatrix} 1 \\ 0\end{pmatrix},
\end{equation}
which is exactly the MPO form given in Equation~\eqref{eq:MPO-contract}. In other words, Equation~\eqref{eq:gaussian-MPO} is the MPO form of $c_\nu'$. And moreover, we see that each $V_{[i]}$ is a $2\times 2$ matrix, meaning that $c_\nu'$ has bond dimension at most 2. Then a Majorana string of length $L$ has a maximal bond dimension of $2^L$, which follows from concatenating $L$ MPOs representing each spinor, each of bond dimension at most $2$. 

Since the operator Schmidt rank across any cut of the chain is bounded by the MPO bond dimension, $\mathrm{LOE}^{(0)}_A(O_U)\leq L$, and hence $\mathrm{LOE}^{(k)}_A(O_U)\leq L$ for all $k\geq0$ by monotonicity of the R\'enyi entropies in $k$. To obtain the bound with $2N-L$, note that the complementary string satisfies $\gamma_{\bar{\bm{\nu}}}=\eta\,P\gamma_{\bm{\nu}}$ for a phase $|\eta|=1$, where $P=\prod_{x=1}^N\sigma_z^{(x)}$ is the global parity operator. Any Gaussian unitary satisfies $UPU^\dagger=\det(R)\,P=\pm P$, so $U^\dagger\gamma_{\bar{\bm{\nu}}}U=\pm\eta\,P\,(U^\dagger\gamma_{\bm{\nu}}U)$. Multiplication by the product operator $P=P_A\otimes P_{\bar A}$ acts as a local unitary on $|O_U\rrangle$ and leaves its operator Schmidt coefficients unchanged. Applying the length-$L$ bound to whichever of $\gamma_{\bm{\nu}}$ and $\gamma_{\bar{\bm{\nu}}}$ is shorter completes the proof of~\cref{prop:loe-bound}.

\section{Proof of the complete contingency table moment operator expression}\label{app:full-coefficient-calculation}
In this section, we detail proofs of technical lemmas that were used in Section~\ref{sec:contingency} to arrive at our main result, Theorem~\ref{res:CB-four-replica}. We start with the pairing expression for $C(B)$. We may express this by writing the two boundary vectors in the abstract diagram basis on $\mathbb{R}^{M_n}$ as
\begin{equation}
 \Phi_{\rm out}(B):=\sum_{p\in\mathcal{P}(B)}\ \sum_{\bm\sigma\in S_L^k}\operatorname{sgn}(\bm\sigma)\,\delta_{\bm\sigma\cdot p}, \qquad \Phi_{\rm in}=\sum_{q\in\mathcal{Q}_{k,L}}\delta_q,
\end{equation}
recalling that $\mathcal{Q}_{k,L}\coloneqq M_{k/2}^{\times L}$. From this, we have that $C(B)=\langle \Phi_{\rm out}(B),\operatorname{Wg}\Phi_{\rm in}\rangle$ with $\operatorname{Wg}$ the Weingarten matrix of~\cref{eq:weingarten-general}. 

\subsection{Proof that $C(B)$ has support on only a single irreducible representation}
We first prove the {lemma} on the unique surviving Weingarten sector, which we reproduce here. 
\irrep*
\begin{proof}
The two essential elements here is that $\Phi_{\rm out}(B)$ transforms according to the sign irrep of $S_L^k$ and $\Phi_{\rm in}$ transforms according to the trivial irrep of $S_k^L$. There are hence two individual projections at hand from structured sums over input and output pairings respectively whose support coincides only on a single irrep. Let us look first at the output pairings which, accounting for the spinor anticommutation relations, is the signed sum over the Young subgroup $K:=S_L^k\subset S_{kL}$. This independently permutes the $L$ Majorana labels in each of the $k$ output rows. Because permuting Majoranas within a row produces the sign representation of $S_L$, the output-side function transforms under $K$ according to
\begin{equation}
 \operatorname{sgn}_L^{\boxtimes k} :=\underbrace{\operatorname{sgn}_L\boxtimes\cdots\boxtimes\operatorname{sgn}_L}_{k\text{ factors}}.
\end{equation}
Viewed as a function on all $kL$ vertices, it therefore lies in the induced representation $\operatorname{Ind}_{S_L^k}^{S_{kL}}\left(\operatorname{sgn}_L^{\boxtimes k}\right).$
Consequently, only those $S_{kL}$ irreducible representations appearing in this induced representation can contribute to the signed row sum. To describe these irreps, let us temporarily denote $\Lambda\vdash kL$ for a partition labelling an irrep of $S_{kL}$. By the Littlewood-Richardson rule, the multiplicity of $V_\Lambda$ in the above induced representation is the coefficient of $s_\Lambda$ in $e_L^k$, where $e_L=s_{(1^L)}$ is the Schur function corresponding to the Frobenius characteristic of the sign representation of $S_L$. We take the $k$th power $e_L^k$ because induction from Young subgroups corresponds to multiplication of symmetric functions. Expressing this decomposition explicitly, we have
\begin{equation}\label{eq:sign-induced-decomposition}
 \operatorname{Ind}_{S_L^k}^{S_{kL}}\left(\operatorname{sgn}_L^{\boxtimes k}\right)\cong \bigoplus_{\Lambda\vdash kL} c_\Lambda V_\Lambda,\qquad e_L^k=\sum_{\Lambda\vdash kL}c_\Lambda s_\Lambda.
\end{equation}
From here, it remains to characterise which partitions $\Lambda\vdash kL$ appear in $e_L^k$. One can do this by using the standard involution $\omega$ of the ring to change from the sign irrep to the trivial irrep. Letting $h_L=s_{(L)}$ be the Frobenius characteristic of the trivial representation, $\omega$ satisfies $\omega(h_r)=e_r$ and $\omega(s_\mu)=s_{\mu'}$, where $\mu'$ denotes the conjugate partition to $\mu$ (a transposition of their Young diagrams). Hence $e_L^k = \omega(h_L^k)$. The product $h_L^k=h_{(L^k)}$ has Schur expansion $h_L^k = \sum_{\mu\vdash kL} K_{\mu,(L^k)}s_\mu$,where $K_{\mu,(L^k)}$ is a Kostka number. Applying $\omega$ gives $e_L^k=\sum_{\mu\vdash kL}K_{\mu,(L^k)}s_{\mu'}$. Equivalently, after writing $\Lambda\equiv\mu'$, we are left with the expansion.
\begin{equation}
  e_L^k=\sum_{\Lambda\vdash kL}K_{\Lambda',(L^k)}s_\Lambda.
\end{equation}
Thus $V_\Lambda$ appears in the induced sign representation if and only if $K_{\Lambda',(L^k)}>0$, where $\Lambda'$ is the conjugate diagram to $\Lambda$. A known property of the Kostka numbers is that $K_{\Lambda',(L^k)}>0 \iff \Lambda' \unrhd (L^k)$, where the symbol $\unrhd$ denotes a dominance ordering of the partitions. Since dominance is reversed by conjugation, $\Lambda'\unrhd (L^k)$ is the same statement as $\Lambda \unlhd (L^k)' = (k^L).$ Lastly, given that both partitions have total size $kL$, this condition is equivalent to saying that $\Lambda$ has at most $k$ nonzero parts, or equivalently that $\Lambda'$ has at most $k$ rows.

The same logic equivalently applies to the other side. The input pairings run over $\mathcal{Q}_{k,L}$, i.e., all ways of matching the $k$ replicas \emph{within} each of the $L$ columns. This set is manifestly invariant under permuting replicas inside a column, so the input vector $\Phi_{\rm in}$ is fixed by the Young subgroup $S_k^{\times L}\equiv S_k^L\subset S_{kL}$. It therefore transforms in the \emph{trivial} representation of $S_k^{L}$ and hence only irreps of $S_{kL}$ that contain an $S_k^L$-invariant vector can contribute. Similar to the above, these are exactly the constituents of $\operatorname{Ind}_{S_k^L}^{S_{kL}}(\operatorname{triv}_k^{\boxtimes L})$, whose Frobenius characteristic is $h_k^L$. Expanding once more in Schur functions, we have $h_k^L=\sum_{\Lambda\vdash kL}K_{\Lambda,(k^L)}\,s_\Lambda,$ and so $V_\Lambda$ occurs if and only if the Kostka number $K_{\Lambda,(k^L)}$ is nonzero, which by the dominance criterion is $\Lambda\unrhd (k^L).$ 

Combining the input and output dominance criteria squeezes $\Lambda$ from both sides,
\begin{equation}
 (k^L)\unlhd\Lambda\unlhd (k^L) \qquad\Longrightarrow\qquad \Lambda=(k^L),
\end{equation}
a rectangular diagram with $L$ rows of length $k$. In the Weingarten expansion~\eqref{eq:weingarten-general} the relevant $S_{2n}$ irrep is $2\lambda$, so
\begin{equation}\label{eq:unique-lambda}
 2\lambda_\star=(k^L), \qquad \lambda_\star=\left(\left(\tfrac{k}{2}\right)^{L}\right), \qquad n=\frac{kL}{2}.
\end{equation}
We thus arrive at the result that the Weingarten function for all input and output pairings may be straightforwardly evaluated using~\cref{eq:weingarten-general} from Ref.~\cite{Collins_2009}.
\end{proof}

\subsection{Evaluating the full moment operator expression}\label{app:ssec:moment-form}
From the previous lemma, we have that the Weingarten function is proportional to a projector onto the $\lambda_\ast$ irrep. We can use this fact to analytically evaluate the rest of $s(B)$. The associated normalising polynomial in~\cref{eq:weingarten-general} is a rectangle of $L$ rows and $k/2$ columns,
\begin{equation}\label{app:eq:Zlambda-star}
 Z_{\lambda_\star}(1^{2N}) =\prod_{i=1}^{L}\prod_{j=1}^{k/2}\left(2N+2j-i-1\right)\;\overset{k=4}{=}\;\prod_{i=1}^{L}(2N-i+1)(2N-i+3)= (2N)_L\,(2N+2)_L ,
\end{equation}
where $(x)_L=x(x-1)\cdots(x-L+1)$ is the falling factorial. Because a single $\lambda$ survives, the table coefficient can be expressed as
\begin{equation}
 C(B) =\frac{1}{Z_{\lambda_\star}(1^{2N})}\big\langle \Phi_{\rm out}(B),\,\Pi_{\lambda_\star}\Phi_{\rm in}\big\rangle, \qquad \text{with} ~ (\Pi_\lambda)_{p_1,p_2}:=\frac{2^n n!}{(2n)!}f^{2\lambda}\,\omega^\lambda(p_1^{-1}p_2),
\end{equation}
Let $A:=\sum_{\bm\sigma\in S_L^k}\operatorname{sgn}(\bm\sigma) P(\bm\sigma)$ be the signed row antisymmetriser. The central projectors are complete and commute with $A$, while the dominance argument above gives $A\Pi_\lambda\Phi_{\rm in}=0$ for $\lambda\neq\lambda_\star$. Therefore $A\Phi_{\rm in} =A\sum_\lambda\Pi_\lambda\Phi_{\rm in} =A\Pi_{\lambda_\star}\Phi_{\rm in}$ and have the coefficient reduction
\begin{equation}\label{eq:CB-overlap}
 C(B)=\frac{s(B)}{Z_{\lambda_\star}(1^{2N})},\qquad s(B):=\big\langle\Phi_{\rm out}(B),\Phi_{\rm in}\big\rangle =\sum_{p\in\mathcal{P}(B)}\sum_{\bm\sigma\in S_L^k}\operatorname{sgn}(\bm\sigma)\, \big[\bm\sigma\cdot p\in \mathcal{Q}_{k,L}\big].
\end{equation}
\noindent
Thus, our initial expression has reduced into a signed count of row permutations that turn a column-compatible \emph{output} pairing into a column-compatible \emph{input} pairing, divided by the single polynomial~\eqref{eq:Zlambda-star}.

Let us now turn to the explicit calculation of $s(B)$. It is convenient to speak of the $kL$ occupied entries of $B$ as \emph{nodes}: a node is a pair $(r,j)$ with $B_{rj}=1$. A row permutation $\bm\sigma\in S_L^k$ is equivalently a \emph{colouring} $\chi$, obtained by listing the nodes of each row in increasing mode order and assigning them the labels $\sigma_r(1),\ldots,\sigma_r(L)$. Thus each colour $c\in[L]$ occurs exactly once in every row, and $\chi$ records the output slot into which each Majorana is permuted; see~\cref{eq:s-colouring-diagram}. The condition $\bm\sigma\cdot p\in\mathcal{Q}_{k,L}$ in~\eqref{eq:CB-overlap} says that $p$ pairs nodes of \emph{equal colour}. Since $p$ pairs only within columns, the constraint factorises over modes,
\begin{equation}\label{eq:s-factorised}
 s(B)=\sum_{\bm\sigma\in S_L^k}\operatorname{sgn}(\bm\sigma)\prod_{j=1}^{2N}\omega_j(\bm\sigma), \qquad \omega_j(\bm\sigma):=\#\Big\{\text{perfect matchings of column }j\text{ into monochromatic pairs}\Big\},
\end{equation}
with the convention $\omega_j=1$ for empty columns. In particular $\omega_j$ depends on $B$ only through the set of rows occupied in column $j$, so $s(B)$ depends on $B$ only through its multiset of column occupancy patterns, together with their left-to-right order. The above expression is valid for general $k$, but to fully reduce it and compute our quantities of interest we will specialise to the case where $k=4$. This has the very useful property that each column has two pairs, and so knowing one pairing type uniquely fixes the other. Thus, the ``mode'' pairing class is always the same as the colour pairing class. This has the consequence that the sign components of a table factorise.
\signMag*
To see this, suppose $\bm\sigma$ contributes, i.e., that $\prod_j\omega_j(\bm\sigma)\neq0$. Any two contributing colourings are related by permuting colours among the ``monochromatic blocks'' of nodes, and every such elementary move exchanges two colours in \emph{an even} number of rows: transposing two colours globally flips $\operatorname{sgn}(\sigma_r)$ in all $k$ rows, giving $(-1)^k=+1$ for even $k$; re-pairing within a colour class flips it in exactly two rows, giving $(-1)^2=+1$. Hence $\operatorname{sgn}(\bm\sigma)$ takes a single value $\mathcal{E}(B)\in\{\pm1\}$ on the whole contributing set, and
\begin{equation}\label{eq:s-split}
 s(B)=\mathcal{E}(B)\,\mathcal{N}(B), \qquad \mathcal{N}(B):=\sum_{\bm\sigma\in S_L^k}\prod_j\omega_j(\bm\sigma)\ \geq 0 .
\end{equation}
Equivalently, $\mathcal{N}(B)$ counts pairs $(p,\bm\sigma)$ with $p\in\mathcal{P}(B)$ monochromatic for $\chi$.

\emph{Computing the magnitude of a $k=4$ table $\mathcal{N}(B)$.---} For the magnitude component of this expression, we have the following lemma
\magAd*
A column with $m_j=2$ occupies a pair of rows; a column with $m_j=4$ occupies all four. Label the three ways of splitting $\{1,2,3,4\}$ into two pairs by
\begin{equation}\label{app:eq:classes}
 e_1=\big\{\{1,2\},\{3,4\}\big\},\qquad
 e_2=\big\{\{1,3\},\{2,4\}\big\},\qquad
 e_3=\big\{\{1,4\},\{2,3\}\big\},
\end{equation}
and call $e_t$ the \emph{class} $t$. A weight-two column carries an unambiguous class, namely the one containing its occupied row pair. Write $n_{rr'}$ for the number of columns occupying exactly rows $\{r,r'\}$, and let $q:=\#\{j: m_j=4\}$.

\noindent
Now unpack let us unpack the magnitude $\mathcal{N}(B)$. Fix $p\in\mathcal{P}(B)$ and think of its pairs as \emph{edges}: a weight-two column contributes one edge, and a weight-four column contributes two edges which necessarily form a complementary pair, i.e.\ a class $t_j\in\{1,2,3\}$; choosing $p$ is exactly choosing $t_j$ for each weight-four column. Let $b_t$ be the number of weight-four columns assigned class $t$, so $b_1+b_2+b_3=q$.

A colour class is a set of four nodes, one per row, which is a union of two edges; those edges cover complementary row pairs and hence belong to the same class. By the row-regularity identities in~\cref{eq:class-balance-main}, class $t$ has $\alpha_t+b_t$ edges on each complementary row pair. Grouping the two sets into colours is a bijection and gives $(\alpha_t+b_t)!$ possibilities. Finally, multiplicity under relabelling the $L$ modes contributes $L!$. Therefore, we find that
\begin{equation}\label{app:eq:NB}
 \mathcal{N}(B) =L!\sum_{\substack{b_1,b_2,b_3\geq0\\ b_1+b_2+b_3=q}}\binom{q}{b_1,b_2,b_3}\prod_{t=1}^{3}(\alpha_t+b_t)!\ .
\end{equation}

\emph{Computing the sign $\mathcal{E}(B)$ for a given table.---} It remains to identify the constant sign of a given contingency table. For this we have the following lemma.
\signAd*
Continue with the edge picture and note that within a row, distinct edges sit in distinct columns (a weight-four column has one node per row, and its two edges partition the rows). Writing $X:=\sum_r\operatorname{inv}(\sigma_r)$, so that $\operatorname{sgn}(\bm\sigma)=(-1)^X$, we may count inversions pairwise over edges,
\begin{equation}
 X=\sum_{\{\epsilon,\epsilon'\}}\big|\pi(\epsilon)\cap\pi(\epsilon')\big|\cdot \big[\text{column order and colour order disagree}\big],
\end{equation}
where $\pi(\epsilon)\subset\{1,2,3,4\}$ is the row pair of the edge $\epsilon$. Two distinct row pairs in a four-element set either coincide, are disjoint, or meet in exactly one element; the first two cases mean $\epsilon,\epsilon'$ are of the \emph{same} class and contribute an even factor. Hence modulo two only cross-class pairs of edges survive, each with weight one:
\begin{equation}
 X\equiv\#\big\{\text{cross-class edge pairs whose column order disagrees with their colour order}\big\}\pmod 2 .
\end{equation}
Now insert the class order as an intermediate reference. For any three total orders, $[\,\text{col}\neq\text{colour}\,]=[\,\text{col}\neq\text{class}\,]\oplus[\,\text{class}\neq\text{colour}\,]$, so
\begin{equation}
 \begin{split}
 X\equiv{}& \underbrace{\#\{\text{cross-class pairs: column order }\neq\text{ class order}\}}_{\text{depends only on }B}\\
 +{}&\underbrace{\#\{\text{cross-class pairs: class order }\neq\text{ colour order}\}}_{\text{colouring dependent}} \pmod 2 .
 \end{split}
\end{equation}
The second term vanishes modulo two: given two colours $a\neq b$ of different classes, all \emph{four} cross pairs of their edges have the same class-versus-colour comparison, so each unordered colour pair contributes $0$ or $4$. The colouring choice no longer features, and we are left with a statistic of $B$ alone, which we term the ``class word'' as defined below. 
\classword*
\noindent
Weight-four columns may be omitted without affecting the parity: assigning one a class contributes two adjacent equal letters, and two equal letters generate an even number of inversions against anything else. For the same reason the parity is independent of which class is assigned to them, and -- since each class occupies an even number $2\alpha_t$ of weight-two columns -- also independent of the arbitrary ordering $e_1<e_2<e_3$ chosen in~\eqref{eq:classes}. Thus
\begin{equation}\label{app:eq:EB}
 \mathcal{E}(B)=(-1)^{\operatorname{inv}(W_B)} .
\end{equation}
For example, a table whose weight-two columns read $(1,2,3,3,2,1)$ from left to right has $6$ inversions and $\mathcal{E}(B)=+1$, whereas $(1,3,2,3,2,1)$ has $7$ and $\mathcal{E}(B)=-1$.

Collecting~\eqref{eq:CB-overlap}, \eqref{eq:s-split}, \eqref{app:eq:NB} and~\eqref{app:eq:EB}, we obtain for the four-replica coefficient of a contingency table:
\begin{equation}\label{app:eq:CB-final}
 C(B) =\frac{(-1)^{\operatorname{inv}(W_B)}\;L! \sum_{b_1+b_2+b_3=q}\binom{q}{b_1,b_2,b_3}\prod_{t=1}^{3}(\alpha_t+b_t)! }{(2N)_L\,(2N+2)_L}.
\end{equation}
Now, we can further simplify this expression to reach~\cref{res:CB-four-replica}, by evaluating the summation in~\cref{app:eq:CB-final}. We have that 
\begin{equation}
    \sum_{b_1+b_2+b_3=q} \binom q{b_1,b_2,b_3} \prod_{t=1}^3(\alpha_t+b_t)! =q!\,\alpha_1!\alpha_2!\alpha_3! \sum_{b_1+b_2+b_3=q} \prod_{t=1}^{3}\binom{\alpha_t+b_t}{b_t}
\end{equation}
where we have expanded the multinomial and used that $\frac{(\alpha_t+b_t)!}{b_t!}=\alpha_t!\binom{\alpha_t+b_t}{b_t}$. The remaining sum can be written as the coefficient of a generating function, giving
\begin{align}
    \sum_{b_1+b_2+b_3=q} \prod_{t=1}^{3}\binom{\alpha_t+b_t}{b_t}&=[z^q]\prod_{t=1}^{3}(1-z)^{-(\alpha_t+1)}\\
&=[z^q](1-z)^{-(\alpha_1+\alpha_2+\alpha_3+3)}\\
&=\binom{\alpha_1+\alpha_2+\alpha_3+q+2}{q}\\
&=\binom{L+2}{q}.
\end{align}
Combining the above, we arrive at the final expression of~\cref{res:CB-four-replica}.

\section{Additional calculations for matchgate-averaged operator complexities}\label{app:operator-calcs}
We now look to wield the previous exact expression for the $k=4$ moment operator to compute operational quantities of the form $\overline{Q} = \Tr[\Phi^{(4)}[\gamma_{\bm \nu}^{\otimes 4}]\mathsf{K}_Q]$, where $\mathsf{K}_Q$ is some boundary operator that enables the exact evaluation of a given quantity $Q$. To this effect, the ensemble average $\overline{Q}$ evaluates to 
\begin{equation}\label{app:eq:boundary-strategy}
 \begin{aligned}
 \overline{Q(O_U)} &=\Tr[\Phi^{(k)}(O^{\otimes k})\mathsf{K}_Q]=\sum_{\bm m}\sum_{B\in\mathcal C_{\bm m}}C(B)\,w_Q(B),\\
 \text{where}&\quad w_Q(B):=\Tr\!\left[\left(\bigotimes_{r=1}^{k}\gamma_{\bm\mu_B^{(r)}}\right)\mathsf{K}_Q\right] .
 \end{aligned}
\end{equation}
The combinatorics of the contingency table structure stems from combining both orthogonal commutant and Majorana spinor properties. We shall see that the boundary operator $\mathsf{K}_Q$ can impose extra conditions on the surviving terms in the summand; ultimately, it is these conditions which need to be counted to compute the ensemble average. In the following, we supply additional details of the proofs of Applications~\ref{res:LOE-closed-form}-\ref{res:otoc-closed} as well as their asymptotic analysis. The proof for the matchgate average OSE result, Application~\ref{res:OSE-closed}, is simple enough to be contained entirely in the Main Text. 

\subsection{Local-operator entanglement}\label{app:LOE-calc}
We start by deriving the local-operator entanglement result, Application~\ref{res:LOE-closed-form}. Rewriting its definition, we have
\begin{equation}\label{app:eq:LOE-def}
  \begin{split}
    \overline{E^{(2)}(O_U)} &= D^{-2}\Tr[\mathcal{E}^{(4)}[O^{\otimes 4}] \, T_{\pi_o}^A T_{\pi_e}^{\bar{A}}],\quad\text{where,}\\
    &T_{\pi_o}^A T_{\pi_e}^{\bar{A}} = \LOEop.
  \end{split}
\end{equation}
That is, we identify $\mathsf{K}_{Q}\equiv T_{\pi_o}^A T_{\pi_e}^{\bar{A}}$ when $Q$ is the LOE purities, $E^{(k)}(O_U)$. First, note that we may write each Majorana string as a Pauli with its bipartition $A\mid \overline{A}$ made explicit, i.e., that $\gamma_{\mu_B^{(r)}} \equiv P_r^{(A)}\otimes P_r^{(\overline{A})}$. We denote by $M$ the largest index of the qubit in the $A$ partition, $X,Z\subseteq[2M]$ the occupied left modes, and $Y,W\subseteq [2N-2M]$ the occupied right modes. 
Evaluating the above expression then enforces, via orthonormality of the Paulis, that 
\begin{equation}\label{app:eq:LOE-pauli-constraints}
\begin{split}
 &P_1^{(A)}=P_2^{(A)},\quad P_3^{(A)}=P_4^{(A)},\\
 \quad &P_2^{(\bar A)}=P_3^{(\bar A)},\quad P_4^{(\bar A)}=P_1^{(\bar A)}.
 \end{split}
\end{equation}
Recalling the definition of the Jordan-Wigner transform (\cref{eq:JW}), we can make the observation that $P_r^{(A)} \propto \gamma_{\mu_r^A}^{(A)}P_A^{|\mu_r^{\overline{A}}|}$
%$P_r^{(A)} = \left(\prod_{j=1}^M c_j^{(r)}\right) \sigma_z^{\otimes (L-M)}$ 
which implies that $\Tr[P_{r}^{(A)}P_{r'}^{(A)}] = \Tr[\prod_{j=1}^{M_r} c_j^{(r)}(\sigma_z^{1:M})^{L_r}\prod_{j=1}^{M_{r'}} c_j^{(r')}(\sigma_z^{1:M})^{L_{r'}}]$. This leads us to several possibilities: First, if $L_r$ and $L_{r'}$ are even, it is equal to $\delta_{\bm{\mu}_r^{(1:M_r)}, \bm{\mu}_{r'}^{(1:M_{r'})}}$. That is, the $Z$ part of the string squares to the identity, and then the substrings on the two replicas are forced to be equal. 
Alternatively, if exactly one of $L_r$ or $L_{r'}$ is even, then we need the two substrings $\mu_{r}^{1:M_r}\cup \mu_{r'}^{1:M_{r'}}$ to contain all modes from 1 to $2M$. This then ensures that $\prod_{j=1}^{M_r} \prod_{j=1}^{M_{r'}} c_j^{(r')}= \sigma_z^{1:2M}$. But this constraint cannot hold, since it requires the first right string equal to the fourth right string, and the second right string equal to the third right string; the first left string equal to the complement of string 2, and similar for 3 and 4. Hence, exactly one of right string 1 and right string 2 is odd, but since $|L_2| + |R_2| = L$, and $|L_1| + |L_2| = 2M$, then $(2M - |L_1|) + |R_2| = L$ which means $|R_1| + |R_2| = (L - |L_1) + (L - 2M + |L_1|) = 2(L-M)$ which, because it's an even number, means you can't have only one of $R_1$ or $R_2$ being odd. We are therefore left with the first case: that the $A$ substring on replicas $1/2$ and $3/4$ must be equal, and that the $\bar{A}$ substring on replicas $1/4$ and $2/3$ must be equal. Graphically, we reiterate that these constraints can be seen in the following way.
\begin{equation}\label{app:eq:LOE-class-one}
  \left(~\LOEClassOne~\right)
\end{equation}

To apply~\cref{res:CB-four-replica} we impose these conditions and read off the column occupancy patterns of the surviving tables. Introduce the two overlaps
\begin{equation}
 u:=|X\cap Z| ,\qquad v:=|Y\cap W| .
\end{equation}
Within $A$ the occupancy of mode $j$ is the indicator vector $(\mathbbm{1}_{j\in X},\mathbbm{1}_{j\in X},\mathbbm{1}_{j\in Z},\mathbbm{1}_{j\in Z})$, and within $\bar A$ it is $(\mathbbm{1}_{j\in W},\mathbbm{1}_{j\in Y},\mathbbm{1}_{j\in Y},\mathbbm{1}_{j\in W})$, so the columns come in only the following types:
\begin{equation}\label{app:eq:LOE-column-types}
 \begin{array}{lcccc}
  \toprule & \multicolumn{2}{c}{\text{weight two}} & \text{weight four} & \text{empty}\\
  \midrule \text{in }A: & \substack{1100\\ L_\ell-u} & \substack{0011\\ L_\ell-u} & \substack{1111\\ u} & \substack{0000\\ 2M-2L_\ell+u}\\[0.6em]
  \text{in }\bar A: & \substack{0110\\ L-L_\ell-v} & \substack{1001\\ L-L_\ell-v} & \substack{1111\\ v} & \substack{0000\\ \cdot}\\
  \bottomrule
 \end{array}.
\end{equation}
Comparing with the classes~\eqref{eq:classes}, every weight-two column inside $A$ occupies rows $\{1,2\}$ or $\{3,4\}$ and hence belongs to class $e_1$, while every weight-two column inside $\bar A$ occupies rows $\{2,3\}$ or $\{1,4\}$ and belongs to class $e_3$, while class $e_2$ never occurs. 
\begin{equation}\label{app:eq:LOE-alphas}
 \begin{aligned}
 \alpha_1&=L_\ell-u, & \alpha_2&=0,\\
 \alpha_3&=L-L_\ell-v, & q&=u+v,
 \end{aligned}
\end{equation}
which indeed satisfies $\alpha_1+\alpha_2+\alpha_3+q=L$. The sign is trivial for every such table. Since all of $A$ precedes all of $\bar A$ in the mode ordering, the class word of~\cref{def:class-word} is
\begin{equation}
 W_B=\underbrace{1\,1\cdots1}_{2\alpha_1} \underbrace{3\,3\cdots3}_{2\alpha_3},
\end{equation}
implying that $\rm{inv}(W_B)=0$ for all $B$. Equivalently, this is the statement that the LOE boundary forces all tables to already be in an in-column pairing, meaning that the quantity is a sum of only positive terms. Setting $\alpha_2=0$ in~\eqref{eq:NB} gives us
\begin{equation}
 \begin{aligned}
 \frac{\mathcal N(B)}{L!} &={\sum_{b_1+b_2+b_3=q}} \frac{q!\,(\alpha_1+b_1)!(\alpha_3+b_3)!}{b_1!\,b_3!} =q!\,\alpha_1!\,\alpha_3! \sum_{b_1+b_2+b_3=q} \binom{\alpha_1+b_1}{\alpha_1} \binom{\alpha_3+b_3}{\alpha_3}=q!\,\alpha_1!\,\alpha_3!\binom{L+2}{q}.
 \end{aligned}
\end{equation}
The third equality follows since $\alpha_1+\alpha_3+q=L$, meaning that the binomial sum is simply $\binom{L+2}{q}$. Hence, for every table,
\begin{equation}\label{app:eq:LOE-sB}
 s(B)=L!\ (u+v)!\ (L_\ell-u)!\ (L-L_\ell-v)!\ \binom{L+2}{u+v} .
\end{equation}

It remains to count how many surviving tables realize a given $(L_\ell,u,v)$. On the $A$ side we must choose $X,Z\subseteq\{1,\ldots,2M\}$ with $|X|=|Z|=L_\ell$ and $|X\cap Z|=u$, which is a four-way multinomial choice into $X\cap Z$, $X\setminus Z$, $Z\setminus X$ and the rest; likewise on the $\bar A$ side. Thus
\begin{equation}\label{app:eq:LOE-counts}
 \begin{aligned}
 \#_{\rm left} &=\frac{(2M)!}{u!\,[(L_\ell-u)!]^2\,(2M-2L_\ell+u)!},\\[0.4em] 
 \#_{\rm right} &=\frac{(2N-2M)!}{v!\,[(L-L_\ell-v)!]^2} \frac{1}{[2N-2M-2(L-L_\ell)+v]!} .
 \end{aligned}
\end{equation}
The squared factorials in the denominators cancel against two of the factorials in~\eqref{app:eq:LOE-sB}, leaving one power of each. Dividing by the content polynomial~\eqref{eq:Zlambda-star} we obtain a form of the main result of this section:
\begin{equation}
    \overline{E^{(2)}_A(O_U)} =\frac{L!\,(2M)!\,\big(2(N-M)\big)!}{\displaystyle(2N)_L\,(2N+2)_L} \sum_{L_\ell=(L-2\bar M)_+}^{\min\{L,2M\}}\sum_{\substack{u=(2L_\ell-2M)_+, \\ v=(2(L-L_\ell)-2\bar M)_+ }}^{L_\ell,L-L_\ell} \!\frac{(u+v)!\,\dbinom{L+2}{u+v}}{\Delta(L_\ell,u,v)},\label{eq:LOE-closed-form-pre}
\end{equation}
where we write $(x)_+:= \max\{0,x\}$ and where 
\begin{equation}
    \Delta(L_\ell,u,v):={}u!\,v!\,(L_\ell-u)! (L-L_\ell-v)! \label{eq:delta}(2M-2L_\ell+u)![2(N-M)-2(L-L_\ell)+v]! .
\end{equation}
The LOE boundary has now produced both required simplifications: it excludes class $e_2$, fixes $\mathcal E(B)=+1$, and reduces the coefficient and table multiplicity to functions of $(L_\ell,u,v)$. Summing those positive contributions gives the exact finite-size result.

It remains to simplify the summations in~\cref{eq:LOE-closed-form-pre} to arrive at Application~\ref{res:LOE-closed-form}. First, introduce the variables
\begin{equation}
w=L_\ell-u,\qquad x=L-L_\ell-v.
\end{equation}
Then we can rewrite: $ L_\ell=u+w$, $L-L_\ell=v+x$, and $u+v+w+x=L$. It follows immediately that the last two factorial arguments of~\cref{eq:delta} become
\begin{align}
2M-2L_\ell+u &=2M-u-2w,\\
2\bar M-2(L-L_\ell)+v &=2\bar M-v-2x.
\end{align}
Thus, the original summation bounds are equivalent to
\begin{equation}
u,v,w,x\geq 0, \qquad u+v+w+x=L, \qquad 2M-u-2w\geq0, \qquad 2\bar M-v-2x\geq0.
\end{equation}
Moreover,
\begin{equation}
(u+v)!\binom{L+2}{u+v} = \frac{(L+2)!}{(w+x+2)!}.
\end{equation}
Therefore,
\begin{equation}
\overline{E_A^{(2)}(O_U)} ={} \frac{L!(2M)!(2\bar M)!(L+2)!}{(2N)_{ L}(2N+2)_{ L}}\sum_{\substack{u,v,w,x\geq0\\u+v+w+x=L\\2M-u-2w\geq0\\2\bar M-v-2x\geq0}}\frac{1}{(w+x+2)!\,u!\,v!\,w!\,x!\,(2M-u-2w)!\,(2\bar M-v-2x)!}.
\label{eq:LOE-uvwx}
\end{equation}
Finally, we can resolve the sum over $u$ and $v$ by applying Vandermonde's identity:
\begin{equation}
\sum_{u+v=L-w-x}\frac{1}{u!\,v!\,(2M-2w-u)!\,(2\bar{M}-2x-v)!}=\frac{1}{(2M-2w)!\,(2\bar{M}-2x)!}\binom{2M-2w+2\bar{M}-2x}{L-w-x}.
\end{equation}
Applying the definition of the falling factorial, $(y)_n:=y!/(y-n)!$, we arrive at the final expression of~\cref{eq:LOE-closed-form}. \par

\emph{Asymptotics.---} The remainder of this Section is devoted to asymptotic analysis of~\cref{eq:LOE-closed-form}, proving the following lemma.
\begin{lemma}[{Asymptotic matchgate-averaged LOE}]\label{lem:LOE-asymptotics}
Let $N\to\infty$. Then, the LOE, evaluated by~\cref{eq:LOE-closed-form} yields:\par
\noindent
\underline{$L=\mathcal{O}(1)$ strings.} 
We first evaluate~\cref{eq:LOE-closed-form} for few-particle operators, i.e., for $L=\mathcal{O}(1)$. For the half-chain, we have that
\begin{equation}\label{app:eq:LOE-dilute-entropy}
\overline{\mathrm{LOE}}_{A}^{(2)} ={}L-\frac{L(L+1)}{2N \ln(2)}+\mathcal O(N^{-2}).
\end{equation}
with the formula for arbitrary $M$ reported in App.~\ref{app:LOE-calc}. \par
\noindent
\underline{Extensive strings.} Take the half-chain cut $M=N/2$ and let $L=2N\ell$ with $\ell\in(0,1)$ fixed. Then
\begin{equation}
\overline{\mathrm{LOE}}_{A}^{(2)}=2N\,\sigma(\ell)+\mathcal{O}(1),
\end{equation}
with
\begin{equation}
\sigma(\ell) = \log\!\left[ \frac{ \bigl(\ell-p_\star\bigr)^\ell \bigl(1-\ell-p_\star\bigr)^{1-\ell}}{ \ell^{2\ell}(1-\ell)^{2(1-\ell)}}\right],
\end{equation}
where $p_\star=p_\star(\ell)=\big(\sqrt{1+4\ell(1-\ell)}-1\big)/2$. In particular, $\sigma$ is maximised at $\ell=\tfrac12$, where $\sigma(\tfrac12)=\log(4-2\sqrt2)$.
\end{lemma}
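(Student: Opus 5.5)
For $L=\mathcal{O}(1)$, I would expand the finite sum in \cref{eq:LOE-closed-form} in powers of $1/N$ at $M=\bar M=N/2$. For fixed $(w,x)$ with $s:=w+x$, the summand scales as
\begin{equation*}
(2M)_{2w}(2\bar M)_{2x}\binom{2N-2s}{L-s}\sim N^{2s}\,\frac{(2N)^{L-s}}{(L-s)!}.
\end{equation*}
So the top layer $s=L$ dominates and each lower layer is suppressed by one power of $N$. On the top layer I would use $\sum_{w+x=L}1/(w!\,x!)=2^L/L!$. Together with the prefactor $L!(L+2)!/[(2N)_L(2N+2)_L]$, this gives $\overline{E^{(2)}_A}=2^{-L}(1+c_1/N+\mathcal{O}(N^{-2}))$, and hence $\overline{\mathrm{LOE}}^{(2)}_A=L-c_1/(N\ln 2)+\dots$. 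The correction $c_1$ collects three pieces:
\begin{itemize}
\item the subleading terms of the falling factorials $(N)_{2w}(N)_{2x}$ on the top layer, namely $-\binom{2w}{2}-\binom{2x}{2}$ over $N$, averaged against binomial weights;
\item the expansion of $1/[(2N)_L(2N+2)_L]$, which contributes $\big(\binom{L}{2}+\binom{L}{2}-2L\big)/(2N)$;
\item the $s=L-1$ layer, whose weight relative to the top layer is $(2N)\cdot L(L+2)/(N^2\cdot 2)$ after using $\sum_{w+x=L-1}1/(w!\,x!)=2^{L-1}/(L-1)!$.
\end{itemize}
The binomial averages are elementary, since $\mathbb{E}[w(w-1)]$ for $w\sim\mathrm{Bin}(L,1/2)$ equals $L(L-1)/4$. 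Adding the pieces should produce $c_1=L(L+1)/2$. For general $M$ the same computation goes through with weights $f^w(1-f)^x$, which gives the arbitrary-cut formula.

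For extensive $L=2N\ell$, I would take a Laplace/saddle-point approach on the double sum. I would write $w=N a$ and $x=N b$, and apply Stirling's formula to every factorial. This includes the prefactor, using $(2N)_L\simeq (2N)!/(2N-L)!$ and $(2N+2)_L\simeq$ the same up to polynomial factors. The summand then becomes $\exp[N\Phi(a,b)+\mathcal{O}(\log N)]$, and $\Phi$ is a sum of entropy-like terms. By the $M\leftrightarrow\bar M$ symmetry at the half cut, the maximiser satisfies $a=b$. So I would set $w+x=2Np$ and maximise the resulting one-variable function in $p\in[0,\ell]$ (or $[0,\min(\ell,1-\ell)]$). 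I expect the stationarity condition to reduce to a quadratic, $p^2+p-\ell(1-\ell)=0$, and its positive root is $p_\star=(\sqrt{1+4\ell(1-\ell)}-1)/2$. Substituting $p_\star$ back into the exponent, together with the prefactor $\ell^{-2\ell}(1-\ell)^{-2(1-\ell)}$-type contributions, should reproduce $\sigma(\ell)$. The value $\sigma(1/2)=\log(4-2\sqrt2)$ then follows from $p_\star(1/2)=(\sqrt2-1)/2$. To confirm that the maximiser over $\ell$ is at $1/2$, I would use the $\ell\leftrightarrow1-\ell$ symmetry and check concavity of $\sigma$.

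The main obstacle is the error term. The Gaussian integral around the interior saddle, the Stirling prefactors and the normalisation each contribute polynomial factors in $N$. A naive count of these would only give an $\mathcal{O}(\log N)$ error rather than the stated $\mathcal{O}(1)$. I would therefore track the $\sqrt{N}$ powers explicitly: the two-dimensional Gaussian integral yields $N^{1}$, which should be compensated by the $N^{-1/2}$ Stirling factors of the factorials. If they do not cancel exactly, the statement should be read with an $\mathcal{O}(\log N)$ correction. Either way, the leading coefficient $2N\sigma(\ell)$ is unaffected.
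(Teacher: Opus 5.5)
Your approach is the paper's. For fixed $L$ you keep the top layer $s=w+x=L$ plus the $s=L-1$ layer. For extensive $L$ you restrict to $w=x$ and do a one-variable Laplace analysis in $p=(w+x)/(2N)$. Your fixed-$L$ bookkeeping is correct. The three pieces are $-L^2/N$, $(L^2-3L)/(2N)$ and $+L(L+2)/N$. They reproduce the paper's $T_L=2^{-L}[1-L(L+3)/(2N)]$ and $T_{L-1}=2^{-L}L(L+2)/N$, hence $c_1=L(L+1)/2$. Your quadratic $p^2+p-\ell(1-\ell)=0$ is the paper's condition $\phi'(p)=0$. One slip for general $M$: since $(2M)_{2w}\approx(2Nf)^{2w}$, the top-layer weights are $f^{2w}(1-f)^{2x}/(w!\,x!)$, not $f^w(1-f)^x$. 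This is why the leading term is $\Sigma^L$ with $\Sigma=f^2+(1-f)^2$.

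\textbf{The gap.} You leave the error term undecided and offer to weaken it to $\mathcal{O}(\log N)$. That step is where the stated $\mathcal{O}(1)$ comes from, and the count does close.

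\emph{Prefactor.} Write it as $\binom{2N}{L}^{-2}(L+1)(L+2)\,(2N)_L/(2N+2)_L$. Beyond the exponential this is $\Theta(N)\cdot\Theta(N^2)\cdot\Theta(1)$.

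\emph{Summand at $w=x=Np$, $s=2Np$.}
\begin{itemize}
\item The factor $1/(s+2)!=1/[(s+1)(s+2)\,s!]$ carries $s^{-2}$, which combines with the $L^2$ into $(\ell/p)^2=\Theta(1)$.
\item The remaining Stirling square roots give $N^{-1}$ from $1/(w!\,x!)$, $N^{-1/2}$ from $1/s!$, and $N^{-1/2}$ from $\binom{2N-2s}{L-s}$.
\item The falling factorials $(N)_{2w}(N)_{2x}$ contribute $\Theta(1)$.
\end{itemize}
So each term is $\Theta(N^{-1})e^{2N\phi(p)}$, exactly as the paper states. Summing over a $\Theta(\sqrt N)\times\Theta(\sqrt N)$ window supplies $\Theta(N)$, so $\overline{E^{(2)}_A}=\Theta(1)\,e^{2N\phi(p_\star)}$.

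\emph{Nondegeneracy.} That window needs a nondegenerate maximum in both directions. This also supplies the missing justification for your reduction to $a=b$: the $M\leftrightarrow\bar M$ symmetry alone only makes the set of maximisers symmetric. The paper instead uses log-convexity of factorials. At fixed $s$, the $w$-dependence $-\ln(N-2w)!-\ln w!-\ln(N-2x)!-\ln x!$ is strictly concave. So the maximum sits at $w=x$, with transverse curvature of order $1/N$. Together with $\phi''(p_\star)<0$, this gives the two Gaussian directions.
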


First, we fix $L=\mathcal{O}(1)$ and consider the half-chain bipartition, $M=N/2$. Rewriting~\cref{eq:LOE-closed-form} under these conditions, we have
\begin{align}
    \overline{E_A^{(2)}(O_U)}&=\frac{L!(L+2)!}{(2N)_{ L}(2N+2)_{L}}  \sum_{\substack{0\leq w,x\leq N/2\\w+x\leq\min\{L,2N-L\}}}\frac{(N)_{{2w}}(N)_{{2x}}}{w!x!(w+x+2)!}\binom{2N-2w-2x}{L-w-x}\\
    &=: \sum_{s=0}^L T_{s} , 
\end{align}
where $T_s$ is the contribution from $x+w=s$. We notice that for a given $L$, each summand scales as 
\begin{equation}
    T_{s}=\mathcal{O}\left(\frac{N^{2w}N^{2x}N^{L-w-x}}{N^{2L}}\right) =\mathcal{O}(N^{w+x-L}).
\end{equation}
So the leading order in $N \to \infty$ comes solely from the terms $T_L$. The subleading terms are when $s=w+x=L-1$, together with the subleading expansion of $T_L$. Using
\begin{equation}
    (N)_r=N^r\left[1-\frac{r(r-1)}{2N}+\mathcal{O}(N^{-2})\right],
\end{equation}
we have that 
\begin{equation}
    T_L=2^{-L}\left[1-\frac{L(L+3)}{2N}+\mathcal{O}(N^{-2})\right],
\end{equation}
and 
\begin{equation}
    T_{L-1}=2^{-L}\left[\frac{L(L+2)}{N}+\mathcal{O}(N^{-2})\right].
\end{equation}
Combining these, we have 
\begin{equation}
    \overline{E_A^{(2)}}=2^{-L}\left[1+\frac{L(L+1)}{2N}+\mathcal{O}(N^{-2})\right],
\end{equation}
which leads directly to the annealed average $2$-R\'enyi LOE in~\cref{eq:LOE-dilute-entropy}. We can directly extend this to arbitrary-size bipartitions $M$. For an arbitrary bipartition, define
$$
f=\frac{M}{N},\qquad g=f(1-f),\qquad \Sigma=f^2+(1-f)^2.
$$
The same expansion applies, with $(N)_{2w}(N)_{2x}$ replaced by $(2M)_{2w}(2N-2M)_{2x}$. Summing over $w+x=s$ using the binomial theorem gives
$$
\begin{aligned}
T_L &=\Sigma^L\left[1+\frac{L}{2N}\left(L-3-\frac{1}{\Sigma}-\frac{2(L-1)(1-3g)}{\Sigma^2}\right)+O(N^{-2})\right],\\
T_{L-1}&=\Sigma^L\left[\frac{L(L+2)}{2N\Sigma}+O(N^{-2})\right].
\end{aligned}
$$
All remaining terms contribute $O(N^{-2})$. Using
$\Sigma=1-2g$, their sum is hence 
$$
\overline{E_A^{(2)}}=\Sigma^L\left[1+\frac{2Lg[1+(L-3)g]}{N\Sigma^2}+O(N^{-2})\right].
$$
Then the annealed entropy in bits is
$$
\overline{\mathrm{LOE}}_A^{(2)}=-L\log_2\Sigma-\frac{2Lg[1+(L-3)g]}{N\Sigma^2\ln2}+O(N^{-2}).
$$
Setting $f=\tfrac12$ recovers the half-chain result.

Now, we again take the half-chain bipartition but choose an extensive string length, $\ell=L/(2N)=\in (0,1)$. At fixed $w+x$, log-convexity of the factorials means that the summand is maximized at $w=x$. Writing $p=(w+x)/(2N)$, Stirling's
formula gives (including the prefactor) a leading scaling of $\Theta(N^{-1})e^{2N\phi(p)}$, where
\begin{equation}
\phi(p)=2\ell\ln\ell+2(1-\ell)\ln(1-\ell) -2p\ln p-p\ln2-(\ell-p)\ln(\ell-p) -(1-\ell-p)\ln(1-\ell-p).
\end{equation}
Since
$$
\phi'(p)=\ln\frac{(\ell-p)(1-\ell-p)}{2p^2}, \qquad \phi''(p)<0,
$$
the unique maximum occurs at
$$
p_\star=\frac{\sqrt{1+4\ell(1-\ell)}-1}{2}.
$$
The peak is nondegenerate in both summation directions, with width $\Theta(\sqrt N)$ in each. Gaussian summation therefore cancels the $N^{-1}$ prefactor, yielding
$$
\overline{E_A^{(2)}}=\exp\!\left[2N\phi(p_\star)+\mathcal{O}(1)\right].
$$
Using $(\ell-p_\star)(1-\ell-p_\star)=2p_\star^2$ gives $-\phi(p_\star)/\ln(2)=:\sigma(\ell)$, and hence (recalling that we use base $2$ logarithms throughout)
$$
\overline{\mathrm{LOE}_{A}}^{(2)} :=
-\log\overline{E_A^{(2)}}=2N\sigma(\ell)+\mathcal{O}(1),
$$
as reported in~\cref{eq:LOE-page}

\subsection{Operator stabilizer entropy}\label{app:OSE-calc}
The closed-form average of~\cref{res:OSE-closed} is derived entirely in the main text. Here, we give the proof of the general bound of~\cref{prop:ose-bound}, which we restate for convenience.
\oseBound*
\begin{proof}
By~\cref{eq:det_formula}, and since $U^\dagger\gamma_{\bm{\nu}}U$ is again of the form~\cref{eq:det_formula} with $R$ replaced by $R^{\T}$,
\begin{equation*}
O_U=\sum_{|\bm\mu|=L}\det\!\big(R^{\T}_{\bm{\nu},\bm\mu}\big)\,\gamma_{\bm\mu},
\end{equation*}
where the sum runs over the $\binom{2N}{L}$ canonically ordered length-$L$ supports. Each $\gamma_{\bm\mu}$ is a Pauli string up to a phase, and distinct supports give distinct Pauli strings. The Pauli distribution $p_P:=|a_P|^2$ of $O_U$ is therefore supported on a set of at most $n:=\binom{2N}{L}$ Pauli strings. For $k>1$, the map $x\mapsto x^k$ is convex, so by Jensen's inequality
\begin{equation*}
M^{(k)}(O_U)=\sum_{P}p_P^k=n\cdot\frac1n\sum_{P\in\operatorname{supp}}p_P^k\;\geq\; n\Big(\frac1n\sum_{P}p_P\Big)^k=n^{1-k},
\end{equation*}
with equality if and only if $p$ is uniform on $n$ strings. Equivalently, $\mathrm{OSE}^{(k)}(O_U)\leq\log n$ for $k>1$. The same bound holds for all $k\geq0$, because the R\'enyi entropy of a distribution with support size at most $n$ never exceeds $\log n$.
\end{proof}

\subsection{Out-of-time-ordered Correlator}\label{app:OTOC-calc}
We again apply the strategy for matchgate averages adopted above. We first rewrite~\cref{eq:otocdef} in $k$-replicas of Hilbert space,
\begin{equation}\label{app:eq:otoc_op}
    \begin{split}
        {\operatorname{OTOC}^{(k)}} &=  D^{-1} \Tr[O_U^{\otimes k}X^{\otimes k}T_{\sigma}],\quad\text{where} \\[1em]
        T_{\sigma} &:=  \OTOCop
    \end{split}
\end{equation}
is the cyclic permutation unitary between replicas, $\sigma=(12\dots k) $. We take the operators $O$ and $X$ to be Majorana strings of length $L$ and $L'$, respectively. Considering the average $2k$-OTOC over the matchgate group, we have that 
\begin{equation}
  \overline{\operatorname{OTOC}^{(k)}} := D^{-1} \Tr[\Phi^{(k)}[O^{\otimes k}]X^{\otimes k}T_{\sigma}].
\end{equation}
The moment operator $\Phi^{(k)}[O^{\otimes k}]$ is given by~\cref{eq:mainTool}. After undoing the cyclic permutation, conjugation by the probe supplies the Pauli phase $XPX=(-1)^{\langle X,P\rangle}P$. The remaining trace is nonzero exactly when the product of the $k$ replica strings is proportional to the identity. Diagrammatically, the contraction is
$$
\sum_{\vec{P}\in \operatorname{Comm}(U_f(N))}\OTOCk[0.8].
$$
We note as an interesting structural point that, in contrast to the LOE and the OSE, every element of the free-fermionic commutant (i.e., every admissible contingency table) contributes to this expression.

Let $O=\gamma_{\bm{\nu}}$ have length $L$ and let the probe be $X=\gamma_{\bm{\nu}'}$ of length $L'$, with supports $S$ and $S'$. Inserting~\cref{eq:coefficient-table-final} and commuting the four copies of $X$ to the right using
$X\gamma_{\bm\mu}X=(-1)^{L'L-|S'\cap\bm\mu|}\gamma_{\bm\mu}$ gives
\begin{equation}\label{eq:OTOC-table-sum}
 \overline{\operatorname{OTOC}}^{(4)} =\sum_{B\in\mathcal{C}}C(B)\;\varepsilon_{\rm tr}(B)\;
 (-1)^{|S'\cap S_2|+|S'\cap S_4|},
\end{equation}
where $S_r$ is row $r$ of the table and $\varepsilon_{\rm tr}(B)=D^{-1}\Tr[\gamma_{S_1}\gamma_{S_2}\gamma_{S_3}\gamma_{S_4}]$. The factors of $(-1)^{L'L}$ cancel in pairs. 
\OTOCphase*
\begin{proof}
  Because the four supports of an admissible table satisfy $S_1\triangle S_2\triangle S_3\triangle S_4=\emptyset$, the product is always proportional to the identity, so $\varepsilon_{\rm tr}=\pm1$ and every admissible table contributes. Explicitly, sorting the concatenated string, we have $\varepsilon_{\rm tr}(B) = (-1)^{I(B)}$, where $I(B):=\sum_{r<r'}I(S_r,S_{r'})$ as in prior sections. Then the signed part of the sum which is independent of the probe is 
\begin{equation}
  \begin{split}
  \mathcal{E}(B)\varepsilon_{\rm tr}(B) &= \prod_{q=2}^{2N}\prod_{p=1}^q(-1)^{[e(\tau_p) > e(\tau_q)] + \sum_{r < r'}[\tau_p^{(r)} = \tau_q^{(r')} = 1]} \\
  & \coloneqq \prod_{p < q} h(\tau_p,\tau_q),
  \end{split}
\end{equation}
with $\tau_p \in \{12,34,13,24,14,23,Q\}$ the column type of mode $p$, and $e(\tau_p) \in \{e_1, e_2, e_3\}$ its corresponding pairing type. Since both the first and the second term are antisymmetric under exchanging of $\tau_p$ and $\tau_q$, we have that $h(\tau_p,\tau_q)$ is overall symmetric. Further, since all permutations can be written as a product of transpositions, and $h(\tau_p,\tau_q)$ is invariant under those transpositions, then $\mathcal{E}(B)\varepsilon_{\rm tr}(B)$ is invariant with respect to the \emph{ordering} of the multi-set of column data in $B$. Hence, without loss of generality, we can place each $B$ into canonical ordering: 
\begin{equation}
  \begin{split}
  &(12)\cdots (12); (34)\cdots (34); \quad(2\alpha_1)\\
  &(13)\cdots (13);(24)\cdots (24); \quad(2\alpha_2)\\
  &(14)\cdots (14); (23)\cdots (23); \quad(2\alpha_3)\\
  &Q\cdots Q;\quad  (q)
  \end{split}
\end{equation}
for which $\rm{inv}(W_B) = 0$. Thus the sign part \emph{only} depends on $I(B)$, which we can exhaustively calculate for this class word. In particular, $I(S_r,S_{r'}),~r<r'$ sum up all elements of $S_r$ which occur \emph{later} than $S_{r'}$. For example, $I(S_1, S_2)$ has a $\binom{\alpha_1}{2} + \binom{q}{2}$ contribution from columns that can be made type $(12)$; plus $\alpha_1\alpha_2 + \alpha_1\alpha_3 + \alpha_2\alpha_3$ from $(12);(13)$; $(12);(14)$; $(23)(14)$ columns respectively; plus $(\alpha_1 + \alpha_2 + \alpha_3)q$ adding up the weight-2, row-2 contributions with respect to the row-1 in weight-4 columns. This calculation may seem opaque at first, but for all six $I(S_r,S_{r'})$ terms we have the $\binom{q}{2}$; $(\alpha_1\alpha_2 + \alpha_1\alpha_3 + \alpha_2\alpha_4)$; and ($\alpha_1+\alpha_2+\alpha_3)q$ terms, hence they are equal to zero modulo two. The remaining exponent is 
\begin{equation}
  \begin{split}
    &\underbrace{\binom{\alpha_1}{2}}_{I(S_1,S_2)} + \underbrace{\binom{\alpha_1}{2}}_{I(S_3,S_4)} + \underbrace{\binom{\alpha_2}{2}}_{I(S_1,S_3)} + \underbrace{\binom{\alpha_2}{2}}_{I(S_2,S_4)} + \underbrace{\binom{\alpha_3}{2}}_{I(S_1,S_4)} + \underbrace{\binom{\alpha_3}{2}}_{I(S_2,S_3)} \\
    &+ \underbrace{\alpha_2^2}_{I(S_2,S_3)} + \underbrace{\alpha_3^2}_{I(S_2,S_4)} + \underbrace{\alpha_3^2}_{I(S_3,S_4)}. 
  \end{split}
\end{equation}
From this we glean that $\alpha_2^2\equiv \alpha_2$ is the only term occurring an odd number of times, and hence $\mathcal{E}(B)\varepsilon_{\rm tr}(B) = (-1)^{\alpha_2}$. For the probe phase term, note that 
\begin{equation}
  (-1)^{|S'\cap S_2| + |S'\cap S_4|} = (-1)^{n_1(S') + n_3(S')},
\end{equation}
where $n_t$ counts the class-$t$ weight two columns of $B$ whose mode lies in $S'$. This yields the stated result. 
\end{proof}

We have thus reduced the summand of~\cref{eq:OTOC-table-sum} to class data alone. Writing $k_1 := n_1(S')$ and $k_3 := n_3(S')$, every table with the same $(\alpha_1,\alpha_2,\alpha_3,q,k_1,k_3)$ values contributes the sum identically. What remains is simply to count the multiplicity of each of the possible values that may be assumed by this tuple, which we do explicitly in~\cref{app:OTOC-calc}. For every fixed tuple, this is as follows. First, consider the $k_i$. Assign the $2N$ modes to $k_1$ class-1 modes from $S'$ and $2\alpha_1 - k_1$ from $S'^c$, for which there are $\binom{L'}{k_1}\binom{2N-L'}{2\alpha_1-k_1}$ many ways; split the $2\alpha_1$ modes into which are type (12) versus (34). Then, from the remainder of $S'$ and $S'^c$, do the same for class 3: $\binom{L'-k_1}{k_3}\binom{2N-L'-2\alpha_1+k_1}{2\alpha_3-k_3}\binom{2\alpha_3}{\alpha_3}$. From the remaining pool, place class 2 $\binom{2N-2p}{2\alpha_2}\binom{2\alpha_2}{\alpha_2}$ ($p=\alpha_1+\alpha_3$), and the $q$ weight-4 columns $\binom{2N-2(p+\alpha_3)}{q}$. It can be shown, using the fact that $\sum_k (-1)^k\binom{A}{k}\binom{B}{m-k} = [u^m](1-u)^A(1+u)^B$ that the resulting alternating sum can be reduced in the following way
\begin{equation}
\sum_{k_1,k_3} (-1)^{k_1+k_3} \binom{L'}{k_1} \binom{2N-L'}{2\alpha_1-k_1} \binom{L'-k_1}{k_3} \binom{2N-L'-2\alpha_1+k_1}{2\alpha_3-k_3}= \binom{2p}{2\alpha_1} K_{2p}(L;2N).
\end{equation}
with $K_m(x;n)$ the standard binary Krawtchouk polynomial\footnote{Defined using the convention $K_m(x;n) := \sum_{j=0}^{m} (-1)^j \binom{x}{j} \binom{n-x}{m-j}$.
}. Grouping then by $p=\alpha_1+\alpha_3$, using that $\sum_p (2p)!/(\alpha_1!\alpha_3!) = (2p)!2^p/p!$; grouping also by $\alpha_2=n-p$ for fixed $n:=L-q$; and lastly summing over $q$ gives the following result:
\begin{equation}
    \overline{\mathrm{OTOC}}^{(4)}=\frac{L!}{(2N)_L(2N+2)_L}\sum_{\substack{p,q\geq 0\\ p+q\leq L}}(-1)^{L-p-q}\binom{L+2}{q} \frac{2^p(2p)!(2N-2p)!}{p!(L-p-q)!(2N-2L+q)!}K_{2p}(L';2N).
\end{equation}
To arrive at the final form of~\cref{res:otoc-closed}, we are left to simplify this expression. In the above double sum, we fix $p$ and sum over $q$. Alternating Vandermonde convolution gives
\begin{equation}
\sum_{q=0}^{L-p}\frac{(-1)^{L-p-q}\binom{L+2}{q}}{(L-p-q)!(2N-2L+q)!}=\frac{(-1)^{L-p}}{(2N-L-p)!}K_{L-p}(L+2;2N-p+2).
\end{equation}
The factorial prefactor also simplifies to
\begin{equation}
\frac{L!(2N)!}{(2N)_L(2N+2)_L\,p!(2N-L-p)!}={\binom{2N-L}{p}}  \binom{2N+2}{L}^{-1}.
\end{equation}
Next, Krawtchouk duality and the generating function $K_r(a;b)=[t^r](1-t)^a(1+t)^{b-a}$ imply
\begin{equation}
\frac{K_{2p}(L';2N)}{\binom{2N}{2p}}=\sum_{j=0}^{\min\{p,L',2N-L'\}}(-4)^j\binom{p}{j}\frac{(L')_j(2N-L')_j}{(2N)_{2j}}.
\end{equation}
Substitute this expansion and interchange the $p$- and $j$-sums. The remaining $p$-sum is evaluated by the binomial identity
\begin{equation}
\sum_p\binom{2N-L}{p}\binom{p}{j}a^p b^{2N-L-p}=\binom{2N-L}{j}a^j(a+b)^{2N-L-j},
\end{equation}
applied with $a=-2t$ and $b=1+t$. Taking the required coefficient of $t^L$ yields $2^j\binom{2N-L}{j}\binom{2N+2-j}{L-j}$. Finally,
\begin{equation}
\frac{\binom{2N+2-j}{L-j}}{\binom{2N+2}{L}}=\frac{(L)_j}{(2N+2)_j}.
\end{equation}
Combining these identities, we arrive at~\cref{eq:otoc-closed}.

\emph{Asymptotics.---} We now derive the $N\to \infty$ expansions of~\cref{eq:otoc-closed} for different string lengths $L,L'$, proving the following lemma.
\begin{lemma}[{Asymptotic matchgate-averaged OTOC}]\label{lem:OTOC-asymptotics}
Let $N\to\infty$. Then,~\cref{eq:otoc-closed} gives for the average-case OTOC:\par
\noindent
\underline{$L=\mathcal{O}(1)$ initial operator and probe:} For fixed $L,L'=\mathcal{O}(1)$,
\begin{equation}
\begin{split}
\overline{\mathrm{OTOC}}^{(4)}={}&1-\frac{4LL'}{N}+\frac{2LL'(L+L'+1)}{N^2} +\frac{8L(L-1)L'(L'-1)}{N^2}+\mathcal{O}_{L,L'}(N^{-3}).
\end{split}
\end{equation}

\noindent
\underline{$L=\mathcal{O}(1)$ initial operator, extensive probe:} For fixed $L=\mathcal{O}(1)$ and $L'=N$,
\begin{equation}
\overline{\mathrm{OTOC}}^{(4)}=(-1)^L\left[1-\frac{L(L+1)}{N}+\mathcal{O}_L(N^{-2})\right].
\end{equation}

\noindent
\underline{Extensive operator and probe:} For $L=L'=N$,
\begin{equation}
\begin{split}
\overline{\mathrm{OTOC}}^{(4)}={}&\frac{1}{2^{N+1}}\left[\sin\!\left(\frac{\pi N}{2}\right)+\cos\!\left(\frac{\pi N}{2}\right)+\frac{(-1)^N}{\sqrt{2}}\right] +\mathcal{O}\!\left(2^{-N}N^{-1}\right).
\end{split}
\end{equation}
\end{lemma}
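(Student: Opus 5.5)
We work directly from the closed form of Application~\ref{res:otoc-closed}, writing $\overline{\mathrm{OTOC}}^{(4)}=\sum_{j\ge0}T_j$ with
\begin{equation*}
T_j=\frac{(-8)^j(L')_j(2N-L')_j(L)_j(2N-L)_j}{j!\,(2N)_{2j}(2N+2)_j}.
\end{equation*}
The three regimes are treated separately, all by elementary expansion of falling factorials except the last.

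\emph{Fixed $L,L'$.} Here the sum has at most $\min\{L,L'\}+1=\mathcal{O}(1)$ terms. Counting powers, $T_j=\mathcal{O}(N^{2j}/N^{3j})=\mathcal{O}(N^{-j})$, so to order $N^{-2}$ only $T_0=1$, $T_1$ and $T_2$ are needed. For $T_1=-8LL'(2N-L)(2N-L')/[(2N)(2N-1)(2N+2)]$ we expand to second order using $(2N-L)(2N-L')=4N^2[1-(L+L')/2N+LL'/4N^2]$ and $(2N)(2N-1)(2N+2)=8N^3[1+1/2N-1/2N^2]$. This gives $-4LL'/N+2LL'(L+L'+1)/N^2$. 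The leading term of $T_2$ is $64\cdot L(L-1)L'(L'-1)(2N)^2(2N)^2/[2(2N)^4(2N)^2]=8L(L-1)L'(L'-1)/N^2$. Adding these reproduces the stated expansion.

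\emph{Fixed $L$, $L'=N$.} Now $(L')_j(2N-L')_j=((N)_j)^2\sim N^{2j}$, so every term $T_j$ with $j\le L$ is $\Theta(1)$ and all of them must be kept. We expand each ratio to relative order $N^{-1}$:
\begin{equation*}
T_j=\frac{(-8)^j (L)_j}{j!}\,\frac{N^{2j}(2N)^j}{(2N)^{3j}}\bigl[1+c_j/N\bigr]=(-1)^j\binom{L}{j}\bigl[1+c_j/N\bigr],
\end{equation*}
where $c_j$ is obtained from the standard first-order corrections $(x)_r=x^r[1-\binom r2/x+\cdots]$. At leading order $\sum_j(-1)^j\binom Lj=0^L$, which does not match the claimed $(-1)^L$. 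This indicates that I must re-check the normalisation, for instance whether the $(-8)^j$ combined with $(2N)^{2j}$ in the denominator in fact gives a $(-2)^j$ weighting. The intended leading term $(-1)^L$ suggests $\sum_j\binom Lj(-2)^j=(-1)^L$, and this is what I expect careful bookkeeping to yield, since $(N)_j^2 (2N)^j/((2N)^{2j}(2N)^j)=4^{-j}$ gives exactly $(-8/4)^j=(-2)^j$. The $1/N$ correction then comes from summing $\binom Lj(-2)^jc_j$ with $c_j$ quadratic in $j$. I will evaluate it via the identities $\sum_j\binom Lj(-2)^j j=2L(-1)^L$ and $\sum_j \binom Lj(-2)^j j(j-1)=4L(L-1)(-1)^L$, which should collapse to $-L(L+1)$.

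\emph{$L=L'=N$.} This is the main obstacle. Here the terms are exponentially large with alternating signs, and the sum is exponentially small, so naive term-wise bounds fail. I would use the integral representation of the structure behind the formula. First, I rewrite the ratio of factorials via a Beta integral, $1/(2N+2)_j$ and $(N)_j^2/(2N)_{2j}=\binom{2j}{j}^{-1}\binom{N}{j}^2/\binom{2N}{2j}$ times constants, and encode the alternating sum as a contour or real integral of the type $\int_0^1 t^{a}(1-t)^{b}\,P(t)^N\,dt$ with $|P|\le$ a fixed quantity. The oscillating prefactor $\sin(\pi N/2)+\cos(\pi N/2)$ indicates a complex saddle or endpoint with phase $e^{\pm i\pi N/2}$, that is, a factor $(\pm i)^N$ from $(1\pm i)^{2N}$-type terms. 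The $(-1)^N/\sqrt2$ piece signals a second endpoint contribution. I would apply endpoint Laplace asymptotics at both endpoints, identify the $2^{-N-1}$ scale from $|1\pm i|^{2N}/4^N$-type ratios, and bound the interior contribution by $\mathcal{O}(2^{-N}N^{-1})$. The risky step is finding the right integral representation; I would try first generating-function extraction, writing $T_j$ through $[t^N](1-t)^{N}(1+t)^{N}$-type Krawtchouk forms as in the derivation of the closed form, and then Cauchy-integral saddle analysis on the unit circle.
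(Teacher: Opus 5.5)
Your first two regimes are correct and follow the paper's route. Only finitely many $T_j$ contribute, you expand the falling factorials termwise, and for $L'=N$ you sum against the binomial theorem and its derivative. Two small repairs are needed in the $L'=N$ case:
\begin{itemize}
\item Your first display has a slip: $N^{2j}(2N)^j/(2N)^{3j}=4^{-j}$, so the leading term is $(-2)^j\binom{L}{j}$, as you then notice.
\item If you actually expand $(N)_j^2$, $(2N-L)_j$, $(2N)_{2j}$ and $(2N+2)_j$ to first order, the $j^2$ pieces cancel and $c_j=-j(L+1)/2$. So only $\sum_j\binom{L}{j}(-2)^j j=2L(-1)^L$ is needed, and it gives $-L(L+1)$ directly.
\end{itemize}

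The extensive regime $L=L'=N$, however, is not proved. You correctly see that the terms are exponentially large and cancel. But the step you label risky is the whole content of the argument: an explicit integral representation whose endpoint or saddle asymptotics can be read off. Beta-integral substitutions for $1/(2N+2)_j$ or $(N)_j^2/(2N)_{2j}$ do not by themselves remove the cancellation, since you would still need a closed form for the inner sum, which you do not give. A Cauchy-integral route through the Krawtchouk forms is conceivable, but you never write the integral, locate its critical points, or bound the remainder by $\mathcal{O}(2^{-N}N^{-1})$.

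The missing idea is a squaring (Clausen-type) identity. Using $(2N)_{2j}=4^j(N)_j(N-\tfrac12)_j$ and $(2N)_j/(2N+2)_j=(2N+2-j)(2N+1-j)/[(2N+2)(2N+1)]$, the paper rewrites the sum exactly as
\[
\overline{\mathrm{OTOC}}^{(4)}=\frac{(2N+2-D)(2N+1-D)}{(2N+2)(2N+1)}\,f_N(z)^2\Big|_{z=2},\qquad D=z\frac{d}{dz}.
\]
Here $f_N(z)=\frac{2^N}{\pi}\binom{2N}{N}^{-1}\int_0^\pi(\sqrt{1-z}-\cos\theta)^N\,d\theta$ is a Laplace-type Legendre integral. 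The identity $f_N^2=\sum_j(-1)^j(N)_j^3z^j/[j!(N-\tfrac12)_j(2N)_j]$ is proved from the second-order ODE for $f_N$ and the induced third-order equation for $f_N^2$.

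At $z=2$ one has $\sqrt{1-z}=i$, and $|i-\cos\theta|^2=1+\cos^2\theta$ is maximal at the endpoints $\theta=0,\pi$. These give contributions $a_\mp\propto(i\mp1)^N$ of modulus $2^{-N/2-1/4}$. This also corrects your heuristic for the $(-1)^N/\sqrt2$ term: it is not a second endpoint but the cross term $a_-a_+$ of the square. The $\sin(\pi N/2)+\cos(\pi N/2)$ part comes from $a_-^2$ and $a_+^2$. The Euler-operator prefactor multiplies these three pieces by $\tfrac{i}{2}$, $-\tfrac{i}{2}$ and $\tfrac14$ at leading order. Without this, or an equivalent explicit representation, your third case remains a heuristic rather than a proof.
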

Let $T_j$ be the $j$-th term of the exact OTOC sum in~\cref{eq:otoc-closed}. First, for fixed $L,L'$, its first terms are
\begin{equation}
T_0=1,\qquad T_1=-\frac{4LL'}{N} +\frac{2LL'(L+L'+1)}{N^2} +O_{L,L'}(N^{-3}),
\end{equation}
\begin{equation}
T_2=\frac{8L(L-1)L'(L'-1)}{N^2} +O_{L,L'}(N^{-3}).
\end{equation}
Since $T_j=O_{L,L'}(N^{-j})$, the remaining terms are $O_{L,L'}(N^{-3})$. Hence
\begin{equation}
\overline{\mathrm{OTOC}}^{(4)} = 1-\frac{4LL'}{N} +\frac{2LL'(L+L'+1)+8L(L-1)L'(L'-1)}{N^2} +O_{L,L'}(N^{-3}).
\end{equation}
For $L'=N$ with fixed $L$, expansion of each of the finitely many terms gives
\begin{equation}
T_j=(-2)^j\binom{L}{j} \left[1-\frac{j(L+1)}{2N}+O_L(N^{-2})\right].
\end{equation}
Summing with the binomial theorem and its derivative yields
\begin{equation}
\overline{\mathrm{OTOC}}^{(4)} = (-1)^L\left[1-\frac{L(L+1)}{N} +O_L(N^{-2})\right].
\end{equation}

For $L=L'=N$, a termwise expansion is invalid because terms with $j=\mathcal{O}(N)$ contribute. Define
\begin{equation}
f_N(z)=\frac{2^N}{\pi}\binom{2N}{N}^{-1} \int_0^\pi\bigl(\sqrt{1-z}-\cos\theta\bigr)^N\,d\theta, \qquad D=z\frac{d}{dz}.
\end{equation}
Integration by parts of $\sin\theta(\sqrt{1-z}-\cos\theta)^{N-1}$ gives
\begin{equation}
z(1-z)f_N'' +\bigl(\tfrac12-N+(N-1)z\bigr)f_N' -\frac{N^2}{4}f_N=0.
\end{equation}
Moreover, $f_N(0)=1$. Differentiating $f_N^2$ twice and substituting into the above formula, we have that 
\begin{equation}
\bigl[D(D-N-\tfrac12)(D-2N-1)-z(D-N)^3\bigr]f_N^2=0.
\end{equation}
Because $f_N(-\sqrt{1-z}) = (-1)^N f_N(\sqrt{1-z})$, $f_N^2$ is a polynomial of degree $N$ in $z$. Writing $f_N^2(z)=\sum_{j=0}^N h_jz^j$ and comparing coefficients gives
\begin{equation}
j(j-N-\tfrac12)(j-2N-1)h_j=(j-N-1)^3h_{j-1}, \qquad h_0=1,
\end{equation}
and therefore
\begin{equation}
f_N^2(z)=\sum_{j=0}^N \frac{(-1)^j(N)_j^3}{j!(N-\tfrac12)_j(2N)_j}\,z^j.
\end{equation}
Using
$(2N)_{2j}=4^j(N)_j(N-\tfrac12)_j$ and
$$
\frac{(2N)_j}{(2N+2)_j}
=\frac{(2N+2-j)(2N+1-j)}{(2N+2)(2N+1)},
$$
the original finite sum becomes exactly
\begin{equation}
\overline{\mathrm{OTOC}}^{(4)}=\left.\frac{(2N+2-D)(2N+1-D)} {(2N+2)(2N+1)}\,f_N^2(z)\right|_{z=2}.
\end{equation}
We must lastly estimate the integral. Put $s=\sqrt{1-z}$, choosing $s=i$ at $z=2$. Gaussian expansions at $\theta=0$ and $\theta=\pi$ give the respective endpoint contributions
\begin{equation}
a_-(z)= \frac{2^N(s-1)^N\sqrt{1-s}}{\binom{2N}{N}\sqrt{2\pi N}},\qquad a_+(z)= \frac{2^N(s+1)^N\sqrt{1+s}}{\binom{2N}{N}\sqrt{2\pi N}}.
\end{equation}
Thus $f_N(z)=a_-(z)[1+\mathcal{O}(N^{-1})]+a_+(z)[1+\mathcal{O}(N^{-1})]$ near $z=2$. Stirling's formula gives
\begin{equation}
a_-(2)=2^{-N/2-1/4}e^{i(3\pi N/4-\pi/8)}[1+\mathcal{O}(N^{-1})],\qquad a_+(2)=2^{-N/2-1/4}e^{i(\pi N/4+\pi/8)}[1+\mathcal{O}(N^{-1})].
\end{equation}
Since $D\log(s\pm1)|_{z=2}=i/(i\pm1)$, the differential operator multiplies $a_-^2$, $a_+^2$, and $a_-a_+$ at leading order by $i/2$, $-i/2$, and $1/4$, respectively. Therefore,
\begin{equation}
\overline{\mathrm{OTOC}}^{(4)} = \frac{i}{2}a_-(2)^2-\frac{i}{2}a_+(2)^2 +\frac12a_-(2)a_+(2) +O(2^{-N}N^{-1}),
\end{equation}
which simplifies to~\cref{eq:otoc-closed}.

\section{Maximal LOE under Gaussian evolution}\label{app:max_ent}

In Application~\ref{res:LOE-closed-form}, we have  shown that the Gaussian-averaged LOE lies below the unitary operator Page curve, and far below the upper bound supplied by~\cref{eq:loe_bound}. In this section, we show that individual (atypical) Gaussian evolutions can in fact attain a maximal operator entanglement.
To demonstrate this, take $N$ even and let $A$ contain the first $N/2$ qubits. Define the local parity operator on the half-chain $A$, $P_A:=\prod_{x=1}^{N/2}\sigma_z^{(x)}$, and choose
\begin{equation}
\begin{aligned}
O&=P_A\otimes\id_{\bar A}=(-i)^{N/2} c_1 \dots c_N ,\\
U&=\exp\!\left(\frac{\pi}{8}\sum_{j=1}^{N}c_jc_{N+j}\right).
\end{aligned}
\end{equation}
We see that $O$ has Majorana length $L=N$ and $U$ is Gaussian. Since the quadratic generators commute,
$U^\dagger c_jU=(c_j+c_{N+j})/\sqrt{2}$, giving
\begin{equation}
O_U=U^\dagger OU=\frac{(-i)^{N/2}}{2^{N/2}}\prod_{j=1}^{N}(c_j+c_{N+j}),
\label{eq:maximal-loe-operator}
\end{equation}
with the product ordered by increasing $j$.

To obtain the spatial Schmidt decomposition, denote by $c_j^{(A)}$ and $c_j^{(\bar A)}$ the Majoranas defined by applying the Jordan--Wigner construction separately within each region, where 
\begin{equation}
c_j=c_j^{(A)}\otimes\id,\qquad c_{N+j}=P_A\otimes c_j^{(\bar A)}.
\end{equation}
Let $\gamma_{\bm{\nu}}^{(A)}$ and $\gamma_{\bm{\nu}}^{(\bar A)}$ denote the corresponding local Majorana strings generated from $c_j^{(A)}$ and $c_j^{(\bar A)}$, respectively. Expanding~\cref{eq:maximal-loe-operator} and collecting the factors on each side yields
\begin{equation}
O_U=2^{-N/2}\sum_{\bm{\nu}\subseteq[N]}\eta_{\bm{\nu}} \left(\gamma_{\bm{\nu}}^{(A)}P_A^{N-|\bm{\nu}|}\right) \otimes\gamma_{\bm{\nu}^c}^{(\bar A)},
\label{eq:maximal-loe-schmidt}
\end{equation}
where $[N]=\{1,\ldots,N\}$, $\bm{\nu}^c=[N]\setminus\bm{\nu}$, and $|\eta_{\bm{\nu}}|=1$. The phases account for Majorana reordering and the overall phase of $O$.

The right factors in~\cref{eq:maximal-loe-schmidt} run through the complete orthonormal local Majorana-string basis. The left factors do likewise, up to phases: for even $|\bm{\nu}|$, the parity factor is the identity, whereas for odd
$|\bm{\nu}|$, multiplication by $P_A$ maps $\gamma_{\bm{\nu}}^{(A)}$ to $\gamma_{\bm{\nu}^c}^{(A)}$ up to a phase. Since $N$ is even, complementation permutes the odd subsets among themselves. Hence both sets of local factors are Hilbert--Schmidt orthonormal, and all $2^N$ operator Schmidt coefficients equal $2^{-N/2}$. We therefore have that the reduced state of the vectorised operator is maximally mixed and so 
\begin{equation}
E_A^{(k)}(O_U)=2^{-(k-1)N},
\end{equation}
saturating the inequality of~\cref{eq:loe_bound}. The suppression of typical Gaussian LOE in the result of Application~\ref{res:LOE-closed-form} therefore does not reflect any inherent obstruction to maximal entanglement on the Gaussian orbit, and~\cref{eq:loe_bound} is tight.

\section{Bound on LOE for doped matchgate circuits}
\doped*

\begin{proof}
We use an adjoint fermionic representation of the vectorised operators~\cite{Prosen2007a}: each canonically ordered Majorana string is a Fock basis state, and its length is the number of occupied modes. The normalised Majorana coefficients of $O_{U}=U_t^\dagger O U_t$ therefore define a pure state under vectorisation and normalisation:
\begin{equation}
    O_U= \sum a_{\bm\mu} \gamma_{\bm \mu} \mapsto \sum a_{\bm\mu} |\gamma_{\bm \mu}\rangle \! \rangle  = \sum a_{\bm\mu} f_{\mu_1}^{\dagger}\cdots f_{\mu_{|\bm \mu |}}^{\dagger}    |0\rangle
\end{equation}
in terms of auxiliary fermion operators $f_\alpha$. Fixed string parity ensures that its entanglement across the stated bond cut equals the LOE (as Gaussian gates preserve the string length, and the non-Gaussian doping is quartic, changing string length by $2$ or $0$).
Its full correlation matrix satisfies
\begin{equation}
 C_{\alpha\beta}=\langle\psi|f_\alpha^\dagger f_\beta|\psi\rangle,\qquad\tr C=\sum_{\bm\mu}|\bm\mu|\,|a_{\bm\mu}|^2.
\end{equation}
Thus $\tr C$ is the mean Majorana-string length; moreover, $0\leq C\leq I$, so its eigenvalues lie in $[0,1]$.

For purely Gaussian evolution of the initial string, we may write the operator entanglement in terms of the binary entropy of the correlation matrix~\cite{Calabrese_2005,Prosen2007a}: $\operatorname{LOE}_A^{(1)}(O_U)=\tr h_2(C_A)$, where $C_A$ is the restriction to the adjoint modes in $A$. After quartic doping, the upper bound $\operatorname{LOE}_A^{(1)}(O_U)\leq\tr h_2(C_A)$ still holds: in a basis diagonalizing $C_A$, each mode with occupation probability $\nu_j$ has entropy $h_2(\nu_j)$, and subadditivity bounds the joint entropy by their sum.

To obtain a bound in terms of the full matrix $C$, let $\Pi_A$ and $\Pi_{\bar A}$ be the projections onto the two sets of modes. The embedded restriction $\Pi_A C\Pi_A$ has the same nonzero eigenvalues as $C^{1/2}\Pi_A C^{1/2}$: these are obtained by reversing the factors $\Pi_A C^{1/2}$ and $C^{1/2}\Pi_A$. The two square-root expressions sum to $C$.
Therefore, purity, R\'enyi monotonicity, and trace concavity give
\begin{equation}
 \operatorname{LOE}^{(k)}_A(O_U) \leq \frac{\tr h_2(C_A)+\tr h_2(C_{\bar A})}{2} \leq \tr h_2(C/2).
 \label{eq:doped-correlation-bound}
\end{equation}
The square roots therefore preserve the entropy sums while making their matrix arguments average to $C/2$.

It remains to bound the eigenvalues of $C$. Moving the Gaussian layers through the quartic gates, leaving one final Gaussian rotation, results in quartic generators involving at most $4t$ independent Majorana directions. We may then choose an orthonormal basis of the initially occupied subspace so that precisely $L-r$ directions are orthogonal to all of these directions, and so $r\leq\min(L,4t)$ (cf.~\cref{app:MPO}). Each such direction commutes with every quartic product, because it anticommutes with each of its four factors. These $L-r$ modes therefore remain occupied and factor out. The remaining state initially has $r$ particles and uses at most $r+4t$ modes.

In a basis aligned with one quartic generator, its conjugation fixes strings with even overlap and mixes odd-overlap strings in pairs of lengths $m,m+2$, with rotation angle $2\theta$. On each pair, the change of the string-length operator has eigenvalues $\pm2|\sin(2\theta)|$. Thus each gate increases the mean particle number by at most $2|\sin(2\theta)|$, including for coherent superpositions. Therefore, $C$ has $L-r$ eigenvalues equal to one, while its remaining nonzero eigenvalues number at most $r+4t$ and sum to at most $r+2t|\sin(2\theta)|$. The final Gaussian rotation preserves this spectrum. Each unit eigenvalue contributes $h_2(1/2)=1$ to~\cref{eq:doped-correlation-bound}. For the remaining eigenvalues, we use concavity and the fact that $h_2$ is increasing on $[0,1/2]$. For $t>0$, this gives
\begin{align}
 \operatorname{LOE}^{(k)}_A(U_t^\dagger O U_t) &\leq L-r+(r+4t) h_2\!\left(\frac{r+2t|\sin(2\theta)|}{2(r+4t)}\right)\notag\\
 &\leq L+\kappa(\theta)t,
\end{align}
The first expression increases with $r$, so the last step uses $r\leq4t$. For $t=0$, $C$ is a rank-$L$ projector and $\tr h_2(C/2)=L$ directly.
\end{proof}
\noindent
Using $|\sin(2\theta)|\leq1$ recovers the maximum value of $\kappa(\pi/4) = (8h_2(3/8)-4\simeq3.64$.

\end{document}

%% file: figures/tn_preamble.tex
\usepackage{tikz}
\usepackage{amsmath}
\usepackage{amssymb}
\usepackage{bm}
\usepackage{graphicx}
\usetikzlibrary{calc,shapes.geometric,arrows.meta,positioning,fit,backgrounds,decorations.pathreplacing,decorations.pathmorphing}
\definecolor{tncol22223b}{RGB}{34,34,59}
\definecolor{tncol2281aa}{RGB}{34,129,170}
\definecolor{tncol26547c}{RGB}{38,84,124}
\definecolor{tncol273043}{RGB}{39,48,67}
\definecolor{tncol2a9d8f}{RGB}{42,157,143}
\definecolor{tncol33415c}{RGB}{51,65,92}
\definecolor{tncol45caff}{RGB}{69,202,255}
\definecolor{tncol4c78a8}{RGB}{76,120,168}
\definecolor{tncol555555}{RGB}{85,85,85}
\definecolor{tncol6b7280}{RGB}{107,114,128}
\definecolor{tncol808080}{RGB}{128,128,128}
\definecolor{tncol81bdc3}{RGB}{129,189,195}
\definecolor{tncol888888}{RGB}{136,136,136}
\definecolor{tncol898989}{RGB}{137,137,137}
\definecolor{tncol8ecae6}{RGB}{142,202,230}
\definecolor{tncol9b2226}{RGB}{155,34,38}
\definecolor{tncol9bf6ff}{RGB}{155,246,255}
\definecolor{tncola0c4ff}{RGB}{160,196,255}
\definecolor{tncolb0ddfd}{RGB}{176,221,253}
\definecolor{tncolb4bd9b}{RGB}{180,189,155}
\definecolor{tncolb5e48c}{RGB}{181,228,140}
\definecolor{tncolbc455a}{RGB}{188,69,90}
\definecolor{tncolbdb2ff}{RGB}{189,178,255}
\definecolor{tncolcaffbf}{RGB}{202,255,191}
\definecolor{tncolccd5c3}{RGB}{204,213,195}
\definecolor{tncold0f0c0}{RGB}{208,240,192}
\definecolor{tncoldbd7d2}{RGB}{219,215,210}
\definecolor{tncoldec4e9}{RGB}{222,196,233}
\definecolor{tncole8edf4}{RGB}{232,237,244}
\definecolor{tncole9c46a}{RGB}{233,196,106}
\definecolor{tncolef476f}{RGB}{239,71,111}
\definecolor{tncolf28e2b}{RGB}{242,142,43}
\definecolor{tncolf3e8b6}{RGB}{243,232,182}
\definecolor{tncolf4a261}{RGB}{244,162,97}
\definecolor{tncolf6cf98}{RGB}{246,207,152}
\definecolor{tncolf9d6d3}{RGB}{249,214,211}
\definecolor{tncolfdba77}{RGB}{253,186,119}
\definecolor{tncolfdf8ec}{RGB}{253,248,236}
\definecolor{tncolfdffb6}{RGB}{253,255,182}
\definecolor{tncolff1b6b}{RGB}{255,27,107}
\definecolor{tncolffa27a}{RGB}{255,162,122}
\definecolor{tncolffadad}{RGB}{255,173,173}
\definecolor{tncolffbf00}{RGB}{255,191,0}
\definecolor{tncolffd166}{RGB}{255,209,102}
\definecolor{tncolffd6a5}{RGB}{255,214,165}
\definecolor{tncolffd6de}{RGB}{255,214,222}

%% file: figures/gaussian_conj.tex
\providecommand{\gaussConj}[1][1]{%
\ensuremath{\begingroup\color{black}%
\vcenter{\hbox{%
\scalebox{#1}{%
\begin{tikzpicture}[baseline=(current bounding box.center)]
\node[rectangle, minimum size=9mm, inner sep=2pt, outer sep=0pt, rounded corners=1.6pt, draw=black, line width=0.6pt, fill=tncold0f0c0, text=black, minimum width=6mm, minimum height=5mm] (t0) at (0,-0) {$U$};
\node[rectangle, minimum size=9mm, inner sep=2pt, outer sep=0pt, rounded corners=1.6pt, draw=black, line width=0.6pt, fill=tncoldec4e9, text=black, minimum width=6mm, minimum height=5mm] (t1) at (1.6,-0) {$U^\dagger$};
\node[rectangle, minimum size=4mm, inner sep=2pt, outer sep=0pt, rounded corners=1.6pt, draw=black, line width=0.6pt, fill=tncolb0ddfd, text=black] (t2) at (0.8,-0) {$c$};
\draw[draw=tncol33415c, line width=0.6pt] (t0.0) to[out=0, in=180] (t2.180);
\draw[draw=tncol33415c, line width=0.6pt] (t2.0) to[out=0, in=180] (t1.180);
\draw[draw=tncol33415c, line width=0.6pt] (t0.west) -- ++(180:0.45);
\draw[draw=tncol33415c, line width=0.6pt] (t1.east) -- ++(0:0.45);
\draw[draw=tncol33415c, line width=0.6pt] (t2.south) -- ++(270:0.45) node[below] {$\alpha$};
\end{tikzpicture}%
}}}%
\endgroup}%
}

%% file: figures/orthog_action.tex
\providecommand{\orthogAct}[1][1]{%
\ensuremath{\begingroup\color{black}%
\vcenter{\hbox{%
\scalebox{#1}{%
\begin{tikzpicture}[baseline=(current bounding box.center)]
\node[rectangle, minimum size=4mm, inner sep=2pt, outer sep=0pt, rounded corners=1.6pt, draw=black, line width=0.6pt, fill=tncolb0ddfd, text=black] (t0) at (0,-0) {$c$};
\node[rectangle, minimum size=9mm, inner sep=2pt, outer sep=0pt, rounded corners=1.6pt, draw=black, line width=0.6pt, fill=tncolffd6de, text=black, minimum width=6mm, minimum height=6mm] (t1) at (0,-1.2) {$R$};
\draw[draw=tncol33415c, line width=0.6pt] (t0.-90) to[out=-90, in=90] node[midway, fill=white, inner sep=1pt] {$\beta$} (t1.90);
\draw[draw=tncol33415c, line width=0.6pt] (t0.west) -- ++(180:0.45);
\draw[draw=tncol33415c, line width=0.6pt] (t0.east) -- ++(0:0.45);
\draw[draw=tncol33415c, line width=0.6pt] (t1.south) -- ++(270:0.45) node[below] {$\alpha$};
\end{tikzpicture}%
}}}%
\endgroup}%
}

%% file: figures/string_tensor.tex
\providecommand{\stringTensor}[1][1]{%
\ensuremath{\begingroup\color{black}%
\vcenter{\hbox{%
\scalebox{#1}{%
\begin{tikzpicture}[baseline=(current bounding box.center)]
\node[rectangle, minimum size=9mm, inner sep=2pt, outer sep=0pt, rounded corners=1.6pt, draw=black, line width=0.6pt, fill=tncol2281aa, text=black, minimum height=4mm, minimum width=15mm] (t0) at (0,-0) {$\gamma_{\bm{\nu}}$};
\draw[draw=tncol33415c, line width=0.6pt] (t0.west) -- ++(180:0.45);
\draw[draw=tncol33415c, line width=0.6pt] (t0.east) -- ++(0:0.45);
\draw[draw=tncol33415c, line width=0.6pt] ($(t0.south) + (0:-0.6)$) -- ++(270:0.45) node[below] {$\nu_1$};
\draw[draw=tncol33415c, line width=0.6pt] ($(t0.south) + (0:-0.2)$) -- ++(270:0.45) node[below] {$\nu_2$};
\node at ($ ($(t0.south) + (0:0.2)$) + (270:0.225) $) {$\cdots$};
\draw[draw=tncol33415c, line width=0.6pt] ($(t0.south) + (0:0.6)$) -- ++(270:0.45) node[below] {$\nu_L$};
\end{tikzpicture}%
}}}%
\endgroup}%
}

%% file: figures/c_product.tex
\providecommand{\cProduct}[1][1]{%
\ensuremath{\begingroup\color{black}%
\vcenter{\hbox{%
\scalebox{#1}{%
\begin{tikzpicture}[baseline=(current bounding box.center)]
\node[rectangle, minimum size=4mm, inner sep=2pt, outer sep=0pt, rounded corners=1.6pt, draw=black, line width=0.6pt, fill=tncolb0ddfd, text=black] (t0) at (0,-0) {$c$};
\node[rectangle, minimum size=4mm, inner sep=2pt, outer sep=0pt, rounded corners=1.6pt, draw=black, line width=0.6pt, fill=tncolb0ddfd, text=black] (t1) at (0.8,-0) {$c$};
\node[rectangle, minimum size=4mm, inner sep=2pt, outer sep=0pt, rounded corners=1.6pt, draw=black, line width=0.6pt, fill=tncolb0ddfd, text=black] (t2) at (2.4,-0) {$c$};
\draw[draw=tncol33415c, line width=0.6pt] (t0.0) to[out=0, in=180] (t1.180);
\draw[draw=tncol33415c, line width=0.6pt] (t1.0) to[out=0, in=180] node[midway, fill=white, inner sep=1pt] {$\cdots$} (t2.180);
\draw[draw=tncol33415c, line width=0.6pt] (t0.south) -- ++(270:0.45) node[below] {$\nu_1$};
\draw[draw=tncol33415c, line width=0.6pt] (t0.west) -- ++(180:0.45);
\draw[draw=tncol33415c, line width=0.6pt] (t1.south) -- ++(270:0.45) node[below] {$\nu_2$};
\draw[draw=tncol33415c, line width=0.6pt] (t2.south) -- ++(270:0.45) node[below] {$\nu_L$};
\draw[draw=tncol33415c, line width=0.6pt] (t2.east) -- ++(0:0.45);
\end{tikzpicture}%
}}}%
\endgroup}%
}

%% file: figures/string_conjugation.tex
\providecommand{\stringConj}[1][1]{%
\ensuremath{\begingroup\color{black}%
\vcenter{\hbox{%
\scalebox{#1}{%
\begin{tikzpicture}[baseline=(current bounding box.center)]
\node[rectangle, minimum size=9mm, inner sep=2pt, outer sep=0pt, rounded corners=1.6pt, draw=black, line width=0.6pt, fill=tncold0f0c0, text=black, minimum width=6mm, minimum height=5mm] (t0) at (0,-0) {$U$};
\node[rectangle, minimum size=9mm, inner sep=2pt, outer sep=0pt, rounded corners=1.6pt, draw=black, line width=0.6pt, fill=tncoldec4e9, text=black, minimum width=6mm, minimum height=5mm] (t1) at (2.4,-0) {$U^\dagger$};
\node[rectangle, minimum size=9mm, inner sep=2pt, outer sep=0pt, rounded corners=1.6pt, draw=black, line width=0.6pt, fill=tncol2281aa, text=black, minimum height=4mm, minimum width=15mm] (t2) at (1.2,-0) {$\gamma_{\bm{\nu}}$};
\draw[draw=tncol33415c, line width=0.6pt] (t0.0) to[out=0, in=180] (t2.180);
\draw[draw=tncol33415c, line width=0.6pt] (t2.0) to[out=0, in=180] (t1.180);
\draw[draw=tncol33415c, line width=0.6pt] (t0.west) -- ++(180:0.45);
\draw[draw=tncol33415c, line width=0.6pt] (t1.east) -- ++(0:0.45);
\draw[draw=tncol33415c, line width=0.6pt] (t2.south) -- ++(270:0.45) node[below] {$\bm{\nu}$};
\end{tikzpicture}%
}}}%
\endgroup}%
}

%% file: figures/c_product_expanded_conj.tex
\providecommand{\cProdConjExpanded}[1][1]{%
\ensuremath{\begingroup\color{black}%
\vcenter{\hbox{%
\scalebox{#1}{%
\begin{tikzpicture}[baseline=(current bounding box.center)]
\node[rectangle, minimum size=9mm, inner sep=2pt, outer sep=0pt, rounded corners=1.6pt, draw=black, line width=0.6pt, fill=tncold0f0c0, text=black, minimum width=6mm, minimum height=5mm] (t0) at (0,-0) {$U$};
\node[rectangle, minimum size=4mm, inner sep=2pt, outer sep=0pt, rounded corners=1.6pt, draw=black, line width=0.6pt, fill=tncolb0ddfd, text=black] (t1) at (0.8,-0) {$c$};
\node[rectangle, minimum size=9mm, inner sep=2pt, outer sep=0pt, rounded corners=1.6pt, draw=black, line width=0.6pt, fill=tncoldec4e9, text=black, minimum width=6mm, minimum height=5mm] (t2) at (1.6,-0) {$U^\dagger$};
\node[rectangle, minimum size=9mm, inner sep=2pt, outer sep=0pt, rounded corners=1.6pt, draw=black, line width=0.6pt, fill=tncold0f0c0, text=black, minimum width=6mm, minimum height=5mm] (t3) at (2.4,-0) {$U$};
\node[rectangle, minimum size=4mm, inner sep=2pt, outer sep=0pt, rounded corners=1.6pt, draw=black, line width=0.6pt, fill=tncolb0ddfd, text=black] (t4) at (3.2,-0) {$c$};
\node[rectangle, minimum size=9mm, inner sep=2pt, outer sep=0pt, rounded corners=1.6pt, draw=black, line width=0.6pt, fill=tncoldec4e9, text=black, minimum width=6mm, minimum height=5mm] (t5) at (4,-0) {$U^\dagger$};
\node[rectangle, minimum size=9mm, inner sep=2pt, outer sep=0pt, rounded corners=1.6pt, draw=black, line width=0.6pt, fill=tncold0f0c0, text=black, minimum width=6mm, minimum height=5mm] (t6) at (5.6,-0) {$U$};
\node[rectangle, minimum size=4mm, inner sep=2pt, outer sep=0pt, rounded corners=1.6pt, draw=black, line width=0.6pt, fill=tncolb0ddfd, text=black] (t7) at (6.4,-0) {$c$};
\node[rectangle, minimum size=9mm, inner sep=2pt, outer sep=0pt, rounded corners=1.6pt, draw=black, line width=0.6pt, fill=tncoldec4e9, text=black, minimum width=6mm, minimum height=5mm] (t8) at (7.2,-0) {$U^\dagger$};
\draw[draw=tncol33415c, line width=0.6pt] (t0.0) to[out=0, in=180] (t1.180);
\draw[draw=tncol33415c, line width=0.6pt] (t1.0) to[out=0, in=180] (t2.180);
\draw[draw=tncol33415c, line width=0.6pt] (t2.0) to[out=0, in=180] (t3.180);
\draw[draw=tncol33415c, line width=0.6pt] (t3.0) to[out=0, in=180] (t4.180);
\draw[draw=tncol33415c, line width=0.6pt] (t4.0) to[out=0, in=180] (t5.180);
\draw[draw=tncol33415c, line width=0.6pt] (t5.0) to[out=0, in=180] node[midway, fill=white, inner sep=1pt] {$\cdots$} (t6.180);
\draw[draw=tncol33415c, line width=0.6pt] (t6.0) to[out=0, in=180] (t7.180);
\draw[draw=tncol33415c, line width=0.6pt] (t7.0) to[out=0, in=180] (t8.180);
\draw[draw=tncol33415c, line width=0.6pt] (t0.west) -- ++(180:0.45);
\draw[draw=tncol33415c, line width=0.6pt] (t1.south) -- ++(270:0.45) node[below] {$\nu_1$};
\draw[draw=tncol33415c, line width=0.6pt] (t4.south) -- ++(270:0.45) node[below] {$\nu_2$};
\draw[draw=tncol33415c, line width=0.6pt] (t7.south) -- ++(270:0.45) node[below] {$\nu_L$};
\draw[draw=tncol33415c, line width=0.6pt] (t8.east) -- ++(0:0.45);
\end{tikzpicture}%
}}}%
\endgroup}%
}

%% file: figures/R_prod_conj.tex
\providecommand{\RProdConj}[1][1]{%
\ensuremath{\begingroup\color{black}%
\vcenter{\hbox{%
\scalebox{#1}{%
\begin{tikzpicture}[baseline=(current bounding box.center)]
\node[rectangle, minimum size=4mm, inner sep=2pt, outer sep=0pt, rounded corners=1.6pt, draw=black, line width=0.6pt, fill=tncolb0ddfd, text=black] (t0) at (0,-0) {$c$};
\node[rectangle, minimum size=4mm, inner sep=2pt, outer sep=0pt, rounded corners=1.6pt, draw=black, line width=0.6pt, fill=tncolb0ddfd, text=black] (t1) at (0.8,-0) {$c$};
\node[rectangle, minimum size=4mm, inner sep=2pt, outer sep=0pt, rounded corners=1.6pt, draw=black, line width=0.6pt, fill=tncolb0ddfd, text=black] (t2) at (2.4,-0) {$c$};
\node[rectangle, minimum size=9mm, inner sep=2pt, outer sep=0pt, rounded corners=1.6pt, draw=black, line width=0.6pt, fill=tncolffd6de, text=black, minimum width=6mm, minimum height=6mm] (t3) at (0,-0.8) {$R$};
\node[rectangle, minimum size=9mm, inner sep=2pt, outer sep=0pt, rounded corners=1.6pt, draw=black, line width=0.6pt, fill=tncolffd6de, text=black, minimum width=6mm, minimum height=6mm] (t4) at (0.8,-0.8) {$R$};
\node[rectangle, minimum size=9mm, inner sep=2pt, outer sep=0pt, rounded corners=1.6pt, draw=black, line width=0.6pt, fill=tncolffd6de, text=black, minimum width=6mm, minimum height=6mm] (t5) at (2.4,-0.8) {$R$};
\draw[draw=tncol33415c, line width=0.6pt] (t0.0) to[out=0, in=180] (t1.180);
\draw[draw=tncol33415c, line width=0.6pt] (t1.0) to[out=0, in=180] node[midway, fill=white, inner sep=1pt] {$\cdots$} (t2.180);
\draw[draw=tncol33415c, line width=0.6pt] (t0.-90) to[out=-90, in=90] (t3.90);
\draw[draw=tncol33415c, line width=0.6pt] (t1.-90) to[out=-90, in=90] (t4.90);
\draw[draw=tncol33415c, line width=0.6pt] (t2.-90) to[out=-90, in=90] (t5.90);
\draw[draw=tncol33415c, line width=0.6pt] (t0.west) -- ++(180:0.45);
\draw[draw=tncol33415c, line width=0.6pt] (t2.east) -- ++(0:0.45);
\draw[draw=tncol33415c, line width=0.6pt] (t3.south) -- ++(270:0.45) node[below] {$\nu_1$};
\draw[draw=tncol33415c, line width=0.6pt] (t4.south) -- ++(270:0.45) node[below] {$\nu_2$};
\draw[draw=tncol33415c, line width=0.6pt] (t5.south) -- ++(270:0.45) node[below] {$\nu_L$};
\end{tikzpicture}%
}}}%
\endgroup}%
}

%% file: figures/c_prod_conj_2_replica.tex
\providecommand{\cProdTwoRepConj}[1][1]{%
\ensuremath{\begingroup\color{black}%
\vcenter{\hbox{%
\scalebox{#1}{%
\begin{tikzpicture}[baseline=(current bounding box.center)]
\node[rectangle, minimum size=9mm, inner sep=2pt, outer sep=0pt, rounded corners=1.6pt, draw=black, line width=0.6pt, fill=tncold0f0c0, text=black, minimum width=6mm, minimum height=5mm] (t0) at (0,-0) {$U$};
\node[rectangle, minimum size=9mm, inner sep=2pt, outer sep=0pt, rounded corners=1.6pt, draw=black, line width=0.6pt, fill=tncoldec4e9, text=black, minimum width=6mm, minimum height=5mm] (t1) at (4,-0) {$U^\dagger$};
\node[rectangle, minimum size=4mm, inner sep=2pt, outer sep=0pt, rounded corners=1.6pt, draw=black, line width=0.6pt, fill=tncolb0ddfd, text=black] (t2) at (0.8,-0) {$c$};
\node[rectangle, minimum size=4mm, inner sep=2pt, outer sep=0pt, rounded corners=1.6pt, draw=black, line width=0.6pt, fill=tncolb0ddfd, text=black] (t3) at (1.6,-0) {$c$};
\node[rectangle, minimum size=4mm, inner sep=2pt, outer sep=0pt, rounded corners=1.6pt, draw=black, line width=0.6pt, fill=tncolb0ddfd, text=black] (t4) at (3.2,-0) {$c$};
\node[rectangle, minimum size=9mm, inner sep=2pt, outer sep=0pt, rounded corners=1.6pt, draw=black, line width=0.6pt, fill=tncold0f0c0, text=black, minimum width=6mm, minimum height=5mm] (t5) at (0,-0.8) {$U$};
\node[rectangle, minimum size=9mm, inner sep=2pt, outer sep=0pt, rounded corners=1.6pt, draw=black, line width=0.6pt, fill=tncoldec4e9, text=black, minimum width=6mm, minimum height=5mm] (t6) at (4,-0.8) {$U^\dagger$};
\node[rectangle, minimum size=4mm, inner sep=2pt, outer sep=0pt, rounded corners=1.6pt, draw=black, line width=0.6pt, fill=tncolb0ddfd, text=black] (t7) at (0.8,-0.8) {$c$};
\node[rectangle, minimum size=4mm, inner sep=2pt, outer sep=0pt, rounded corners=1.6pt, draw=black, line width=0.6pt, fill=tncolb0ddfd, text=black] (t8) at (1.6,-0.8) {$c$};
\node[rectangle, minimum size=4mm, inner sep=2pt, outer sep=0pt, rounded corners=1.6pt, draw=black, line width=0.6pt, fill=tncolb0ddfd, text=black] (t9) at (3.2,-0.8) {$c$};
\draw[draw=tncol33415c, line width=0.6pt] (t2.0) to[out=0, in=180] (t3.180);
\draw[draw=tncol33415c, line width=0.6pt] (t3.0) -- (t4.180);
\begin{scope}[shift={($ (t3.0) !0.5! (t4.180) $)}, rotate=0]
  \fill[white] (-9.1pt,-4.6pt) rectangle (9.1pt,4.6pt);
  \fill (-4.5pt,0pt) circle (0.6pt) (0pt,0pt) circle (0.6pt) (4.5pt,0pt) circle (0.6pt);
\end{scope}
\draw[draw=tncol33415c, line width=0.6pt] (t0.0) to[out=0, in=180] (t2.180);
\draw[draw=tncol33415c, line width=0.6pt] (t4.0) to[out=0, in=180] (t1.180);
\draw[draw=tncol33415c, line width=0.6pt] (t7.0) to[out=0, in=180] (t8.180);
\draw[draw=tncol33415c, line width=0.6pt] (t8.0) -- (t9.180);
\begin{scope}[shift={($ (t8.0) !0.5! (t9.180) $)}, rotate=0]
  \fill[white] (-9.1pt,-4.6pt) rectangle (9.1pt,4.6pt);
  \fill (-4.5pt,0pt) circle (0.6pt) (0pt,0pt) circle (0.6pt) (4.5pt,0pt) circle (0.6pt);
\end{scope}
\draw[draw=tncol33415c, line width=0.6pt] (t5.0) to[out=0, in=180] (t7.180);
\draw[draw=tncol33415c, line width=0.6pt] (t9.0) to[out=0, in=180] (t6.180);
\draw[draw=tncol33415c, line width=0.6pt] (t0.west) -- ++(180:0.35);
\draw[draw=tncol33415c, line width=0.6pt] (t1.east) -- ++(0:0.35);
\draw[draw=tncol33415c, line width=0.6pt] (t2.north) -- ++(90:0.35) node[above] {$\nu_1^{(1)}$};
\draw[draw=tncol33415c, line width=0.6pt] (t3.north) -- ++(90:0.35) node[above] {$\nu_2^{(1)}$};
\draw[draw=tncol33415c, line width=0.6pt] (t4.north) -- ++(90:0.35) node[above] {$\nu_L^{(1)}$};
\draw[draw=tncol33415c, line width=0.6pt] (t5.west) -- ++(180:0.35);
\draw[draw=tncol33415c, line width=0.6pt] (t6.east) -- ++(0:0.35);
\draw[draw=tncol33415c, line width=0.6pt] (t7.south) -- ++(270:0.35) node[below] {$\nu_1^{(2)}$};
\draw[draw=tncol33415c, line width=0.6pt] (t8.south) -- ++(270:0.35) node[below] {$\nu_2^{(2)}$};
\draw[draw=tncol33415c, line width=0.6pt] (t9.south) -- ++(270:0.35) node[below] {$\nu_L^{(2)}$};
\end{tikzpicture}%
}}}%
\endgroup}%
}

%% file: figures/two_replica_R_conj.tex
\providecommand{\twoRepRConj}[1][1]{%
\ensuremath{\begingroup\color{black}%
\vcenter{\hbox{%
\scalebox{#1}{%
\begin{tikzpicture}[baseline=(current bounding box.center)]
\node[rectangle, minimum size=4mm, inner sep=2pt, outer sep=0pt, rounded corners=1.6pt, draw=black, line width=0.6pt, fill=tncolb0ddfd, text=black] (t0) at (0.8,-0) {$c$};
\node[rectangle, minimum size=4mm, inner sep=2pt, outer sep=0pt, rounded corners=1.6pt, draw=black, line width=0.6pt, fill=tncolb0ddfd, text=black] (t1) at (0.8,-0.8) {$c$};
\node[rectangle, minimum size=4mm, inner sep=2pt, outer sep=0pt, rounded corners=1.6pt, draw=black, line width=0.6pt, fill=tncolb0ddfd, text=black] (t2) at (0.8,-2.4) {$c$};
\node[rectangle, minimum size=4mm, inner sep=2pt, outer sep=0pt, rounded corners=1.6pt, draw=black, line width=0.6pt, fill=tncolb0ddfd, text=black] (t3) at (1.6,-0) {$c$};
\node[rectangle, minimum size=4mm, inner sep=2pt, outer sep=0pt, rounded corners=1.6pt, draw=black, line width=0.6pt, fill=tncolb0ddfd, text=black] (t4) at (1.6,-0.8) {$c$};
\node[rectangle, minimum size=4mm, inner sep=2pt, outer sep=0pt, rounded corners=1.6pt, draw=black, line width=0.6pt, fill=tncolb0ddfd, text=black] (t5) at (1.6,-2.4) {$c$};
\node[rectangle, minimum size=9mm, inner sep=2pt, outer sep=0pt, rounded corners=1.6pt, draw=black, line width=0.6pt, fill=tncolffd6de, text=black, minimum width=6mm, minimum height=6mm] (t6) at (0,-0) {$R$};
\node[rectangle, minimum size=9mm, inner sep=2pt, outer sep=0pt, rounded corners=1.6pt, draw=black, line width=0.6pt, fill=tncolffd6de, text=black, minimum width=6mm, minimum height=6mm] (t7) at (0,-0.8) {$R$};
\node[rectangle, minimum size=9mm, inner sep=2pt, outer sep=0pt, rounded corners=1.6pt, draw=black, line width=0.6pt, fill=tncolffd6de, text=black, minimum width=6mm, minimum height=6mm] (t8) at (0,-2.4) {$R$};
\node[rectangle, minimum size=9mm, inner sep=2pt, outer sep=0pt, rounded corners=1.6pt, draw=black, line width=0.6pt, fill=tncolffd6de, text=black, minimum width=6mm, minimum height=6mm] (t9) at (2.4,-0) {$R$};
\node[rectangle, minimum size=9mm, inner sep=2pt, outer sep=0pt, rounded corners=1.6pt, draw=black, line width=0.6pt, fill=tncolffd6de, text=black, minimum width=6mm, minimum height=6mm] (t10) at (2.4,-0.8) {$R$};
\node[rectangle, minimum size=9mm, inner sep=2pt, outer sep=0pt, rounded corners=1.6pt, draw=black, line width=0.6pt, fill=tncolffd6de, text=black, minimum width=6mm, minimum height=6mm] (t11) at (2.4,-2.4) {$R$};
\draw[draw=tncol33415c, line width=0.6pt] (t0.-90) to[out=-90, in=90] (t1.90);
\draw[draw=tncol33415c, line width=0.6pt] (t3.-90) to[out=-90, in=90] (t4.90);
\draw[draw=tncol33415c, line width=0.6pt] (t6.0) to[out=0, in=180] (t0.180);
\draw[draw=tncol33415c, line width=0.6pt] (t9.180) to[out=180, in=0] (t3.0);
\draw[draw=tncol33415c, line width=0.6pt] (t7.0) to[out=0, in=180] (t1.180);
\draw[draw=tncol33415c, line width=0.6pt] (t10.180) to[out=180, in=0] (t4.0);
\draw[draw=tncol33415c, line width=0.6pt] (t8.0) to[out=0, in=180] (t2.180);
\draw[draw=tncol33415c, line width=0.6pt] (t11.180) to[out=180, in=0] (t5.0);
\draw[draw=tncol33415c, line width=0.6pt] (t4.-90) -- (t5.90);
\begin{scope}[shift={($ (t4.-90) !0.5! (t5.90) $)}, rotate=-90]
  \fill[white] (-9.1pt,-4.6pt) rectangle (9.1pt,4.6pt);
  \fill (-4.5pt,0pt) circle (0.6pt) (0pt,0pt) circle (0.6pt) (4.5pt,0pt) circle (0.6pt);
\end{scope}
\draw[draw=tncol33415c, line width=0.6pt] (t1.-90) -- (t2.90);
\begin{scope}[shift={($ (t1.-90) !0.5! (t2.90) $)}, rotate=-90]
  \fill[white] (-9.1pt,-4.6pt) rectangle (9.1pt,4.6pt);
  \fill (-4.5pt,0pt) circle (0.6pt) (0pt,0pt) circle (0.6pt) (4.5pt,0pt) circle (0.6pt);
\end{scope}
\draw[draw=tncol33415c, line width=0.6pt] (t0.north) -- ++(90:0.35);
\draw[draw=tncol33415c, line width=0.6pt] (t2.south) -- ++(270:0.35);
\draw[draw=tncol33415c, line width=0.6pt] (t3.north) -- ++(90:0.35);
\draw[draw=tncol33415c, line width=0.6pt] (t5.south) -- ++(270:0.35);
\draw[draw=tncol33415c, line width=0.6pt] (t6.west) -- ++(180:0.35) node[left] {$\nu_1^{(1)}$};
\draw[draw=tncol33415c, line width=0.6pt] (t7.west) -- ++(180:0.35) node[left] {$\nu_2^{(1)}$};
\draw[draw=tncol33415c, line width=0.6pt] (t8.west) -- ++(180:0.35) node[left] {$\nu_L^{(1)}$};
\draw[draw=tncol33415c, line width=0.6pt] (t9.east) -- ++(0:0.35) node[right] {$\nu_1^{(2)}$};
\draw[draw=tncol33415c, line width=0.6pt] (t10.east) -- ++(0:0.35) node[right] {$\nu_2^{(2)}$};
\draw[draw=tncol33415c, line width=0.6pt] (t11.east) -- ++(0:0.35) node[right] {$\nu_L^{(2)}$};
\end{tikzpicture}%
}}}%
\endgroup}%
}

%% file: figures/weingarten_p2_part.tex
\providecommand{\WeinPTwoTerm}[1][1]{%
\ensuremath{\begingroup\color{black}%
\vcenter{\hbox{%
\scalebox{#1}{%
\begin{tikzpicture}[baseline=(current bounding box.center)]
\node[rectangle, minimum width=0.9cm, minimum height=3.3cm, inner sep=2pt, outer sep=0pt, rounded corners=1.6pt, draw=black, line width=0.6pt, fill=tncoldbd7d2, text=black] (t0) at (0,-1.2) {$p_2$};
\node[rectangle, minimum size=4mm, inner sep=2pt, outer sep=0pt, rounded corners=1.6pt, draw=black, line width=0.6pt, fill=tncolb0ddfd, text=black] (t1) at (1.1,-0) {$c$};
\node[rectangle, minimum size=4mm, inner sep=2pt, outer sep=0pt, rounded corners=1.6pt, draw=black, line width=0.6pt, fill=tncolb0ddfd, text=black] (t2) at (1.1,-0.8) {$c$};
\node[rectangle, minimum size=4mm, inner sep=2pt, outer sep=0pt, rounded corners=1.6pt, draw=black, line width=0.6pt, fill=tncolb0ddfd, text=black] (t3) at (1.1,-2.4) {$c$};
\node[rectangle, minimum size=4mm, inner sep=2pt, outer sep=0pt, rounded corners=1.6pt, draw=black, line width=0.6pt, fill=tncolb0ddfd, text=black] (t4) at (2.2,-0) {$c$};
\node[rectangle, minimum size=4mm, inner sep=2pt, outer sep=0pt, rounded corners=1.6pt, draw=black, line width=0.6pt, fill=tncolb0ddfd, text=black] (t5) at (2.2,-0.8) {$c$};
\node[rectangle, minimum size=4mm, inner sep=2pt, outer sep=0pt, rounded corners=1.6pt, draw=black, line width=0.6pt, fill=tncolb0ddfd, text=black] (t6) at (2.2,-2.4) {$c$};
\draw[draw=tncol33415c, line width=0.6pt] (t1.-90) to[out=-90, in=90] (t2.90);
\draw[draw=tncol33415c, line width=0.6pt] (t2.-90) -- (t3.90);
\begin{scope}[shift={($ (t2.-90) !0.5! (t3.90) $)}, rotate=-90]
  \fill[white] (-9.1pt,-4.6pt) rectangle (9.1pt,4.6pt);
  \fill (-4.5pt,0pt) circle (0.6pt) (0pt,0pt) circle (0.6pt) (4.5pt,0pt) circle (0.6pt);
\end{scope}
\draw[draw=tncol33415c, line width=0.6pt] (t4.-90) to[out=-90, in=90] (t5.90);
\draw[draw=tncol33415c, line width=0.6pt] (t5.-90) -- (t6.90);
\begin{scope}[shift={($ (t5.-90) !0.5! (t6.90) $)}, rotate=-90]
  \fill[white] (-9.1pt,-4.6pt) rectangle (9.1pt,4.6pt);
  \fill (-4.5pt,0pt) circle (0.6pt) (0pt,0pt) circle (0.6pt) (4.5pt,0pt) circle (0.6pt);
\end{scope}
\draw[draw=tncol808080, line width=0.6pt, rounded corners=0.2cm] (t4.east) -- ($ (t4.east) + (0:0.3) $) -- ($ (t4.east) + (0:0.3) + (90:0.57) $) --  node[at start, yshift=-6pt, xshift=-7pt, fill=white, inner sep=-2pt, font=\tiny] {$\mu_1^{(2)}$} ($ ($(t0.west) + (270:-1.2)$) + (180:0.3) + (90:0.57) $) -- ($ ($(t0.west) + (270:-1.2)$) + (180:0.3) $) -- ($(t0.west) + (270:-1.2)$);
\draw[draw=tncol808080, line width=0.6pt, rounded corners=0.2cm] (t5.east) -- ($ (t5.east) + (0:0.3) $) -- ($ (t5.east) + (0:0.3) + (90:0.57) $) --  node[at start, yshift=-6pt, xshift=-7pt, fill=white, inner sep=-2pt, font=\tiny] {$\mu_2^{(2)}$} ($ ($(t0.west) + (270:-0.4)$) + (180:0.3) + (90:0.57) $) -- ($ ($(t0.west) + (270:-0.4)$) + (180:0.3) $) -- ($(t0.west) + (270:-0.4)$);
\draw[draw=tncol808080, line width=0.6pt, rounded corners=0.2cm] (t6.east) -- ($ (t6.east) + (0:0.3) $) -- ($ (t6.east) + (0:0.3) + (90:0.57) $) --  node[at start, yshift=-6pt, xshift=-7pt, fill=white, inner sep=-2pt, font=\tiny] {$\mu_L^{(2)}$} ($ ($(t0.west) + (270:1.2)$) + (180:0.3) + (90:0.57) $) -- ($ ($(t0.west) + (270:1.2)$) + (180:0.3) $) -- ($(t0.west) + (270:1.2)$);
\draw[draw=tncol33415c, line width=0.6pt] (t1.west) to[out=180, in=0] node[midway, yshift=6pt, xshift=1pt, fill=white, inner sep=-2pt, font=\tiny] {$\mu_1^{(1)}$} ($(t0.east) + (270:-1.2)$);
\draw[draw=tncol33415c, line width=0.6pt] (t2.west) to[out=180, in=0] node[midway, yshift=6pt, xshift=1pt, fill=white, inner sep=-2pt, font=\tiny] {$\mu_2^{(1)}$} ($(t0.east) + (270:-0.4)$);
\draw[draw=tncol33415c, line width=0.6pt] (t3.west) to[out=180, in=0] node[midway, yshift=6pt, xshift=1pt, fill=white, inner sep=-2pt, font=\tiny] {$\mu_L^{(1)}$} ($(t0.east) + (270:1.2)$);
\draw[draw=tncol33415c, line width=0.6pt] (t1.north) -- ++(90:0.45);
\draw[draw=tncol33415c, line width=0.6pt] (t3.south) -- ++(270:0.45);
\draw[draw=tncol33415c, line width=0.6pt] (t4.north) -- ++(90:0.45);
\draw[draw=tncol33415c, line width=0.6pt] (t6.south) -- ++(270:0.45);
\end{tikzpicture}%
}}}%
\endgroup}%
}

%% file: figures/weingarten_p1_part.tex
\providecommand{\WeinPOneTerm}[1][1]{%
\ensuremath{\begingroup\color{black}%
\vcenter{\hbox{%
\scalebox{#1}{%
\begin{tikzpicture}[baseline=(current bounding box.center)]
\node[rectangle, minimum width=0.9cm, minimum height=3.3cm, inner sep=2pt, outer sep=0pt, rounded corners=1.6pt, draw=black, line width=0.6pt, fill=tncoldbd7d2, text=black] (t0) at (1.1,-1.2) {$p_1$};
\node[isosceles triangle, minimum width=4mm, minimum height=2mm, inner sep=2pt, outer sep=0pt, isosceles triangle stretches, shape border rotate=180, draw=black, line width=0.6pt, fill=tncolffa27a, text=black] (t1) at (0,-0) {};
\node[isosceles triangle, minimum width=4mm, minimum height=2mm, inner sep=2pt, outer sep=0pt, isosceles triangle stretches, shape border rotate=180, draw=black, line width=0.6pt, fill=tncolffa27a, text=black] (t2) at (0,-0.8) {};
\node[isosceles triangle, minimum width=4mm, minimum height=2mm, inner sep=2pt, outer sep=0pt, isosceles triangle stretches, shape border rotate=180, draw=black, line width=0.6pt, fill=tncolffa27a, text=black] (t3) at (0,-2.4) {};
\node[isosceles triangle, minimum width=4mm, minimum height=2mm, inner sep=2pt, outer sep=0pt, isosceles triangle stretches, shape border rotate=0, draw=black, line width=0.6pt, fill=tncolffa27a, text=black] (t4) at (2.2,-0) {};
\node[isosceles triangle, minimum width=4mm, minimum height=2mm, inner sep=2pt, outer sep=0pt, isosceles triangle stretches, shape border rotate=0, draw=black, line width=0.6pt, fill=tncolffa27a, text=black] (t5) at (2.2,-0.8) {};
\node[isosceles triangle, minimum width=4mm, minimum height=2mm, inner sep=2pt, outer sep=0pt, isosceles triangle stretches, shape border rotate=0, draw=black, line width=0.6pt, fill=tncolffa27a, text=black] (t6) at (2.2,-2.4) {};
\draw[draw=tncol33415c, line width=0.6pt] (t1.0) -- node[above=3pt, fill=white, inner sep=-1pt, font=\tiny] {$\nu_1^{(1)}$} ($(t0.west) + (270:-1.2)$);
\draw[draw=tncol33415c, line width=0.6pt] (t2.0) -- node[above=3pt, fill=white, inner sep=-1pt, font=\tiny] {$\nu_2^{(1)}$} ($(t0.west) + (270:-0.4)$);
\draw[draw=tncol33415c, line width=0.6pt] (t3.0) -- node[above=3pt, fill=white, inner sep=-1pt, font=\tiny] {$\nu_L^{(1)}$} ($(t0.west) + (270:1.2)$);
\draw[draw=tncol33415c, line width=0.6pt] (t4.180) -- node[above=3pt, fill=white, inner sep=-1pt, font=\tiny] {$\nu_1^{(2)}$} ($(t0.east) + (270:-1.2)$);
\draw[draw=tncol33415c, line width=0.6pt] (t5.180) -- node[above=3pt, fill=white, inner sep=-1pt, font=\tiny] {$\nu_2^{(2)}$} ($(t0.east) + (270:-0.4)$);
\draw[draw=tncol33415c, line width=0.6pt] (t6.180) -- node[above=3pt, fill=white, inner sep=-1pt, font=\tiny] {$\nu_L^{(2)}$} ($(t0.east) + (270:1.2)$);
\node at ($ ($(t0.west) + (270:0.4)$) + (180:0.225) $) {$\vdots$};
\node at ($ ($(t0.east) + (270:0.4)$) + (0:0.225) $) {$\vdots$};
\end{tikzpicture}%
}}}%
\endgroup}%
}

%% file: figures/weingarten_pairing_input.tex
\providecommand{\WeingartenInput}[1][1]{%
\ensuremath{\begingroup\color{black}%
\vcenter{\hbox{%
\scalebox{#1}{%
\begin{tikzpicture}[baseline=(current bounding box.center)]
\node[circle, minimum size=4mm, inner sep=0pt, outer sep=0pt, draw=tncol273043, line width=0.45pt, fill=tncolffadad, text=tncol273043] (t0) at (0,-0) {};
\node[circle, minimum size=4mm, inner sep=0pt, outer sep=0pt, draw=tncol273043, line width=0.45pt, fill=tncolffadad, text=tncol273043] (t1) at (0,-0.7) {};
\node[circle, minimum size=4mm, inner sep=0pt, outer sep=0pt, draw=tncol273043, line width=0.45pt, fill=tncolffadad, text=tncol273043] (t2) at (0,-1.4) {};
\node[circle, minimum size=4mm, inner sep=0pt, outer sep=0pt, draw=tncol273043, line width=0.45pt, fill=tncolffadad, text=tncol273043] (t3) at (0,-2.1) {};
\node[circle, minimum size=4mm, inner sep=0pt, outer sep=0pt, draw=tncol273043, line width=0.45pt, fill=tncolffd6a5, text=tncol273043] (t4) at (0.8,-0) {};
\node[circle, minimum size=4mm, inner sep=0pt, outer sep=0pt, draw=tncol273043, line width=0.45pt, fill=tncolffd6a5, text=tncol273043] (t5) at (0.8,-0.7) {};
\node[circle, minimum size=4mm, inner sep=0pt, outer sep=0pt, draw=tncol273043, line width=0.45pt, fill=tncolffd6a5, text=tncol273043] (t6) at (0.8,-1.4) {};
\node[circle, minimum size=4mm, inner sep=0pt, outer sep=0pt, draw=tncol273043, line width=0.45pt, fill=tncolffd6a5, text=tncol273043] (t7) at (0.8,-2.1) {};
\node[circle, minimum size=4mm, inner sep=0pt, outer sep=0pt, draw=tncol273043, line width=0.45pt, fill=tncolfdffb6, text=tncol273043] (t8) at (1.6,-0) {};
\node[circle, minimum size=4mm, inner sep=0pt, outer sep=0pt, draw=tncol273043, line width=0.45pt, fill=tncolfdffb6, text=tncol273043] (t9) at (1.6,-0.7) {};
\node[circle, minimum size=4mm, inner sep=0pt, outer sep=0pt, draw=tncol273043, line width=0.45pt, fill=tncolfdffb6, text=tncol273043] (t10) at (1.6,-1.4) {};
\node[circle, minimum size=4mm, inner sep=0pt, outer sep=0pt, draw=tncol273043, line width=0.45pt, fill=tncolfdffb6, text=tncol273043] (t11) at (1.6,-2.1) {};
\node[circle, minimum size=4mm, inner sep=0pt, outer sep=0pt, draw=tncol273043, line width=0.45pt, fill=tncolcaffbf, text=tncol273043] (t12) at (2.4,-0) {};
\node[circle, minimum size=4mm, inner sep=0pt, outer sep=0pt, draw=tncol273043, line width=0.45pt, fill=tncolcaffbf, text=tncol273043] (t13) at (2.4,-0.7) {};
\node[circle, minimum size=4mm, inner sep=0pt, outer sep=0pt, draw=tncol273043, line width=0.45pt, fill=tncolcaffbf, text=tncol273043] (t14) at (2.4,-1.4) {};
\node[circle, minimum size=4mm, inner sep=0pt, outer sep=0pt, draw=tncol273043, line width=0.45pt, fill=tncolcaffbf, text=tncol273043] (t15) at (2.4,-2.1) {};
\draw[draw=tncol4c78a8, line width=0.85pt] (t0.-90) to[bend left=60] (t1.90);
\draw[draw=tncol4c78a8, line width=0.85pt] (t2.-90) to[bend right=60] (t3.90);
\draw[draw=tncolf28e2b, line width=0.85pt] (t4.-90) to[bend left=60] (t6.90);
\draw[draw=tncolf28e2b, line width=0.85pt] (t5.-90) to[bend right=60] (t7.90);
\draw[draw=tncol2a9d8f, line width=0.85pt] (t8.-90) to[bend left=60] (t11.90);
\draw[draw=tncol2a9d8f, line width=0.85pt] (t9.-90) to[bend right=60] (t10.90);
\draw[draw=tncol4c78a8, line width=0.85pt] (t12.-90) to[bend left=60] (t13.90);
\draw[draw=tncol4c78a8, line width=0.85pt] (t14.-90) to[bend right=60] (t15.90);
\node[anchor=center,font={\small},text=tncol273043] at (0.000,0.440) {$c_{1}$};
\node[anchor=center,font={\small},text=tncol4c78a8] at (0.000,-2.600) {$e_{1}$};
\node[anchor=center,font={\scriptsize},text=tncol6b7280] at (-0.550,0.000) {$1$};
\node[anchor=center,font={\scriptsize},text=tncol6b7280] at (-0.550,-0.700) {$2$};
\node[anchor=center,font={\scriptsize},text=tncol6b7280] at (-0.550,-1.400) {$3$};
\node[anchor=center,font={\scriptsize},text=tncol6b7280] at (-0.550,-2.100) {$4$};
\node[anchor=center,font={\small},text=tncol273043] at (0.800,0.440) {$c_{2}$};
\node[anchor=center,font={\small},text=tncolf28e2b] at (0.800,-2.600) {$e_{2}$};
\node[anchor=center,font={\scriptsize},text=tncol6b7280] at (-0.550,0.000) {$1$};
\node[anchor=center,font={\scriptsize},text=tncol6b7280] at (-0.550,-0.700) {$2$};
\node[anchor=center,font={\scriptsize},text=tncol6b7280] at (-0.550,-1.400) {$3$};
\node[anchor=center,font={\scriptsize},text=tncol6b7280] at (-0.550,-2.100) {$4$};
\node[anchor=center,font={\small},text=tncol273043] at (1.600,0.440) {$c_{3}$};
\node[anchor=center,font={\small},text=tncol2a9d8f] at (1.600,-2.600) {$e_{3}$};
\node[anchor=center,font={\scriptsize},text=tncol6b7280] at (-0.550,0.000) {$1$};
\node[anchor=center,font={\scriptsize},text=tncol6b7280] at (-0.550,-0.700) {$2$};
\node[anchor=center,font={\scriptsize},text=tncol6b7280] at (-0.550,-1.400) {$3$};
\node[anchor=center,font={\scriptsize},text=tncol6b7280] at (-0.550,-2.100) {$4$};
\node[anchor=center,font={\small},text=tncol273043] at (2.400,0.440) {$c_{4}$};
\node[anchor=center,font={\small},text=tncol4c78a8] at (2.400,-2.600) {$e_{1}$};
\node[anchor=center,font={\scriptsize},text=tncol6b7280] at (-0.550,0.000) {$1$};
\node[anchor=center,font={\scriptsize},text=tncol6b7280] at (-0.550,-0.700) {$2$};
\node[anchor=center,font={\scriptsize},text=tncol6b7280] at (-0.550,-1.400) {$3$};
\node[anchor=center,font={\scriptsize},text=tncol6b7280] at (-0.550,-2.100) {$4$};
\begin{pgfonlayer}{background}
\node[draw=tncolffadad, line width=0.6pt, inner sep=3pt, fill=tncolffadad, fill opacity=0.1, rounded corners=3pt, fit=(t0)(t1)(t2)(t3)] {};
\node[draw=tncolffd6a5, line width=0.6pt, inner sep=3pt, fill=tncolffd6a5, fill opacity=0.1, rounded corners=3pt, fit=(t4)(t5)(t6)(t7)] {};
\node[draw=tncolfdffb6, line width=0.6pt, inner sep=3pt, fill=tncolfdffb6, fill opacity=0.1, rounded corners=3pt, fit=(t8)(t9)(t10)(t11)] {};
\node[draw=tncolcaffbf, line width=0.6pt, inner sep=3pt, fill=tncolcaffbf, fill opacity=0.1, rounded corners=3pt, fit=(t12)(t13)(t14)(t15)] {};
\end{pgfonlayer}
\end{tikzpicture}%
}}}%
\endgroup}%
}

%% file: figures/weingarten_pairing_output.tex
\providecommand{\WeingartenOutput}[1][1]{%
\ensuremath{\begingroup\color{black}%
\vcenter{\hbox{%
\scalebox{#1}{%
\begin{tikzpicture}[baseline=(current bounding box.center)]
\node[circle, minimum size=4mm, inner sep=0pt, outer sep=0pt, draw=tncol273043, line width=0.45pt, fill=tncolffadad, text=tncol273043, font=\scriptsize, inner sep=0pt] (t0) at (0,-0) {$\mu_{1}$};
\node[circle, minimum size=4mm, inner sep=0pt, outer sep=0pt, draw=tncol273043, line width=0.45pt, fill=tncolffadad, text=tncol273043, font=\scriptsize, inner sep=0pt] (t1) at (0,-0.7) {$\mu_{1}$};
\node[circle, minimum size=1mm, inner sep=0pt, outer sep=0pt, draw=tncol33415c, line width=0.6pt, fill=tncole8edf4, text=black] (t2) at (0,-1.4) {};
\node[circle, minimum size=1mm, inner sep=0pt, outer sep=0pt, draw=tncol33415c, line width=0.6pt, fill=tncole8edf4, text=black] (t3) at (0,-2.1) {};
\node[circle, minimum size=1mm, inner sep=0pt, outer sep=0pt, draw=tncol33415c, line width=0.6pt, fill=tncole8edf4, text=black] (t4) at (0.7,-0) {};
\node[circle, minimum size=1mm, inner sep=0pt, outer sep=0pt, draw=tncol33415c, line width=0.6pt, fill=tncole8edf4, text=black] (t5) at (0.7,-0.7) {};
\node[circle, minimum size=4mm, inner sep=0pt, outer sep=0pt, draw=tncol273043, line width=0.45pt, fill=tncolffd6a5, text=tncol273043, font=\scriptsize, inner sep=0pt] (t6) at (0.7,-1.4) {$\mu_{2}$};
\node[circle, minimum size=4mm, inner sep=0pt, outer sep=0pt, draw=tncol273043, line width=0.45pt, fill=tncolffd6a5, text=tncol273043, font=\scriptsize, inner sep=0pt] (t7) at (0.7,-2.1) {$\mu_{2}$};
\node[circle, minimum size=4mm, inner sep=0pt, outer sep=0pt, draw=tncol273043, line width=0.45pt, fill=tncolfdffb6, text=tncol273043, font=\scriptsize, inner sep=0pt] (t8) at (1.4,-0) {$\mu_{3}$};
\node[circle, minimum size=1mm, inner sep=0pt, outer sep=0pt, draw=tncol33415c, line width=0.6pt, fill=tncole8edf4, text=black] (t9) at (1.4,-0.7) {};
\node[circle, minimum size=4mm, inner sep=0pt, outer sep=0pt, draw=tncol273043, line width=0.45pt, fill=tncolfdffb6, text=tncol273043, font=\scriptsize, inner sep=0pt] (t10) at (1.4,-1.4) {$\mu_{3}$};
\node[circle, minimum size=1mm, inner sep=0pt, outer sep=0pt, draw=tncol33415c, line width=0.6pt, fill=tncole8edf4, text=black] (t11) at (1.4,-2.1) {};
\node[circle, minimum size=1mm, inner sep=0pt, outer sep=0pt, draw=tncol33415c, line width=0.6pt, fill=tncole8edf4, text=black] (t12) at (2.1,-0) {};
\node[circle, minimum size=4mm, inner sep=0pt, outer sep=0pt, draw=tncol273043, line width=0.45pt, fill=tncolcaffbf, text=tncol273043, font=\scriptsize, inner sep=0pt] (t13) at (2.1,-0.7) {$\mu_{4}$};
\node[circle, minimum size=1mm, inner sep=0pt, outer sep=0pt, draw=tncol33415c, line width=0.6pt, fill=tncole8edf4, text=black] (t14) at (2.1,-1.4) {};
\node[circle, minimum size=4mm, inner sep=0pt, outer sep=0pt, draw=tncol273043, line width=0.45pt, fill=tncolcaffbf, text=tncol273043, font=\scriptsize, inner sep=0pt] (t15) at (2.1,-2.1) {$\mu_{4}$};
\node[circle, minimum size=4mm, inner sep=0pt, outer sep=0pt, draw=tncol273043, line width=0.45pt, fill=tncol9bf6ff, text=tncol273043, font=\scriptsize, inner sep=0pt] (t16) at (2.8,-0) {$\mu_{5}$};
\node[circle, minimum size=1mm, inner sep=0pt, outer sep=0pt, draw=tncol33415c, line width=0.6pt, fill=tncole8edf4, text=black] (t17) at (2.8,-0.7) {};
\node[circle, minimum size=1mm, inner sep=0pt, outer sep=0pt, draw=tncol33415c, line width=0.6pt, fill=tncole8edf4, text=black] (t18) at (2.8,-1.4) {};
\node[circle, minimum size=4mm, inner sep=0pt, outer sep=0pt, draw=tncol273043, line width=0.45pt, fill=tncol9bf6ff, text=tncol273043, font=\scriptsize, inner sep=0pt] (t19) at (2.8,-2.1) {$\mu_{5}$};
\node[circle, minimum size=1mm, inner sep=0pt, outer sep=0pt, draw=tncol33415c, line width=0.6pt, fill=tncole8edf4, text=black] (t20) at (3.5,-0) {};
\node[circle, minimum size=4mm, inner sep=0pt, outer sep=0pt, draw=tncol273043, line width=0.45pt, fill=tncola0c4ff, text=tncol273043, font=\scriptsize, inner sep=0pt] (t21) at (3.5,-0.7) {$\mu_{6}$};
\node[circle, minimum size=4mm, inner sep=0pt, outer sep=0pt, draw=tncol273043, line width=0.45pt, fill=tncola0c4ff, text=tncol273043, font=\scriptsize, inner sep=0pt] (t22) at (3.5,-1.4) {$\mu_{6}$};
\node[circle, minimum size=1mm, inner sep=0pt, outer sep=0pt, draw=tncol33415c, line width=0.6pt, fill=tncole8edf4, text=black] (t23) at (3.5,-2.1) {};
\draw[draw=tncol4c78a8, line width=0.85pt] (t0.-90) to[bend left=60] (t1.90);
\draw[draw=tncol4c78a8, line width=0.85pt] (t6.-90) to[bend right=60] (t7.90);
\draw[draw=tncolf28e2b, line width=0.85pt] (t8.-90) to[bend left=60] (t10.90);
\draw[draw=tncolf28e2b, line width=0.85pt] (t13.-90) to[bend right=60] (t15.90);
\draw[draw=tncol2a9d8f, line width=0.85pt] (t16.-90) to[bend left=60] (t19.90);
\draw[draw=tncol2a9d8f, line width=0.85pt] (t21.-90) to[bend right=60] (t22.90);
\node[anchor=center,font={\small},text=tncol6b7280] at (0.000,0.550) {$c_{1}$};
\node[anchor=center,font={\small},text=tncol6b7280] at (0.700,0.550) {$c_{2}$};
\node[anchor=center,font={\small},text=tncol6b7280] at (1.400,0.550) {$c_{3}$};
\node[anchor=center,font={\small},text=tncol6b7280] at (2.100,0.550) {$c_{4}$};
\node[anchor=center,font={\small},text=tncol6b7280] at (2.800,0.550) {$c_{5}$};
\node[anchor=center,font={\small},text=tncol6b7280] at (3.500,0.550) {$c_{6}$};
\node[anchor=center,font={\scriptsize},text=tncol6b7280] at (-0.550,0.000) {$1$};
\node[anchor=center,font={\scriptsize},text=tncol6b7280] at (-0.550,-0.700) {$2$};
\node[anchor=center,font={\scriptsize},text=tncol6b7280] at (-0.550,-1.400) {$3$};
\node[anchor=center,font={\scriptsize},text=tncol6b7280] at (-0.550,-2.100) {$4$};
\node[anchor=center,font={\small},text=tncol4c78a8] at (0.350,-2.550) {$e_{1}$};
\node[anchor=center,font={\small},text=tncolf28e2b] at (1.750,-2.550) {$e_{2}$};
\node[anchor=center,font={\small},text=tncol2a9d8f] at (3.150,-2.550) {$e_{3}$};
\begin{pgfonlayer}{background}
\node[draw=tncol4c78a8, line width=0.6pt, inner sep=2pt, fill=tncol4c78a8, fill opacity=0.22, rounded corners=3pt, fit=(t0)(t4)(t1)(t5)(t2)(t6)(t3)(t7)] {};
\node[draw=tncolf28e2b, line width=0.6pt, inner sep=2pt, fill=tncolf28e2b, fill opacity=0.22, rounded corners=3pt, fit=(t8)(t12)(t9)(t13)(t10)(t14)(t11)(t15)] {};
\node[draw=tncol2a9d8f, line width=0.6pt, inner sep=2pt, fill=tncol2a9d8f, fill opacity=0.22, rounded corners=3pt, fit=(t16)(t20)(t17)(t21)(t18)(t22)(t19)(t23)] {};
\end{pgfonlayer}
\end{tikzpicture}%
}}}%
\endgroup}%
}

%% file: figures/contingency_eg_exp.tex
\providecommand{\ContOne}[1][1]{%
\ensuremath{\begingroup\color{black}%
\vcenter{\hbox{%
\scalebox{#1}{%
\begin{tikzpicture}[baseline=(current bounding box.center)]
\node[circle, minimum size=4mm, inner sep=0pt, outer sep=0pt, draw=tncol22223b, line width=0.6pt, fill=tncolffadad] (t0) at (0,-0) {};
\node[circle, minimum size=1mm, inner sep=0pt, outer sep=0pt, draw=tncol22223b, line width=0.6pt, fill=tncole8edf4, text=black] (t1) at (0.7,-0) {};
\node[circle, minimum size=4mm, inner sep=0pt, outer sep=0pt, draw=tncol22223b, line width=0.6pt, fill=tncolfdffb6] (t2) at (1.4,-0) {};
\node[circle, minimum size=1mm, inner sep=0pt, outer sep=0pt, draw=tncol22223b, line width=0.6pt, fill=tncole8edf4, text=black] (t3) at (2.1,-0) {};
\node[circle, minimum size=4mm, inner sep=0pt, outer sep=0pt, draw=tncol22223b, line width=0.6pt, fill=tncol9bf6ff] (t4) at (2.8,-0) {};
\node[circle, minimum size=1mm, inner sep=0pt, outer sep=0pt, draw=tncol22223b, line width=0.6pt, fill=tncole8edf4, text=black] (t5) at (3.5,-0) {};
\node[circle, minimum size=4mm, inner sep=0pt, outer sep=0pt, draw=tncol22223b, line width=0.6pt, fill=tncolbdb2ff] (t6) at (4.2,-0) {};
\node[circle, minimum size=1mm, inner sep=0pt, outer sep=0pt, draw=tncol22223b, line width=0.6pt, fill=tncole8edf4, text=black] (t7) at (0,-0.7) {};
\node[circle, minimum size=4mm, inner sep=0pt, outer sep=0pt, draw=tncol22223b, line width=0.6pt, fill=tncolffd6a5] (t8) at (0.7,-0.7) {};
\node[circle, minimum size=4mm, inner sep=0pt, outer sep=0pt, draw=tncol22223b, line width=0.6pt, fill=tncolfdffb6] (t9) at (1.4,-0.7) {};
\node[circle, minimum size=1mm, inner sep=0pt, outer sep=0pt, draw=tncol22223b, line width=0.6pt, fill=tncole8edf4, text=black] (t10) at (2.1,-0.7) {};
\node[circle, minimum size=1mm, inner sep=0pt, outer sep=0pt, draw=tncol22223b, line width=0.6pt, fill=tncole8edf4, text=black] (t11) at (2.8,-0.7) {};
\node[circle, minimum size=4mm, inner sep=0pt, outer sep=0pt, draw=tncol22223b, line width=0.6pt, fill=tncola0c4ff] (t12) at (3.5,-0.7) {};
\node[circle, minimum size=4mm, inner sep=0pt, outer sep=0pt, draw=tncol22223b, line width=0.6pt, fill=tncolbdb2ff] (t13) at (4.2,-0.7) {};
\node[circle, minimum size=1mm, inner sep=0pt, outer sep=0pt, draw=tncol22223b, line width=0.6pt, fill=tncole8edf4, text=black] (t14) at (0,-1.4) {};
\node[circle, minimum size=4mm, inner sep=0pt, outer sep=0pt, draw=tncol22223b, line width=0.6pt, fill=tncolffd6a5] (t15) at (0.7,-1.4) {};
\node[circle, minimum size=4mm, inner sep=0pt, outer sep=0pt, draw=tncol22223b, line width=0.6pt, fill=tncolfdffb6] (t16) at (1.4,-1.4) {};
\node[circle, minimum size=1mm, inner sep=0pt, outer sep=0pt, draw=tncol22223b, line width=0.6pt, fill=tncole8edf4, text=black] (t17) at (2.1,-1.4) {};
\node[circle, minimum size=4mm, inner sep=0pt, outer sep=0pt, draw=tncol22223b, line width=0.6pt, fill=tncol9bf6ff] (t18) at (2.8,-1.4) {};
\node[circle, minimum size=1mm, inner sep=0pt, outer sep=0pt, draw=tncol22223b, line width=0.6pt, fill=tncole8edf4, text=black] (t19) at (3.5,-1.4) {};
\node[circle, minimum size=4mm, inner sep=0pt, outer sep=0pt, draw=tncol22223b, line width=0.6pt, fill=tncolbdb2ff] (t20) at (4.2,-1.4) {};
\node[circle, minimum size=4mm, inner sep=0pt, outer sep=0pt, draw=tncol22223b, line width=0.6pt, fill=tncolffadad] (t21) at (0,-2.1) {};
\node[circle, minimum size=1mm, inner sep=0pt, outer sep=0pt, draw=tncol22223b, line width=0.6pt, fill=tncole8edf4, text=black] (t22) at (0.7,-2.1) {};
\node[circle, minimum size=4mm, inner sep=0pt, outer sep=0pt, draw=tncol22223b, line width=0.6pt, fill=tncolfdffb6] (t23) at (1.4,-2.1) {};
\node[circle, minimum size=1mm, inner sep=0pt, outer sep=0pt, draw=tncol22223b, line width=0.6pt, fill=tncole8edf4, text=black] (t24) at (2.1,-2.1) {};
\node[circle, minimum size=1mm, inner sep=0pt, outer sep=0pt, draw=tncol22223b, line width=0.6pt, fill=tncole8edf4, text=black] (t25) at (2.8,-2.1) {};
\node[circle, minimum size=4mm, inner sep=0pt, outer sep=0pt, draw=tncol22223b, line width=0.6pt, fill=tncola0c4ff] (t26) at (3.5,-2.1) {};
\node[circle, minimum size=4mm, inner sep=0pt, outer sep=0pt, draw=tncol22223b, line width=0.6pt, fill=tncolbdb2ff] (t27) at (4.2,-2.1) {};
\end{tikzpicture}%
}}}%
\endgroup}%
}

%% file: figures/contingency_eg_exp_long.tex
\providecommand{\ContThree}[1][1]{%
\ensuremath{\begingroup\color{black}%
\vcenter{\hbox{%
\scalebox{#1}{%
\begin{tikzpicture}[baseline=(current bounding box.center)]
\node[circle, minimum size=1mm, inner sep=0pt, outer sep=0pt, draw=tncol22223b, line width=0.6pt, fill=tncole8edf4, text=black] (t0) at (0,-0) {};
\node[circle, minimum size=1mm, inner sep=0pt, outer sep=0pt, draw=tncol22223b, line width=0.6pt, fill=tncole8edf4, text=black] (t1) at (0.7,-0) {};
\node[circle, minimum size=4mm, inner sep=0pt, outer sep=0pt, draw=tncol22223b, line width=0.6pt, fill=tncolfdffb6] (t2) at (1.4,-0) {};
\node[circle, minimum size=1mm, inner sep=0pt, outer sep=0pt, draw=tncol22223b, line width=0.6pt, fill=tncole8edf4, text=black] (t3) at (2.1,-0) {};
\node[circle, minimum size=1mm, inner sep=0pt, outer sep=0pt, draw=tncol22223b, line width=0.6pt, fill=tncole8edf4, text=black] (t4) at (2.8,-0) {};
\node[circle, minimum size=4mm, inner sep=0pt, outer sep=0pt, draw=tncol22223b, line width=0.6pt, fill=tncola0c4ff] (t5) at (3.5,-0) {};
\node[circle, minimum size=4mm, inner sep=0pt, outer sep=0pt, draw=tncol22223b, line width=0.6pt, fill=tncolbdb2ff] (t6) at (4.2,-0) {};
\node[regular polygon, minimum size=0.5657cm, inner sep=0pt, outer sep=0pt, draw=tncol22223b, line width=0.6pt, fill=tncolffadad, regular polygon sides=4] (t7) at (4.9,-0) {};
\node[circle, minimum size=1mm, inner sep=0pt, outer sep=0pt, draw=tncol22223b, line width=0.6pt, fill=tncole8edf4, text=black] (t8) at (5.6,-0) {};
\node[circle, minimum size=1mm, inner sep=0pt, outer sep=0pt, draw=tncol22223b, line width=0.6pt, fill=tncole8edf4, text=black] (t9) at (6.3,-0) {};
\node[regular polygon, minimum size=0.5657cm, inner sep=0pt, outer sep=0pt, draw=tncol22223b, line width=0.6pt, fill=tncolcaffbf, regular polygon sides=4] (t10) at (7,-0) {};
\node[regular polygon, minimum size=0.5657cm, inner sep=0pt, outer sep=0pt, draw=tncol22223b, line width=0.6pt, fill=tncol9bf6ff, regular polygon sides=4] (t11) at (7.7,-0) {};
\node[circle, minimum size=1mm, inner sep=0pt, outer sep=0pt, draw=tncol22223b, line width=0.6pt, fill=tncole8edf4, text=black] (t12) at (8.4,-0) {};
\node[regular polygon, minimum size=0.5657cm, inner sep=0pt, outer sep=0pt, draw=tncol22223b, line width=0.6pt, fill=tncolbdb2ff, regular polygon sides=4] (t13) at (9.1,-0) {};
\node[circle, minimum size=1mm, inner sep=0pt, outer sep=0pt, draw=tncol22223b, line width=0.6pt, fill=tncole8edf4, text=black] (t14) at (0,-0.7) {};
\node[circle, minimum size=4mm, inner sep=0pt, outer sep=0pt, draw=tncol22223b, line width=0.6pt, fill=tncolffd6a5] (t15) at (0.7,-0.7) {};
\node[circle, minimum size=1mm, inner sep=0pt, outer sep=0pt, draw=tncol22223b, line width=0.6pt, fill=tncole8edf4, text=black] (t16) at (1.4,-0.7) {};
\node[circle, minimum size=4mm, inner sep=0pt, outer sep=0pt, draw=tncol22223b, line width=0.6pt, fill=tncolcaffbf] (t17) at (2.1,-0.7) {};
\node[circle, minimum size=4mm, inner sep=0pt, outer sep=0pt, draw=tncol22223b, line width=0.6pt, fill=tncol9bf6ff] (t18) at (2.8,-0.7) {};
\node[circle, minimum size=1mm, inner sep=0pt, outer sep=0pt, draw=tncol22223b, line width=0.6pt, fill=tncole8edf4, text=black] (t19) at (3.5,-0.7) {};
\node[circle, minimum size=1mm, inner sep=0pt, outer sep=0pt, draw=tncol22223b, line width=0.6pt, fill=tncole8edf4, text=black] (t20) at (4.2,-0.7) {};
\node[regular polygon, minimum size=0.5657cm, inner sep=0pt, outer sep=0pt, draw=tncol22223b, line width=0.6pt, fill=tncolffadad, regular polygon sides=4] (t21) at (4.9,-0.7) {};
\node[regular polygon, minimum size=0.5657cm, inner sep=0pt, outer sep=0pt, draw=tncol22223b, line width=0.6pt, fill=tncolffd6a5, regular polygon sides=4] (t22) at (5.6,-0.7) {};
\node[regular polygon, minimum size=0.5657cm, inner sep=0pt, outer sep=0pt, draw=tncol22223b, line width=0.6pt, fill=tncolfdffb6, regular polygon sides=4] (t23) at (6.3,-0.7) {};
\node[circle, minimum size=1mm, inner sep=0pt, outer sep=0pt, draw=tncol22223b, line width=0.6pt, fill=tncole8edf4, text=black] (t24) at (7,-0.7) {};
\node[circle, minimum size=1mm, inner sep=0pt, outer sep=0pt, draw=tncol22223b, line width=0.6pt, fill=tncole8edf4, text=black] (t25) at (7.7,-0.7) {};
\node[regular polygon, minimum size=0.5657cm, inner sep=0pt, outer sep=0pt, draw=tncol22223b, line width=0.6pt, fill=tncola0c4ff, regular polygon sides=4] (t26) at (8.4,-0.7) {};
\node[circle, minimum size=1mm, inner sep=0pt, outer sep=0pt, draw=tncol22223b, line width=0.6pt, fill=tncole8edf4, text=black] (t27) at (9.1,-0.7) {};
\node[circle, minimum size=4mm, inner sep=0pt, outer sep=0pt, draw=tncol22223b, line width=0.6pt, fill=tncolffadad] (t28) at (0,-1.4) {};
\node[circle, minimum size=1mm, inner sep=0pt, outer sep=0pt, draw=tncol22223b, line width=0.6pt, fill=tncole8edf4, text=black] (t29) at (0.7,-1.4) {};
\node[circle, minimum size=4mm, inner sep=0pt, outer sep=0pt, draw=tncol22223b, line width=0.6pt, fill=tncolfdffb6] (t30) at (1.4,-1.4) {};
\node[circle, minimum size=1mm, inner sep=0pt, outer sep=0pt, draw=tncol22223b, line width=0.6pt, fill=tncole8edf4, text=black] (t31) at (2.1,-1.4) {};
\node[circle, minimum size=4mm, inner sep=0pt, outer sep=0pt, draw=tncol22223b, line width=0.6pt, fill=tncol9bf6ff] (t32) at (2.8,-1.4) {};
\node[circle, minimum size=4mm, inner sep=0pt, outer sep=0pt, draw=tncol22223b, line width=0.6pt, fill=tncola0c4ff] (t33) at (3.5,-1.4) {};
\node[circle, minimum size=4mm, inner sep=0pt, outer sep=0pt, draw=tncol22223b, line width=0.6pt, fill=tncolbdb2ff] (t34) at (4.2,-1.4) {};
\node[circle, minimum size=1mm, inner sep=0pt, outer sep=0pt, draw=tncol22223b, line width=0.6pt, fill=tncole8edf4, text=black] (t35) at (4.9,-1.4) {};
\node[regular polygon, minimum size=0.5657cm, inner sep=0pt, outer sep=0pt, draw=tncol22223b, line width=0.6pt, fill=tncolffd6a5, regular polygon sides=4] (t36) at (5.6,-1.4) {};
\node[circle, minimum size=1mm, inner sep=0pt, outer sep=0pt, draw=tncol22223b, line width=0.6pt, fill=tncole8edf4, text=black] (t37) at (6.3,-1.4) {};
\node[regular polygon, minimum size=0.5657cm, inner sep=0pt, outer sep=0pt, draw=tncol22223b, line width=0.6pt, fill=tncolcaffbf, regular polygon sides=4] (t38) at (7,-1.4) {};
\node[circle, minimum size=1mm, inner sep=0pt, outer sep=0pt, draw=tncol22223b, line width=0.6pt, fill=tncole8edf4, text=black] (t39) at (7.7,-1.4) {};
\node[circle, minimum size=1mm, inner sep=0pt, outer sep=0pt, draw=tncol22223b, line width=0.6pt, fill=tncole8edf4, text=black] (t40) at (8.4,-1.4) {};
\node[circle, minimum size=1mm, inner sep=0pt, outer sep=0pt, draw=tncol22223b, line width=0.6pt, fill=tncole8edf4, text=black] (t41) at (9.1,-1.4) {};
\node[circle, minimum size=4mm, inner sep=0pt, outer sep=0pt, draw=tncol22223b, line width=0.6pt, fill=tncolffadad] (t42) at (0,-2.1) {};
\node[circle, minimum size=4mm, inner sep=0pt, outer sep=0pt, draw=tncol22223b, line width=0.6pt, fill=tncolffd6a5] (t43) at (0.7,-2.1) {};
\node[circle, minimum size=1mm, inner sep=0pt, outer sep=0pt, draw=tncol22223b, line width=0.6pt, fill=tncole8edf4, text=black] (t44) at (1.4,-2.1) {};
\node[circle, minimum size=4mm, inner sep=0pt, outer sep=0pt, draw=tncol22223b, line width=0.6pt, fill=tncolcaffbf] (t45) at (2.1,-2.1) {};
\node[circle, minimum size=1mm, inner sep=0pt, outer sep=0pt, draw=tncol22223b, line width=0.6pt, fill=tncole8edf4, text=black] (t46) at (2.8,-2.1) {};
\node[circle, minimum size=1mm, inner sep=0pt, outer sep=0pt, draw=tncol22223b, line width=0.6pt, fill=tncole8edf4, text=black] (t47) at (3.5,-2.1) {};
\node[circle, minimum size=1mm, inner sep=0pt, outer sep=0pt, draw=tncol22223b, line width=0.6pt, fill=tncole8edf4, text=black] (t48) at (4.2,-2.1) {};
\node[circle, minimum size=1mm, inner sep=0pt, outer sep=0pt, draw=tncol22223b, line width=0.6pt, fill=tncole8edf4, text=black] (t49) at (4.9,-2.1) {};
\node[circle, minimum size=1mm, inner sep=0pt, outer sep=0pt, draw=tncol22223b, line width=0.6pt, fill=tncole8edf4, text=black] (t50) at (5.6,-2.1) {};
\node[regular polygon, minimum size=0.5657cm, inner sep=0pt, outer sep=0pt, draw=tncol22223b, line width=0.6pt, fill=tncolfdffb6, regular polygon sides=4] (t51) at (6.3,-2.1) {};
\node[circle, minimum size=1mm, inner sep=0pt, outer sep=0pt, draw=tncol22223b, line width=0.6pt, fill=tncole8edf4, text=black] (t52) at (7,-2.1) {};
\node[regular polygon, minimum size=0.5657cm, inner sep=0pt, outer sep=0pt, draw=tncol22223b, line width=0.6pt, fill=tncol9bf6ff, regular polygon sides=4] (t53) at (7.7,-2.1) {};
\node[regular polygon, minimum size=0.5657cm, inner sep=0pt, outer sep=0pt, draw=tncol22223b, line width=0.6pt, fill=tncola0c4ff, regular polygon sides=4] (t54) at (8.4,-2.1) {};
\node[regular polygon, minimum size=0.5657cm, inner sep=0pt, outer sep=0pt, draw=tncol22223b, line width=0.6pt, fill=tncolbdb2ff, regular polygon sides=4] (t55) at (9.1,-2.1) {};
\end{tikzpicture}%
}}}%
\endgroup}%
}

%% file: figures/contingency_eg_consolid.tex
\providecommand{\ContTwo}[1][1]{%
\ensuremath{\begingroup\color{black}%
\vcenter{\hbox{%
\scalebox{#1}{%
\begin{tikzpicture}[baseline=(current bounding box.center)]
\node[circle, minimum size=4mm, inner sep=0pt, outer sep=0pt, draw=tncol22223b, line width=0.6pt, fill=tncolffadad] (t0) at (0,-0) {};
\node[circle, minimum size=4mm, inner sep=0pt, outer sep=0pt, draw=tncol22223b, line width=0.6pt, fill=tncolfdffb6] (t1) at (0.7,-0) {};
\node[circle, minimum size=4mm, inner sep=0pt, outer sep=0pt, draw=tncol22223b, line width=0.6pt, fill=tncol9bf6ff] (t2) at (1.4,-0) {};
\node[circle, minimum size=4mm, inner sep=0pt, outer sep=0pt, draw=tncol22223b, line width=0.6pt, fill=tncolbdb2ff] (t3) at (2.1,-0) {};
\node[circle, minimum size=4mm, inner sep=0pt, outer sep=0pt, draw=tncol22223b, line width=0.6pt, fill=tncolffd6a5] (t4) at (0,-0.7) {};
\node[circle, minimum size=4mm, inner sep=0pt, outer sep=0pt, draw=tncol22223b, line width=0.6pt, fill=tncolfdffb6] (t5) at (0.7,-0.7) {};
\node[circle, minimum size=4mm, inner sep=0pt, outer sep=0pt, draw=tncol22223b, line width=0.6pt, fill=tncola0c4ff] (t6) at (1.4,-0.7) {};
\node[circle, minimum size=4mm, inner sep=0pt, outer sep=0pt, draw=tncol22223b, line width=0.6pt, fill=tncolbdb2ff] (t7) at (2.1,-0.7) {};
\node[circle, minimum size=4mm, inner sep=0pt, outer sep=0pt, draw=tncol22223b, line width=0.6pt, fill=tncolffd6a5] (t8) at (0,-1.4) {};
\node[circle, minimum size=4mm, inner sep=0pt, outer sep=0pt, draw=tncol22223b, line width=0.6pt, fill=tncolfdffb6] (t9) at (0.7,-1.4) {};
\node[circle, minimum size=4mm, inner sep=0pt, outer sep=0pt, draw=tncol22223b, line width=0.6pt, fill=tncol9bf6ff] (t10) at (1.4,-1.4) {};
\node[circle, minimum size=4mm, inner sep=0pt, outer sep=0pt, draw=tncol22223b, line width=0.6pt, fill=tncolbdb2ff] (t11) at (2.1,-1.4) {};
\node[circle, minimum size=4mm, inner sep=0pt, outer sep=0pt, draw=tncol22223b, line width=0.6pt, fill=tncolffadad] (t12) at (0,-2.1) {};
\node[circle, minimum size=4mm, inner sep=0pt, outer sep=0pt, draw=tncol22223b, line width=0.6pt, fill=tncolfdffb6] (t13) at (0.7,-2.1) {};
\node[circle, minimum size=4mm, inner sep=0pt, outer sep=0pt, draw=tncol22223b, line width=0.6pt, fill=tncola0c4ff] (t14) at (1.4,-2.1) {};
\node[circle, minimum size=4mm, inner sep=0pt, outer sep=0pt, draw=tncol22223b, line width=0.6pt, fill=tncolbdb2ff] (t15) at (2.1,-2.1) {};
\draw[draw=tncol888888, line width=0.4pt, dashed] (t0.-90) to[bend left=60] (t12.90);
\draw[draw=tncol888888, line width=0.4pt, dashed] (t4.-90) to[bend left=60] (t8.90);
\draw[draw=tncol888888, line width=0.4pt, dashed] (t5.-90) to[bend right=60] (t13.90);
\draw[draw=tncol888888, line width=0.4pt, dashed] (t1.-90) to[bend left=60] (t9.90);
\draw[draw=tncol888888, line width=0.4pt, dashed] (t2.-90) to[bend left=60] (t10.90);
\draw[draw=tncol888888, line width=0.4pt, dashed] (t6.-90) to[bend right=60] (t14.90);
\draw[draw=tncol888888, line width=0.4pt, dashed] (t7.-90) to[bend left=60] (t11.90);
\draw[draw=tncol888888, line width=0.4pt, dashed] (t3.-90) to[bend right=60] (t15.90);
\end{tikzpicture}%
}}}%
\endgroup}%
}

%% file: figures/loe_operator.tex
\providecommand{\LOEop}[1][1]{%
\ensuremath{\begingroup\color{black}%
\vcenter{\hbox{%
\scalebox{#1}{%
\begin{tikzpicture}[baseline=(current bounding box.center)]
\node[inner sep=0pt, minimum size=0pt, outer sep=0pt] (t0) at (0,-0) {};
\node[inner sep=0pt, minimum size=0pt, outer sep=0pt] (t1) at (0,-0.7) {};
\node[inner sep=0pt, minimum size=0pt, outer sep=0pt] (t2) at (0,-1.4) {};
\node[inner sep=0pt, minimum size=0pt, outer sep=0pt] (t3) at (0,-2.1) {};
\node[inner sep=0pt, minimum size=0pt, outer sep=0pt] (t4) at (1.5,-0) {};
\node[inner sep=0pt, minimum size=0pt, outer sep=0pt] (t5) at (1.5,-0.7) {};
\node[inner sep=0pt, minimum size=0pt, outer sep=0pt] (t6) at (1.5,-1.4) {};
\node[inner sep=0pt, minimum size=0pt, outer sep=0pt] (t7) at (1.5,-2.1) {};
\draw[draw=tncolff1b6b, line width=0.6pt] ($(t0.east) + (270:-0.05)$) -- ($ ($(t0.east) + (270:-0.05)$) + (0:0.15) $) to[out=0, in=180] ($ ($(t5.west) + (270:-0.05)$) + (180:0.15) $) -- ($(t5.west) + (270:-0.05)$);
\draw[draw=tncolff1b6b, line width=0.6pt] ($(t1.east) + (270:-0.05)$) -- ($ ($(t1.east) + (270:-0.05)$) + (0:0.15) $) to[out=0, in=180] ($ ($(t4.west) + (270:-0.05)$) + (180:0.15) $) -- ($(t4.west) + (270:-0.05)$);
\draw[draw=tncolff1b6b, line width=0.6pt] ($(t2.east) + (270:-0.05)$) -- ($ ($(t2.east) + (270:-0.05)$) + (0:0.15) $) to[out=0, in=180] ($ ($(t7.west) + (270:-0.05)$) + (180:0.15) $) -- ($(t7.west) + (270:-0.05)$);
\draw[draw=tncolff1b6b, line width=0.6pt] ($(t3.east) + (270:-0.05)$) -- ($ ($(t3.east) + (270:-0.05)$) + (0:0.15) $) to[out=0, in=180] ($ ($(t6.west) + (270:-0.05)$) + (180:0.15) $) -- ($(t6.west) + (270:-0.05)$);
\draw[draw=tncol45caff, line width=0.6pt] ($(t0.east) + (270:0.05)$) -- ($ ($(t0.east) + (270:0.05)$) + (0:0.15) $) to[out=0, in=180] ($ ($(t7.west) + (270:0.05)$) + (180:0.15) $) -- ($(t7.west) + (270:0.05)$);
\draw[draw=tncol45caff, line width=0.6pt] ($(t3.east) + (270:0.05)$) -- ($ ($(t3.east) + (270:0.05)$) + (0:0.15) $) to[out=0, in=180] ($ ($(t4.west) + (270:0.05)$) + (180:0.15) $) -- ($(t4.west) + (270:0.05)$);
\draw[draw=tncol45caff, line width=0.6pt] ($(t1.east) + (270:0.05)$) -- ($ ($(t1.east) + (270:0.05)$) + (0:0.15) $) to[out=0, in=180] ($ ($(t6.west) + (270:0.05)$) + (180:0.15) $) -- ($(t6.west) + (270:0.05)$);
\draw[draw=tncol45caff, line width=0.6pt] ($(t2.east) + (270:0.05)$) -- ($ ($(t2.east) + (270:0.05)$) + (0:0.15) $) to[out=0, in=180] ($ ($(t5.west) + (270:0.05)$) + (180:0.15) $) -- ($(t5.west) + (270:0.05)$);
\draw[draw=tncolff1b6b, line width=0.6pt] ($(t0.east) + (270:-0.05)$) -- ($ ($(t0.east) + (270:-0.05)$) + (0:0.15) $) to[out=0, in=180] ($ ($(t5.west) + (270:-0.05)$) + (180:0.15) $) -- ($(t5.west) + (270:-0.05)$);
\draw[draw=tncolff1b6b, line width=0.6pt] ($(t1.east) + (270:-0.05)$) -- ($ ($(t1.east) + (270:-0.05)$) + (0:0.15) $) to[out=0, in=180] ($ ($(t4.west) + (270:-0.05)$) + (180:0.15) $) -- ($(t4.west) + (270:-0.05)$);
\draw[draw=tncolff1b6b, line width=0.6pt] ($(t2.east) + (270:-0.05)$) -- ($ ($(t2.east) + (270:-0.05)$) + (0:0.15) $) to[out=0, in=180] ($ ($(t7.west) + (270:-0.05)$) + (180:0.15) $) -- ($(t7.west) + (270:-0.05)$);
\draw[draw=tncolff1b6b, line width=0.6pt] ($(t3.east) + (270:-0.05)$) -- ($ ($(t3.east) + (270:-0.05)$) + (0:0.15) $) to[out=0, in=180] ($ ($(t6.west) + (270:-0.05)$) + (180:0.15) $) -- ($(t6.west) + (270:-0.05)$);
\draw[draw=tncol45caff, line width=0.6pt] ($(t0.east) + (270:0.05)$) -- ($ ($(t0.east) + (270:0.05)$) + (0:0.15) $) to[out=0, in=180] ($ ($(t7.west) + (270:0.05)$) + (180:0.15) $) -- ($(t7.west) + (270:0.05)$);
\draw[draw=tncol45caff, line width=0.6pt] ($(t3.east) + (270:0.05)$) -- ($ ($(t3.east) + (270:0.05)$) + (0:0.15) $) to[out=0, in=180] ($ ($(t4.west) + (270:0.05)$) + (180:0.15) $) -- ($(t4.west) + (270:0.05)$);
\draw[draw=tncol45caff, line width=0.6pt] ($(t1.east) + (270:0.05)$) -- ($ ($(t1.east) + (270:0.05)$) + (0:0.15) $) to[out=0, in=180] ($ ($(t6.west) + (270:0.05)$) + (180:0.15) $) -- ($(t6.west) + (270:0.05)$);
\draw[draw=tncol45caff, line width=0.6pt] ($(t2.east) + (270:0.05)$) -- ($ ($(t2.east) + (270:0.05)$) + (0:0.15) $) to[out=0, in=180] ($ ($(t5.west) + (270:0.05)$) + (180:0.15) $) -- ($(t5.west) + (270:0.05)$);
\draw[draw=tncolff1b6b, line width=0.6pt] ($(t0.east) + (270:-0.05)$) -- ($ ($(t0.east) + (270:-0.05)$) + (0:0.15) $) to[out=0, in=180] ($ ($(t5.west) + (270:-0.05)$) + (180:0.15) $) -- ($(t5.west) + (270:-0.05)$);
\draw[draw=tncolff1b6b, line width=0.6pt] ($(t1.east) + (270:-0.05)$) -- ($ ($(t1.east) + (270:-0.05)$) + (0:0.15) $) to[out=0, in=180] ($ ($(t4.west) + (270:-0.05)$) + (180:0.15) $) -- ($(t4.west) + (270:-0.05)$);
\draw[draw=tncolff1b6b, line width=0.6pt] ($(t2.east) + (270:-0.05)$) -- ($ ($(t2.east) + (270:-0.05)$) + (0:0.15) $) to[out=0, in=180] ($ ($(t7.west) + (270:-0.05)$) + (180:0.15) $) -- ($(t7.west) + (270:-0.05)$);
\draw[draw=tncolff1b6b, line width=0.6pt] ($(t3.east) + (270:-0.05)$) -- ($ ($(t3.east) + (270:-0.05)$) + (0:0.15) $) to[out=0, in=180] ($ ($(t6.west) + (270:-0.05)$) + (180:0.15) $) -- ($(t6.west) + (270:-0.05)$);
\draw[draw=tncol45caff, line width=0.6pt] ($(t0.east) + (270:0.05)$) -- ($ ($(t0.east) + (270:0.05)$) + (0:0.15) $) to[out=0, in=180] ($ ($(t7.west) + (270:0.05)$) + (180:0.15) $) -- ($(t7.west) + (270:0.05)$);
\draw[draw=tncol45caff, line width=0.6pt] ($(t3.east) + (270:0.05)$) -- ($ ($(t3.east) + (270:0.05)$) + (0:0.15) $) to[out=0, in=180] ($ ($(t4.west) + (270:0.05)$) + (180:0.15) $) -- ($(t4.west) + (270:0.05)$);
\draw[draw=tncol45caff, line width=0.6pt] ($(t1.east) + (270:0.05)$) -- ($ ($(t1.east) + (270:0.05)$) + (0:0.15) $) to[out=0, in=180] ($ ($(t6.west) + (270:0.05)$) + (180:0.15) $) -- ($(t6.west) + (270:0.05)$);
\draw[draw=tncol45caff, line width=0.6pt] ($(t2.east) + (270:0.05)$) -- ($ ($(t2.east) + (270:0.05)$) + (0:0.15) $) to[out=0, in=180] ($ ($(t5.west) + (270:0.05)$) + (180:0.15) $) -- ($(t5.west) + (270:0.05)$);
\draw[draw=tncolff1b6b, line width=0.6pt] ($(t0.east) + (270:-0.05)$) -- ($ ($(t0.east) + (270:-0.05)$) + (0:0.15) $) to[out=0, in=180] ($ ($(t5.west) + (270:-0.05)$) + (180:0.15) $) -- ($(t5.west) + (270:-0.05)$);
\draw[draw=tncolff1b6b, line width=0.6pt] ($(t1.east) + (270:-0.05)$) -- ($ ($(t1.east) + (270:-0.05)$) + (0:0.15) $) to[out=0, in=180] ($ ($(t4.west) + (270:-0.05)$) + (180:0.15) $) -- ($(t4.west) + (270:-0.05)$);
\draw[draw=tncolff1b6b, line width=0.6pt] ($(t2.east) + (270:-0.05)$) -- ($ ($(t2.east) + (270:-0.05)$) + (0:0.15) $) to[out=0, in=180] ($ ($(t7.west) + (270:-0.05)$) + (180:0.15) $) -- ($(t7.west) + (270:-0.05)$);
\draw[draw=tncolff1b6b, line width=0.6pt] ($(t3.east) + (270:-0.05)$) -- ($ ($(t3.east) + (270:-0.05)$) + (0:0.15) $) to[out=0, in=180] ($ ($(t6.west) + (270:-0.05)$) + (180:0.15) $) -- ($(t6.west) + (270:-0.05)$);
\draw[draw=tncol45caff, line width=0.6pt] ($(t0.east) + (270:0.05)$) -- ($ ($(t0.east) + (270:0.05)$) + (0:0.15) $) to[out=0, in=180] ($ ($(t7.west) + (270:0.05)$) + (180:0.15) $) -- ($(t7.west) + (270:0.05)$);
\draw[draw=tncol45caff, line width=0.6pt] ($(t3.east) + (270:0.05)$) -- ($ ($(t3.east) + (270:0.05)$) + (0:0.15) $) to[out=0, in=180] ($ ($(t4.west) + (270:0.05)$) + (180:0.15) $) -- ($(t4.west) + (270:0.05)$);
\draw[draw=tncol45caff, line width=0.6pt] ($(t1.east) + (270:0.05)$) -- ($ ($(t1.east) + (270:0.05)$) + (0:0.15) $) to[out=0, in=180] ($ ($(t6.west) + (270:0.05)$) + (180:0.15) $) -- ($(t6.west) + (270:0.05)$);
\draw[draw=tncol45caff, line width=0.6pt] ($(t2.east) + (270:0.05)$) -- ($ ($(t2.east) + (270:0.05)$) + (0:0.15) $) to[out=0, in=180] ($ ($(t5.west) + (270:0.05)$) + (180:0.15) $) -- ($(t5.west) + (270:0.05)$);
\end{tikzpicture}%
}}}%
\endgroup}%
}

%% file: figures/otoc_operator.tex
\providecommand{\OTOCop}[1][1]{%
\ensuremath{\begingroup\color{black}%
\vcenter{\hbox{%
\scalebox{#1}{%
\begin{tikzpicture}[baseline=(current bounding box.center)]
\node[inner sep=0pt, minimum size=0pt, outer sep=0pt] (t0) at (0,-0) {};
\node[inner sep=0pt, minimum size=0pt, outer sep=0pt] (t1) at (0,-0.5) {~~~\vdots};
\node[inner sep=0pt, minimum size=0pt, outer sep=0pt] (t2) at (0,-1) {};
\node[inner sep=0pt, minimum size=0pt, outer sep=0pt] (t3) at (0,-1.5) {};
\node[inner sep=0pt, minimum size=0pt, outer sep=0pt] (t4) at (1.2,-0) {};
\node[inner sep=0pt, minimum size=0pt, outer sep=0pt] (t5) at (1.2,-0.5) {};
\node[inner sep=0pt, minimum size=0pt, outer sep=0pt] (t6) at (1.2,-1) {\hspace{-0.25cm}\vdots};
\node[inner sep=0pt, minimum size=0pt, outer sep=0pt] (t7) at (1.2,-1.5) {};
\draw[draw=tncol33415c, line width=0.6pt] (t0.east) -- ($ (t0.east) + (0:0.15) $) to[out=0, in=180] ($ (t5.west) + (180:0.15) $) -- (t5.west);
\draw[draw=tncol33415c, line width=0.6pt] (t2.east) -- ($ (t2.east) + (0:0.15) $) to[out=0, in=180] ($ (t7.west) + (180:0.15) $) -- (t7.west);
\draw[draw=tncol33415c, line width=0.6pt] (t3.east) -- ($ (t3.east) + (0:0.15) $) to[out=0, in=180] ($ (t4.west) + (180:0.15) $) -- (t4.west);
\end{tikzpicture}%
}}}%
\endgroup}%
}

%% file: figures/full_loe_contract.tex
\providecommand{\LOEcontract}[1][1]{%
\ensuremath{\begingroup\color{black}%
\vcenter{\hbox{%
\scalebox{#1}{%
\begin{tikzpicture}[baseline=(current bounding box.center)]
\node[inner sep=0pt, minimum size=0pt, outer sep=0pt] (t0) at (0,-0) {};
\node[inner sep=0pt, minimum size=0pt, outer sep=0pt] (t1) at (0,-1.2) {};
\node[inner sep=0pt, minimum size=0pt, outer sep=0pt] (t2) at (0,-2.4) {};
\node[inner sep=0pt, minimum size=0pt, outer sep=0pt] (t3) at (0,-3.6) {};
\node[inner sep=0pt, minimum size=0pt, outer sep=0pt] (t4) at (2,-0) {};
\node[inner sep=0pt, minimum size=0pt, outer sep=0pt] (t5) at (2,-1.2) {};
\node[inner sep=0pt, minimum size=0pt, outer sep=0pt] (t6) at (2,-2.4) {};
\node[inner sep=0pt, minimum size=0pt, outer sep=0pt] (t7) at (2,-3.6) {};
\node[rectangle, minimum width=0.9cm, minimum height=4.5cm, inner sep=2pt, outer sep=0pt, rounded corners=1.6pt, draw=black, line width=0.6pt, fill=tncoldbd7d2, text=black] (t8) at (6,-1.8) {$p_2$};
\node[rectangle, minimum size=9mm, inner sep=2pt, outer sep=0pt, rounded corners=1.6pt, draw=black, line width=0.6pt, fill=tncol2281aa, text=black, minimum height=4mm, minimum width=15mm] (t9) at (4,-0) {$\gamma_{\bm{\mu}_B^{(1)}}$};
\node[rectangle, minimum size=9mm, inner sep=2pt, outer sep=0pt, rounded corners=1.6pt, draw=black, line width=0.6pt, fill=tncol2281aa, text=black, minimum height=4mm, minimum width=15mm] (t10) at (4,-1.2) {$\gamma_{\bm{\mu}_B^{(2)}}$};
\node[rectangle, minimum size=9mm, inner sep=2pt, outer sep=0pt, rounded corners=1.6pt, draw=black, line width=0.6pt, fill=tncol2281aa, text=black, minimum height=4mm, minimum width=15mm] (t11) at (4,-2.4) {$\gamma_{\bm{\mu}_B^{(3)}}$};
\node[rectangle, minimum size=9mm, inner sep=2pt, outer sep=0pt, rounded corners=1.6pt, draw=black, line width=0.6pt, fill=tncol2281aa, text=black, minimum height=4mm, minimum width=15mm] (t12) at (4,-3.6) {$\gamma_{\bm{\mu}_B^{(4)}}$};
\draw[draw=tncolff1b6b, line width=0.6pt] ($(t0.east) + (270:-0.05)$) -- ($ ($(t0.east) + (270:-0.05)$) + (0:0.15) $) to[out=0, in=180] ($ ($(t5.west) + (270:-0.05)$) + (180:0.15) $) -- ($(t5.west) + (270:-0.05)$);
\draw[draw=tncolff1b6b, line width=0.6pt] ($(t1.east) + (270:-0.05)$) -- ($ ($(t1.east) + (270:-0.05)$) + (0:0.15) $) to[out=0, in=180] ($ ($(t4.west) + (270:-0.05)$) + (180:0.15) $) -- ($(t4.west) + (270:-0.05)$);
\draw[draw=tncolff1b6b, line width=0.6pt] ($(t2.east) + (270:-0.05)$) -- ($ ($(t2.east) + (270:-0.05)$) + (0:0.15) $) to[out=0, in=180] ($ ($(t7.west) + (270:-0.05)$) + (180:0.15) $) -- ($(t7.west) + (270:-0.05)$);
\draw[draw=tncolff1b6b, line width=0.6pt] ($(t3.east) + (270:-0.05)$) -- ($ ($(t3.east) + (270:-0.05)$) + (0:0.15) $) to[out=0, in=180] ($ ($(t6.west) + (270:-0.05)$) + (180:0.15) $) -- ($(t6.west) + (270:-0.05)$);
\draw[draw=tncol45caff, line width=0.6pt] ($(t0.east) + (270:0.05)$) -- ($ ($(t0.east) + (270:0.05)$) + (0:0.15) $) to[out=0, in=180] ($ ($(t7.west) + (270:0.05)$) + (180:0.15) $) -- ($(t7.west) + (270:0.05)$);
\draw[draw=tncol45caff, line width=0.6pt] ($(t3.east) + (270:0.05)$) -- ($ ($(t3.east) + (270:0.05)$) + (0:0.15) $) to[out=0, in=180] ($ ($(t4.west) + (270:0.05)$) + (180:0.15) $) -- ($(t4.west) + (270:0.05)$);
\draw[draw=tncol45caff, line width=0.6pt] ($(t1.east) + (270:0.05)$) -- ($ ($(t1.east) + (270:0.05)$) + (0:0.15) $) to[out=0, in=180] ($ ($(t6.west) + (270:0.05)$) + (180:0.15) $) -- ($(t6.west) + (270:0.05)$);
\draw[draw=tncol45caff, line width=0.6pt] ($(t2.east) + (270:0.05)$) -- ($ ($(t2.east) + (270:0.05)$) + (0:0.15) $) to[out=0, in=180] ($ ($(t5.west) + (270:0.05)$) + (180:0.15) $) -- ($(t5.west) + (270:0.05)$);
\draw[draw=tncolff1b6b, line width=0.6pt] ($(t0.east) + (270:-0.05)$) -- ($ ($(t0.east) + (270:-0.05)$) + (0:0.15) $) to[out=0, in=180] ($ ($(t5.west) + (270:-0.05)$) + (180:0.15) $) -- ($(t5.west) + (270:-0.05)$);
\draw[draw=tncolff1b6b, line width=0.6pt] ($(t1.east) + (270:-0.05)$) -- ($ ($(t1.east) + (270:-0.05)$) + (0:0.15) $) to[out=0, in=180] ($ ($(t4.west) + (270:-0.05)$) + (180:0.15) $) -- ($(t4.west) + (270:-0.05)$);
\draw[draw=tncolff1b6b, line width=0.6pt] ($(t2.east) + (270:-0.05)$) -- ($ ($(t2.east) + (270:-0.05)$) + (0:0.15) $) to[out=0, in=180] ($ ($(t7.west) + (270:-0.05)$) + (180:0.15) $) -- ($(t7.west) + (270:-0.05)$);
\draw[draw=tncolff1b6b, line width=0.6pt] ($(t3.east) + (270:-0.05)$) -- ($ ($(t3.east) + (270:-0.05)$) + (0:0.15) $) to[out=0, in=180] ($ ($(t6.west) + (270:-0.05)$) + (180:0.15) $) -- ($(t6.west) + (270:-0.05)$);
\draw[draw=tncol45caff, line width=0.6pt] ($(t0.east) + (270:0.05)$) -- ($ ($(t0.east) + (270:0.05)$) + (0:0.15) $) to[out=0, in=180] ($ ($(t7.west) + (270:0.05)$) + (180:0.15) $) -- ($(t7.west) + (270:0.05)$);
\draw[draw=tncol45caff, line width=0.6pt] ($(t3.east) + (270:0.05)$) -- ($ ($(t3.east) + (270:0.05)$) + (0:0.15) $) to[out=0, in=180] ($ ($(t4.west) + (270:0.05)$) + (180:0.15) $) -- ($(t4.west) + (270:0.05)$);
\draw[draw=tncol45caff, line width=0.6pt] ($(t1.east) + (270:0.05)$) -- ($ ($(t1.east) + (270:0.05)$) + (0:0.15) $) to[out=0, in=180] ($ ($(t6.west) + (270:0.05)$) + (180:0.15) $) -- ($(t6.west) + (270:0.05)$);
\draw[draw=tncol45caff, line width=0.6pt] ($(t2.east) + (270:0.05)$) -- ($ ($(t2.east) + (270:0.05)$) + (0:0.15) $) to[out=0, in=180] ($ ($(t5.west) + (270:0.05)$) + (180:0.15) $) -- ($(t5.west) + (270:0.05)$);
\draw[draw=tncolff1b6b, line width=0.6pt] ($(t0.east) + (270:-0.05)$) -- ($ ($(t0.east) + (270:-0.05)$) + (0:0.15) $) to[out=0, in=180] ($ ($(t5.west) + (270:-0.05)$) + (180:0.15) $) -- ($(t5.west) + (270:-0.05)$);
\draw[draw=tncolff1b6b, line width=0.6pt] ($(t1.east) + (270:-0.05)$) -- ($ ($(t1.east) + (270:-0.05)$) + (0:0.15) $) to[out=0, in=180] ($ ($(t4.west) + (270:-0.05)$) + (180:0.15) $) -- ($(t4.west) + (270:-0.05)$);
\draw[draw=tncolff1b6b, line width=0.6pt] ($(t2.east) + (270:-0.05)$) -- ($ ($(t2.east) + (270:-0.05)$) + (0:0.15) $) to[out=0, in=180] ($ ($(t7.west) + (270:-0.05)$) + (180:0.15) $) -- ($(t7.west) + (270:-0.05)$);
\draw[draw=tncolff1b6b, line width=0.6pt] ($(t3.east) + (270:-0.05)$) -- ($ ($(t3.east) + (270:-0.05)$) + (0:0.15) $) to[out=0, in=180] ($ ($(t6.west) + (270:-0.05)$) + (180:0.15) $) -- ($(t6.west) + (270:-0.05)$);
\draw[draw=tncol45caff, line width=0.6pt] ($(t0.east) + (270:0.05)$) -- ($ ($(t0.east) + (270:0.05)$) + (0:0.15) $) to[out=0, in=180] ($ ($(t7.west) + (270:0.05)$) + (180:0.15) $) -- ($(t7.west) + (270:0.05)$);
\draw[draw=tncol45caff, line width=0.6pt] ($(t3.east) + (270:0.05)$) -- ($ ($(t3.east) + (270:0.05)$) + (0:0.15) $) to[out=0, in=180] ($ ($(t4.west) + (270:0.05)$) + (180:0.15) $) -- ($(t4.west) + (270:0.05)$);
\draw[draw=tncol45caff, line width=0.6pt] ($(t1.east) + (270:0.05)$) -- ($ ($(t1.east) + (270:0.05)$) + (0:0.15) $) to[out=0, in=180] ($ ($(t6.west) + (270:0.05)$) + (180:0.15) $) -- ($(t6.west) + (270:0.05)$);
\draw[draw=tncol45caff, line width=0.6pt] ($(t2.east) + (270:0.05)$) -- ($ ($(t2.east) + (270:0.05)$) + (0:0.15) $) to[out=0, in=180] ($ ($(t5.west) + (270:0.05)$) + (180:0.15) $) -- ($(t5.west) + (270:0.05)$);
\draw[draw=tncolff1b6b, line width=0.6pt] ($(t0.east) + (270:-0.05)$) -- ($ ($(t0.east) + (270:-0.05)$) + (0:0.15) $) to[out=0, in=180] ($ ($(t5.west) + (270:-0.05)$) + (180:0.15) $) -- ($(t5.west) + (270:-0.05)$);
\draw[draw=tncolff1b6b, line width=0.6pt] ($(t1.east) + (270:-0.05)$) -- ($ ($(t1.east) + (270:-0.05)$) + (0:0.15) $) to[out=0, in=180] ($ ($(t4.west) + (270:-0.05)$) + (180:0.15) $) -- ($(t4.west) + (270:-0.05)$);
\draw[draw=tncolff1b6b, line width=0.6pt] ($(t2.east) + (270:-0.05)$) -- ($ ($(t2.east) + (270:-0.05)$) + (0:0.15) $) to[out=0, in=180] ($ ($(t7.west) + (270:-0.05)$) + (180:0.15) $) -- ($(t7.west) + (270:-0.05)$);
\draw[draw=tncolff1b6b, line width=0.6pt] ($(t3.east) + (270:-0.05)$) -- ($ ($(t3.east) + (270:-0.05)$) + (0:0.15) $) to[out=0, in=180] ($ ($(t6.west) + (270:-0.05)$) + (180:0.15) $) -- ($(t6.west) + (270:-0.05)$);
\draw[draw=tncol45caff, line width=0.6pt] ($(t0.east) + (270:0.05)$) -- ($ ($(t0.east) + (270:0.05)$) + (0:0.15) $) to[out=0, in=180] ($ ($(t7.west) + (270:0.05)$) + (180:0.15) $) -- ($(t7.west) + (270:0.05)$);
\draw[draw=tncol45caff, line width=0.6pt] ($(t3.east) + (270:0.05)$) -- ($ ($(t3.east) + (270:0.05)$) + (0:0.15) $) to[out=0, in=180] ($ ($(t4.west) + (270:0.05)$) + (180:0.15) $) -- ($(t4.west) + (270:0.05)$);
\draw[draw=tncol45caff, line width=0.6pt] ($(t1.east) + (270:0.05)$) -- ($ ($(t1.east) + (270:0.05)$) + (0:0.15) $) to[out=0, in=180] ($ ($(t6.west) + (270:0.05)$) + (180:0.15) $) -- ($(t6.west) + (270:0.05)$);
\draw[draw=tncol45caff, line width=0.6pt] ($(t2.east) + (270:0.05)$) -- ($ ($(t2.east) + (270:0.05)$) + (0:0.15) $) to[out=0, in=180] ($ ($(t5.west) + (270:0.05)$) + (180:0.15) $) -- ($(t5.west) + (270:0.05)$);
\draw[draw=tncol808080, line width=0.9pt] (t9.south) -- ($ (t9.south) + (270:0.1) $) to[out=270, in=180] ($ ($(t8.west) + (270:-1.8)$) + (180:0.1) $) -- ($(t8.west) + (270:-1.8)$);
\draw[draw=tncol808080, line width=0.9pt] (t10.north) -- ($ (t10.north) + (90:0.1) $) to[out=90, in=180] ($ ($(t8.west) + (270:-0.6)$) + (180:0.1) $) -- ($(t8.west) + (270:-0.6)$);
\draw[draw=tncol808080, line width=0.9pt] (t11.south) -- ($ (t11.south) + (270:0.1) $) to[out=270, in=180] ($ ($(t8.west) + (270:0.6)$) + (180:0.1) $) -- ($(t8.west) + (270:0.6)$);
\draw[draw=tncol808080, line width=0.9pt] (t12.north) -- ($ (t12.north) + (90:0.1) $) to[out=90, in=180] ($ ($(t8.west) + (270:1.8)$) + (180:0.1) $) -- ($(t8.west) + (270:1.8)$);
\draw[draw=tncolff1b6b, line width=0.6pt, rounded corners=0.2cm] ($(t0.180) + (270:-0.05)$) -- ($ ($(t0.180) + (270:-0.05)$) + (180:0.3) $) -- ($ ($(t0.180) + (270:-0.05)$) + (180:0.3) + (90:0.5) $) -- ($ ($(t9.east) + (270:-0.05)$) + (0:0.3) + (90:0.5) $) -- ($ ($(t9.east) + (270:-0.05)$) + (0:0.3) $) -- ($(t9.east) + (270:-0.05)$);
\draw[draw=tncolff1b6b, line width=0.6pt, rounded corners=0.2cm] ($(t1.180) + (270:-0.05)$) -- ($ ($(t1.180) + (270:-0.05)$) + (180:0.3) $) -- ($ ($(t1.180) + (270:-0.05)$) + (180:0.3) + (90:0.5) $) -- ($ ($(t10.east) + (270:-0.05)$) + (0:0.3) + (90:0.5) $) -- ($ ($(t10.east) + (270:-0.05)$) + (0:0.3) $) -- ($(t10.east) + (270:-0.05)$);
\draw[draw=tncolff1b6b, line width=0.6pt, rounded corners=0.2cm] ($(t2.180) + (270:-0.05)$) -- ($ ($(t2.180) + (270:-0.05)$) + (180:0.3) $) -- ($ ($(t2.180) + (270:-0.05)$) + (180:0.3) + (90:0.5) $) -- ($ ($(t11.east) + (270:-0.05)$) + (0:0.3) + (90:0.5) $) -- ($ ($(t11.east) + (270:-0.05)$) + (0:0.3) $) -- ($(t11.east) + (270:-0.05)$);
\draw[draw=tncolff1b6b, line width=0.6pt, rounded corners=0.2cm] ($(t3.180) + (270:-0.05)$) -- ($ ($(t3.180) + (270:-0.05)$) + (180:0.3) $) -- ($ ($(t3.180) + (270:-0.05)$) + (180:0.3) + (90:0.5) $) -- ($ ($(t12.east) + (270:-0.05)$) + (0:0.3) + (90:0.5) $) -- ($ ($(t12.east) + (270:-0.05)$) + (0:0.3) $) -- ($(t12.east) + (270:-0.05)$);
\draw[draw=tncol45caff, line width=0.6pt, rounded corners=0.2cm] ($(t0.180) + (270:0.05)$) -- ($ ($(t0.180) + (270:0.05)$) + (180:0.3) $) -- ($ ($(t0.180) + (270:0.05)$) + (180:0.3) + (90:0.5) $) -- ($ ($(t9.east) + (270:0.05)$) + (0:0.3) + (90:0.5) $) -- ($ ($(t9.east) + (270:0.05)$) + (0:0.3) $) -- ($(t9.east) + (270:0.05)$);
\draw[draw=tncol45caff, line width=0.6pt, rounded corners=0.2cm] ($(t1.180) + (270:0.05)$) -- ($ ($(t1.180) + (270:0.05)$) + (180:0.3) $) -- ($ ($(t1.180) + (270:0.05)$) + (180:0.3) + (90:0.5) $) -- ($ ($(t10.east) + (270:0.05)$) + (0:0.3) + (90:0.5) $) -- ($ ($(t10.east) + (270:0.05)$) + (0:0.3) $) -- ($(t10.east) + (270:0.05)$);
\draw[draw=tncol45caff, line width=0.6pt, rounded corners=0.2cm] ($(t2.180) + (270:0.05)$) -- ($ ($(t2.180) + (270:0.05)$) + (180:0.3) $) -- ($ ($(t2.180) + (270:0.05)$) + (180:0.3) + (90:0.5) $) -- ($ ($(t11.east) + (270:0.05)$) + (0:0.3) + (90:0.5) $) -- ($ ($(t11.east) + (270:0.05)$) + (0:0.3) $) -- ($(t11.east) + (270:0.05)$);
\draw[draw=tncol45caff, line width=0.6pt, rounded corners=0.2cm] ($(t3.180) + (270:0.05)$) -- ($ ($(t3.180) + (270:0.05)$) + (180:0.3) $) -- ($ ($(t3.180) + (270:0.05)$) + (180:0.3) + (90:0.5) $) -- ($ ($(t12.east) + (270:0.05)$) + (0:0.3) + (90:0.5) $) -- ($ ($(t12.east) + (270:0.05)$) + (0:0.3) $) -- ($(t12.east) + (270:0.05)$);
\draw[draw=tncolff1b6b, line width=0.6pt] ($(t4.0) + (270:-0.05)$) -- ($(t9.west) + (270:-0.05)$);
\draw[draw=tncolff1b6b, line width=0.6pt] ($(t5.0) + (270:-0.05)$) -- ($(t10.west) + (270:-0.05)$);
\draw[draw=tncolff1b6b, line width=0.6pt] ($(t6.0) + (270:-0.05)$) -- ($(t11.west) + (270:-0.05)$);
\draw[draw=tncolff1b6b, line width=0.6pt] ($(t7.0) + (270:-0.05)$) -- ($(t12.west) + (270:-0.05)$);
\draw[draw=tncol45caff, line width=0.6pt] ($(t4.0) + (270:0.05)$) -- ($(t9.west) + (270:0.05)$);
\draw[draw=tncol45caff, line width=0.6pt] ($(t5.0) + (270:0.05)$) -- ($(t10.west) + (270:0.05)$);
\draw[draw=tncol45caff, line width=0.6pt] ($(t6.0) + (270:0.05)$) -- ($(t11.west) + (270:0.05)$);
\draw[draw=tncol45caff, line width=0.6pt] ($(t7.0) + (270:0.05)$) -- ($(t12.west) + (270:0.05)$);
\node[draw=tncol888888, line width=0.6pt, inner sep=3pt, dashed, rounded corners=2pt, fit=(t0)(t1)(t2)(t3)(t4)(t5)(t6)(t7)] {};
\end{tikzpicture}%
}}}%
\endgroup}%
}

%% file: figures/loe_pauli_diagram_expanded.tex
\providecommand{\LOEPaulisExp}[1][1]{%
\ensuremath{\begingroup\color{black}%
\vcenter{\hbox{%
\scalebox{#1}{%
\begin{tikzpicture}[baseline=(current bounding box.center)]
\node[inner sep=0pt, minimum size=0pt, outer sep=0pt] (t0) at (0,-0) {};
\node[inner sep=0pt, minimum size=0pt, outer sep=0pt] (t1) at (0,-0.7) {};
\node[inner sep=0pt, minimum size=0pt, outer sep=0pt] (t2) at (0,-1.4) {};
\node[inner sep=0pt, minimum size=0pt, outer sep=0pt] (t3) at (0,-2.1) {};
\node[inner sep=0pt, minimum size=0pt, outer sep=0pt] (t4) at (0,-2.8) {};
\node[inner sep=0pt, minimum size=0pt, outer sep=0pt] (t5) at (0,-3.5) {};
\node[inner sep=0pt, minimum size=0pt, outer sep=0pt] (t6) at (0,-4.2) {};
\node[inner sep=0pt, minimum size=0pt, outer sep=0pt] (t7) at (0,-4.9) {};
\node[inner sep=0pt, minimum size=0pt, outer sep=0pt] (t8) at (2,-0) {};
\node[inner sep=0pt, minimum size=0pt, outer sep=0pt] (t9) at (2,-0.7) {};
\node[inner sep=0pt, minimum size=0pt, outer sep=0pt] (t10) at (2,-1.4) {};
\node[inner sep=0pt, minimum size=0pt, outer sep=0pt] (t11) at (2,-2.1) {};
\node[inner sep=0pt, minimum size=0pt, outer sep=0pt] (t12) at (2,-2.8) {};
\node[inner sep=0pt, minimum size=0pt, outer sep=0pt] (t13) at (2,-3.5) {};
\node[inner sep=0pt, minimum size=0pt, outer sep=0pt] (t14) at (2,-4.2) {};
\node[inner sep=0pt, minimum size=0pt, outer sep=0pt] (t15) at (2,-4.9) {};
\node[rectangle, minimum width=5mm, minimum height=5mm, inner sep=2pt, outer sep=0pt, rounded corners=1.6pt, draw=tncol33415c, line width=0.6pt, fill=tncolb4bd9b, text=black] (t16) at (4,-0) {$P_{\mu_1}^{(A)}$};
\node[rectangle, minimum width=5mm, minimum height=5mm, inner sep=2pt, outer sep=0pt, rounded corners=1.6pt, draw=tncol33415c, line width=0.6pt, fill=tncolfdba77, text=black] (t17) at (4,-1.4) {$P_{\mu_2}^{(A)}$};
\node[rectangle, minimum width=5mm, minimum height=5mm, inner sep=2pt, outer sep=0pt, rounded corners=1.6pt, draw=tncol33415c, line width=0.6pt, fill=tncol81bdc3, text=black] (t18) at (4,-2.8) {$P_{\mu_3}^{(A)}$};
\node[rectangle, minimum width=5mm, minimum height=5mm, inner sep=2pt, outer sep=0pt, rounded corners=1.6pt, draw=tncol33415c, line width=0.6pt, fill=tncolf9d6d3, text=black] (t19) at (4,-4.2) {$P_{\mu_4}^{(A)}$};
\node[rectangle, minimum width=5mm, minimum height=5mm, inner sep=2pt, outer sep=0pt, rounded corners=1.6pt, draw=tncol33415c, line width=0.6pt, fill=tncolbc455a, text=black] (t20) at (4,-0.7) {$P_{\mu_1}^{(\bar{A})}$};
\node[rectangle, minimum width=5mm, minimum height=5mm, inner sep=2pt, outer sep=0pt, rounded corners=1.6pt, draw=tncol33415c, line width=0.6pt, fill=tncolf6cf98, text=black] (t21) at (4,-2.1) {$P_{\mu_2}^{(\bar{A})}$};
\node[rectangle, minimum width=5mm, minimum height=5mm, inner sep=2pt, outer sep=0pt, rounded corners=1.6pt, draw=tncol33415c, line width=0.6pt, fill=tncolfdf8ec, text=black] (t22) at (4,-3.5) {$P_{\mu_3}^{(\bar{A})}$};
\node[rectangle, minimum width=5mm, minimum height=5mm, inner sep=2pt, outer sep=0pt, rounded corners=1.6pt, draw=tncol33415c, line width=0.6pt, fill=tncolccd5c3, text=black] (t23) at (4,-4.9) {$P_{\mu_4}^{(\bar{A})}$};
\draw[draw=tncolff1b6b, line width=0.6pt] (t0.east) -- ($ (t0.east) + (0:0.15) $) to[out=0, in=180] ($ (t10.west) + (180:0.15) $) -- (t10.west);
\draw[draw=tncolff1b6b, line width=0.6pt] (t2.east) -- ($ (t2.east) + (0:0.15) $) to[out=0, in=180] ($ (t8.west) + (180:0.15) $) -- (t8.west);
\draw[draw=tncolff1b6b, line width=0.6pt] (t4.east) -- ($ (t4.east) + (0:0.15) $) to[out=0, in=180] ($ (t14.west) + (180:0.15) $) -- (t14.west);
\draw[draw=tncolff1b6b, line width=0.6pt] (t6.east) -- ($ (t6.east) + (0:0.15) $) to[out=0, in=180] ($ (t12.west) + (180:0.15) $) -- (t12.west);
\draw[draw=tncol45caff, line width=0.6pt] (t1.east) -- ($ (t1.east) + (0:0.15) $) to[out=0, in=180] ($ (t15.west) + (180:0.15) $) -- (t15.west);
\draw[draw=tncol45caff, line width=0.6pt] (t7.east) -- ($ (t7.east) + (0:0.15) $) to[out=0, in=180] ($ (t9.west) + (180:0.15) $) -- (t9.west);
\draw[draw=tncol45caff, line width=0.6pt] (t3.east) -- ($ (t3.east) + (0:0.15) $) to[out=0, in=180] ($ (t13.west) + (180:0.15) $) -- (t13.west);
\draw[draw=tncol45caff, line width=0.6pt] (t5.east) -- ($ (t5.east) + (0:0.15) $) to[out=0, in=180] ($ (t11.west) + (180:0.15) $) -- (t11.west);
\draw[draw=tncolff1b6b, line width=0.6pt] (t8.0) -- (t16.west);
\draw[draw=tncol45caff, line width=0.6pt] (t9.0) -- (t20.west);
\draw[draw=tncolff1b6b, line width=0.6pt, rounded corners=0.2cm] (t0.180) -- ($ (t0.180) + (180:0.3) $) -- ($ (t0.180) + (180:0.3) + (90:0.5) $) -- ($ (t16.east) + (0:0.3) + (90:0.5) $) -- ($ (t16.east) + (0:0.3) $) -- (t16.east);
\draw[draw=tncol45caff, line width=0.6pt, rounded corners=0.2cm] (t1.180) -- ($ (t1.180) + (180:0.3) $) -- ($ (t1.180) + (180:0.3) + (90:0.5) $) -- ($ (t20.east) + (0:0.3) + (90:0.5) $) -- ($ (t20.east) + (0:0.3) $) -- (t20.east);
\draw[draw=tncolff1b6b, line width=0.6pt] (t10.0) -- (t17.west);
\draw[draw=tncol45caff, line width=0.6pt] (t11.0) -- (t21.west);
\draw[draw=tncolff1b6b, line width=0.6pt, rounded corners=0.2cm] (t2.180) -- ($ (t2.180) + (180:0.3) $) -- ($ (t2.180) + (180:0.3) + (90:0.5) $) -- ($ (t17.east) + (0:0.3) + (90:0.5) $) -- ($ (t17.east) + (0:0.3) $) -- (t17.east);
\draw[draw=tncol45caff, line width=0.6pt, rounded corners=0.2cm] (t3.180) -- ($ (t3.180) + (180:0.3) $) -- ($ (t3.180) + (180:0.3) + (90:0.5) $) -- ($ (t21.east) + (0:0.3) + (90:0.5) $) -- ($ (t21.east) + (0:0.3) $) -- (t21.east);
\draw[draw=tncolff1b6b, line width=0.6pt] (t12.0) -- (t18.west);
\draw[draw=tncol45caff, line width=0.6pt] (t13.0) -- (t22.west);
\draw[draw=tncolff1b6b, line width=0.6pt, rounded corners=0.2cm] (t4.180) -- ($ (t4.180) + (180:0.3) $) -- ($ (t4.180) + (180:0.3) + (90:0.5) $) -- ($ (t18.east) + (0:0.3) + (90:0.5) $) -- ($ (t18.east) + (0:0.3) $) -- (t18.east);
\draw[draw=tncol45caff, line width=0.6pt, rounded corners=0.2cm] (t5.180) -- ($ (t5.180) + (180:0.3) $) -- ($ (t5.180) + (180:0.3) + (90:0.5) $) -- ($ (t22.east) + (0:0.3) + (90:0.5) $) -- ($ (t22.east) + (0:0.3) $) -- (t22.east);
\draw[draw=tncolff1b6b, line width=0.6pt] (t14.0) -- (t19.west);
\draw[draw=tncol45caff, line width=0.6pt] (t15.0) -- (t23.west);
\draw[draw=tncolff1b6b, line width=0.6pt, rounded corners=0.2cm] (t6.180) -- ($ (t6.180) + (180:0.3) $) -- ($ (t6.180) + (180:0.3) + (90:0.5) $) -- ($ (t19.east) + (0:0.3) + (90:0.5) $) -- ($ (t19.east) + (0:0.3) $) -- (t19.east);
\draw[draw=tncol45caff, line width=0.6pt, rounded corners=0.2cm] (t7.180) -- ($ (t7.180) + (180:0.3) $) -- ($ (t7.180) + (180:0.3) + (90:0.5) $) -- ($ (t23.east) + (0:0.3) + (90:0.5) $) -- ($ (t23.east) + (0:0.3) $) -- (t23.east);
\end{tikzpicture}%
}}}%
\endgroup}%
}

%% file: figures/loe_delta_relations.tex
\providecommand{\LOEPaulisDelta}[1][1]{%
\ensuremath{\begingroup\color{black}%
\vcenter{\hbox{%
\scalebox{#1}{%
\begin{tikzpicture}[baseline=(current bounding box.center)]
\node[rectangle, minimum width=5mm, minimum height=5mm, inner sep=2pt, outer sep=0pt, rounded corners=1.6pt, draw=tncol33415c, line width=0.6pt, fill=tncolb4bd9b, text=black] (t0) at (0,-0) {$P_{\mu_1}^{(A)}$};
\node[rectangle, minimum width=5mm, minimum height=5mm, inner sep=2pt, outer sep=0pt, rounded corners=1.6pt, draw=tncol33415c, line width=0.6pt, fill=tncolfdba77, text=black] (t1) at (1.25,-0) {$P_{\mu_2}^{(A)}$};
\node[rectangle, minimum width=5mm, minimum height=5mm, inner sep=2pt, outer sep=0pt, rounded corners=1.6pt, draw=tncol33415c, line width=0.6pt, fill=tncol81bdc3, text=black] (t2) at (0,-1) {$P_{\mu_3}^{(A)}$};
\node[rectangle, minimum width=5mm, minimum height=5mm, inner sep=2pt, outer sep=0pt, rounded corners=1.6pt, draw=tncol33415c, line width=0.6pt, fill=tncolf9d6d3, text=black] (t3) at (1.25,-1) {$P_{\mu_4}^{(A)}$};
\node[rectangle, minimum width=5mm, minimum height=5mm, inner sep=2pt, outer sep=0pt, rounded corners=1.6pt, draw=tncol33415c, line width=0.6pt, fill=tncolbc455a, text=black] (t4) at (0,-2) {$P_{\mu_1}^{(\bar{A})}$};
\node[rectangle, minimum width=5mm, minimum height=5mm, inner sep=2pt, outer sep=0pt, rounded corners=1.6pt, draw=tncol33415c, line width=0.6pt, fill=tncolf6cf98, text=black] (t5) at (0,-3) {$P_{\mu_2}^{(\bar{A})}$};
\node[rectangle, minimum width=5mm, minimum height=5mm, inner sep=2pt, outer sep=0pt, rounded corners=1.6pt, draw=tncol33415c, line width=0.6pt, fill=tncolfdf8ec, text=black] (t6) at (1.25,-3) {$P_{\mu_3}^{(\bar{A})}$};
\node[rectangle, minimum width=5mm, minimum height=5mm, inner sep=2pt, outer sep=0pt, rounded corners=1.6pt, draw=tncol33415c, line width=0.6pt, fill=tncolccd5c3, text=black] (t7) at (1.25,-2) {$P_{\mu_4}^{(\bar{A})}$};
\draw[draw=tncolff1b6b, line width=0.6pt, rounded corners=0.2cm] (t0.west) -- ($ (t0.west) + (180:0.3) $) -- ($ (t0.west) + (180:0.3) + (90:0.5) $) -- ($ (t1.east) + (0:0.3) + (90:0.5) $) -- ($ (t1.east) + (0:0.3) $) -- (t1.east);
\draw[draw=tncolff1b6b, line width=0.6pt] (t0.east) -- (t1.west);
\draw[draw=tncolff1b6b, line width=0.6pt, rounded corners=0.2cm] (t2.west) -- ($ (t2.west) + (180:0.3) $) -- ($ (t2.west) + (180:0.3) + (90:0.5) $) -- ($ (t3.east) + (0:0.3) + (90:0.5) $) -- ($ (t3.east) + (0:0.3) $) -- (t3.east);
\draw[draw=tncolff1b6b, line width=0.6pt] (t2.east) -- (t3.west);
\draw[draw=tncol45caff, line width=0.6pt, rounded corners=0.2cm] (t4.west) -- ($ (t4.west) + (180:0.3) $) -- ($ (t4.west) + (180:0.3) + (90:0.5) $) -- ($ (t7.east) + (0:0.3) + (90:0.5) $) -- ($ (t7.east) + (0:0.3) $) -- (t7.east);
\draw[draw=tncol45caff, line width=0.6pt] (t4.east) -- (t7.west);
\draw[draw=tncol45caff, line width=0.6pt, rounded corners=0.2cm] (t5.west) -- ($ (t5.west) + (180:0.3) $) -- ($ (t5.west) + (180:0.3) + (90:0.5) $) -- ($ (t6.east) + (0:0.3) + (90:0.5) $) -- ($ (t6.east) + (0:0.3) $) -- (t6.east);
\draw[draw=tncol45caff, line width=0.6pt] (t5.east) -- (t6.west);
\end{tikzpicture}%
}}}%
\endgroup}%
}

%% file: figures/loe_delta_relations_col.tex
\providecommand{\LOEPaulisDeltaSimp}[1][1]{%
\ensuremath{\begingroup\color{black}%
\vcenter{\hbox{%
\scalebox{#1}{%
\begin{tikzpicture}[baseline=(current bounding box.center)]
\node[rectangle, minimum width=5mm, minimum height=5mm, inner sep=2pt, outer sep=0pt, rounded corners=1.6pt, draw=tncol33415c, line width=0.6pt, fill=tncolb4bd9b, text=black] (t0) at (0,-0) {$P_{\mu_1}^{(A)}$};
\node[rectangle, minimum width=5mm, minimum height=5mm, inner sep=2pt, outer sep=0pt, rounded corners=1.6pt, draw=tncol33415c, line width=0.6pt, fill=tncolb4bd9b, text=black] (t1) at (1.25,-0) {$P_{\mu_2}^{(A)}$};
\node[rectangle, minimum width=5mm, minimum height=5mm, inner sep=2pt, outer sep=0pt, rounded corners=1.6pt, draw=tncol33415c, line width=0.6pt, fill=tncol81bdc3, text=black] (t2) at (0,-1) {$P_{\mu_3}^{(A)}$};
\node[rectangle, minimum width=5mm, minimum height=5mm, inner sep=2pt, outer sep=0pt, rounded corners=1.6pt, draw=tncol33415c, line width=0.6pt, fill=tncol81bdc3, text=black] (t3) at (1.25,-1) {$P_{\mu_4}^{(A)}$};
\node[rectangle, minimum width=5mm, minimum height=5mm, inner sep=2pt, outer sep=0pt, rounded corners=1.6pt, draw=tncol33415c, line width=0.6pt, fill=tncolbc455a, text=black] (t4) at (0,-2) {$P_{\mu_1}^{(\bar{A})}$};
\node[rectangle, minimum width=5mm, minimum height=5mm, inner sep=2pt, outer sep=0pt, rounded corners=1.6pt, draw=tncol33415c, line width=0.6pt, fill=tncolf6cf98, text=black] (t5) at (0,-3) {$P_{\mu_2}^{(\bar{A})}$};
\node[rectangle, minimum width=5mm, minimum height=5mm, inner sep=2pt, outer sep=0pt, rounded corners=1.6pt, draw=tncol33415c, line width=0.6pt, fill=tncolf6cf98, text=black] (t6) at (1.25,-3) {$P_{\mu_3}^{(\bar{A})}$};
\node[rectangle, minimum width=5mm, minimum height=5mm, inner sep=2pt, outer sep=0pt, rounded corners=1.6pt, draw=tncol33415c, line width=0.6pt, fill=tncolbc455a, text=black] (t7) at (1.25,-2) {$P_{\mu_4}^{(\bar{A})}$};
\draw[draw=tncolff1b6b, line width=0.6pt, rounded corners=0.2cm] (t0.west) -- ($ (t0.west) + (180:0.3) $) -- ($ (t0.west) + (180:0.3) + (90:0.5) $) -- ($ (t1.east) + (0:0.3) + (90:0.5) $) -- ($ (t1.east) + (0:0.3) $) -- (t1.east);
\draw[draw=tncolff1b6b, line width=0.6pt] (t0.east) -- (t1.west);
\draw[draw=tncolff1b6b, line width=0.6pt, rounded corners=0.2cm] (t2.west) -- ($ (t2.west) + (180:0.3) $) -- ($ (t2.west) + (180:0.3) + (90:0.5) $) -- ($ (t3.east) + (0:0.3) + (90:0.5) $) -- ($ (t3.east) + (0:0.3) $) -- (t3.east);
\draw[draw=tncolff1b6b, line width=0.6pt] (t2.east) -- (t3.west);
\draw[draw=tncol45caff, line width=0.6pt, rounded corners=0.2cm] (t4.west) -- ($ (t4.west) + (180:0.3) $) -- ($ (t4.west) + (180:0.3) + (90:0.5) $) -- ($ (t7.east) + (0:0.3) + (90:0.5) $) -- ($ (t7.east) + (0:0.3) $) -- (t7.east);
\draw[draw=tncol45caff, line width=0.6pt] (t4.east) -- (t7.west);
\draw[draw=tncol45caff, line width=0.6pt, rounded corners=0.2cm] (t5.west) -- ($ (t5.west) + (180:0.3) $) -- ($ (t5.west) + (180:0.3) + (90:0.5) $) -- ($ (t6.east) + (0:0.3) + (90:0.5) $) -- ($ (t6.east) + (0:0.3) $) -- (t6.east);
\draw[draw=tncol45caff, line width=0.6pt] (t5.east) -- (t6.west);
\end{tikzpicture}%
}}}%
\endgroup}%
}

%% file: figures/loe_class_1.tex
\providecommand{\LOEClassOne}[1][1]{%
\ensuremath{\begingroup\color{black}%
\vcenter{\hbox{%
\scalebox{#1}{%
\begin{tikzpicture}[baseline=(current bounding box.center)]
\node[rectangle, minimum width=2.1cm, minimum height=0.4cm, inner sep=2pt, outer sep=0pt, rounded corners=1.6pt, draw=tncol33415c, line width=0.6pt, fill=tncolb4bd9b, text=black, font=\scriptsize, inner sep=0pt] (t0) at (0.6,-0.25) {$P_{1}^{(A)}$};
\node[rectangle, minimum width=2.1cm, minimum height=0.4cm, inner sep=2pt, outer sep=0pt, rounded corners=1.6pt, draw=tncol33415c, line width=0.6pt, fill=tncolb4bd9b, text=black, font=\scriptsize, inner sep=0pt] (t1) at (0.6,-0.75) {$P_{2}^{(A)}$};
\node[rectangle, minimum width=2.1cm, minimum height=0.4cm, inner sep=2pt, outer sep=0pt, rounded corners=1.6pt, draw=tncol33415c, line width=0.6pt, fill=tncol81bdc3, text=black, font=\scriptsize, inner sep=0pt] (t2) at (0.6,-1.25) {$P_{3}^{(A)}$};
\node[rectangle, minimum width=2.1cm, minimum height=0.4cm, inner sep=2pt, outer sep=0pt, rounded corners=1.6pt, draw=tncol33415c, line width=0.6pt, fill=tncol81bdc3, text=black, font=\scriptsize, inner sep=0pt] (t3) at (0.6,-1.75) {$P_{4}^{(A)}$};
\node[rectangle, minimum width=2.7cm, minimum height=0.4cm, inner sep=2pt, outer sep=0pt, rounded corners=1.6pt, draw=tncol33415c, line width=0.6pt, fill=tncolbc455a, text=black, font=\scriptsize, inner sep=0pt] (t4) at (3.3,-0.25) {$P_{1}^{(\bar{A})}$};
\node[rectangle, minimum width=2.7cm, minimum height=0.4cm, inner sep=2pt, outer sep=0pt, rounded corners=1.6pt, draw=tncol33415c, line width=0.6pt, fill=tncolf6cf98, text=black, font=\scriptsize, inner sep=0pt] (t5) at (3.3,-0.75) {$P_{2}^{(\bar{A})}$};
\node[rectangle, minimum width=2.7cm, minimum height=0.4cm, inner sep=2pt, outer sep=0pt, rounded corners=1.6pt, draw=tncol33415c, line width=0.6pt, fill=tncolf6cf98, text=black, font=\scriptsize, inner sep=0pt] (t6) at (3.3,-1.25) {$P_{3}^{(\bar{A})}$};
\node[rectangle, minimum width=2.7cm, minimum height=0.4cm, inner sep=2pt, outer sep=0pt, rounded corners=1.6pt, draw=tncol33415c, line width=0.6pt, fill=tncolbc455a, text=black, font=\scriptsize, inner sep=0pt] (t7) at (3.3,-1.75) {$P_{4}^{(\bar{A})}$};
\node[inner sep=0pt, minimum size=0pt, outer sep=0pt] (t8) at (1.8,-0) {};
\node[inner sep=0pt, minimum size=0pt, outer sep=0pt] (t9) at (1.8,-0.5) {};
\node[inner sep=0pt, minimum size=0pt, outer sep=0pt] (t10) at (1.8,-1) {};
\node[inner sep=0pt, minimum size=0pt, outer sep=0pt] (t11) at (1.8,-1.5) {};
\node[inner sep=0pt, minimum size=0pt, outer sep=0pt] (t12) at (1.8,-2) {};
\draw[draw=tncol888888, line width=0.6pt, dashed] (t8.-90) -- (t12.90);
\end{tikzpicture}%
}}}%
\endgroup}%
}

%% file: figures/otoc_contraction_k.tex
\providecommand{\OTOCk}[1][1]{%
\ensuremath{\begingroup\color{black}%
\vcenter{\hbox{%
\scalebox{#1}{%
\begin{tikzpicture}[baseline=(current bounding box.center)]
\node[rectangle, minimum width=5mm, minimum height=5mm, inner sep=2pt, outer sep=0pt, rounded corners=1.6pt, draw=tncol33415c, line width=0.6pt, fill=tncolb4bd9b, text=black, font=\scriptsize, inner sep=0pt] (t0) at (0,-0) {$P_{1}$};
\node[rectangle, minimum width=5mm, minimum height=5mm, inner sep=2pt, outer sep=0pt, rounded corners=1.6pt, draw=tncol33415c, line width=0.6pt, fill=white, text=black, font=\scriptsize, inner sep=0pt] (t1) at (0.75,-0) {$X$};
\node[rectangle, minimum width=5mm, minimum height=5mm, inner sep=2pt, outer sep=0pt, rounded corners=1.6pt, draw=tncol33415c, line width=0.6pt, fill=tncolbc455a, text=black, font=\scriptsize, inner sep=0pt] (t2) at (1.5,-0) {$P_{2}$};
\node[rectangle, minimum width=5mm, minimum height=5mm, inner sep=2pt, outer sep=0pt, rounded corners=1.6pt, draw=tncol33415c, line width=0.6pt, fill=white, text=black, font=\scriptsize, inner sep=0pt] (t3) at (2.25,-0) {$X$};
\node[rectangle, minimum width=5mm, minimum height=5mm, inner sep=2pt, outer sep=0pt, rounded corners=1.6pt, draw=tncol33415c, line width=0.6pt, fill=tncolfdba77, text=black, font=\tiny, inner sep=0pt] (t4) at (3.75,-0) {$P_{k-1}$};
\node[rectangle, minimum width=5mm, minimum height=5mm, inner sep=2pt, outer sep=0pt, rounded corners=1.6pt, draw=tncol33415c, line width=0.6pt, fill=white, text=black, font=\scriptsize, inner sep=0pt] (t5) at (4.5,-0) {$X$};
\node[rectangle, minimum width=5mm, minimum height=5mm, inner sep=2pt, outer sep=0pt, rounded corners=1.6pt, draw=tncol33415c, line width=0.6pt, fill=tncol81bdc3, text=black, font=\scriptsize, inner sep=0pt] (t6) at (5.25,-0) {$P_{k}$};
\node[rectangle, minimum width=5mm, minimum height=5mm, inner sep=2pt, outer sep=0pt, rounded corners=1.6pt, draw=tncol33415c, line width=0.6pt, fill=white, text=black, font=\scriptsize, inner sep=0pt] (t7) at (6,-0) {$X$};
\draw[draw=tncol808080, line width=0.6pt] (t0.0) to[out=0, in=180] (t1.180);
\draw[draw=tncol808080, line width=0.6pt] (t1.0) to[out=0, in=180] (t2.180);
\draw[draw=tncol808080, line width=0.6pt] (t2.0) to[out=0, in=180] (t3.180);
\draw[draw=tncol808080, line width=0.6pt] (t3.0) -- (t4.180);
\begin{scope}[shift={($ (t3.0) !0.5! (t4.180) $)}, rotate=0]
  \fill[white] (-8.9pt,-4.4pt) rectangle (8.9pt,4.4pt);
  \fill (-4.5pt,0pt) circle (0.4pt) (0pt,0pt) circle (0.4pt) (4.5pt,0pt) circle (0.4pt);
\end{scope}
\draw[draw=tncol808080, line width=0.6pt] (t4.0) to[out=0, in=180] (t5.180);
\draw[draw=tncol808080, line width=0.6pt] (t5.0) to[out=0, in=180] (t6.180);
\draw[draw=tncol808080, line width=0.6pt] (t6.0) to[out=0, in=180] (t7.180);
\draw[draw=tncol808080, line width=0.6pt, rounded corners=0.2cm] (t7.east) -- ($ (t7.east) + (0:0.3) $) -- ($ (t7.east) + (0:0.3) + (90:0.5) $) -- ($ (t0.west) + (180:0.3) + (90:0.5) $) -- ($ (t0.west) + (180:0.3) $) -- (t0.west);
\end{tikzpicture}%
}}}%
\endgroup}%
}

%% file: figures/ose_class.tex
\providecommand{\OSEClass}[1][1]{%
\ensuremath{\begingroup\color{black}%
\vcenter{\hbox{%
\scalebox{#1}{%
\begin{tikzpicture}[baseline=(current bounding box.center)]
\node[rectangle, minimum width=4.5cm, minimum height=0.4cm, inner sep=2pt, outer sep=0pt, rounded corners=1.6pt, draw=tncol33415c, line width=0.6pt, fill=tncolf6cf98, text=black, font=\scriptsize, inner sep=0pt] (t0) at (1.8,-0.25) {$P_{1}$};
\node[rectangle, minimum width=4.5cm, minimum height=0.4cm, inner sep=2pt, outer sep=0pt, rounded corners=1.6pt, draw=tncol33415c, line width=0.6pt, fill=tncolf6cf98, text=black, font=\scriptsize, inner sep=0pt] (t1) at (1.8,-0.75) {$P_{2}$};
\node[rectangle, minimum width=4.5cm, minimum height=0.4cm, inner sep=2pt, outer sep=0pt, rounded corners=1.6pt, draw=tncol33415c, line width=0.6pt, fill=tncolf6cf98, text=black, font=\scriptsize, inner sep=0pt] (t2) at (1.8,-1.25) {$P_{3}$};
\node[rectangle, minimum width=4.5cm, minimum height=0.4cm, inner sep=2pt, outer sep=0pt, rounded corners=1.6pt, draw=tncol33415c, line width=0.6pt, fill=tncolf6cf98, text=black, font=\scriptsize, inner sep=0pt] (t3) at (1.8,-1.75) {$P_{4}$};
\end{tikzpicture}%
}}}%
\endgroup}%
}

%% file: figures/pairing_e1_w2.tex
\providecommand{\PairingEOneW}[1][1]{%
\ensuremath{\begingroup\color{black}%
\vcenter{\hbox{%
\scalebox{#1}{%
\begin{tikzpicture}[baseline=(current bounding box.center)]
\node[circle, minimum size=4mm, inner sep=0pt, outer sep=0pt, draw=tncol273043, line width=0.45pt, fill=tncolffadad, text=tncol273043, font=\tiny, inner sep=0pt] (t0) at (0,-0.7) {$1$};
\node[circle, minimum size=4mm, inner sep=0pt, outer sep=0pt, draw=tncol273043, line width=0.45pt, fill=tncolffadad, text=tncol273043, font=\tiny, inner sep=0pt] (t1) at (0,-1.4) {$2$};
\node[circle, minimum size=1mm, inner sep=0pt, outer sep=0pt, draw=tncol33415c, line width=0.6pt, fill=tncole8edf4, text=black] (t2) at (0,-2.1) {};
\node[circle, minimum size=1mm, inner sep=0pt, outer sep=0pt, draw=tncol33415c, line width=0.6pt, fill=tncole8edf4, text=black] (t3) at (0,-2.8) {};
\node[circle, minimum size=1mm, inner sep=0pt, outer sep=0pt, draw=tncol33415c, line width=0.6pt, fill=tncole8edf4, text=black] (t4) at (0.7,-0.7) {};
\node[circle, minimum size=1mm, inner sep=0pt, outer sep=0pt, draw=tncol33415c, line width=0.6pt, fill=tncole8edf4, text=black] (t5) at (0.7,-1.4) {};
\node[circle, minimum size=4mm, inner sep=0pt, outer sep=0pt, draw=tncol273043, line width=0.45pt, fill=tncolffd6a5, text=tncol273043, font=\tiny, inner sep=0pt] (t6) at (0.7,-2.1) {$3$};
\node[circle, minimum size=4mm, inner sep=0pt, outer sep=0pt, draw=tncol273043, line width=0.45pt, fill=tncolffd6a5, text=tncol273043, font=\tiny, inner sep=0pt] (t7) at (0.7,-2.8) {$4$};
\draw[draw=tncol4c78a8, line width=0.85pt] (t0.-90) to[bend left=60] (t1.90);
\draw[draw=tncol4c78a8, line width=0.85pt] (t6.-90) to[bend right=60] (t7.90);
\node[anchor=center,font={\small},text=tncol4c78a8] at (0.350,-0.350) {$e_{1}$};
\node[anchor=center,font={\tiny},text=tncol6b7280] at (0.350,-3.255) {$1100\mid 0011$};
\end{tikzpicture}%
}}}%
\endgroup}%
}

%% file: figures/pairing_e2_w2.tex
\providecommand{\PairingETwoW}[1][1]{%
\ensuremath{\begingroup\color{black}%
\vcenter{\hbox{%
\scalebox{#1}{%
\begin{tikzpicture}[baseline=(current bounding box.center)]
\node[circle, minimum size=4mm, inner sep=0pt, outer sep=0pt, draw=tncol273043, line width=0.45pt, fill=tncolfdffb6, text=tncol273043, font=\tiny, inner sep=0pt] (t0) at (0,-0.7) {$1$};
\node[circle, minimum size=1mm, inner sep=0pt, outer sep=0pt, draw=tncol33415c, line width=0.6pt, fill=tncole8edf4, text=black] (t1) at (0,-1.4) {};
\node[circle, minimum size=4mm, inner sep=0pt, outer sep=0pt, draw=tncol273043, line width=0.45pt, fill=tncolfdffb6, text=tncol273043, font=\tiny, inner sep=0pt] (t2) at (0,-2.1) {$3$};
\node[circle, minimum size=1mm, inner sep=0pt, outer sep=0pt, draw=tncol33415c, line width=0.6pt, fill=tncole8edf4, text=black] (t3) at (0,-2.8) {};
\node[circle, minimum size=1mm, inner sep=0pt, outer sep=0pt, draw=tncol33415c, line width=0.6pt, fill=tncole8edf4, text=black] (t4) at (0.7,-0.7) {};
\node[circle, minimum size=4mm, inner sep=0pt, outer sep=0pt, draw=tncol273043, line width=0.45pt, fill=tncolcaffbf, text=tncol273043, font=\tiny, inner sep=0pt] (t5) at (0.7,-1.4) {$2$};
\node[circle, minimum size=1mm, inner sep=0pt, outer sep=0pt, draw=tncol33415c, line width=0.6pt, fill=tncole8edf4, text=black] (t6) at (0.7,-2.1) {};
\node[circle, minimum size=4mm, inner sep=0pt, outer sep=0pt, draw=tncol273043, line width=0.45pt, fill=tncolcaffbf, text=tncol273043, font=\tiny, inner sep=0pt] (t7) at (0.7,-2.8) {$4$};
\draw[draw=tncolf28e2b, line width=0.85pt] (t0.-90) to[bend left=60] (t2.90);
\draw[draw=tncolf28e2b, line width=0.85pt] (t5.-90) to[bend right=60] (t7.90);
\node[anchor=center,font={\small},text=tncolf28e2b] at (0.350,-0.350) {$e_{2}$};
\node[anchor=center,font={\tiny},text=tncol6b7280] at (0.350,-3.255) {$1010\mid 0101$};
\end{tikzpicture}%
}}}%
\endgroup}%
}

%% file: figures/pairing_e3_w2.tex
\providecommand{\PairingEThreeW}[1][1]{%
\ensuremath{\begingroup\color{black}%
\vcenter{\hbox{%
\scalebox{#1}{%
\begin{tikzpicture}[baseline=(current bounding box.center)]
\node[circle, minimum size=4mm, inner sep=0pt, outer sep=0pt, draw=tncol273043, line width=0.45pt, fill=tncol9bf6ff, text=tncol273043, font=\tiny, inner sep=0pt] (t0) at (0,-0.7) {$1$};
\node[circle, minimum size=1mm, inner sep=0pt, outer sep=0pt, draw=tncol33415c, line width=0.6pt, fill=tncole8edf4, text=black] (t1) at (0,-1.4) {};
\node[circle, minimum size=1mm, inner sep=0pt, outer sep=0pt, draw=tncol33415c, line width=0.6pt, fill=tncole8edf4, text=black] (t2) at (0,-2.1) {};
\node[circle, minimum size=4mm, inner sep=0pt, outer sep=0pt, draw=tncol273043, line width=0.45pt, fill=tncol9bf6ff, text=tncol273043, font=\tiny, inner sep=0pt] (t3) at (0,-2.8) {$4$};
\node[circle, minimum size=1mm, inner sep=0pt, outer sep=0pt, draw=tncol33415c, line width=0.6pt, fill=tncole8edf4, text=black] (t4) at (0.7,-0.7) {};
\node[circle, minimum size=4mm, inner sep=0pt, outer sep=0pt, draw=tncol273043, line width=0.45pt, fill=tncola0c4ff, text=tncol273043, font=\tiny, inner sep=0pt] (t5) at (0.7,-1.4) {$2$};
\node[circle, minimum size=4mm, inner sep=0pt, outer sep=0pt, draw=tncol273043, line width=0.45pt, fill=tncola0c4ff, text=tncol273043, font=\tiny, inner sep=0pt] (t6) at (0.7,-2.1) {$3$};
\node[circle, minimum size=1mm, inner sep=0pt, outer sep=0pt, draw=tncol33415c, line width=0.6pt, fill=tncole8edf4, text=black] (t7) at (0.7,-2.8) {};
\draw[draw=tncol2a9d8f, line width=0.85pt] (t0.-90) to[bend left=60] (t3.90);
\draw[draw=tncol2a9d8f, line width=0.85pt] (t5.-90) to[bend right=60] (t6.90);
\node[anchor=center,font={\small},text=tncol2a9d8f] at (0.350,-0.350) {$e_{3}$};
\node[anchor=center,font={\tiny},text=tncol6b7280] at (0.350,-3.255) {$1001\mid 0110$};
\end{tikzpicture}%
}}}%
\endgroup}%
}

%% file: figures/pairing_e1_w4.tex
\providecommand{\PairingEOneWW}[1][1]{%
\ensuremath{\begingroup\color{black}%
\vcenter{\hbox{%
\scalebox{#1}{%
\begin{tikzpicture}[baseline=(current bounding box.center)]
\node[circle, minimum size=4mm, inner sep=0pt, outer sep=0pt, draw=tncol273043, line width=0.45pt, fill=tncol2a9d8f, text=tncol273043, font=\tiny, inner sep=0pt] (t0) at (0,-0.7) {$1$};
\node[circle, minimum size=4mm, inner sep=0pt, outer sep=0pt, draw=tncol273043, line width=0.45pt, fill=tncol2a9d8f, text=tncol273043, font=\tiny, inner sep=0pt] (t1) at (0,-1.4) {$2$};
\node[circle, minimum size=4mm, inner sep=0pt, outer sep=0pt, draw=tncol273043, line width=0.45pt, fill=tncol2a9d8f, text=tncol273043, font=\tiny, inner sep=0pt] (t2) at (0,-2.1) {$3$};
\node[circle, minimum size=4mm, inner sep=0pt, outer sep=0pt, draw=tncol273043, line width=0.45pt, fill=tncol2a9d8f, text=tncol273043, font=\tiny, inner sep=0pt] (t3) at (0,-2.8) {$4$};
\draw[draw=tncol4c78a8, line width=0.9pt] (t0.-90) to[bend left=60] (t1.90);
\draw[draw=tncol4c78a8, line width=0.9pt] (t2.-90) to[bend right=60] (t3.90);
\node[anchor=center,font={\tiny},text=tncol6b7280] at (0.000,-3.255) {$1111$};
\begin{pgfonlayer}{background}
\node[draw=tncol4c78a8, line width=0.6pt, inner sep=3pt, fill=tncolf3e8b6, fill opacity=0.12, rounded corners=3pt, fit=(t0)(t1)(t2)(t3)] {};
\end{pgfonlayer}
\end{tikzpicture}%
}}}%
\endgroup}%
}

%% file: figures/pairing_e2_w4.tex
\providecommand{\PairingETwoWW}[1][1]{%
\ensuremath{\begingroup\color{black}%
\vcenter{\hbox{%
\scalebox{#1}{%
\begin{tikzpicture}[baseline=(current bounding box.center)]
\node[circle, minimum size=4mm, inner sep=0pt, outer sep=0pt, draw=tncol273043, line width=0.45pt, fill=tncol2a9d8f, text=tncol273043, font=\tiny, inner sep=0pt] (t0) at (0,-0.7) {$1$};
\node[circle, minimum size=4mm, inner sep=0pt, outer sep=0pt, draw=tncol273043, line width=0.45pt, fill=tncol2a9d8f, text=tncol273043, font=\tiny, inner sep=0pt] (t1) at (0,-1.4) {$2$};
\node[circle, minimum size=4mm, inner sep=0pt, outer sep=0pt, draw=tncol273043, line width=0.45pt, fill=tncol2a9d8f, text=tncol273043, font=\tiny, inner sep=0pt] (t2) at (0,-2.1) {$3$};
\node[circle, minimum size=4mm, inner sep=0pt, outer sep=0pt, draw=tncol273043, line width=0.45pt, fill=tncol2a9d8f, text=tncol273043, font=\tiny, inner sep=0pt] (t3) at (0,-2.8) {$4$};
\draw[draw=tncolf28e2b, line width=0.9pt] (t0.-90) to[bend left=60] (t2.90);
\draw[draw=tncolf28e2b, line width=0.9pt] (t1.-90) to[bend right=60] (t3.90);
\node[anchor=center,font={\tiny},text=tncol6b7280] at (0.000,-3.255) {$1111$};
\begin{pgfonlayer}{background}
\node[draw=tncolf28e2b, line width=0.6pt, inner sep=3pt, fill=tncolf3e8b6, fill opacity=0.12, rounded corners=3pt, fit=(t0)(t1)(t2)(t3)] {};
\end{pgfonlayer}
\end{tikzpicture}%
}}}%
\endgroup}%
}

%% file: figures/pairing_e3_w4.tex
\providecommand{\PairingEThreeWW}[1][1]{%
\ensuremath{\begingroup\color{black}%
\vcenter{\hbox{%
\scalebox{#1}{%
\begin{tikzpicture}[baseline=(current bounding box.center)]
\node[circle, minimum size=4mm, inner sep=0pt, outer sep=0pt, draw=tncol273043, line width=0.45pt, fill=tncol2a9d8f, text=tncol273043, font=\tiny, inner sep=0pt] (t0) at (0,-0.7) {$1$};
\node[circle, minimum size=4mm, inner sep=0pt, outer sep=0pt, draw=tncol273043, line width=0.45pt, fill=tncol2a9d8f, text=tncol273043, font=\tiny, inner sep=0pt] (t1) at (0,-1.4) {$2$};
\node[circle, minimum size=4mm, inner sep=0pt, outer sep=0pt, draw=tncol273043, line width=0.45pt, fill=tncol2a9d8f, text=tncol273043, font=\tiny, inner sep=0pt] (t2) at (0,-2.1) {$3$};
\node[circle, minimum size=4mm, inner sep=0pt, outer sep=0pt, draw=tncol273043, line width=0.45pt, fill=tncol2a9d8f, text=tncol273043, font=\tiny, inner sep=0pt] (t3) at (0,-2.8) {$4$};
\draw[draw=tncol2a9d8f, line width=0.9pt] (t0.-90) to[bend left=60] (t3.90);
\draw[draw=tncol2a9d8f, line width=0.9pt] (t1.-90) to[bend right=60] (t2.90);
\node[anchor=center,font={\tiny},text=tncol6b7280] at (0.000,-3.255) {$1111$};
\begin{pgfonlayer}{background}
\node[draw=tncol2a9d8f, line width=0.6pt, inner sep=3pt, fill=tncolf3e8b6, fill opacity=0.12, rounded corners=3pt, fit=(t0)(t1)(t2)(t3)] {};
\end{pgfonlayer}
\end{tikzpicture}%
}}}%
\endgroup}%
}

%% file: figures/contingency_inputs.tex
\input{figures/contingency_assembly_aligned.tex}
\input{figures/contingency_assembly_arrows.tex}
\input{figures/contingency_assembly_sorted.tex}

\input{figures/contingency_blocks_aligned.tex}
\input{figures/contingency_blocks_exchange_after.tex}
\input{figures/contingency_blocks_exchange_before.tex}
\input{figures/contingency_blocks_invalid_after.tex}
\input{figures/contingency_blocks_invalid_before.tex}
\input{figures/contingency_blocks_repair_after.tex}
\input{figures/contingency_blocks_repair_before.tex}

\input{figures/contingency_colour_key.tex}
\input{figures/contingency_colour_strings_aligned.tex}
\input{figures/contingency_colour_strings_sorted.tex}
\input{figures/contingency_colour_table_aligned.tex}
\input{figures/contingency_colour_table_sorted.tex}

\input{figures/contingency_eg_consolid.tex}
\input{figures/contingency_eg_exp.tex}
\input{figures/contingency_eg_exp_long.tex}
\input{figures/contingency_eg_exp_orbit.tex}
\input{figures/contingency_eg_orbit_consolid.tex}
\input{figures/contingency_eg_pairing_table.tex}
\input{figures/contingency_eg_pairings_aligned.tex}
\input{figures/contingency_eg_pairings_partial.tex}
\input{figures/contingency_eg_pairings_scrambled.tex}
\input{figures/contingency_eg_pairings_sorted.tex}
\input{figures/contingency_eg_strings_canonical.tex}
\input{figures/contingency_eg_strings_ordered.tex}

%% file: figures/contingency_assembly_aligned.tex
\providecommand{\ContAssemblyAligned}[1][1]{%
\ensuremath{\begingroup\color{black}%
\vcenter{\hbox{%
\scalebox{#1}{%
\begin{tikzpicture}[baseline=(current bounding box.center)]
\node[circle, minimum size=4mm, inner sep=0pt, outer sep=0pt, draw=tncolf4a261, line width=1.6pt, fill=tncolffadad] (t0) at (0,-0) {};
\node[circle, minimum size=4mm, inner sep=0pt, outer sep=0pt, draw=tncol8ecae6, line width=1.6pt, fill=tncolcaffbf] (t1) at (0.7,-0) {};
\node[circle, minimum size=4mm, inner sep=0pt, outer sep=0pt, draw=tncolb5e48c, line width=1.6pt, fill=tncola0c4ff] (t2) at (1.4,-0) {};
\node[circle, minimum size=4mm, inner sep=0pt, outer sep=0pt, draw=tncolf4a261, line width=1.6pt, fill=tncol9bf6ff] (t3) at (0,-0.7) {};
\node[circle, minimum size=4mm, inner sep=0pt, outer sep=0pt, draw=tncol8ecae6, line width=1.6pt, fill=tncolffd6a5] (t4) at (0.7,-0.7) {};
\node[circle, minimum size=4mm, inner sep=0pt, outer sep=0pt, draw=tncolb5e48c, line width=1.6pt, fill=tncolfdffb6] (t5) at (1.4,-0.7) {};
\node[circle, minimum size=4mm, inner sep=0pt, outer sep=0pt, draw=tncolf4a261, line width=1.6pt, fill=tncolffadad] (t6) at (0,-1.4) {};
\node[circle, minimum size=4mm, inner sep=0pt, outer sep=0pt, draw=tncol8ecae6, line width=1.6pt, fill=tncolffd6a5] (t7) at (0.7,-1.4) {};
\node[circle, minimum size=4mm, inner sep=0pt, outer sep=0pt, draw=tncolb5e48c, line width=1.6pt, fill=tncolfdffb6] (t8) at (1.4,-1.4) {};
\node[circle, minimum size=4mm, inner sep=0pt, outer sep=0pt, draw=tncolf4a261, line width=1.6pt, fill=tncol9bf6ff] (t9) at (0,-2.1) {};
\node[circle, minimum size=4mm, inner sep=0pt, outer sep=0pt, draw=tncol8ecae6, line width=1.6pt, fill=tncolcaffbf] (t10) at (0.7,-2.1) {};
\node[circle, minimum size=4mm, inner sep=0pt, outer sep=0pt, draw=tncolb5e48c, line width=1.6pt, fill=tncola0c4ff] (t11) at (1.4,-2.1) {};
\node[inner sep=0pt, minimum size=0pt, outer sep=0pt] (t12) at (0,0.5) {};
\draw[draw=tncol555555, line width=0.6pt] (t0.-90) to[bend right=67.4024] (t6.90);
\draw[draw=tncol555555, line width=0.6pt] (t4.-90) -- (t7.90);
\draw[draw=tncol555555, line width=0.6pt] (t5.-90) -- (t8.90);
\draw[draw=tncol555555, line width=0.6pt] (t1.-90) to[bend right=37.9871] (t10.90);
\draw[draw=tncol555555, line width=0.6pt] (t3.-90) to[bend left=67.4024] (t9.90);
\draw[draw=tncol555555, line width=0.6pt] (t2.-90) to[bend right=37.9871] (t11.90);
\node[anchor=center, inner sep=1pt, font=\small] at (0.000,-2.600) {$e_{2}$};
\node[anchor=center, inner sep=1pt, font=\small] at (0.700,-2.600) {$e_{3}$};
\node[anchor=center, inner sep=1pt, font=\small] at (1.400,-2.600) {$e_{3}$};
\end{tikzpicture}%
}}}%
\endgroup}%
}

%% file: figures/contingency_assembly_arrows.tex
\providecommand{\ContAssemblyArrows}[1][1]{%
\ensuremath{\begingroup\color{black}%
\vcenter{\hbox{%
\scalebox{#1}{%
\begin{tikzpicture}[baseline=(current bounding box.center)]
\node[inner sep=0pt, minimum size=0pt, outer sep=0pt] (t0) at (0,0.35) {};
\node[inner sep=0pt, minimum size=0pt, outer sep=0pt] (t1) at (0,-2.45) {};
\draw[tncol22223b, line width=0.6pt, -{Stealth[length=3.5pt]}] (0,0.000) -- (1.100,0.000);
\draw[tncol22223b, line width=0.6pt, -{Stealth[length=3.5pt]}] (0,-0.700) -- (1.100,-0.700);
\draw[tncol22223b, line width=0.6pt, -{Stealth[length=3.5pt]}] (0,-1.400) -- (1.100,-1.400);
\draw[tncol22223b, line width=0.6pt, -{Stealth[length=3.5pt]}] (0,-2.100) -- (1.100,-2.100);
\node[anchor=west, inner sep=1pt, font=\scriptsize] at (1.220,0.000) {$(+)$};
\node[anchor=south, inner sep=1pt, font=\scriptsize] at (0.550,0.120) {$\sigma_{1}=123$};
\node[anchor=west, inner sep=1pt, font=\scriptsize] at (1.220,-0.700) {$(+)$};
\node[anchor=south, inner sep=1pt, font=\scriptsize] at (0.550,-0.580) {$\sigma_{2}=231$};
\node[anchor=west, inner sep=1pt, font=\scriptsize] at (1.220,-1.400) {$(+)$};
\node[anchor=south, inner sep=1pt, font=\scriptsize] at (0.550,-1.280) {$\sigma_{3}=123$};
\node[anchor=west, inner sep=1pt, font=\scriptsize] at (1.220,-2.100) {$(-)$};
\node[anchor=south, inner sep=1pt, font=\scriptsize] at (0.550,-1.980) {$\sigma_{4}=213$};
\end{tikzpicture}%
}}}%
\endgroup}%
}

%% file: figures/contingency_assembly_sorted.tex
\providecommand{\ContAssemblySorted}[1][1]{%
\ensuremath{\begingroup\color{black}%
\vcenter{\hbox{%
\scalebox{#1}{%
\begin{tikzpicture}[baseline=(current bounding box.center)]
\node[circle, minimum size=4mm, inner sep=0pt, outer sep=0pt, draw=tncolf4a261, line width=1.6pt, fill=tncolffadad] (t0) at (0,-0) {};
\node[circle, minimum size=4mm, inner sep=0pt, outer sep=0pt, draw=tncol8ecae6, line width=1.6pt, fill=tncolcaffbf] (t1) at (0.7,-0) {};
\node[circle, minimum size=4mm, inner sep=0pt, outer sep=0pt, draw=tncolb5e48c, line width=1.6pt, fill=tncola0c4ff] (t2) at (1.4,-0) {};
\node[circle, minimum size=4mm, inner sep=0pt, outer sep=0pt, draw=tncol8ecae6, line width=1.6pt, fill=tncolffd6a5] (t3) at (0,-0.7) {};
\node[circle, minimum size=4mm, inner sep=0pt, outer sep=0pt, draw=tncolb5e48c, line width=1.6pt, fill=tncolfdffb6] (t4) at (0.7,-0.7) {};
\node[circle, minimum size=4mm, inner sep=0pt, outer sep=0pt, draw=tncolf4a261, line width=1.6pt, fill=tncol9bf6ff] (t5) at (1.4,-0.7) {};
\node[circle, minimum size=4mm, inner sep=0pt, outer sep=0pt, draw=tncolf4a261, line width=1.6pt, fill=tncolffadad] (t6) at (0,-1.4) {};
\node[circle, minimum size=4mm, inner sep=0pt, outer sep=0pt, draw=tncol8ecae6, line width=1.6pt, fill=tncolffd6a5] (t7) at (0.7,-1.4) {};
\node[circle, minimum size=4mm, inner sep=0pt, outer sep=0pt, draw=tncolb5e48c, line width=1.6pt, fill=tncolfdffb6] (t8) at (1.4,-1.4) {};
\node[circle, minimum size=4mm, inner sep=0pt, outer sep=0pt, draw=tncol8ecae6, line width=1.6pt, fill=tncolcaffbf] (t9) at (0,-2.1) {};
\node[circle, minimum size=4mm, inner sep=0pt, outer sep=0pt, draw=tncolf4a261, line width=1.6pt, fill=tncol9bf6ff] (t10) at (0.7,-2.1) {};
\node[circle, minimum size=4mm, inner sep=0pt, outer sep=0pt, draw=tncolb5e48c, line width=1.6pt, fill=tncola0c4ff] (t11) at (1.4,-2.1) {};
\draw[draw=tncol555555, line width=0.6pt] (t0.-90) to[bend right=67.4024] (t6.90);
\draw[draw=tncol555555, line width=0.6pt] (t3.-45) -- (t7.135);
\draw[draw=tncol555555, line width=0.6pt] (t4.-45) -- (t8.135);
\draw[draw=tncol555555, line width=0.6pt] (t1.-108) -- (t9.72);
\draw[draw=tncol555555, line width=0.6pt] (t5.-117) -- (t10.63);
\draw[draw=tncol555555, line width=0.6pt] (t2.-90) to[bend right=37.9871] (t11.90);
\end{tikzpicture}%
}}}%
\endgroup}%
}

%% file: figures/contingency_blocks_repair_after.tex
\providecommand{\ContBlocksRepairAfter}[1][1]{%
\ensuremath{\begingroup\color{black}%
\vcenter{\hbox{%
\scalebox{#1}{%
\begin{tikzpicture}[baseline=(current bounding box.center)]
\node[circle, minimum size=4mm, inner sep=0pt, outer sep=0pt, draw=tncol22223b, line width=0.6pt, fill=tncolffadad] (t0) at (0,-0) {};
\node[circle, minimum size=4mm, inner sep=0pt, outer sep=0pt, draw=tncol22223b, line width=0.6pt, fill=tncolcaffbf] (t1) at (0.7,-0) {};
\node[circle, minimum size=4mm, inner sep=0pt, outer sep=0pt, draw=tncol22223b, line width=0.6pt, fill=tncola0c4ff] (t2) at (1.4,-0) {};
\node[circle, minimum size=4mm, inner sep=0pt, outer sep=0pt, draw=tncol22223b, line width=0.6pt, fill=tncol9bf6ff] (t3) at (0,-0.7) {};
\node[circle, minimum size=4mm, inner sep=0pt, outer sep=0pt, draw=tncol22223b, line width=0.6pt, fill=tncolfdffb6] (t4) at (0.7,-0.7) {};
\node[circle, minimum size=4mm, inner sep=0pt, outer sep=0pt, draw=tncol22223b, line width=0.6pt, fill=tncolffd6a5] (t5) at (1.4,-0.7) {};
\node[circle, minimum size=4mm, inner sep=0pt, outer sep=0pt, draw=tncol22223b, line width=0.6pt, fill=tncolffadad] (t6) at (0,-1.4) {};
\node[circle, minimum size=4mm, inner sep=0pt, outer sep=0pt, draw=tncol22223b, line width=0.6pt, fill=tncolfdffb6] (t7) at (0.7,-1.4) {};
\node[circle, minimum size=4mm, inner sep=0pt, outer sep=0pt, draw=tncol22223b, line width=0.6pt, fill=tncolffd6a5] (t8) at (1.4,-1.4) {};
\node[circle, minimum size=4mm, inner sep=0pt, outer sep=0pt, draw=tncol22223b, line width=0.6pt, fill=tncol9bf6ff] (t9) at (0,-2.1) {};
\node[circle, minimum size=4mm, inner sep=0pt, outer sep=0pt, draw=tncol22223b, line width=0.6pt, fill=tncolcaffbf] (t10) at (0.7,-2.1) {};
\node[circle, minimum size=4mm, inner sep=0pt, outer sep=0pt, draw=tncol22223b, line width=0.6pt, fill=tncola0c4ff] (t11) at (1.4,-2.1) {};
\node[inner sep=0pt, minimum size=0pt, outer sep=0pt] (t12) at (0,0.5) {};
\draw[draw=tncol555555, line width=0.6pt] (t0.-90) to[bend right=67.4024] (t6.90);
\draw[draw=tncol555555, line width=0.6pt] (t5.-90) -- (t8.90);
\draw[draw=tncol555555, line width=0.6pt] (t4.-90) -- (t7.90);
\draw[draw=tncol555555, line width=0.6pt] (t1.-90) to[bend right=37.9871] (t10.90);
\draw[draw=tncol555555, line width=0.6pt] (t3.-90) to[bend left=67.4024] (t9.90);
\draw[draw=tncol555555, line width=0.6pt] (t2.-90) to[bend right=37.9871] (t11.90);
\node[anchor=center, inner sep=1pt, font=\small] at (0.000,-2.600) {$e_{2}$};
\node[anchor=center, inner sep=1pt, font=\small] at (0.700,-2.600) {$e_{3}$};
\node[anchor=center, inner sep=1pt, font=\small] at (1.400,-2.600) {$e_{3}$};
\end{tikzpicture}%
}}}%
\endgroup}%
}

%% file: figures/contingency_blocks_repair_before.tex
\providecommand{\ContBlocksRepairBefore}[1][1]{%
\ensuremath{\begingroup\color{black}%
\vcenter{\hbox{%
\scalebox{#1}{%
\begin{tikzpicture}[baseline=(current bounding box.center)]
\node[circle, minimum size=4mm, inner sep=0pt, outer sep=0pt, draw=tncol22223b, line width=0.6pt, fill=tncolffadad] (t0) at (0,-0) {};
\node[circle, minimum size=4mm, inner sep=0pt, outer sep=0pt, draw=tncol22223b, line width=0.6pt, fill=tncolcaffbf] (t1) at (0.7,-0) {};
\node[circle, minimum size=4mm, inner sep=0pt, outer sep=0pt, draw=tncol22223b, line width=0.6pt, fill=tncola0c4ff] (t2) at (1.4,-0) {};
\node[circle, minimum size=4mm, inner sep=0pt, outer sep=0pt, draw=tncol22223b, line width=0.6pt, fill=tncol9bf6ff] (t3) at (0,-0.7) {};
\node[circle, minimum size=4mm, inner sep=0pt, outer sep=0pt, draw=tncol22223b, line width=0.6pt, fill=tncolffd6a5] (t4) at (0.7,-0.7) {};
\node[circle, minimum size=4mm, inner sep=0pt, outer sep=0pt, draw=tncol22223b, line width=0.6pt, fill=tncolfdffb6] (t5) at (1.4,-0.7) {};
\node[circle, minimum size=4mm, inner sep=0pt, outer sep=0pt, draw=tncol22223b, line width=0.6pt, fill=tncolffadad] (t6) at (0,-1.4) {};
\node[circle, minimum size=4mm, inner sep=0pt, outer sep=0pt, draw=tncol22223b, line width=0.6pt, fill=tncolffd6a5] (t7) at (0.7,-1.4) {};
\node[circle, minimum size=4mm, inner sep=0pt, outer sep=0pt, draw=tncol22223b, line width=0.6pt, fill=tncolfdffb6] (t8) at (1.4,-1.4) {};
\node[circle, minimum size=4mm, inner sep=0pt, outer sep=0pt, draw=tncol22223b, line width=0.6pt, fill=tncol9bf6ff] (t9) at (0,-2.1) {};
\node[circle, minimum size=4mm, inner sep=0pt, outer sep=0pt, draw=tncol22223b, line width=0.6pt, fill=tncolcaffbf] (t10) at (0.7,-2.1) {};
\node[circle, minimum size=4mm, inner sep=0pt, outer sep=0pt, draw=tncol22223b, line width=0.6pt, fill=tncola0c4ff] (t11) at (1.4,-2.1) {};
\node[inner sep=0pt, minimum size=0pt, outer sep=0pt] (t12) at (0,0.5) {};
\draw[draw=tncol555555, line width=0.6pt] (t0.-90) to[bend right=67.4024] (t6.90);
\draw[draw=tncol555555, line width=0.6pt] (t4.-90) -- (t7.90);
\draw[draw=tncol555555, line width=0.6pt] (t5.-90) -- (t8.90);
\draw[draw=tncol555555, line width=0.6pt] (t1.-90) to[bend right=37.9871] (t10.90);
\draw[draw=tncol555555, line width=0.6pt] (t3.-90) to[bend left=67.4024] (t9.90);
\draw[draw=tncol555555, line width=0.6pt] (t2.-90) to[bend right=37.9871] (t11.90);
\node[anchor=center, inner sep=1pt, font=\small] at (0.000,-2.600) {$e_{2}$};
\node[anchor=center, inner sep=1pt, font=\small] at (0.700,-2.600) {$e_{3}$};
\node[anchor=center, inner sep=1pt, font=\small] at (1.400,-2.600) {$e_{3}$};
\begin{pgfonlayer}{background}
\node[draw=tncol9b2226, line width=0.6pt, inner sep=2.5pt, fill=tncol9b2226, fill opacity=0.18, rounded corners=4pt, fit=(t4)(t5)] {};
\node[draw=tncol9b2226, line width=0.6pt, inner sep=2.5pt, fill=tncol9b2226, fill opacity=0.18, rounded corners=4pt, fit=(t7)(t8)] {};
\end{pgfonlayer}
\end{tikzpicture}%
}}}%
\endgroup}%
}

%% file: figures/contingency_colour_strings_aligned.tex
\providecommand{\ContColourStringsAligned}[1][1]{%
\ensuremath{\begingroup\color{black}%
\vcenter{\hbox{%
\scalebox{#1}{%
\begin{tikzpicture}[baseline=(current bounding box.center)]
\node[rectangle, minimum size=9mm, inner sep=2pt, outer sep=0pt, rounded corners=1.6pt, draw=black, line width=0.6pt, fill=tncol2281aa, text=black, minimum height=2mm, minimum width=15mm, font=\small] (t0) at (0,0) {$\gamma_{\mu_B^{(1)}}$};
\node[isosceles triangle, minimum width=4.2mm, minimum height=2.5mm, inner sep=0pt, outer sep=0pt, isosceles triangle stretches, shape border rotate=90, draw=tncolf4a261, line width=1.0, fill=tncolffadad, text=black] (t1) at (-0.5,0.45) {};
\node[isosceles triangle, minimum width=4.2mm, minimum height=2.5mm, inner sep=0pt, outer sep=0pt, isosceles triangle stretches, shape border rotate=90, draw=tncol8ecae6, line width=1.0, fill=tncolcaffbf, text=black] (t2) at (0,0.45) {};
\node[isosceles triangle, minimum width=4.2mm, minimum height=2.5mm, inner sep=0pt, outer sep=0pt, isosceles triangle stretches, shape border rotate=90, draw=tncolb5e48c, line width=1.0, fill=tncola0c4ff, text=black] (t3) at (0.5,0.45) {};
\node[rectangle, minimum size=9mm, inner sep=2pt, outer sep=0pt, rounded corners=1.6pt, draw=black, line width=0.6pt, fill=tncol2281aa, text=black, minimum height=2mm, minimum width=15mm, font=\small] (t4) at (0,-1.25) {$\gamma_{\mu_B^{(2)}}$};
\node[isosceles triangle, minimum width=4.2mm, minimum height=2.5mm, inner sep=0pt, outer sep=0pt, isosceles triangle stretches, shape border rotate=90, draw=tncolf4a261, line width=1.0, fill=tncol9bf6ff, text=black] (t5) at (-0.5,-0.8) {};
\node[isosceles triangle, minimum width=4.2mm, minimum height=2.5mm, inner sep=0pt, outer sep=0pt, isosceles triangle stretches, shape border rotate=90, draw=tncol8ecae6, line width=1.0, fill=tncolffd6a5, text=black] (t6) at (0,-0.8) {};
\node[isosceles triangle, minimum width=4.2mm, minimum height=2.5mm, inner sep=0pt, outer sep=0pt, isosceles triangle stretches, shape border rotate=90, draw=tncolb5e48c, line width=1.0, fill=tncolfdffb6, text=black] (t7) at (0.5,-0.8) {};
\node[rectangle, minimum size=9mm, inner sep=2pt, outer sep=0pt, rounded corners=1.6pt, draw=black, line width=0.6pt, fill=tncol2281aa, text=black, minimum height=2mm, minimum width=15mm, font=\small] (t8) at (0,-2.5) {$\gamma_{\mu_B^{(3)}}$};
\node[isosceles triangle, minimum width=4.2mm, minimum height=2.5mm, inner sep=0pt, outer sep=0pt, isosceles triangle stretches, shape border rotate=90, draw=tncolf4a261, line width=1.0, fill=tncolffadad, text=black] (t9) at (-0.5,-2.05) {};
\node[isosceles triangle, minimum width=4.2mm, minimum height=2.5mm, inner sep=0pt, outer sep=0pt, isosceles triangle stretches, shape border rotate=90, draw=tncol8ecae6, line width=1.0, fill=tncolffd6a5, text=black] (t10) at (0,-2.05) {};
\node[isosceles triangle, minimum width=4.2mm, minimum height=2.5mm, inner sep=0pt, outer sep=0pt, isosceles triangle stretches, shape border rotate=90, draw=tncolb5e48c, line width=1.0, fill=tncolfdffb6, text=black] (t11) at (0.5,-2.05) {};
\node[rectangle, minimum size=9mm, inner sep=2pt, outer sep=0pt, rounded corners=1.6pt, draw=black, line width=0.6pt, fill=tncol2281aa, text=black, minimum height=2mm, minimum width=15mm, font=\small] (t12) at (0,-3.75) {$\gamma_{\mu_B^{(4)}}$};
\node[isosceles triangle, minimum width=4.2mm, minimum height=2.5mm, inner sep=0pt, outer sep=0pt, isosceles triangle stretches, shape border rotate=90, draw=tncolf4a261, line width=1.0, fill=tncol9bf6ff, text=black] (t13) at (-0.5,-3.3) {};
\node[isosceles triangle, minimum width=4.2mm, minimum height=2.5mm, inner sep=0pt, outer sep=0pt, isosceles triangle stretches, shape border rotate=90, draw=tncol8ecae6, line width=1.0, fill=tncolcaffbf, text=black] (t14) at (0,-3.3) {};
\node[isosceles triangle, minimum width=4.2mm, minimum height=2.5mm, inner sep=0pt, outer sep=0pt, isosceles triangle stretches, shape border rotate=90, draw=tncolb5e48c, line width=1.0, fill=tncola0c4ff, text=black] (t15) at (0.5,-3.3) {};
\node[inner sep=0pt, minimum size=0pt, outer sep=0pt] (t16) at (0,-4.9) {};
\draw[draw=tncol33415c, line width=0.6pt] ($(t0.north) + (0:-0.5)$) -- (t1.270);
\draw[draw=tncol33415c, line width=0.6pt] (t0.north) -- (t2.270);
\draw[draw=tncol33415c, line width=0.6pt] ($(t0.north) + (0:0.5)$) -- (t3.270);
\draw[draw=tncol33415c, line width=0.6pt] ($(t4.north) + (0:-0.5)$) -- (t5.270);
\draw[draw=tncol33415c, line width=0.6pt] (t4.north) -- (t6.270);
\draw[draw=tncol33415c, line width=0.6pt] ($(t4.north) + (0:0.5)$) -- (t7.270);
\draw[draw=tncol33415c, line width=0.6pt] ($(t8.north) + (0:-0.5)$) -- (t9.270);
\draw[draw=tncol33415c, line width=0.6pt] (t8.north) -- (t10.270);
\draw[draw=tncol33415c, line width=0.6pt] ($(t8.north) + (0:0.5)$) -- (t11.270);
\draw[draw=tncol33415c, line width=0.6pt] ($(t12.north) + (0:-0.5)$) -- (t13.270);
\draw[draw=tncol33415c, line width=0.6pt] (t12.north) -- (t14.270);
\draw[draw=tncol33415c, line width=0.6pt] ($(t12.north) + (0:0.5)$) -- (t15.270);
\draw[draw=tncol33415c, line width=0.6pt] (t0.west) -- ++(180:0.25);
\draw[draw=tncol33415c, line width=0.6pt] (t0.east) -- ++(0:0.25);
\draw[draw=tncol33415c, line width=0.6pt] (t4.west) -- ++(180:0.25);
\draw[draw=tncol33415c, line width=0.6pt] (t4.east) -- ++(0:0.25);
\draw[draw=tncol33415c, line width=0.6pt] (t8.west) -- ++(180:0.25);
\draw[draw=tncol33415c, line width=0.6pt] (t8.east) -- ++(0:0.25);
\draw[draw=tncol33415c, line width=0.6pt] (t12.west) -- ++(180:0.25);
\draw[draw=tncol33415c, line width=0.6pt] (t12.east) -- ++(0:0.25);
\node[anchor=south, inner sep=1pt, font=\scriptsize] at (-0.500,0.700) {$c_{1}$};
\node[anchor=south, inner sep=1pt, font=\scriptsize] at (0.000,0.700) {$c_{4}$};
\node[anchor=south, inner sep=1pt, font=\scriptsize] at (0.500,0.700) {$c_{6}$};
\node[anchor=south, inner sep=1pt, font=\scriptsize] at (-0.500,-0.550) {$c_{5}$};
\node[anchor=south, inner sep=1pt, font=\scriptsize] at (0.000,-0.550) {$c_{2}$};
\node[anchor=south, inner sep=1pt, font=\scriptsize] at (0.500,-0.550) {$c_{3}$};
\node[anchor=south, inner sep=1pt, font=\scriptsize] at (-0.500,-1.800) {$c_{1}$};
\node[anchor=south, inner sep=1pt, font=\scriptsize] at (0.000,-1.800) {$c_{2}$};
\node[anchor=south, inner sep=1pt, font=\scriptsize] at (0.500,-1.800) {$c_{3}$};
\node[anchor=south, inner sep=1pt, font=\scriptsize] at (-0.500,-3.050) {$c_{5}$};
\node[anchor=south, inner sep=1pt, font=\scriptsize] at (0.000,-3.050) {$c_{4}$};
\node[anchor=south, inner sep=1pt, font=\scriptsize] at (0.500,-3.050) {$c_{6}$};
\end{tikzpicture}%
}}}%
\endgroup}%
}

%% file: figures/contingency_colour_strings_sorted.tex
\providecommand{\ContColourStringsSorted}[1][1]{%
\ensuremath{\begingroup\color{black}%
\vcenter{\hbox{%
\scalebox{#1}{%
\begin{tikzpicture}[baseline=(current bounding box.center)]
\node[rectangle, minimum size=9mm, inner sep=2pt, outer sep=0pt, rounded corners=1.6pt, draw=black, line width=0.6pt, fill=tncol2281aa, text=black, minimum height=2mm, minimum width=15mm, font=\small] (t0) at (0,0) {$\gamma_{\mu_B^{(1)}}$};
\node[isosceles triangle, minimum width=4.2mm, minimum height=2.5mm, inner sep=0pt, outer sep=0pt, isosceles triangle stretches, shape border rotate=90, draw=tncolf4a261, line width=1.0, fill=tncolffadad, text=black] (t1) at (-0.5,0.45) {};
\node[isosceles triangle, minimum width=4.2mm, minimum height=2.5mm, inner sep=0pt, outer sep=0pt, isosceles triangle stretches, shape border rotate=90, draw=tncol8ecae6, line width=1.0, fill=tncolcaffbf, text=black] (t2) at (0,0.45) {};
\node[isosceles triangle, minimum width=4.2mm, minimum height=2.5mm, inner sep=0pt, outer sep=0pt, isosceles triangle stretches, shape border rotate=90, draw=tncolb5e48c, line width=1.0, fill=tncola0c4ff, text=black] (t3) at (0.5,0.45) {};
\node[rectangle, minimum size=9mm, inner sep=2pt, outer sep=0pt, rounded corners=1.6pt, draw=black, line width=0.6pt, fill=tncol2281aa, text=black, minimum height=2mm, minimum width=15mm, font=\small] (t4) at (0,-1.25) {$\gamma_{\mu_B^{(2)}}$};
\node[isosceles triangle, minimum width=4.2mm, minimum height=2.5mm, inner sep=0pt, outer sep=0pt, isosceles triangle stretches, shape border rotate=90, draw=tncolf4a261, line width=1.0, fill=tncolffd6a5, text=black] (t5) at (-0.5,-0.8) {};
\node[isosceles triangle, minimum width=4.2mm, minimum height=2.5mm, inner sep=0pt, outer sep=0pt, isosceles triangle stretches, shape border rotate=90, draw=tncol8ecae6, line width=1.0, fill=tncolfdffb6, text=black] (t6) at (0,-0.8) {};
\node[isosceles triangle, minimum width=4.2mm, minimum height=2.5mm, inner sep=0pt, outer sep=0pt, isosceles triangle stretches, shape border rotate=90, draw=tncolb5e48c, line width=1.0, fill=tncol9bf6ff, text=black] (t7) at (0.5,-0.8) {};
\node[rectangle, minimum size=9mm, inner sep=2pt, outer sep=0pt, rounded corners=1.6pt, draw=black, line width=0.6pt, fill=tncol2281aa, text=black, minimum height=2mm, minimum width=15mm, font=\small] (t8) at (0,-2.5) {$\gamma_{\mu_B^{(3)}}$};
\node[isosceles triangle, minimum width=4.2mm, minimum height=2.5mm, inner sep=0pt, outer sep=0pt, isosceles triangle stretches, shape border rotate=90, draw=tncolf4a261, line width=1.0, fill=tncolffadad, text=black] (t9) at (-0.5,-2.05) {};
\node[isosceles triangle, minimum width=4.2mm, minimum height=2.5mm, inner sep=0pt, outer sep=0pt, isosceles triangle stretches, shape border rotate=90, draw=tncol8ecae6, line width=1.0, fill=tncolffd6a5, text=black] (t10) at (0,-2.05) {};
\node[isosceles triangle, minimum width=4.2mm, minimum height=2.5mm, inner sep=0pt, outer sep=0pt, isosceles triangle stretches, shape border rotate=90, draw=tncolb5e48c, line width=1.0, fill=tncolfdffb6, text=black] (t11) at (0.5,-2.05) {};
\node[rectangle, minimum size=9mm, inner sep=2pt, outer sep=0pt, rounded corners=1.6pt, draw=black, line width=0.6pt, fill=tncol2281aa, text=black, minimum height=2mm, minimum width=15mm, font=\small] (t12) at (0,-3.75) {$\gamma_{\mu_B^{(4)}}$};
\node[isosceles triangle, minimum width=4.2mm, minimum height=2.5mm, inner sep=0pt, outer sep=0pt, isosceles triangle stretches, shape border rotate=90, draw=tncolf4a261, line width=1.0, fill=tncolcaffbf, text=black] (t13) at (-0.5,-3.3) {};
\node[isosceles triangle, minimum width=4.2mm, minimum height=2.5mm, inner sep=0pt, outer sep=0pt, isosceles triangle stretches, shape border rotate=90, draw=tncol8ecae6, line width=1.0, fill=tncol9bf6ff, text=black] (t14) at (0,-3.3) {};
\node[isosceles triangle, minimum width=4.2mm, minimum height=2.5mm, inner sep=0pt, outer sep=0pt, isosceles triangle stretches, shape border rotate=90, draw=tncolb5e48c, line width=1.0, fill=tncola0c4ff, text=black] (t15) at (0.5,-3.3) {};
\node[inner sep=0pt, minimum size=0pt, outer sep=0pt] (t16) at (0,-4.9) {};
\draw[draw=tncol33415c, line width=0.6pt] ($(t0.north) + (0:-0.5)$) -- (t1.270);
\draw[draw=tncol33415c, line width=0.6pt] (t0.north) -- (t2.270);
\draw[draw=tncol33415c, line width=0.6pt] ($(t0.north) + (0:0.5)$) -- (t3.270);
\draw[draw=tncol33415c, line width=0.6pt] ($(t4.north) + (0:-0.5)$) -- (t5.270);
\draw[draw=tncol33415c, line width=0.6pt] (t4.north) -- (t6.270);
\draw[draw=tncol33415c, line width=0.6pt] ($(t4.north) + (0:0.5)$) -- (t7.270);
\draw[draw=tncol33415c, line width=0.6pt] ($(t8.north) + (0:-0.5)$) -- (t9.270);
\draw[draw=tncol33415c, line width=0.6pt] (t8.north) -- (t10.270);
\draw[draw=tncol33415c, line width=0.6pt] ($(t8.north) + (0:0.5)$) -- (t11.270);
\draw[draw=tncol33415c, line width=0.6pt] ($(t12.north) + (0:-0.5)$) -- (t13.270);
\draw[draw=tncol33415c, line width=0.6pt] (t12.north) -- (t14.270);
\draw[draw=tncol33415c, line width=0.6pt] ($(t12.north) + (0:0.5)$) -- (t15.270);
\draw[draw=tncol33415c, line width=0.6pt] (t0.west) -- ++(180:0.25);
\draw[draw=tncol33415c, line width=0.6pt] (t0.east) -- ++(0:0.25);
\draw[draw=tncol33415c, line width=0.6pt] (t4.west) -- ++(180:0.25);
\draw[draw=tncol33415c, line width=0.6pt] (t4.east) -- ++(0:0.25);
\draw[draw=tncol33415c, line width=0.6pt] (t8.west) -- ++(180:0.25);
\draw[draw=tncol33415c, line width=0.6pt] (t8.east) -- ++(0:0.25);
\draw[draw=tncol33415c, line width=0.6pt] (t12.west) -- ++(180:0.25);
\draw[draw=tncol33415c, line width=0.6pt] (t12.east) -- ++(0:0.25);
\node[anchor=south, inner sep=1pt, font=\scriptsize] at (-0.500,0.700) {$c_{1}$};
\node[anchor=south, inner sep=1pt, font=\scriptsize] at (0.000,0.700) {$c_{4}$};
\node[anchor=south, inner sep=1pt, font=\scriptsize] at (0.500,0.700) {$c_{6}$};
\node[anchor=south, inner sep=1pt, font=\scriptsize] at (-0.500,-0.550) {$c_{2}$};
\node[anchor=south, inner sep=1pt, font=\scriptsize] at (0.000,-0.550) {$c_{3}$};
\node[anchor=south, inner sep=1pt, font=\scriptsize] at (0.500,-0.550) {$c_{5}$};
\node[anchor=south, inner sep=1pt, font=\scriptsize] at (-0.500,-1.800) {$c_{1}$};
\node[anchor=south, inner sep=1pt, font=\scriptsize] at (0.000,-1.800) {$c_{2}$};
\node[anchor=south, inner sep=1pt, font=\scriptsize] at (0.500,-1.800) {$c_{3}$};
\node[anchor=south, inner sep=1pt, font=\scriptsize] at (-0.500,-3.050) {$c_{4}$};
\node[anchor=south, inner sep=1pt, font=\scriptsize] at (0.000,-3.050) {$c_{5}$};
\node[anchor=south, inner sep=1pt, font=\scriptsize] at (0.500,-3.050) {$c_{6}$};
\end{tikzpicture}%
}}}%
\endgroup}%
}

%% file: figures/contingency_colour_table_aligned.tex
\providecommand{\ContColourTableAligned}[1][1]{%
\ensuremath{\begingroup\color{black}%
\vcenter{\hbox{%
\scalebox{#1}{%
\begin{tikzpicture}[baseline=(current bounding box.center)]
\node[circle, minimum size=4mm, inner sep=0pt, outer sep=0pt, draw=tncolf4a261, line width=1.6pt, fill=tncolffadad] (t0) at (0,-0) {};
\node[circle, minimum size=4mm, inner sep=0pt, outer sep=0pt, draw=tncol8ecae6, line width=1.6pt, fill=tncolcaffbf] (t1) at (2.1,-0) {};
\node[circle, minimum size=4mm, inner sep=0pt, outer sep=0pt, draw=tncolb5e48c, line width=1.6pt, fill=tncola0c4ff] (t2) at (3.5,-0) {};
\node[circle, minimum size=4mm, inner sep=0pt, outer sep=0pt, draw=tncolf4a261, line width=1.6pt, fill=tncol9bf6ff] (t3) at (2.8,-0.8) {};
\node[circle, minimum size=4mm, inner sep=0pt, outer sep=0pt, draw=tncol8ecae6, line width=1.6pt, fill=tncolffd6a5] (t4) at (0.7,-0.8) {};
\node[circle, minimum size=4mm, inner sep=0pt, outer sep=0pt, draw=tncolb5e48c, line width=1.6pt, fill=tncolfdffb6] (t5) at (1.4,-0.8) {};
\node[circle, minimum size=4mm, inner sep=0pt, outer sep=0pt, draw=tncolf4a261, line width=1.6pt, fill=tncolffadad] (t6) at (0,-1.6) {};
\node[circle, minimum size=4mm, inner sep=0pt, outer sep=0pt, draw=tncol8ecae6, line width=1.6pt, fill=tncolffd6a5] (t7) at (0.7,-1.6) {};
\node[circle, minimum size=4mm, inner sep=0pt, outer sep=0pt, draw=tncolb5e48c, line width=1.6pt, fill=tncolfdffb6] (t8) at (1.4,-1.6) {};
\node[circle, minimum size=4mm, inner sep=0pt, outer sep=0pt, draw=tncolf4a261, line width=1.6pt, fill=tncol9bf6ff] (t9) at (2.8,-2.4) {};
\node[circle, minimum size=4mm, inner sep=0pt, outer sep=0pt, draw=tncol8ecae6, line width=1.6pt, fill=tncolcaffbf] (t10) at (2.1,-2.4) {};
\node[circle, minimum size=4mm, inner sep=0pt, outer sep=0pt, draw=tncolb5e48c, line width=1.6pt, fill=tncola0c4ff] (t11) at (3.5,-2.4) {};
\node[circle, minimum size=1mm, inner sep=0pt, outer sep=0pt, draw=tncol22223b, line width=0.6pt, fill=tncole8edf4, text=black] (t12) at (0.7,-0) {};
\node[circle, minimum size=1mm, inner sep=0pt, outer sep=0pt, draw=tncol22223b, line width=0.6pt, fill=tncole8edf4, text=black] (t13) at (1.4,-0) {};
\node[circle, minimum size=1mm, inner sep=0pt, outer sep=0pt, draw=tncol22223b, line width=0.6pt, fill=tncole8edf4, text=black] (t14) at (2.8,-0) {};
\node[circle, minimum size=1mm, inner sep=0pt, outer sep=0pt, draw=tncol22223b, line width=0.6pt, fill=tncole8edf4, text=black] (t15) at (0,-0.8) {};
\node[circle, minimum size=1mm, inner sep=0pt, outer sep=0pt, draw=tncol22223b, line width=0.6pt, fill=tncole8edf4, text=black] (t16) at (2.1,-0.8) {};
\node[circle, minimum size=1mm, inner sep=0pt, outer sep=0pt, draw=tncol22223b, line width=0.6pt, fill=tncole8edf4, text=black] (t17) at (3.5,-0.8) {};
\node[circle, minimum size=1mm, inner sep=0pt, outer sep=0pt, draw=tncol22223b, line width=0.6pt, fill=tncole8edf4, text=black] (t18) at (2.1,-1.6) {};
\node[circle, minimum size=1mm, inner sep=0pt, outer sep=0pt, draw=tncol22223b, line width=0.6pt, fill=tncole8edf4, text=black] (t19) at (2.8,-1.6) {};
\node[circle, minimum size=1mm, inner sep=0pt, outer sep=0pt, draw=tncol22223b, line width=0.6pt, fill=tncole8edf4, text=black] (t20) at (3.5,-1.6) {};
\node[circle, minimum size=1mm, inner sep=0pt, outer sep=0pt, draw=tncol22223b, line width=0.6pt, fill=tncole8edf4, text=black] (t21) at (0,-2.4) {};
\node[circle, minimum size=1mm, inner sep=0pt, outer sep=0pt, draw=tncol22223b, line width=0.6pt, fill=tncole8edf4, text=black] (t22) at (0.7,-2.4) {};
\node[circle, minimum size=1mm, inner sep=0pt, outer sep=0pt, draw=tncol22223b, line width=0.6pt, fill=tncole8edf4, text=black] (t23) at (1.4,-2.4) {};
\node[inner sep=0pt, minimum size=0pt, outer sep=0pt] (t24) at (0,-2.9) {};
\draw[draw=tncol555555, line width=0.6pt] (t0.-90) to[bend right=53.8838] (t6.90);
\draw[draw=tncol555555, line width=0.6pt] (t4.-90) -- (t7.90);
\draw[draw=tncol555555, line width=0.6pt] (t5.-90) -- (t8.90);
\draw[draw=tncol555555, line width=0.6pt] (t1.-90) to[bend right=32.5849] (t10.90);
\draw[draw=tncol555555, line width=0.6pt] (t3.-90) to[bend right=53.8838] (t9.90);
\draw[draw=tncol555555, line width=0.6pt] (t2.-90) to[bend right=32.5849] (t11.90);
\node[anchor=center, inner sep=1pt, font=\small] at (0.000,0.500) {$c_{1}$};
\node[anchor=center, inner sep=1pt, font=\small] at (0.700,0.500) {$c_{2}$};
\node[anchor=center, inner sep=1pt, font=\small] at (1.400,0.500) {$c_{3}$};
\node[anchor=center, inner sep=1pt, font=\small] at (2.100,0.500) {$c_{4}$};
\node[anchor=center, inner sep=1pt, font=\small] at (2.800,0.500) {$c_{5}$};
\node[anchor=center, inner sep=1pt, font=\small] at (3.500,0.500) {$c_{6}$};
\end{tikzpicture}%
}}}%
\endgroup}%
}

%% file: figures/contingency_colour_table_sorted.tex
\providecommand{\ContColourTableSorted}[1][1]{%
\ensuremath{\begingroup\color{black}%
\vcenter{\hbox{%
\scalebox{#1}{%
\begin{tikzpicture}[baseline=(current bounding box.center)]
\node[circle, minimum size=4mm, inner sep=0pt, outer sep=0pt, draw=tncolf4a261, line width=1.6pt, fill=tncolffadad] (t0) at (0,-0) {};
\node[circle, minimum size=4mm, inner sep=0pt, outer sep=0pt, draw=tncol8ecae6, line width=1.6pt, fill=tncolcaffbf] (t1) at (2.1,-0) {};
\node[circle, minimum size=4mm, inner sep=0pt, outer sep=0pt, draw=tncolb5e48c, line width=1.6pt, fill=tncola0c4ff] (t2) at (3.5,-0) {};
\node[circle, minimum size=4mm, inner sep=0pt, outer sep=0pt, draw=tncolf4a261, line width=1.6pt, fill=tncolffd6a5] (t3) at (0.7,-0.8) {};
\node[circle, minimum size=4mm, inner sep=0pt, outer sep=0pt, draw=tncol8ecae6, line width=1.6pt, fill=tncolfdffb6] (t4) at (1.4,-0.8) {};
\node[circle, minimum size=4mm, inner sep=0pt, outer sep=0pt, draw=tncolb5e48c, line width=1.6pt, fill=tncol9bf6ff] (t5) at (2.8,-0.8) {};
\node[circle, minimum size=4mm, inner sep=0pt, outer sep=0pt, draw=tncolf4a261, line width=1.6pt, fill=tncolffadad] (t6) at (0,-1.6) {};
\node[circle, minimum size=4mm, inner sep=0pt, outer sep=0pt, draw=tncol8ecae6, line width=1.6pt, fill=tncolffd6a5] (t7) at (0.7,-1.6) {};
\node[circle, minimum size=4mm, inner sep=0pt, outer sep=0pt, draw=tncolb5e48c, line width=1.6pt, fill=tncolfdffb6] (t8) at (1.4,-1.6) {};
\node[circle, minimum size=4mm, inner sep=0pt, outer sep=0pt, draw=tncolf4a261, line width=1.6pt, fill=tncolcaffbf] (t9) at (2.1,-2.4) {};
\node[circle, minimum size=4mm, inner sep=0pt, outer sep=0pt, draw=tncol8ecae6, line width=1.6pt, fill=tncol9bf6ff] (t10) at (2.8,-2.4) {};
\node[circle, minimum size=4mm, inner sep=0pt, outer sep=0pt, draw=tncolb5e48c, line width=1.6pt, fill=tncola0c4ff] (t11) at (3.5,-2.4) {};
\node[circle, minimum size=1mm, inner sep=0pt, outer sep=0pt, draw=tncol22223b, line width=0.6pt, fill=tncole8edf4, text=black] (t12) at (0.7,-0) {};
\node[circle, minimum size=1mm, inner sep=0pt, outer sep=0pt, draw=tncol22223b, line width=0.6pt, fill=tncole8edf4, text=black] (t13) at (1.4,-0) {};
\node[circle, minimum size=1mm, inner sep=0pt, outer sep=0pt, draw=tncol22223b, line width=0.6pt, fill=tncole8edf4, text=black] (t14) at (2.8,-0) {};
\node[circle, minimum size=1mm, inner sep=0pt, outer sep=0pt, draw=tncol22223b, line width=0.6pt, fill=tncole8edf4, text=black] (t15) at (0,-0.8) {};
\node[circle, minimum size=1mm, inner sep=0pt, outer sep=0pt, draw=tncol22223b, line width=0.6pt, fill=tncole8edf4, text=black] (t16) at (2.1,-0.8) {};
\node[circle, minimum size=1mm, inner sep=0pt, outer sep=0pt, draw=tncol22223b, line width=0.6pt, fill=tncole8edf4, text=black] (t17) at (3.5,-0.8) {};
\node[circle, minimum size=1mm, inner sep=0pt, outer sep=0pt, draw=tncol22223b, line width=0.6pt, fill=tncole8edf4, text=black] (t18) at (2.1,-1.6) {};
\node[circle, minimum size=1mm, inner sep=0pt, outer sep=0pt, draw=tncol22223b, line width=0.6pt, fill=tncole8edf4, text=black] (t19) at (2.8,-1.6) {};
\node[circle, minimum size=1mm, inner sep=0pt, outer sep=0pt, draw=tncol22223b, line width=0.6pt, fill=tncole8edf4, text=black] (t20) at (3.5,-1.6) {};
\node[circle, minimum size=1mm, inner sep=0pt, outer sep=0pt, draw=tncol22223b, line width=0.6pt, fill=tncole8edf4, text=black] (t21) at (0,-2.4) {};
\node[circle, minimum size=1mm, inner sep=0pt, outer sep=0pt, draw=tncol22223b, line width=0.6pt, fill=tncole8edf4, text=black] (t22) at (0.7,-2.4) {};
\node[circle, minimum size=1mm, inner sep=0pt, outer sep=0pt, draw=tncol22223b, line width=0.6pt, fill=tncole8edf4, text=black] (t23) at (1.4,-2.4) {};
\node[inner sep=0pt, minimum size=0pt, outer sep=0pt] (t24) at (0,-2.9) {};
\draw[draw=tncol555555, line width=0.6pt] (t0.-90) to[bend right=53.8838] (t6.90);
\draw[draw=tncol9b2226, line width=0.6pt, densely dashed] (t3.-90) -- (t7.90);
\draw[draw=tncol9b2226, line width=0.6pt, densely dashed] (t4.-90) -- (t8.90);
\draw[draw=tncol9b2226, line width=0.6pt, densely dashed] (t1.-90) to[bend right=32.5849] (t9.90);
\draw[draw=tncol9b2226, line width=0.6pt, densely dashed] (t5.-90) to[bend right=53.8838] (t10.90);
\draw[draw=tncol555555, line width=0.6pt] (t2.-90) to[bend right=32.5849] (t11.90);
\node[anchor=center, inner sep=1pt, font=\small] at (0.000,0.500) {$c_{1}$};
\node[anchor=center, inner sep=1pt, font=\small] at (0.700,0.500) {$c_{2}$};
\node[anchor=center, inner sep=1pt, font=\small] at (1.400,0.500) {$c_{3}$};
\node[anchor=center, inner sep=1pt, font=\small] at (2.100,0.500) {$c_{4}$};
\node[anchor=center, inner sep=1pt, font=\small] at (2.800,0.500) {$c_{5}$};
\node[anchor=center, inner sep=1pt, font=\small] at (3.500,0.500) {$c_{6}$};
\end{tikzpicture}%
}}}%
\endgroup}%
}

%% file: figures/contingency_eg_exp_orbit.tex
\providecommand{\ContOrbOne}[1][1]{%
\ensuremath{\begingroup\color{black}%
\vcenter{\hbox{%
\scalebox{#1}{%
\begin{tikzpicture}[baseline=(current bounding box.center)]
\node[circle, minimum size=4mm, inner sep=0pt, outer sep=0pt, draw=tncol22223b, line width=0.6pt, fill=tncolffadad] (t0) at (0,-0) {};
\node[circle, minimum size=1mm, inner sep=0pt, outer sep=0pt, draw=tncol22223b, line width=0.6pt, fill=tncole8edf4, text=black] (t1) at (0.7,-0) {};
\node[circle, minimum size=1mm, inner sep=0pt, outer sep=0pt, draw=tncol22223b, line width=0.6pt, fill=tncole8edf4, text=black] (t2) at (1.4,-0) {};
\node[circle, minimum size=4mm, inner sep=0pt, outer sep=0pt, draw=tncol22223b, line width=0.6pt, fill=tncolcaffbf] (t3) at (2.1,-0) {};
\node[circle, minimum size=1mm, inner sep=0pt, outer sep=0pt, draw=tncol22223b, line width=0.6pt, fill=tncole8edf4, text=black] (t4) at (2.8,-0) {};
\node[circle, minimum size=4mm, inner sep=0pt, outer sep=0pt, draw=tncol22223b, line width=0.6pt, fill=tncola0c4ff] (t5) at (3.5,-0) {};
\node[circle, minimum size=1mm, inner sep=0pt, outer sep=0pt, draw=tncol22223b, line width=0.6pt, fill=tncole8edf4, text=black] (t6) at (0,-0.7) {};
\node[circle, minimum size=4mm, inner sep=0pt, outer sep=0pt, draw=tncol22223b, line width=0.6pt, fill=tncolffd6a5] (t7) at (0.7,-0.7) {};
\node[circle, minimum size=4mm, inner sep=0pt, outer sep=0pt, draw=tncol22223b, line width=0.6pt, fill=tncolfdffb6] (t8) at (1.4,-0.7) {};
\node[circle, minimum size=1mm, inner sep=0pt, outer sep=0pt, draw=tncol22223b, line width=0.6pt, fill=tncole8edf4, text=black] (t9) at (2.1,-0.7) {};
\node[circle, minimum size=4mm, inner sep=0pt, outer sep=0pt, draw=tncol22223b, line width=0.6pt, fill=tncol9bf6ff] (t10) at (2.8,-0.7) {};
\node[circle, minimum size=1mm, inner sep=0pt, outer sep=0pt, draw=tncol22223b, line width=0.6pt, fill=tncole8edf4, text=black] (t11) at (3.5,-0.7) {};
\node[circle, minimum size=4mm, inner sep=0pt, outer sep=0pt, draw=tncol22223b, line width=0.6pt, fill=tncolffadad] (t12) at (0,-1.4) {};
\node[circle, minimum size=4mm, inner sep=0pt, outer sep=0pt, draw=tncol22223b, line width=0.6pt, fill=tncolffd6a5] (t13) at (0.7,-1.4) {};
\node[circle, minimum size=4mm, inner sep=0pt, outer sep=0pt, draw=tncol22223b, line width=0.6pt, fill=tncolfdffb6] (t14) at (1.4,-1.4) {};
\node[circle, minimum size=1mm, inner sep=0pt, outer sep=0pt, draw=tncol22223b, line width=0.6pt, fill=tncole8edf4, text=black] (t15) at (2.1,-1.4) {};
\node[circle, minimum size=1mm, inner sep=0pt, outer sep=0pt, draw=tncol22223b, line width=0.6pt, fill=tncole8edf4, text=black] (t16) at (2.8,-1.4) {};
\node[circle, minimum size=1mm, inner sep=0pt, outer sep=0pt, draw=tncol22223b, line width=0.6pt, fill=tncole8edf4, text=black] (t17) at (3.5,-1.4) {};
\node[circle, minimum size=1mm, inner sep=0pt, outer sep=0pt, draw=tncol22223b, line width=0.6pt, fill=tncole8edf4, text=black] (t18) at (0,-2.1) {};
\node[circle, minimum size=1mm, inner sep=0pt, outer sep=0pt, draw=tncol22223b, line width=0.6pt, fill=tncole8edf4, text=black] (t19) at (0.7,-2.1) {};
\node[circle, minimum size=1mm, inner sep=0pt, outer sep=0pt, draw=tncol22223b, line width=0.6pt, fill=tncole8edf4, text=black] (t20) at (1.4,-2.1) {};
\node[circle, minimum size=4mm, inner sep=0pt, outer sep=0pt, draw=tncol22223b, line width=0.6pt, fill=tncolcaffbf] (t21) at (2.1,-2.1) {};
\node[circle, minimum size=4mm, inner sep=0pt, outer sep=0pt, draw=tncol22223b, line width=0.6pt, fill=tncol9bf6ff] (t22) at (2.8,-2.1) {};
\node[circle, minimum size=4mm, inner sep=0pt, outer sep=0pt, draw=tncol22223b, line width=0.6pt, fill=tncola0c4ff] (t23) at (3.5,-2.1) {};
\end{tikzpicture}%
}}}%
\endgroup}%
}

%% file: figures/contingency_eg_orbit_consolid.tex
\providecommand{\ContOrbTwo}[1][1]{%
\ensuremath{\begingroup\color{black}%
\vcenter{\hbox{%
\scalebox{#1}{%
\begin{tikzpicture}[baseline=(current bounding box.center)]
\node[circle, minimum size=4mm, inner sep=0pt, outer sep=0pt, draw=tncol22223b, line width=0.6pt, fill=tncolffadad] (t0) at (0,-0) {};
\node[circle, minimum size=4mm, inner sep=0pt, outer sep=0pt, draw=tncol22223b, line width=0.6pt, fill=tncolcaffbf] (t1) at (0.7,-0) {};
\node[circle, minimum size=4mm, inner sep=0pt, outer sep=0pt, draw=tncol22223b, line width=0.6pt, fill=tncola0c4ff] (t2) at (1.4,-0) {};
\node[circle, minimum size=4mm, inner sep=0pt, outer sep=0pt, draw=tncol22223b, line width=0.6pt, fill=tncolffd6a5] (t3) at (0,-0.7) {};
\node[circle, minimum size=4mm, inner sep=0pt, outer sep=0pt, draw=tncol22223b, line width=0.6pt, fill=tncolfdffb6] (t4) at (0.7,-0.7) {};
\node[circle, minimum size=4mm, inner sep=0pt, outer sep=0pt, draw=tncol22223b, line width=0.6pt, fill=tncol9bf6ff] (t5) at (1.4,-0.7) {};
\node[circle, minimum size=4mm, inner sep=0pt, outer sep=0pt, draw=tncol22223b, line width=0.6pt, fill=tncolffadad] (t6) at (0,-1.4) {};
\node[circle, minimum size=4mm, inner sep=0pt, outer sep=0pt, draw=tncol22223b, line width=0.6pt, fill=tncolffd6a5] (t7) at (0.7,-1.4) {};
\node[circle, minimum size=4mm, inner sep=0pt, outer sep=0pt, draw=tncol22223b, line width=0.6pt, fill=tncolfdffb6] (t8) at (1.4,-1.4) {};
\node[circle, minimum size=4mm, inner sep=0pt, outer sep=0pt, draw=tncol22223b, line width=0.6pt, fill=tncolcaffbf] (t9) at (0,-2.1) {};
\node[circle, minimum size=4mm, inner sep=0pt, outer sep=0pt, draw=tncol22223b, line width=0.6pt, fill=tncol9bf6ff] (t10) at (0.7,-2.1) {};
\node[circle, minimum size=4mm, inner sep=0pt, outer sep=0pt, draw=tncol22223b, line width=0.6pt, fill=tncola0c4ff] (t11) at (1.4,-2.1) {};
\node[anchor=center,font={\small},text=tncol273043] at (-0.500,0.000) {$\sigma_{1}$};
\node[anchor=center,font={\small},text=tncol273043] at (-0.500,-0.700) {$\sigma_{2}$};
\node[anchor=center,font={\small},text=tncol273043] at (-0.500,-1.400) {$\sigma_{3}$};
\node[anchor=center,font={\small},text=tncol273043] at (-0.500,-2.100) {$\sigma_{4}$};
\begin{pgfonlayer}{background}
\node[draw=tncolffd6a5, line width=0.6pt, inner sep=3pt, fill=tncolffd6a5, fill opacity=0.1, rounded corners=3pt, fit=(t0)(t1)(t2)] {};
\node[draw=tncolffd6a5, line width=0.6pt, inner sep=3pt, fill=tncolffd6a5, fill opacity=0.1, rounded corners=3pt, fit=(t3)(t4)(t5)] {};
\node[draw=tncolffd6a5, line width=0.6pt, inner sep=3pt, fill=tncolffd6a5, fill opacity=0.1, rounded corners=3pt, fit=(t6)(t7)(t8)] {};
\node[draw=tncolffd6a5, line width=0.6pt, inner sep=3pt, fill=tncolffd6a5, fill opacity=0.1, rounded corners=3pt, fit=(t9)(t10)(t11)] {};
\end{pgfonlayer}
\end{tikzpicture}%
}}}%
\endgroup}%
}

%% file: figures/contingency_eg_pairing_table.tex
\providecommand{\ContPairTable}[1][1]{%
\ensuremath{\begingroup\color{black}%
\vcenter{\hbox{%
\scalebox{#1}{%
\begin{tikzpicture}[baseline=(current bounding box.center)]
\node[circle, minimum size=4mm, inner sep=0pt, outer sep=0pt, draw=tncol22223b, line width=0.6pt, fill=tncola0c4ff] (t0) at (0,-0) {};
\node[circle, minimum size=4mm, inner sep=0pt, outer sep=0pt, draw=tncol22223b, line width=0.6pt, fill=tncolffadad] (t1) at (0.7,-0) {};
\node[circle, minimum size=4mm, inner sep=0pt, outer sep=0pt, draw=tncol22223b, line width=0.6pt, fill=tncolcaffbf] (t2) at (1.4,-0) {};
\node[circle, minimum size=4mm, inner sep=0pt, outer sep=0pt, draw=tncol22223b, line width=0.6pt, fill=tncolfdffb6] (t3) at (0,-0.7) {};
\node[circle, minimum size=4mm, inner sep=0pt, outer sep=0pt, draw=tncol22223b, line width=0.6pt, fill=tncol9bf6ff] (t4) at (0.7,-0.7) {};
\node[circle, minimum size=4mm, inner sep=0pt, outer sep=0pt, draw=tncol22223b, line width=0.6pt, fill=tncolffd6a5] (t5) at (1.4,-0.7) {};
\node[circle, minimum size=4mm, inner sep=0pt, outer sep=0pt, draw=tncol22223b, line width=0.6pt, fill=tncolffd6a5] (t6) at (0,-1.4) {};
\node[circle, minimum size=4mm, inner sep=0pt, outer sep=0pt, draw=tncol22223b, line width=0.6pt, fill=tncolfdffb6] (t7) at (0.7,-1.4) {};
\node[circle, minimum size=4mm, inner sep=0pt, outer sep=0pt, draw=tncol22223b, line width=0.6pt, fill=tncolffadad] (t8) at (1.4,-1.4) {};
\node[circle, minimum size=4mm, inner sep=0pt, outer sep=0pt, draw=tncol22223b, line width=0.6pt, fill=tncolcaffbf] (t9) at (0,-2.1) {};
\node[circle, minimum size=4mm, inner sep=0pt, outer sep=0pt, draw=tncol22223b, line width=0.6pt, fill=tncola0c4ff] (t10) at (0.7,-2.1) {};
\node[circle, minimum size=4mm, inner sep=0pt, outer sep=0pt, draw=tncol22223b, line width=0.6pt, fill=tncol9bf6ff] (t11) at (1.4,-2.1) {};
\draw[draw=tncol555555, line width=0.6pt] (t1.-63) -- (t8.117);
\draw[draw=tncol555555, line width=0.6pt] (t5.-153) -- (t6.27);
\draw[draw=tncol555555, line width=0.6pt] (t3.-45) -- (t7.135);
\draw[draw=tncol555555, line width=0.6pt] (t2.-124) -- (t9.56);
\draw[draw=tncol555555, line width=0.6pt] (t4.-63) -- (t11.117);
\draw[draw=tncol555555, line width=0.6pt] (t0.-72) -- (t10.108);
\end{tikzpicture}%
}}}%
\endgroup}%
}

%% file: figures/contingency_eg_pairings_aligned.tex
\providecommand{\ContPairFour}[1][1]{%
\ensuremath{\begingroup\color{black}%
\vcenter{\hbox{%
\scalebox{#1}{%
\begin{tikzpicture}[baseline=(current bounding box.center)]
\node[circle, minimum size=4mm, inner sep=0pt, outer sep=0pt, draw=tncol22223b, line width=0.6pt, fill=tncolffadad] (t0) at (0,-0) {};
\node[circle, minimum size=4mm, inner sep=0pt, outer sep=0pt, draw=tncol22223b, line width=0.6pt, fill=tncolcaffbf] (t1) at (0.7,-0) {};
\node[circle, minimum size=4mm, inner sep=0pt, outer sep=0pt, draw=tncol22223b, line width=0.6pt, fill=tncola0c4ff] (t2) at (1.4,-0) {};
\node[circle, minimum size=4mm, inner sep=0pt, outer sep=0pt, draw=tncol22223b, line width=0.6pt, fill=tncol9bf6ff] (t3) at (0,-0.7) {};
\node[circle, minimum size=4mm, inner sep=0pt, outer sep=0pt, draw=tncol22223b, line width=0.6pt, fill=tncolffd6a5] (t4) at (0.7,-0.7) {};
\node[circle, minimum size=4mm, inner sep=0pt, outer sep=0pt, draw=tncol22223b, line width=0.6pt, fill=tncolfdffb6] (t5) at (1.4,-0.7) {};
\node[circle, minimum size=4mm, inner sep=0pt, outer sep=0pt, draw=tncol22223b, line width=0.6pt, fill=tncolffadad] (t6) at (0,-1.4) {};
\node[circle, minimum size=4mm, inner sep=0pt, outer sep=0pt, draw=tncol22223b, line width=0.6pt, fill=tncolffd6a5] (t7) at (0.7,-1.4) {};
\node[circle, minimum size=4mm, inner sep=0pt, outer sep=0pt, draw=tncol22223b, line width=0.6pt, fill=tncolfdffb6] (t8) at (1.4,-1.4) {};
\node[circle, minimum size=4mm, inner sep=0pt, outer sep=0pt, draw=tncol22223b, line width=0.6pt, fill=tncol9bf6ff] (t9) at (0,-2.1) {};
\node[circle, minimum size=4mm, inner sep=0pt, outer sep=0pt, draw=tncol22223b, line width=0.6pt, fill=tncolcaffbf] (t10) at (0.7,-2.1) {};
\node[circle, minimum size=4mm, inner sep=0pt, outer sep=0pt, draw=tncol22223b, line width=0.6pt, fill=tncola0c4ff] (t11) at (1.4,-2.1) {};
\draw[draw=tncol555555, line width=0.6pt] (t0.-90) to[bend right=67.4024] (t6.90);
\draw[draw=tncol555555, line width=0.6pt] (t4.-90) -- (t7.90);
\draw[draw=tncol555555, line width=0.6pt] (t5.-90) -- (t8.90);
\draw[draw=tncol555555, line width=0.6pt] (t1.-90) to[bend right=37.9871] (t10.90);
\draw[draw=tncol555555, line width=0.6pt] (t3.-90) to[bend left=67.4024] (t9.90);
\draw[draw=tncol555555, line width=0.6pt] (t2.-90) to[bend right=37.9871] (t11.90);
\node[anchor=center,font={\small},text=tncol273043] at (-0.500,0.000) {$\sigma_{1}$};
\node[anchor=center,font={\small},text=tncol273043] at (-0.500,-0.700) {$\sigma_{2}$};
\node[anchor=center,font={\small},text=tncol273043] at (-0.500,-1.400) {$\sigma_{3}$};
\node[anchor=center,font={\small},text=tncol273043] at (-0.500,-2.100) {$\sigma_{4}$};
\begin{pgfonlayer}{background}
\node[draw=tncolffd6a5, line width=0.6pt, inner sep=3pt, fill=tncolffd6a5, fill opacity=0.1, rounded corners=3pt, fit=(t0)(t1)(t2)] {};
\node[draw=tncolffd6a5, line width=0.6pt, inner sep=3pt, fill=tncolffd6a5, fill opacity=0.1, rounded corners=3pt, fit=(t3)(t4)(t5)] {};
\node[draw=tncolffd6a5, line width=0.6pt, inner sep=3pt, fill=tncolffd6a5, fill opacity=0.1, rounded corners=3pt, fit=(t6)(t7)(t8)] {};
\node[draw=tncolffd6a5, line width=0.6pt, inner sep=3pt, fill=tncolffd6a5, fill opacity=0.1, rounded corners=3pt, fit=(t9)(t10)(t11)] {};
\end{pgfonlayer}
\end{tikzpicture}%
}}}%
\endgroup}%
}

%% file: figures/contingency_eg_pairings_partial.tex
\providecommand{\ContPairThree}[1][1]{%
\ensuremath{\begingroup\color{black}%
\vcenter{\hbox{%
\scalebox{#1}{%
\begin{tikzpicture}[baseline=(current bounding box.center)]
\node[circle, minimum size=4mm, inner sep=0pt, outer sep=0pt, draw=tncol22223b, line width=0.6pt, fill=tncolffadad] (t0) at (0,-0) {};
\node[circle, minimum size=4mm, inner sep=0pt, outer sep=0pt, draw=tncol22223b, line width=0.6pt, fill=tncolcaffbf] (t1) at (0.7,-0) {};
\node[circle, minimum size=4mm, inner sep=0pt, outer sep=0pt, draw=tncol22223b, line width=0.6pt, fill=tncola0c4ff] (t2) at (1.4,-0) {};
\node[circle, minimum size=4mm, inner sep=0pt, outer sep=0pt, draw=tncol22223b, line width=0.6pt, fill=tncol9bf6ff] (t3) at (0,-0.7) {};
\node[circle, minimum size=4mm, inner sep=0pt, outer sep=0pt, draw=tncol22223b, line width=0.6pt, fill=tncolffd6a5] (t4) at (0.7,-0.7) {};
\node[circle, minimum size=4mm, inner sep=0pt, outer sep=0pt, draw=tncol22223b, line width=0.6pt, fill=tncolfdffb6] (t5) at (1.4,-0.7) {};
\node[circle, minimum size=4mm, inner sep=0pt, outer sep=0pt, draw=tncol22223b, line width=0.6pt, fill=tncolffadad] (t6) at (0,-1.4) {};
\node[circle, minimum size=4mm, inner sep=0pt, outer sep=0pt, draw=tncol22223b, line width=0.6pt, fill=tncolffd6a5] (t7) at (0.7,-1.4) {};
\node[circle, minimum size=4mm, inner sep=0pt, outer sep=0pt, draw=tncol22223b, line width=0.6pt, fill=tncolfdffb6] (t8) at (1.4,-1.4) {};
\node[circle, minimum size=4mm, inner sep=0pt, outer sep=0pt, draw=tncol22223b, line width=0.6pt, fill=tncolcaffbf] (t9) at (0,-2.1) {};
\node[circle, minimum size=4mm, inner sep=0pt, outer sep=0pt, draw=tncol22223b, line width=0.6pt, fill=tncol9bf6ff] (t10) at (0.7,-2.1) {};
\node[circle, minimum size=4mm, inner sep=0pt, outer sep=0pt, draw=tncol22223b, line width=0.6pt, fill=tncola0c4ff] (t11) at (1.4,-2.1) {};
\draw[draw=tncol555555, line width=0.6pt] (t0.-90) to[bend right=67.4024] (t6.90);
\draw[draw=tncol555555, line width=0.6pt] (t4.-90) -- (t7.90);
\draw[draw=tncol555555, line width=0.6pt] (t5.-90) -- (t8.90);
\draw[draw=tncol555555, line width=0.6pt] (t1.-108) -- (t9.72);
\draw[draw=tncol555555, line width=0.6pt] (t3.-63) -- (t10.117);
\draw[draw=tncol555555, line width=0.6pt] (t2.-90) to[bend right=37.9871] (t11.90);
\node[anchor=center,font={\small},text=tncol273043] at (-0.500,0.000) {$\sigma_{1}$};
\node[anchor=center,font={\small},text=tncol273043] at (-0.500,-0.700) {$\sigma_{2}$};
\node[anchor=center,font={\small},text=tncol273043] at (-0.500,-1.400) {$\sigma_{3}$};
\node[anchor=center,font={\small},text=tncol273043] at (-0.500,-2.100) {$\sigma_{4}$};
\begin{pgfonlayer}{background}
\node[draw=tncolffd6a5, line width=0.6pt, inner sep=3pt, fill=tncolffd6a5, fill opacity=0.1, rounded corners=3pt, fit=(t0)(t1)(t2)] {};
\node[draw=tncolffd6a5, line width=0.6pt, inner sep=3pt, fill=tncolffd6a5, fill opacity=0.1, rounded corners=3pt, fit=(t3)(t4)(t5)] {};
\node[draw=tncolffd6a5, line width=0.6pt, inner sep=3pt, fill=tncolffd6a5, fill opacity=0.1, rounded corners=3pt, fit=(t6)(t7)(t8)] {};
\node[draw=tncolffd6a5, line width=0.6pt, inner sep=3pt, fill=tncolffd6a5, fill opacity=0.1, rounded corners=3pt, fit=(t9)(t10)(t11)] {};
\end{pgfonlayer}
\end{tikzpicture}%
}}}%
\endgroup}%
}

%% file: figures/contingency_eg_pairings_scrambled.tex
\providecommand{\ContPairTwo}[1][1]{%
\ensuremath{\begingroup\color{black}%
\vcenter{\hbox{%
\scalebox{#1}{%
\begin{tikzpicture}[baseline=(current bounding box.center)]
\node[circle, minimum size=4mm, inner sep=0pt, outer sep=0pt, draw=tncol22223b, line width=0.6pt, fill=tncola0c4ff] (t0) at (0,-0) {};
\node[circle, minimum size=4mm, inner sep=0pt, outer sep=0pt, draw=tncol22223b, line width=0.6pt, fill=tncolffadad] (t1) at (0.7,-0) {};
\node[circle, minimum size=4mm, inner sep=0pt, outer sep=0pt, draw=tncol22223b, line width=0.6pt, fill=tncolcaffbf] (t2) at (1.4,-0) {};
\node[circle, minimum size=4mm, inner sep=0pt, outer sep=0pt, draw=tncol22223b, line width=0.6pt, fill=tncolfdffb6] (t3) at (0,-0.7) {};
\node[circle, minimum size=4mm, inner sep=0pt, outer sep=0pt, draw=tncol22223b, line width=0.6pt, fill=tncol9bf6ff] (t4) at (0.7,-0.7) {};
\node[circle, minimum size=4mm, inner sep=0pt, outer sep=0pt, draw=tncol22223b, line width=0.6pt, fill=tncolffd6a5] (t5) at (1.4,-0.7) {};
\node[circle, minimum size=4mm, inner sep=0pt, outer sep=0pt, draw=tncol22223b, line width=0.6pt, fill=tncolffd6a5] (t6) at (0,-1.4) {};
\node[circle, minimum size=4mm, inner sep=0pt, outer sep=0pt, draw=tncol22223b, line width=0.6pt, fill=tncolfdffb6] (t7) at (0.7,-1.4) {};
\node[circle, minimum size=4mm, inner sep=0pt, outer sep=0pt, draw=tncol22223b, line width=0.6pt, fill=tncolffadad] (t8) at (1.4,-1.4) {};
\node[circle, minimum size=4mm, inner sep=0pt, outer sep=0pt, draw=tncol22223b, line width=0.6pt, fill=tncolcaffbf] (t9) at (0,-2.1) {};
\node[circle, minimum size=4mm, inner sep=0pt, outer sep=0pt, draw=tncol22223b, line width=0.6pt, fill=tncola0c4ff] (t10) at (0.7,-2.1) {};
\node[circle, minimum size=4mm, inner sep=0pt, outer sep=0pt, draw=tncol22223b, line width=0.6pt, fill=tncol9bf6ff] (t11) at (1.4,-2.1) {};
\draw[draw=tncol555555, line width=0.6pt] (t1.-63) -- (t8.117);
\draw[draw=tncol555555, line width=0.6pt] (t5.-153) -- (t6.27);
\draw[draw=tncol555555, line width=0.6pt] (t3.-45) -- (t7.135);
\draw[draw=tncol555555, line width=0.6pt] (t2.-124) -- (t9.56);
\draw[draw=tncol555555, line width=0.6pt] (t4.-63) -- (t11.117);
\draw[draw=tncol555555, line width=0.6pt] (t0.-72) -- (t10.108);
\node[anchor=center,font={\small},text=tncol273043] at (-0.500,0.000) {$\sigma_{1}$};
\node[anchor=center,font={\small},text=tncol273043] at (-0.500,-0.700) {$\sigma_{2}$};
\node[anchor=center,font={\small},text=tncol273043] at (-0.500,-1.400) {$\sigma_{3}$};
\node[anchor=center,font={\small},text=tncol273043] at (-0.500,-2.100) {$\sigma_{4}$};
\begin{pgfonlayer}{background}
\node[draw=tncolffd6a5, line width=0.6pt, inner sep=3pt, fill=tncolffd6a5, fill opacity=0.1, rounded corners=3pt, fit=(t0)(t1)(t2)] {};
\node[draw=tncolffd6a5, line width=0.6pt, inner sep=3pt, fill=tncolffd6a5, fill opacity=0.1, rounded corners=3pt, fit=(t3)(t4)(t5)] {};
\node[draw=tncolffd6a5, line width=0.6pt, inner sep=3pt, fill=tncolffd6a5, fill opacity=0.1, rounded corners=3pt, fit=(t6)(t7)(t8)] {};
\node[draw=tncolffd6a5, line width=0.6pt, inner sep=3pt, fill=tncolffd6a5, fill opacity=0.1, rounded corners=3pt, fit=(t9)(t10)(t11)] {};
\end{pgfonlayer}
\end{tikzpicture}%
}}}%
\endgroup}%
}

%% file: figures/contingency_eg_pairings_sorted.tex
\providecommand{\ContPairOne}[1][1]{%
\ensuremath{\begingroup\color{black}%
\vcenter{\hbox{%
\scalebox{#1}{%
\begin{tikzpicture}[baseline=(current bounding box.center)]
\node[circle, minimum size=4mm, inner sep=0pt, outer sep=0pt, draw=tncol22223b, line width=0.6pt, fill=tncolffadad] (t0) at (0,-0) {};
\node[circle, minimum size=4mm, inner sep=0pt, outer sep=0pt, draw=tncol22223b, line width=0.6pt, fill=tncolcaffbf] (t1) at (0.7,-0) {};
\node[circle, minimum size=4mm, inner sep=0pt, outer sep=0pt, draw=tncol22223b, line width=0.6pt, fill=tncola0c4ff] (t2) at (1.4,-0) {};
\node[circle, minimum size=4mm, inner sep=0pt, outer sep=0pt, draw=tncol22223b, line width=0.6pt, fill=tncolffd6a5] (t3) at (0,-0.7) {};
\node[circle, minimum size=4mm, inner sep=0pt, outer sep=0pt, draw=tncol22223b, line width=0.6pt, fill=tncolfdffb6] (t4) at (0.7,-0.7) {};
\node[circle, minimum size=4mm, inner sep=0pt, outer sep=0pt, draw=tncol22223b, line width=0.6pt, fill=tncol9bf6ff] (t5) at (1.4,-0.7) {};
\node[circle, minimum size=4mm, inner sep=0pt, outer sep=0pt, draw=tncol22223b, line width=0.6pt, fill=tncolffadad] (t6) at (0,-1.4) {};
\node[circle, minimum size=4mm, inner sep=0pt, outer sep=0pt, draw=tncol22223b, line width=0.6pt, fill=tncolffd6a5] (t7) at (0.7,-1.4) {};
\node[circle, minimum size=4mm, inner sep=0pt, outer sep=0pt, draw=tncol22223b, line width=0.6pt, fill=tncolfdffb6] (t8) at (1.4,-1.4) {};
\node[circle, minimum size=4mm, inner sep=0pt, outer sep=0pt, draw=tncol22223b, line width=0.6pt, fill=tncolcaffbf] (t9) at (0,-2.1) {};
\node[circle, minimum size=4mm, inner sep=0pt, outer sep=0pt, draw=tncol22223b, line width=0.6pt, fill=tncol9bf6ff] (t10) at (0.7,-2.1) {};
\node[circle, minimum size=4mm, inner sep=0pt, outer sep=0pt, draw=tncol22223b, line width=0.6pt, fill=tncola0c4ff] (t11) at (1.4,-2.1) {};
\draw[draw=tncol555555, line width=0.6pt] (t0.-90) to[bend right=67.4024] (t6.90);
\draw[draw=tncol555555, line width=0.6pt] (t3.-45) -- (t7.135);
\draw[draw=tncol555555, line width=0.6pt] (t4.-45) -- (t8.135);
\draw[draw=tncol555555, line width=0.6pt] (t1.-108) -- (t9.72);
\draw[draw=tncol555555, line width=0.6pt] (t5.-117) -- (t10.63);
\draw[draw=tncol555555, line width=0.6pt] (t2.-90) to[bend right=37.9871] (t11.90);
\node[anchor=center,font={\small},text=tncol273043] at (-0.500,0.000) {$\sigma_{1}$};
\node[anchor=center,font={\small},text=tncol273043] at (-0.500,-0.700) {$\sigma_{2}$};
\node[anchor=center,font={\small},text=tncol273043] at (-0.500,-1.400) {$\sigma_{3}$};
\node[anchor=center,font={\small},text=tncol273043] at (-0.500,-2.100) {$\sigma_{4}$};
\begin{pgfonlayer}{background}
\node[draw=tncolffd6a5, line width=0.6pt, inner sep=3pt, fill=tncolffd6a5, fill opacity=0.1, rounded corners=3pt, fit=(t0)(t1)(t2)] {};
\node[draw=tncolffd6a5, line width=0.6pt, inner sep=3pt, fill=tncolffd6a5, fill opacity=0.1, rounded corners=3pt, fit=(t3)(t4)(t5)] {};
\node[draw=tncolffd6a5, line width=0.6pt, inner sep=3pt, fill=tncolffd6a5, fill opacity=0.1, rounded corners=3pt, fit=(t6)(t7)(t8)] {};
\node[draw=tncolffd6a5, line width=0.6pt, inner sep=3pt, fill=tncolffd6a5, fill opacity=0.1, rounded corners=3pt, fit=(t9)(t10)(t11)] {};
\end{pgfonlayer}
\end{tikzpicture}%
}}}%
\endgroup}%
}

%% file: figures/contingency_eg_strings_canonical.tex
\providecommand{\ContStrCanonical}[1][1]{%
\ensuremath{\begingroup\color{black}%
\vcenter{\hbox{%
\scalebox{#1}{%
\begin{tikzpicture}[baseline=(current bounding box.center)]
\node[rectangle, minimum size=9mm, inner sep=2pt, outer sep=0pt, rounded corners=1.6pt, draw=black, line width=0.6pt, fill=tncol2281aa, text=black, minimum height=2mm, minimum width=15mm, font=\small] (t0) at (0,0) {$\gamma_{\mu_B^{(1)}}$};
\node[isosceles triangle, minimum width=4.2mm, minimum height=2.5mm, inner sep=0pt, outer sep=0pt, isosceles triangle stretches, shape border rotate=90, draw=tncol22223b, line width=0.6pt, fill=tncolffadad, text=black] (t1) at (-0.5,0.45) {};
\node[isosceles triangle, minimum width=4.2mm, minimum height=2.5mm, inner sep=0pt, outer sep=0pt, isosceles triangle stretches, shape border rotate=90, draw=tncol22223b, line width=0.6pt, fill=tncolcaffbf, text=black] (t2) at (0,0.45) {};
\node[isosceles triangle, minimum width=4.2mm, minimum height=2.5mm, inner sep=0pt, outer sep=0pt, isosceles triangle stretches, shape border rotate=90, draw=tncol22223b, line width=0.6pt, fill=tncola0c4ff, text=black] (t3) at (0.5,0.45) {};
\node[rectangle, minimum size=9mm, inner sep=2pt, outer sep=0pt, rounded corners=1.6pt, draw=black, line width=0.6pt, fill=tncol2281aa, text=black, minimum height=2mm, minimum width=15mm, font=\small] (t4) at (0,-1) {$\gamma_{\mu_B^{(2)}}$};
\node[isosceles triangle, minimum width=4.2mm, minimum height=2.5mm, inner sep=0pt, outer sep=0pt, isosceles triangle stretches, shape border rotate=90, draw=tncol22223b, line width=0.6pt, fill=tncolffd6a5, text=black] (t5) at (-0.5,-0.55) {};
\node[isosceles triangle, minimum width=4.2mm, minimum height=2.5mm, inner sep=0pt, outer sep=0pt, isosceles triangle stretches, shape border rotate=90, draw=tncol22223b, line width=0.6pt, fill=tncolfdffb6, text=black] (t6) at (0,-0.55) {};
\node[isosceles triangle, minimum width=4.2mm, minimum height=2.5mm, inner sep=0pt, outer sep=0pt, isosceles triangle stretches, shape border rotate=90, draw=tncol22223b, line width=0.6pt, fill=tncol9bf6ff, text=black] (t7) at (0.5,-0.55) {};
\node[rectangle, minimum size=9mm, inner sep=2pt, outer sep=0pt, rounded corners=1.6pt, draw=black, line width=0.6pt, fill=tncol2281aa, text=black, minimum height=2mm, minimum width=15mm, font=\small] (t8) at (0,-2) {$\gamma_{\mu_B^{(3)}}$};
\node[isosceles triangle, minimum width=4.2mm, minimum height=2.5mm, inner sep=0pt, outer sep=0pt, isosceles triangle stretches, shape border rotate=90, draw=tncol22223b, line width=0.6pt, fill=tncolffadad, text=black] (t9) at (-0.5,-1.55) {};
\node[isosceles triangle, minimum width=4.2mm, minimum height=2.5mm, inner sep=0pt, outer sep=0pt, isosceles triangle stretches, shape border rotate=90, draw=tncol22223b, line width=0.6pt, fill=tncolffd6a5, text=black] (t10) at (0,-1.55) {};
\node[isosceles triangle, minimum width=4.2mm, minimum height=2.5mm, inner sep=0pt, outer sep=0pt, isosceles triangle stretches, shape border rotate=90, draw=tncol22223b, line width=0.6pt, fill=tncolfdffb6, text=black] (t11) at (0.5,-1.55) {};
\node[rectangle, minimum size=9mm, inner sep=2pt, outer sep=0pt, rounded corners=1.6pt, draw=black, line width=0.6pt, fill=tncol2281aa, text=black, minimum height=2mm, minimum width=15mm, font=\small] (t12) at (0,-3) {$\gamma_{\mu_B^{(4)}}$};
\node[isosceles triangle, minimum width=4.2mm, minimum height=2.5mm, inner sep=0pt, outer sep=0pt, isosceles triangle stretches, shape border rotate=90, draw=tncol22223b, line width=0.6pt, fill=tncolcaffbf, text=black] (t13) at (-0.5,-2.55) {};
\node[isosceles triangle, minimum width=4.2mm, minimum height=2.5mm, inner sep=0pt, outer sep=0pt, isosceles triangle stretches, shape border rotate=90, draw=tncol22223b, line width=0.6pt, fill=tncol9bf6ff, text=black] (t14) at (0,-2.55) {};
\node[isosceles triangle, minimum width=4.2mm, minimum height=2.5mm, inner sep=0pt, outer sep=0pt, isosceles triangle stretches, shape border rotate=90, draw=tncol22223b, line width=0.6pt, fill=tncola0c4ff, text=black] (t15) at (0.5,-2.55) {};
\node[inner sep=0pt, minimum size=0pt, outer sep=0pt] (t16) at (0,-3.6) {};
\draw[draw=tncol33415c, line width=0.6pt] ($(t0.north) + (0:-0.5)$) -- (t1.270);
\draw[draw=tncol33415c, line width=0.6pt] (t0.north) -- (t2.270);
\draw[draw=tncol33415c, line width=0.6pt] ($(t0.north) + (0:0.5)$) -- (t3.270);
\draw[draw=tncol33415c, line width=0.6pt] ($(t4.north) + (0:-0.5)$) -- (t5.270);
\draw[draw=tncol33415c, line width=0.6pt] (t4.north) -- (t6.270);
\draw[draw=tncol33415c, line width=0.6pt] ($(t4.north) + (0:0.5)$) -- (t7.270);
\draw[draw=tncol33415c, line width=0.6pt] ($(t8.north) + (0:-0.5)$) -- (t9.270);
\draw[draw=tncol33415c, line width=0.6pt] (t8.north) -- (t10.270);
\draw[draw=tncol33415c, line width=0.6pt] ($(t8.north) + (0:0.5)$) -- (t11.270);
\draw[draw=tncol33415c, line width=0.6pt] ($(t12.north) + (0:-0.5)$) -- (t13.270);
\draw[draw=tncol33415c, line width=0.6pt] (t12.north) -- (t14.270);
\draw[draw=tncol33415c, line width=0.6pt] ($(t12.north) + (0:0.5)$) -- (t15.270);
\draw[draw=tncol33415c, line width=0.6pt] (t0.west) -- ++(180:0.25);
\draw[draw=tncol33415c, line width=0.6pt] (t0.east) -- ++(0:0.25);
\draw[draw=tncol33415c, line width=0.6pt] (t4.west) -- ++(180:0.25);
\draw[draw=tncol33415c, line width=0.6pt] (t4.east) -- ++(0:0.25);
\draw[draw=tncol33415c, line width=0.6pt] (t8.west) -- ++(180:0.25);
\draw[draw=tncol33415c, line width=0.6pt] (t8.east) -- ++(0:0.25);
\draw[draw=tncol33415c, line width=0.6pt] (t12.west) -- ++(180:0.25);
\draw[draw=tncol33415c, line width=0.6pt] (t12.east) -- ++(0:0.25);
\end{tikzpicture}%
}}}%
\endgroup}%
}

%% file: figures/contingency_eg_strings_ordered.tex
\providecommand{\ContStrOrdered}[1][1]{%
\ensuremath{\begingroup\color{black}%
\vcenter{\hbox{%
\scalebox{#1}{%
\begin{tikzpicture}[baseline=(current bounding box.center)]
\node[rectangle, minimum size=9mm, inner sep=2pt, outer sep=0pt, rounded corners=1.6pt, draw=black, line width=0.6pt, fill=tncol2281aa, text=black, minimum height=2mm, minimum width=15mm, font=\small] (t0) at (0,0) {$\gamma_{\mu_B^{(1)}}$};
\node[isosceles triangle, minimum width=4.2mm, minimum height=2.5mm, inner sep=0pt, outer sep=0pt, isosceles triangle stretches, shape border rotate=90, draw=tncol22223b, line width=0.6pt, fill=tncola0c4ff, text=black] (t1) at (-0.5,0.45) {};
\node[isosceles triangle, minimum width=4.2mm, minimum height=2.5mm, inner sep=0pt, outer sep=0pt, isosceles triangle stretches, shape border rotate=90, draw=tncol22223b, line width=0.6pt, fill=tncolffadad, text=black] (t2) at (0,0.45) {};
\node[isosceles triangle, minimum width=4.2mm, minimum height=2.5mm, inner sep=0pt, outer sep=0pt, isosceles triangle stretches, shape border rotate=90, draw=tncol22223b, line width=0.6pt, fill=tncolcaffbf, text=black] (t3) at (0.5,0.45) {};
\node[rectangle, minimum size=9mm, inner sep=2pt, outer sep=0pt, rounded corners=1.6pt, draw=black, line width=0.6pt, fill=tncol2281aa, text=black, minimum height=2mm, minimum width=15mm, font=\small] (t4) at (0,-1) {$\gamma_{\mu_B^{(2)}}$};
\node[isosceles triangle, minimum width=4.2mm, minimum height=2.5mm, inner sep=0pt, outer sep=0pt, isosceles triangle stretches, shape border rotate=90, draw=tncol22223b, line width=0.6pt, fill=tncolfdffb6, text=black] (t5) at (-0.5,-0.55) {};
\node[isosceles triangle, minimum width=4.2mm, minimum height=2.5mm, inner sep=0pt, outer sep=0pt, isosceles triangle stretches, shape border rotate=90, draw=tncol22223b, line width=0.6pt, fill=tncol9bf6ff, text=black] (t6) at (0,-0.55) {};
\node[isosceles triangle, minimum width=4.2mm, minimum height=2.5mm, inner sep=0pt, outer sep=0pt, isosceles triangle stretches, shape border rotate=90, draw=tncol22223b, line width=0.6pt, fill=tncolffd6a5, text=black] (t7) at (0.5,-0.55) {};
\node[rectangle, minimum size=9mm, inner sep=2pt, outer sep=0pt, rounded corners=1.6pt, draw=black, line width=0.6pt, fill=tncol2281aa, text=black, minimum height=2mm, minimum width=15mm, font=\small] (t8) at (0,-2) {$\gamma_{\mu_B^{(3)}}$};
\node[isosceles triangle, minimum width=4.2mm, minimum height=2.5mm, inner sep=0pt, outer sep=0pt, isosceles triangle stretches, shape border rotate=90, draw=tncol22223b, line width=0.6pt, fill=tncolffd6a5, text=black] (t9) at (-0.5,-1.55) {};
\node[isosceles triangle, minimum width=4.2mm, minimum height=2.5mm, inner sep=0pt, outer sep=0pt, isosceles triangle stretches, shape border rotate=90, draw=tncol22223b, line width=0.6pt, fill=tncolfdffb6, text=black] (t10) at (0,-1.55) {};
\node[isosceles triangle, minimum width=4.2mm, minimum height=2.5mm, inner sep=0pt, outer sep=0pt, isosceles triangle stretches, shape border rotate=90, draw=tncol22223b, line width=0.6pt, fill=tncolffadad, text=black] (t11) at (0.5,-1.55) {};
\node[rectangle, minimum size=9mm, inner sep=2pt, outer sep=0pt, rounded corners=1.6pt, draw=black, line width=0.6pt, fill=tncol2281aa, text=black, minimum height=2mm, minimum width=15mm, font=\small] (t12) at (0,-3) {$\gamma_{\mu_B^{(4)}}$};
\node[isosceles triangle, minimum width=4.2mm, minimum height=2.5mm, inner sep=0pt, outer sep=0pt, isosceles triangle stretches, shape border rotate=90, draw=tncol22223b, line width=0.6pt, fill=tncolcaffbf, text=black] (t13) at (-0.5,-2.55) {};
\node[isosceles triangle, minimum width=4.2mm, minimum height=2.5mm, inner sep=0pt, outer sep=0pt, isosceles triangle stretches, shape border rotate=90, draw=tncol22223b, line width=0.6pt, fill=tncola0c4ff, text=black] (t14) at (0,-2.55) {};
\node[isosceles triangle, minimum width=4.2mm, minimum height=2.5mm, inner sep=0pt, outer sep=0pt, isosceles triangle stretches, shape border rotate=90, draw=tncol22223b, line width=0.6pt, fill=tncol9bf6ff, text=black] (t15) at (0.5,-2.55) {};
\node[inner sep=0pt, minimum size=0pt, outer sep=0pt] (t16) at (0,-3.6) {};
\draw[draw=tncol33415c, line width=0.6pt] ($(t0.north) + (0:-0.5)$) -- (t1.270);
\draw[draw=tncol33415c, line width=0.6pt] (t0.north) -- (t2.270);
\draw[draw=tncol33415c, line width=0.6pt] ($(t0.north) + (0:0.5)$) -- (t3.270);
\draw[draw=tncol33415c, line width=0.6pt] ($(t4.north) + (0:-0.5)$) -- (t5.270);
\draw[draw=tncol33415c, line width=0.6pt] (t4.north) -- (t6.270);
\draw[draw=tncol33415c, line width=0.6pt] ($(t4.north) + (0:0.5)$) -- (t7.270);
\draw[draw=tncol33415c, line width=0.6pt] ($(t8.north) + (0:-0.5)$) -- (t9.270);
\draw[draw=tncol33415c, line width=0.6pt] (t8.north) -- (t10.270);
\draw[draw=tncol33415c, line width=0.6pt] ($(t8.north) + (0:0.5)$) -- (t11.270);
\draw[draw=tncol33415c, line width=0.6pt] ($(t12.north) + (0:-0.5)$) -- (t13.270);
\draw[draw=tncol33415c, line width=0.6pt] (t12.north) -- (t14.270);
\draw[draw=tncol33415c, line width=0.6pt] ($(t12.north) + (0:0.5)$) -- (t15.270);
\draw[draw=tncol33415c, line width=0.6pt] (t0.west) -- ++(180:0.25);
\draw[draw=tncol33415c, line width=0.6pt] (t0.east) -- ++(0:0.25);
\draw[draw=tncol33415c, line width=0.6pt] (t4.west) -- ++(180:0.25);
\draw[draw=tncol33415c, line width=0.6pt] (t4.east) -- ++(0:0.25);
\draw[draw=tncol33415c, line width=0.6pt] (t8.west) -- ++(180:0.25);
\draw[draw=tncol33415c, line width=0.6pt] (t8.east) -- ++(0:0.25);
\draw[draw=tncol33415c, line width=0.6pt] (t12.west) -- ++(180:0.25);
\draw[draw=tncol33415c, line width=0.6pt] (t12.east) -- ++(0:0.25);
\end{tikzpicture}%
}}}%
\endgroup}%
}

%% file: paper.bbl
\begin{thebibliography}{75}%
\makeatletter
\providecommand \@ifxundefined [1]{%
 \@ifx{#1\undefined}
}%
\providecommand \@ifnum [1]{%
 \ifnum #1\expandafter \@firstoftwo
 \else \expandafter \@secondoftwo
 \fi
}%
\providecommand \@ifx [1]{%
 \ifx #1\expandafter \@firstoftwo
 \else \expandafter \@secondoftwo
 \fi
}%
\providecommand \natexlab [1]{#1}%
\providecommand \enquote  [1]{``#1''}%
\providecommand \bibnamefont  [1]{#1}%
\providecommand \bibfnamefont [1]{#1}%
\providecommand \citenamefont [1]{#1}%
\providecommand \href@noop [0]{\@secondoftwo}%
\providecommand \href [0]{\begingroup \@sanitize@url \@href}%
\providecommand \@href[1]{\@@startlink{#1}\@@href}%
\providecommand \@@href[1]{\endgroup#1\@@endlink}%
\providecommand \@sanitize@url [0]{\catcode `\\12\catcode `\$12\catcode `\&12\catcode `\#12\catcode `\^12\catcode `\_12\catcode `\%12\relax}%
\providecommand \@@startlink[1]{}%
\providecommand \@@endlink[0]{}%
\providecommand \url  [0]{\begingroup\@sanitize@url \@url }%
\providecommand \@url [1]{\endgroup\@href {#1}{\urlprefix }}%
\providecommand \urlprefix  [0]{URL }%
\providecommand \Eprint [0]{\href }%
\providecommand \doibase [0]{https://doi.org/}%
\providecommand \selectlanguage [0]{\@gobble}%
\providecommand \bibinfo  [0]{\@secondoftwo}%
\providecommand \bibfield  [0]{\@secondoftwo}%
\providecommand \translation [1]{[#1]}%
\providecommand \BibitemOpen [0]{}%
\providecommand \bibitemStop [0]{}%
\providecommand \bibitemNoStop [0]{.\EOS\space}%
\providecommand \EOS [0]{\spacefactor3000\relax}%
\providecommand \BibitemShut  [1]{\csname bibitem#1\endcsname}%
\let\auto@bib@innerbib\@empty
%</preamble>
\bibitem [{\citenamefont {Eisert}\ \emph {et~al.}(2015)\citenamefont {Eisert}, \citenamefont {Friesdorf},\ and\ \citenamefont {Gogolin}}]{1408.5148}%
  \BibitemOpen
  \bibfield  {author} {\bibinfo {author} {\bibfnamefont {J.}~\bibnamefont {Eisert}}, \bibinfo {author} {\bibfnamefont {M.}~\bibnamefont {Friesdorf}},\ and\ \bibinfo {author} {\bibfnamefont {C.}~\bibnamefont {Gogolin}},\ }\bibfield  {title} {\bibinfo {title} {Quantum many-body systems out of equilibrium},\ }\href {https://doi.org/doi:10.1038/nphys3215} {\bibfield  {journal} {\bibinfo  {journal} {Nature Phys.}\ }\textbf {\bibinfo {volume} {11}},\ \bibinfo {pages} {124} (\bibinfo {year} {2015})}\BibitemShut {NoStop}%
\bibitem [{\citenamefont {Polkovnikov}\ \emph {et~al.}(2011)\citenamefont {Polkovnikov}, \citenamefont {Sengupta}, \citenamefont {Silva},\ and\ \citenamefont {Vengalattore}}]{PolkovnikovReview}%
  \BibitemOpen
  \bibfield  {author} {\bibinfo {author} {\bibfnamefont {A.}~\bibnamefont {Polkovnikov}}, \bibinfo {author} {\bibfnamefont {K.}~\bibnamefont {Sengupta}}, \bibinfo {author} {\bibfnamefont {A.}~\bibnamefont {Silva}},\ and\ \bibinfo {author} {\bibfnamefont {M.}~\bibnamefont {Vengalattore}},\ }\bibfield  {title} {\bibinfo {title} {Non-equilibrium dynamics of closed interacting quantum systems},\ }\href {https://doi.org/10.1103/RevModPhys.83.863} {\bibfield  {journal} {\bibinfo  {journal} {Rev. Mod. Phys.}\ }\textbf {\bibinfo {volume} {83}},\ \bibinfo {pages} {863} (\bibinfo {year} {2011})}\BibitemShut {NoStop}%
\bibitem [{\citenamefont {von Keyserlingk}\ \emph {et~al.}(2018)\citenamefont {von Keyserlingk}, \citenamefont {Rakovszky}, \citenamefont {Pollmann},\ and\ \citenamefont {Sondhi}}]{Keyserlingk2018}%
  \BibitemOpen
  \bibfield  {author} {\bibinfo {author} {\bibfnamefont {C.~W.}\ \bibnamefont {von Keyserlingk}}, \bibinfo {author} {\bibfnamefont {T.}~\bibnamefont {Rakovszky}}, \bibinfo {author} {\bibfnamefont {F.}~\bibnamefont {Pollmann}},\ and\ \bibinfo {author} {\bibfnamefont {S.~L.}\ \bibnamefont {Sondhi}},\ }\bibfield  {title} {\bibinfo {title} {{Operator hydrodynamics, OTOCs, and entanglement growth in systems without conservation laws}},\ }\href {https://doi.org/10.1103/PhysRevX.8.021013} {\bibfield  {journal} {\bibinfo  {journal} {Phys. Rev. X}\ }\textbf {\bibinfo {volume} {8}},\ \bibinfo {pages} {021013} (\bibinfo {year} {2018})}\BibitemShut {NoStop}%
\bibitem [{\citenamefont {Khemani}\ \emph {et~al.}(2018)\citenamefont {Khemani}, \citenamefont {Vishwanath},\ and\ \citenamefont {Huse}}]{Khemani2018}%
  \BibitemOpen
  \bibfield  {author} {\bibinfo {author} {\bibfnamefont {V.}~\bibnamefont {Khemani}}, \bibinfo {author} {\bibfnamefont {A.}~\bibnamefont {Vishwanath}},\ and\ \bibinfo {author} {\bibfnamefont {D.~A.}\ \bibnamefont {Huse}},\ }\bibfield  {title} {\bibinfo {title} {Operator spreading and the emergence of dissipative hydrodynamics under unitary evolution with conservation laws},\ }\href {https://doi.org/10.1103/PhysRevX.8.031057} {\bibfield  {journal} {\bibinfo  {journal} {Phys. Rev. X}\ }\textbf {\bibinfo {volume} {8}},\ \bibinfo {pages} {031057} (\bibinfo {year} {2018})}\BibitemShut {NoStop}%
\bibitem [{\citenamefont {Pappalardi}\ \emph {et~al.}(2022)\citenamefont {Pappalardi}, \citenamefont {Foini},\ and\ \citenamefont {Kurchan}}]{Pappalardi2022freeETH}%
  \BibitemOpen
  \bibfield  {author} {\bibinfo {author} {\bibfnamefont {S.}~\bibnamefont {Pappalardi}}, \bibinfo {author} {\bibfnamefont {L.}~\bibnamefont {Foini}},\ and\ \bibinfo {author} {\bibfnamefont {J.}~\bibnamefont {Kurchan}},\ }\bibfield  {title} {\bibinfo {title} {Eigenstate thermalization hypothesis and free probability},\ }\href {https://doi.org/10.1103/PhysRevLett.129.170603} {\bibfield  {journal} {\bibinfo  {journal} {Phys. Rev. Lett.}\ }\textbf {\bibinfo {volume} {129}},\ \bibinfo {pages} {170603} (\bibinfo {year} {2022})}\BibitemShut {NoStop}%
\bibitem [{\citenamefont {Foini}\ and\ \citenamefont {Kurchan}(2019)}]{Foini2019}%
  \BibitemOpen
  \bibfield  {author} {\bibinfo {author} {\bibfnamefont {L.}~\bibnamefont {Foini}}\ and\ \bibinfo {author} {\bibfnamefont {J.}~\bibnamefont {Kurchan}},\ }\bibfield  {title} {\bibinfo {title} {Eigenstate thermalization hypothesis and out of time order correlators},\ }\href {https://doi.org/10.1103/PhysRevE.99.042139} {\bibfield  {journal} {\bibinfo  {journal} {Phys. Rev. E}\ }\textbf {\bibinfo {volume} {99}},\ \bibinfo {pages} {042139} (\bibinfo {year} {2019})}\BibitemShut {NoStop}%
\bibitem [{\citenamefont {Hosur}\ \emph {et~al.}(2016)\citenamefont {Hosur}, \citenamefont {Qi}, \citenamefont {Roberts},\ and\ \citenamefont {Yoshida}}]{Hosur2016}%
  \BibitemOpen
  \bibfield  {author} {\bibinfo {author} {\bibfnamefont {P.}~\bibnamefont {Hosur}}, \bibinfo {author} {\bibfnamefont {X.-L.}\ \bibnamefont {Qi}}, \bibinfo {author} {\bibfnamefont {D.~A.}\ \bibnamefont {Roberts}},\ and\ \bibinfo {author} {\bibfnamefont {B.}~\bibnamefont {Yoshida}},\ }\bibfield  {title} {\bibinfo {title} {Chaos in quantum channels},\ }\href {https://doi.org/10.1007/JHEP02(2016)004} {\bibfield  {journal} {\bibinfo  {journal} {J. High Energy Phys.}\ }\textbf {\bibinfo {volume} {2016}}\bibinfo  {number} { (2)},\ \bibinfo {pages} {4}}\BibitemShut {NoStop}%
\bibitem [{\citenamefont {Swingle}\ \emph {et~al.}(2016)\citenamefont {Swingle}, \citenamefont {Bentsen}, \citenamefont {Schleier-Smith},\ and\ \citenamefont {Hayden}}]{Swingle2016}%
  \BibitemOpen
\bibfield  {number} {  }\bibfield  {author} {\bibinfo {author} {\bibfnamefont {B.}~\bibnamefont {Swingle}}, \bibinfo {author} {\bibfnamefont {G.}~\bibnamefont {Bentsen}}, \bibinfo {author} {\bibfnamefont {M.}~\bibnamefont {Schleier-Smith}},\ and\ \bibinfo {author} {\bibfnamefont {P.}~\bibnamefont {Hayden}},\ }\bibfield  {title} {\bibinfo {title} {Measuring the scrambling of quantum information},\ }\href {https://doi.org/10.1103/PhysRevA.94.040302} {\bibfield  {journal} {\bibinfo  {journal} {Phys. Rev. A}\ }\textbf {\bibinfo {volume} {94}},\ \bibinfo {pages} {040302} (\bibinfo {year} {2016})}\BibitemShut {NoStop}%
\bibitem [{\citenamefont {{Google Quantum AI}}(2021)}]{Mi2021}%
  \BibitemOpen
  \bibfield  {author} {\bibinfo {author} {\bibnamefont {{Google Quantum AI}}},\ }\bibfield  {title} {\bibinfo {title} {Information scrambling in quantum circuits},\ }\href {https://doi.org/10.1126/science.abg5029} {\bibfield  {journal} {\bibinfo  {journal} {Science}\ }\textbf {\bibinfo {volume} {374}},\ \bibinfo {pages} {1479} (\bibinfo {year} {2021})}\BibitemShut {NoStop}%
\bibitem [{\citenamefont {Dowling}\ \emph {et~al.}(2025{\natexlab{a}})\citenamefont {Dowling}, \citenamefont {Modi},\ and\ \citenamefont {White}}]{Dowlin2024LOE-OSRE}%
  \BibitemOpen
  \bibfield  {author} {\bibinfo {author} {\bibfnamefont {N.}~\bibnamefont {Dowling}}, \bibinfo {author} {\bibfnamefont {K.}~\bibnamefont {Modi}},\ and\ \bibinfo {author} {\bibfnamefont {G.~A.~L.}\ \bibnamefont {White}},\ }\bibfield  {title} {\bibinfo {title} {Bridging entanglement and magic resources within operator space},\ }\href {https://doi.org/10.1103/c7k1-xcwy} {\bibfield  {journal} {\bibinfo  {journal} {Phys. Rev. Lett.}\ }\textbf {\bibinfo {volume} {135}},\ \bibinfo {pages} {160201} (\bibinfo {year} {2025}{\natexlab{a}})}\BibitemShut {NoStop}%
\bibitem [{\citenamefont {Dowling}\ \emph {et~al.}(2025{\natexlab{b}})\citenamefont {Dowling}, \citenamefont {Kos},\ and\ \citenamefont {Turkeshi}}]{dowling2024magicheis}%
  \BibitemOpen
  \bibfield  {author} {\bibinfo {author} {\bibfnamefont {N.}~\bibnamefont {Dowling}}, \bibinfo {author} {\bibfnamefont {P.}~\bibnamefont {Kos}},\ and\ \bibinfo {author} {\bibfnamefont {X.}~\bibnamefont {Turkeshi}},\ }\bibfield  {title} {\bibinfo {title} {Magic resources of the heisenberg picture},\ }\href {https://doi.org/10.1103/p7xt-s9nz} {\bibfield  {journal} {\bibinfo  {journal} {Phys. Rev. Lett.}\ }\textbf {\bibinfo {volume} {135}},\ \bibinfo {pages} {050401} (\bibinfo {year} {2025}{\natexlab{b}})}\BibitemShut {NoStop}%
\bibitem [{\citenamefont {Prosen}\ and\ \citenamefont {Pi\ifmmode~\check{z}\else \v{z}\fi{}orn}(2007)}]{Prosen2007a}%
  \BibitemOpen
  \bibfield  {author} {\bibinfo {author} {\bibfnamefont {T.}~\bibnamefont {Prosen}}\ and\ \bibinfo {author} {\bibfnamefont {I.}~\bibnamefont {Pi\ifmmode~\check{z}\else \v{z}\fi{}orn}},\ }\bibfield  {title} {\bibinfo {title} {Operator space entanglement entropy in a transverse {Ising} chain},\ }\href {https://doi.org/10.1103/PhysRevA.76.032316} {\bibfield  {journal} {\bibinfo  {journal} {Phys. Rev. A}\ }\textbf {\bibinfo {volume} {76}},\ \bibinfo {pages} {032316} (\bibinfo {year} {2007})}\BibitemShut {NoStop}%
\bibitem [{\citenamefont {Pi\ifmmode~\check{z}\else \v{z}\fi{}orn}\ and\ \citenamefont {Prosen}(2009)}]{Prosen2009}%
  \BibitemOpen
  \bibfield  {author} {\bibinfo {author} {\bibfnamefont {I.}~\bibnamefont {Pi\ifmmode~\check{z}\else \v{z}\fi{}orn}}\ and\ \bibinfo {author} {\bibfnamefont {T.}~\bibnamefont {Prosen}},\ }\bibfield  {title} {\bibinfo {title} {Operator space entanglement entropy in $xy$ spin chains},\ }\href {https://doi.org/10.1103/PhysRevB.79.184416} {\bibfield  {journal} {\bibinfo  {journal} {Phys. Rev. B}\ }\textbf {\bibinfo {volume} {79}},\ \bibinfo {pages} {184416} (\bibinfo {year} {2009})}\BibitemShut {NoStop}%
\bibitem [{\citenamefont {Dubail}(2017)}]{Dubail_2017}%
  \BibitemOpen
  \bibfield  {author} {\bibinfo {author} {\bibfnamefont {J.}~\bibnamefont {Dubail}},\ }\bibfield  {title} {\bibinfo {title} {Entanglement scaling of operators: a conformal field theory approach, with a glimpse of simulability of long-time dynamics in 1+1d},\ }\href {https://doi.org/10.1088/1751-8121/aa6f38} {\bibfield  {journal} {\bibinfo  {journal} {J. Phys. A}\ }\textbf {\bibinfo {volume} {50}},\ \bibinfo {pages} {234001} (\bibinfo {year} {2017})}\BibitemShut {NoStop}%
\bibitem [{\citenamefont {Hartmann}\ \emph {et~al.}(2009)\citenamefont {Hartmann}, \citenamefont {Prior}, \citenamefont {Clark},\ and\ \citenamefont {Plenio}}]{Hartmann2009}%
  \BibitemOpen
  \bibfield  {author} {\bibinfo {author} {\bibfnamefont {M.~J.}\ \bibnamefont {Hartmann}}, \bibinfo {author} {\bibfnamefont {J.}~\bibnamefont {Prior}}, \bibinfo {author} {\bibfnamefont {S.~R.}\ \bibnamefont {Clark}},\ and\ \bibinfo {author} {\bibfnamefont {M.~B.}\ \bibnamefont {Plenio}},\ }\bibfield  {title} {\bibinfo {title} {Density matrix renormalization group in the {Heisenberg} picture},\ }\href {https://doi.org/10.1103/PhysRevLett.102.057202} {\bibfield  {journal} {\bibinfo  {journal} {Phys. Rev. Lett.}\ }\textbf {\bibinfo {volume} {102}},\ \bibinfo {pages} {057202} (\bibinfo {year} {2009})}\BibitemShut {NoStop}%
\bibitem [{\citenamefont {Dowling}(2026)}]{dowling2026classicalsim}%
  \BibitemOpen
  \bibfield  {author} {\bibinfo {author} {\bibfnamefont {N.}~\bibnamefont {Dowling}},\ }\href@noop {} {\bibinfo {title} {Classical simulability from operator entanglement scaling}} (\bibinfo {year} {2026}),\ \Eprint {https://arxiv.org/abs/2603.05656} {arXiv:2603.05656} \BibitemShut {NoStop}%
\bibitem [{\citenamefont {Calabrese}\ and\ \citenamefont {Cardy}(2005)}]{Calabrese_2005}%
  \BibitemOpen
  \bibfield  {author} {\bibinfo {author} {\bibfnamefont {P.}~\bibnamefont {Calabrese}}\ and\ \bibinfo {author} {\bibfnamefont {J.}~\bibnamefont {Cardy}},\ }\bibfield  {title} {\bibinfo {title} {Evolution of entanglement entropy in one-dimensional systems},\ }\href {https://doi.org/10.1088/1742-5468/2005/04/p04010} {\bibfield  {journal} {\bibinfo  {journal} {J. Stat. Mech.: Theory Exp.}\ }\textbf {\bibinfo {volume} {2005}}\bibinfo  {number} { (04)},\ \bibinfo {pages} {P04010}}\BibitemShut {NoStop}%
\bibitem [{\citenamefont {Schuch}\ \emph {et~al.}(2008)\citenamefont {Schuch}, \citenamefont {Wolf}, \citenamefont {Verstraete},\ and\ \citenamefont {Cirac}}]{Schuch_2008}%
  \BibitemOpen
\bibfield  {number} {  }\bibfield  {author} {\bibinfo {author} {\bibfnamefont {N.}~\bibnamefont {Schuch}}, \bibinfo {author} {\bibfnamefont {M.~M.}\ \bibnamefont {Wolf}}, \bibinfo {author} {\bibfnamefont {F.}~\bibnamefont {Verstraete}},\ and\ \bibinfo {author} {\bibfnamefont {J.~I.}\ \bibnamefont {Cirac}},\ }\bibfield  {title} {\bibinfo {title} {Entropy scaling and simulability by matrix product states},\ }\href {https://doi.org/10.1103/PhysRevLett.100.030504} {\bibfield  {journal} {\bibinfo  {journal} {Phys. Rev. Lett.}\ }\textbf {\bibinfo {volume} {100}},\ \bibinfo {pages} {030504} (\bibinfo {year} {2008})}\BibitemShut {NoStop}%
\bibitem [{\citenamefont {Jozsa}\ and\ \citenamefont {Miyake}(2008{\natexlab{a}})}]{Jozsa_2008}%
  \BibitemOpen
  \bibfield  {author} {\bibinfo {author} {\bibfnamefont {R.}~\bibnamefont {Jozsa}}\ and\ \bibinfo {author} {\bibfnamefont {A.}~\bibnamefont {Miyake}},\ }\bibfield  {title} {\bibinfo {title} {Matchgates and classical simulation of quantum circuits},\ }\href {https://doi.org/10.1098/rspa.2008.0189} {\bibfield  {journal} {\bibinfo  {journal} {Proc. R. Soc. A}\ }\textbf {\bibinfo {volume} {464}},\ \bibinfo {pages} {3089–3106} (\bibinfo {year} {2008}{\natexlab{a}})}\BibitemShut {NoStop}%
\bibitem [{\citenamefont {Wan}\ \emph {et~al.}(2023)\citenamefont {Wan}, \citenamefont {Huggins}, \citenamefont {Lee},\ and\ \citenamefont {Babbush}}]{wan_matchgate_2023}%
  \BibitemOpen
  \bibfield  {author} {\bibinfo {author} {\bibfnamefont {K.}~\bibnamefont {Wan}}, \bibinfo {author} {\bibfnamefont {W.~J.}\ \bibnamefont {Huggins}}, \bibinfo {author} {\bibfnamefont {J.}~\bibnamefont {Lee}},\ and\ \bibinfo {author} {\bibfnamefont {R.}~\bibnamefont {Babbush}},\ }\bibfield  {title} {\bibinfo {title} {Matchgate {shadows} for {fermionic} {quantum} {simulation}},\ }\href {https://doi.org/10.1007/s00220-023-04844-0} {\bibfield  {journal} {\bibinfo  {journal} {Commun. Math. Phys.}\ }\textbf {\bibinfo {volume} {404}},\ \bibinfo {pages} {629} (\bibinfo {year} {2023})}\BibitemShut {NoStop}%
\bibitem [{\citenamefont {Dias}\ and\ \citenamefont {Koenig}(2024)}]{Dias2024classicalsimulation}%
  \BibitemOpen
  \bibfield  {author} {\bibinfo {author} {\bibfnamefont {B.}~\bibnamefont {Dias}}\ and\ \bibinfo {author} {\bibfnamefont {R.}~\bibnamefont {Koenig}},\ }\bibfield  {title} {\bibinfo {title} {Classical simulation of non-{G}aussian fermionic circuits},\ }\href {https://doi.org/10.22331/q-2024-05-21-1350} {\bibfield  {journal} {\bibinfo  {journal} {{Quantum}}\ }\textbf {\bibinfo {volume} {8}},\ \bibinfo {pages} {1350} (\bibinfo {year} {2024})}\BibitemShut {NoStop}%
\bibitem [{\citenamefont {Zhao}(2023)}]{zhao2023learningopt}%
  \BibitemOpen
  \bibfield  {author} {\bibinfo {author} {\bibfnamefont {A.}~\bibnamefont {Zhao}},\ }\href@noop {} {\bibinfo {title} {Learning, optimizing, and simulating fermions with quantum computers}} (\bibinfo {year} {2023}),\ \Eprint {https://arxiv.org/abs/2312.10399} {arXiv:2312.10399 [quant-ph]} \BibitemShut {NoStop}%
\bibitem [{\citenamefont {Braccia}\ \emph {et~al.}(2025)\citenamefont {Braccia}, \citenamefont {Diaz}, \citenamefont {Larocca}, \citenamefont {Cerezo},\ and\ \citenamefont {García-Martín}}]{braccia2025optimalhaarrandomfermionic}%
  \BibitemOpen
  \bibfield  {author} {\bibinfo {author} {\bibfnamefont {P.}~\bibnamefont {Braccia}}, \bibinfo {author} {\bibfnamefont {N.~L.}\ \bibnamefont {Diaz}}, \bibinfo {author} {\bibfnamefont {M.}~\bibnamefont {Larocca}}, \bibinfo {author} {\bibfnamefont {M.}~\bibnamefont {Cerezo}},\ and\ \bibinfo {author} {\bibfnamefont {D.}~\bibnamefont {García-Martín}},\ }\href {https://arxiv.org/abs/2505.24212} {\bibinfo {title} {Optimal haar random fermionic linear optics circuits}} (\bibinfo {year} {2025}),\ \Eprint {https://arxiv.org/abs/2505.24212} {arXiv:2505.24212 [quant-ph]} \BibitemShut {NoStop}%
\bibitem [{\citenamefont {Sierant}\ \emph {et~al.}(2026{\natexlab{a}})\citenamefont {Sierant}, \citenamefont {Turkeshi},\ and\ \citenamefont {Tarabunga}}]{sierant2026theorymatchgatecommutant}%
  \BibitemOpen
  \bibfield  {author} {\bibinfo {author} {\bibfnamefont {P.}~\bibnamefont {Sierant}}, \bibinfo {author} {\bibfnamefont {X.}~\bibnamefont {Turkeshi}},\ and\ \bibinfo {author} {\bibfnamefont {P.~S.}\ \bibnamefont {Tarabunga}},\ }\href@noop {} {\bibinfo {title} {Theory of the matchgate commutant}} (\bibinfo {year} {2026}{\natexlab{a}}),\ \Eprint {https://arxiv.org/abs/2603.12392} {arXiv:2603.12392} \BibitemShut {NoStop}%
\bibitem [{\citenamefont {Braccia}\ \emph {et~al.}(2026)\citenamefont {Braccia}, \citenamefont {Diaz}, \citenamefont {Larocca}, \citenamefont {Cerezo},\ and\ \citenamefont {García-Martín}}]{braccia2026commutantfermionicgaussianunitaries}%
  \BibitemOpen
  \bibfield  {author} {\bibinfo {author} {\bibfnamefont {P.}~\bibnamefont {Braccia}}, \bibinfo {author} {\bibfnamefont {N.~L.}\ \bibnamefont {Diaz}}, \bibinfo {author} {\bibfnamefont {M.}~\bibnamefont {Larocca}}, \bibinfo {author} {\bibfnamefont {M.}~\bibnamefont {Cerezo}},\ and\ \bibinfo {author} {\bibfnamefont {D.}~\bibnamefont {García-Martín}},\ }\href@noop {} {\bibinfo {title} {{The commutant of fermionic Gaussian unitaries}}} (\bibinfo {year} {2026}),\ \Eprint {https://arxiv.org/abs/2603.19210} {arXiv:2603.19210} \BibitemShut {NoStop}%
\bibitem [{\citenamefont {Collins}\ and\ \citenamefont {Sniady}(2006)}]{collins_integration_2006}%
  \BibitemOpen
  \bibfield  {author} {\bibinfo {author} {\bibfnamefont {B.}~\bibnamefont {Collins}}\ and\ \bibinfo {author} {\bibfnamefont {P.}~\bibnamefont {Sniady}},\ }\bibfield  {title} {\bibinfo {title} {Integration with {respect} to the {Haar} {measure} on {unitary}, {orthogonal} and {symplectic} {group}},\ }\href {https://doi.org/10.1007/s00220-006-1554-3} {\bibfield  {journal} {\bibinfo  {journal} {Commun. Math. Phys.}\ }\textbf {\bibinfo {volume} {264}},\ \bibinfo {pages} {773} (\bibinfo {year} {2006})}\BibitemShut {NoStop}%
\bibitem [{\citenamefont {Chapman}\ and\ \citenamefont {Flammia}(2025)}]{chapman2025fermioni}%
  \BibitemOpen
  \bibfield  {author} {\bibinfo {author} {\bibfnamefont {A.}~\bibnamefont {Chapman}}\ and\ \bibinfo {author} {\bibfnamefont {S.~T.}\ \bibnamefont {Flammia}},\ }\href@noop {} {\bibinfo {title} {Fermionic averaged circuit eigenvalue sampling}} (\bibinfo {year} {2025}),\ \Eprint {https://arxiv.org/abs/2504.01936} {arXiv:2504.01936} \BibitemShut {NoStop}%
\bibitem [{\citenamefont {Barvinok}(2008)}]{Barvinok2008Contingency}%
  \BibitemOpen
  \bibfield  {author} {\bibinfo {author} {\bibfnamefont {A.}~\bibnamefont {Barvinok}},\ }\bibfield  {title} {\bibinfo {title} {Enumerating contingency tables via random permanents},\ }\href {https://doi.org/10.1017/S0963548307008668} {\bibfield  {journal} {\bibinfo  {journal} {Combin. Probab. Comput.}\ }\textbf {\bibinfo {volume} {17}},\ \bibinfo {pages} {1} (\bibinfo {year} {2008})}\BibitemShut {NoStop}%
\bibitem [{\citenamefont {Zanardi}(2001)}]{Zanardi2001}%
  \BibitemOpen
  \bibfield  {author} {\bibinfo {author} {\bibfnamefont {P.}~\bibnamefont {Zanardi}},\ }\bibfield  {title} {\bibinfo {title} {Entanglement of quantum evolutions},\ }\href {https://doi.org/10.1103/PhysRevA.63.040304} {\bibfield  {journal} {\bibinfo  {journal} {Phys. Rev. A}\ }\textbf {\bibinfo {volume} {63}},\ \bibinfo {pages} {040304(R)} (\bibinfo {year} {2001})}\BibitemShut {NoStop}%
\bibitem [{\citenamefont {Rakovszky}\ \emph {et~al.}(2022)\citenamefont {Rakovszky}, \citenamefont {von Keyserlingk},\ and\ \citenamefont {Pollmann}}]{Rakovszky2022}%
  \BibitemOpen
  \bibfield  {author} {\bibinfo {author} {\bibfnamefont {T.}~\bibnamefont {Rakovszky}}, \bibinfo {author} {\bibfnamefont {C.~W.}\ \bibnamefont {von Keyserlingk}},\ and\ \bibinfo {author} {\bibfnamefont {F.}~\bibnamefont {Pollmann}},\ }\bibfield  {title} {\bibinfo {title} {Dissipation-assisted operator evolution method for capturing hydrodynamic transport},\ }\href {https://doi.org/10.1103/PhysRevB.105.075131} {\bibfield  {journal} {\bibinfo  {journal} {Phys. Rev. B}\ }\textbf {\bibinfo {volume} {105}},\ \bibinfo {pages} {075131} (\bibinfo {year} {2022})}\BibitemShut {NoStop}%
\bibitem [{\citenamefont {Begu\ifmmode \check{s}\else \v{s}\fi{}i\ifmmode~\acute{c}\else \'{c}\fi{}}\ and\ \citenamefont {Chan}(2025)}]{begusic2024realtime}%
  \BibitemOpen
  \bibfield  {author} {\bibinfo {author} {\bibfnamefont {T.}~\bibnamefont {Begu\ifmmode \check{s}\else \v{s}\fi{}i\ifmmode~\acute{c}\else \'{c}\fi{}}}\ and\ \bibinfo {author} {\bibfnamefont {G.~K.-L.}\ \bibnamefont {Chan}},\ }\bibfield  {title} {\bibinfo {title} {Real-time operator evolution in two and three dimensions via sparse pauli dynamics},\ }\href {https://doi.org/10.1103/PRXQuantum.6.020302} {\bibfield  {journal} {\bibinfo  {journal} {PRX Quantum}\ }\textbf {\bibinfo {volume} {6}},\ \bibinfo {pages} {020302} (\bibinfo {year} {2025})}\BibitemShut {NoStop}%
\bibitem [{\citenamefont {Schuster}\ \emph {et~al.}(2025)\citenamefont {Schuster}, \citenamefont {Yin}, \citenamefont {Gao},\ and\ \citenamefont {Yao}}]{schuster2024polynomialtime}%
  \BibitemOpen
  \bibfield  {author} {\bibinfo {author} {\bibfnamefont {T.}~\bibnamefont {Schuster}}, \bibinfo {author} {\bibfnamefont {C.}~\bibnamefont {Yin}}, \bibinfo {author} {\bibfnamefont {X.}~\bibnamefont {Gao}},\ and\ \bibinfo {author} {\bibfnamefont {N.~Y.}\ \bibnamefont {Yao}},\ }\bibfield  {title} {\bibinfo {title} {A polynomial-time classical algorithm for noisy quantum circuits},\ }\href {https://doi.org/10.1103/xct1-7kf2} {\bibfield  {journal} {\bibinfo  {journal} {Phys. Rev. X}\ }\textbf {\bibinfo {volume} {15}},\ \bibinfo {pages} {041018} (\bibinfo {year} {2025})}\BibitemShut {NoStop}%
\bibitem [{\citenamefont {Rudolph}\ \emph {et~al.}(2026)\citenamefont {Rudolph}, \citenamefont {Jones}, \citenamefont {Teng}, \citenamefont {Angrisani},\ and\ \citenamefont {Holmes}}]{rudolph2025paulipropag}%
  \BibitemOpen
  \bibfield  {author} {\bibinfo {author} {\bibfnamefont {M.~S.}\ \bibnamefont {Rudolph}}, \bibinfo {author} {\bibfnamefont {T.}~\bibnamefont {Jones}}, \bibinfo {author} {\bibfnamefont {Y.}~\bibnamefont {Teng}}, \bibinfo {author} {\bibfnamefont {A.}~\bibnamefont {Angrisani}},\ and\ \bibinfo {author} {\bibfnamefont {Z.}~\bibnamefont {Holmes}},\ }\bibfield  {title} {\bibinfo {title} {Pauli propagation: A computational framework for simulating quantum systems},\ }\href {https://doi.org/10.1103/6vd7-l9bn} {\bibfield  {journal} {\bibinfo  {journal} {PRX Quantum}\ }\textbf {\bibinfo {volume} {7}},\ \bibinfo {pages} {032001} (\bibinfo {year} {2026})}\BibitemShut {NoStop}%
\bibitem [{\citenamefont {Dowling}\ and\ \citenamefont {Pappalardi}(2026)}]{dowling2026page}%
  \BibitemOpen
  \bibfield  {author} {\bibinfo {author} {\bibfnamefont {N.}~\bibnamefont {Dowling}}\ and\ \bibinfo {author} {\bibfnamefont {S.}~\bibnamefont {Pappalardi}},\ }\href@noop {} {\bibinfo {title} {Page curve for local-operator entanglement from free probability}} (\bibinfo {year} {2026}),\ \Eprint {https://arxiv.org/abs/2605.02995} {arXiv:2605.02995} \BibitemShut {NoStop}%
\bibitem [{\citenamefont {Dowling}\ \emph {et~al.}(2026)\citenamefont {Dowling}, \citenamefont {Turkeshi}, \citenamefont {Nardis},\ and\ \citenamefont {Lami}}]{dowling2026noise}%
  \BibitemOpen
  \bibfield  {author} {\bibinfo {author} {\bibfnamefont {N.}~\bibnamefont {Dowling}}, \bibinfo {author} {\bibfnamefont {X.}~\bibnamefont {Turkeshi}}, \bibinfo {author} {\bibfnamefont {J.~D.}\ \bibnamefont {Nardis}},\ and\ \bibinfo {author} {\bibfnamefont {G.}~\bibnamefont {Lami}},\ }\href@noop {} {\bibinfo {title} {Noise-induced simulability transition from operator scrambling}} (\bibinfo {year} {2026}),\ \Eprint {https://arxiv.org/abs/2605.18943} {arXiv:2605.18943} \BibitemShut {NoStop}%
\bibitem [{\citenamefont {AI}\ and\ \citenamefont {{Collaborators}}(2025)}]{Abanin2025}%
  \BibitemOpen
  \bibfield  {author} {\bibinfo {author} {\bibfnamefont {G.~Q.}\ \bibnamefont {AI}}\ and\ \bibinfo {author} {\bibnamefont {{Collaborators}}},\ }\bibfield  {title} {\bibinfo {title} {Observation of constructive interference at the edge of quantum ergodicity},\ }\href {https://doi.org/10.1038/s41586-025-09526-6} {\bibfield  {journal} {\bibinfo  {journal} {Nature}\ }\textbf {\bibinfo {volume} {646}},\ \bibinfo {pages} {825} (\bibinfo {year} {2025})}\BibitemShut {NoStop}%
\bibitem [{\citenamefont {Nica}\ and\ \citenamefont {Speicher}(2006)}]{nica2006lectures}%
  \BibitemOpen
  \bibfield  {author} {\bibinfo {author} {\bibfnamefont {A.}~\bibnamefont {Nica}}\ and\ \bibinfo {author} {\bibfnamefont {R.}~\bibnamefont {Speicher}},\ }\href@noop {} {\emph {\bibinfo {title} {Lectures on the combinatorics of free probability}}},\ \bibinfo {series} {London Mathematical Society Lecture Note Series}, Vol.\ \bibinfo {volume} {335}\ (\bibinfo  {publisher} {Cambridge University Press},\ \bibinfo {year} {2006})\BibitemShut {NoStop}%
\bibitem [{\citenamefont {Fava}\ \emph {et~al.}(2025)\citenamefont {Fava}, \citenamefont {Kurchan},\ and\ \citenamefont {Pappalardi}}]{fava2023designsfreeprobability}%
  \BibitemOpen
  \bibfield  {author} {\bibinfo {author} {\bibfnamefont {M.}~\bibnamefont {Fava}}, \bibinfo {author} {\bibfnamefont {J.}~\bibnamefont {Kurchan}},\ and\ \bibinfo {author} {\bibfnamefont {S.}~\bibnamefont {Pappalardi}},\ }\bibfield  {title} {\bibinfo {title} {Designs via free probability},\ }\href {https://doi.org/10.1103/PhysRevX.15.011031} {\bibfield  {journal} {\bibinfo  {journal} {Phys. Rev. X}\ }\textbf {\bibinfo {volume} {15}},\ \bibinfo {pages} {011031} (\bibinfo {year} {2025})}\BibitemShut {NoStop}%
\bibitem [{\citenamefont {Dowling}\ \emph {et~al.}(2025{\natexlab{c}})\citenamefont {Dowling}, \citenamefont {Nardis}, \citenamefont {Heinrich}, \citenamefont {Turkeshi},\ and\ \citenamefont {Pappalardi}}]{dowling2025freeindep}%
  \BibitemOpen
  \bibfield  {author} {\bibinfo {author} {\bibfnamefont {N.}~\bibnamefont {Dowling}}, \bibinfo {author} {\bibfnamefont {J.~D.}\ \bibnamefont {Nardis}}, \bibinfo {author} {\bibfnamefont {M.}~\bibnamefont {Heinrich}}, \bibinfo {author} {\bibfnamefont {X.}~\bibnamefont {Turkeshi}},\ and\ \bibinfo {author} {\bibfnamefont {S.}~\bibnamefont {Pappalardi}},\ }\href@noop {} {\bibinfo {title} {Free independence and unitary design from random matrix product unitaries}} (\bibinfo {year} {2025}{\natexlab{c}}),\ \Eprint {https://arxiv.org/abs/2508.00051} {arXiv:2508.00051} \BibitemShut {NoStop}%
\bibitem [{\citenamefont {Fritzsch}\ and\ \citenamefont {Claeys}(2026)}]{Claeys2026}%
  \BibitemOpen
  \bibfield  {author} {\bibinfo {author} {\bibfnamefont {F.}~\bibnamefont {Fritzsch}}\ and\ \bibinfo {author} {\bibfnamefont {P.~W.}\ \bibnamefont {Claeys}},\ }\bibfield  {title} {\bibinfo {title} {Free probability in a minimal quantum circuit model},\ }\href {https://doi.org/10.1103/6mpz-p85s} {\bibfield  {journal} {\bibinfo  {journal} {Phys. Rev. X}\ }\textbf {\bibinfo {volume} {16}},\ \bibinfo {pages} {031027} (\bibinfo {year} {2026})}\BibitemShut {NoStop}%
\bibitem [{\citenamefont {Pappalardi}\ \emph {et~al.}(2025)\citenamefont {Pappalardi}, \citenamefont {Fritzsch},\ and\ \citenamefont {Prosen}}]{FritzschLattices2025}%
  \BibitemOpen
  \bibfield  {author} {\bibinfo {author} {\bibfnamefont {S.}~\bibnamefont {Pappalardi}}, \bibinfo {author} {\bibfnamefont {F.}~\bibnamefont {Fritzsch}},\ and\ \bibinfo {author} {\bibfnamefont {T.}~\bibnamefont {Prosen}},\ }\bibfield  {title} {\bibinfo {title} {Full eigenstate thermalization via free cumulants in quantum lattice systems},\ }\href {https://doi.org/10.1103/PhysRevLett.134.140404} {\bibfield  {journal} {\bibinfo  {journal} {Phys. Rev. Lett.}\ }\textbf {\bibinfo {volume} {134}},\ \bibinfo {pages} {140404} (\bibinfo {year} {2025})}\BibitemShut {NoStop}%
\bibitem [{\citenamefont {Jordan}\ and\ \citenamefont {Wigner}(1928)}]{Jordan1928-ko}%
  \BibitemOpen
  \bibfield  {author} {\bibinfo {author} {\bibfnamefont {P.}~\bibnamefont {Jordan}}\ and\ \bibinfo {author} {\bibfnamefont {E.}~\bibnamefont {Wigner}},\ }\bibfield  {title} {\bibinfo {title} {{\"Uber das Paulische \"Aquivalenzverbot}},\ }\href {https://doi.org/10.1007/bf01331938} {\bibfield  {journal} {\bibinfo  {journal} {Eur. Phys. J. A}\ }\textbf {\bibinfo {volume} {47}},\ \bibinfo {pages} {631} (\bibinfo {year} {1928})}\BibitemShut {NoStop}%
\bibitem [{\citenamefont {Jozsa}\ and\ \citenamefont {Miyake}(2008{\natexlab{b}})}]{Jozsa2008}%
  \BibitemOpen
  \bibfield  {author} {\bibinfo {author} {\bibfnamefont {R.}~\bibnamefont {Jozsa}}\ and\ \bibinfo {author} {\bibfnamefont {A.}~\bibnamefont {Miyake}},\ }\bibfield  {title} {\bibinfo {title} {Matchgates and classical simulation of quantum circuits},\ }\href {https://doi.org/10.1098/rspa.2008.0189} {\bibfield  {journal} {\bibinfo  {journal} {Proc. R. Soc. A}\ }\textbf {\bibinfo {volume} {464}},\ \bibinfo {pages} {3089} (\bibinfo {year} {2008}{\natexlab{b}})}\BibitemShut {NoStop}%
\bibitem [{\citenamefont {Collins}\ and\ \citenamefont {Matsumoto}(2009)}]{Collins_2009}%
  \BibitemOpen
  \bibfield  {author} {\bibinfo {author} {\bibfnamefont {B.}~\bibnamefont {Collins}}\ and\ \bibinfo {author} {\bibfnamefont {S.}~\bibnamefont {Matsumoto}},\ }\bibfield  {title} {\bibinfo {title} {{On some properties of orthogonal Weingarten functions}},\ }\href {https://doi.org/10.1063/1.3251304} {\bibfield  {journal} {\bibinfo  {journal} {J. Math. Phys.}\ }\textbf {\bibinfo {volume} {50}},\ \bibinfo {pages} {113516} (\bibinfo {year} {2009})}\BibitemShut {NoStop}%
\bibitem [{\citenamefont {Gross}\ \emph {et~al.}(2007)\citenamefont {Gross}, \citenamefont {Audenaert},\ and\ \citenamefont {Eisert}}]{Designs}%
  \BibitemOpen
  \bibfield  {author} {\bibinfo {author} {\bibfnamefont {D.}~\bibnamefont {Gross}}, \bibinfo {author} {\bibfnamefont {K.}~\bibnamefont {Audenaert}},\ and\ \bibinfo {author} {\bibfnamefont {J.}~\bibnamefont {Eisert}},\ }\bibfield  {title} {\bibinfo {title} {Evenly distributed unitaries: On the structure of unitary designs},\ }\href {https://doi.org/10.1063/1.2716992} {\bibfield  {journal} {\bibinfo  {journal} {J. Math. Phys.}\ }\textbf {\bibinfo {volume} {48}},\ \bibinfo {pages} {052104} (\bibinfo {year} {2007})}\BibitemShut {NoStop}%
\bibitem [{\citenamefont {Miller}\ and\ \citenamefont {Harrison}(2011)}]{miller2011exactenumerationsamplingmatrices}%
  \BibitemOpen
  \bibfield  {author} {\bibinfo {author} {\bibfnamefont {J.~W.}\ \bibnamefont {Miller}}\ and\ \bibinfo {author} {\bibfnamefont {M.~T.}\ \bibnamefont {Harrison}},\ }\href {https://arxiv.org/abs/1104.0323} {\bibinfo {title} {Exact enumeration and sampling of matrices with specified margins}} (\bibinfo {year} {2011}),\ \Eprint {https://arxiv.org/abs/1104.0323} {arXiv:1104.0323} \BibitemShut {NoStop}%
\bibitem [{\citenamefont {Canfield}\ \emph {et~al.}(2008)\citenamefont {Canfield}, \citenamefont {Greenhill},\ and\ \citenamefont {McKay}}]{CANFIELD200832}%
  \BibitemOpen
  \bibfield  {author} {\bibinfo {author} {\bibfnamefont {E.~R.}\ \bibnamefont {Canfield}}, \bibinfo {author} {\bibfnamefont {C.}~\bibnamefont {Greenhill}},\ and\ \bibinfo {author} {\bibfnamefont {B.~D.}\ \bibnamefont {McKay}},\ }\bibfield  {title} {\bibinfo {title} {Asymptotic enumeration of dense 0-1 matrices with specified line sums},\ }\href {https://doi.org/https://doi.org/10.1016/j.jcta.2007.03.009} {\bibfield  {journal} {\bibinfo  {journal} {J. Comb. Th. A}\ }\textbf {\bibinfo {volume} {115}},\ \bibinfo {pages} {32} (\bibinfo {year} {2008})}\BibitemShut {NoStop}%
\bibitem [{\citenamefont {Alba}\ \emph {et~al.}(2019)\citenamefont {Alba}, \citenamefont {Dubail},\ and\ \citenamefont {Medenjak}}]{Alba2019}%
  \BibitemOpen
  \bibfield  {author} {\bibinfo {author} {\bibfnamefont {V.}~\bibnamefont {Alba}}, \bibinfo {author} {\bibfnamefont {J.}~\bibnamefont {Dubail}},\ and\ \bibinfo {author} {\bibfnamefont {M.}~\bibnamefont {Medenjak}},\ }\bibfield  {title} {\bibinfo {title} {Operator entanglement in interacting integrable quantum systems: The case of the rule 54 chain},\ }\href {https://doi.org/10.1103/PhysRevLett.122.250603} {\bibfield  {journal} {\bibinfo  {journal} {Phys. Rev. Lett.}\ }\textbf {\bibinfo {volume} {122}},\ \bibinfo {pages} {250603} (\bibinfo {year} {2019})}\BibitemShut {NoStop}%
\bibitem [{\citenamefont {Bertini}\ \emph {et~al.}(2020{\natexlab{a}})\citenamefont {Bertini}, \citenamefont {Kos},\ and\ \citenamefont {Prosen}}]{Kos2020II}%
  \BibitemOpen
  \bibfield  {author} {\bibinfo {author} {\bibfnamefont {B.}~\bibnamefont {Bertini}}, \bibinfo {author} {\bibfnamefont {P.}~\bibnamefont {Kos}},\ and\ \bibinfo {author} {\bibfnamefont {T.}~\bibnamefont {Prosen}},\ }\bibfield  {title} {\bibinfo {title} {{Operator Entanglement in Local Quantum Circuits II: Solitons in Chains of Qubits}},\ }\href {https://doi.org/10.21468/SciPostPhys.8.4.068} {\bibfield  {journal} {\bibinfo  {journal} {SciPost Phys.}\ }\textbf {\bibinfo {volume} {8}},\ \bibinfo {pages} {068} (\bibinfo {year} {2020}{\natexlab{a}})}\BibitemShut {NoStop}%
\bibitem [{\citenamefont {Alba}(2021)}]{Alba2021}%
  \BibitemOpen
  \bibfield  {author} {\bibinfo {author} {\bibfnamefont {V.}~\bibnamefont {Alba}},\ }\bibfield  {title} {\bibinfo {title} {Diffusion and operator entanglement spreading},\ }\href {https://doi.org/10.1103/PhysRevB.104.094410} {\bibfield  {journal} {\bibinfo  {journal} {Phys. Rev. B}\ }\textbf {\bibinfo {volume} {104}},\ \bibinfo {pages} {094410} (\bibinfo {year} {2021})}\BibitemShut {NoStop}%
\bibitem [{\citenamefont {Murciano}\ \emph {et~al.}(2024)\citenamefont {Murciano}, \citenamefont {Dubail},\ and\ \citenamefont {Calabrese}}]{Murciano_2024}%
  \BibitemOpen
  \bibfield  {author} {\bibinfo {author} {\bibfnamefont {S.}~\bibnamefont {Murciano}}, \bibinfo {author} {\bibfnamefont {J.}~\bibnamefont {Dubail}},\ and\ \bibinfo {author} {\bibfnamefont {P.}~\bibnamefont {Calabrese}},\ }\bibfield  {title} {\bibinfo {title} {More on symmetry resolved operator entanglement},\ }\href {https://doi.org/10.1088/1751-8121/ad30d1} {\bibfield  {journal} {\bibinfo  {journal} {J. Phys. A}\ }\textbf {\bibinfo {volume} {57}},\ \bibinfo {pages} {145002} (\bibinfo {year} {2024})}\BibitemShut {NoStop}%
\bibitem [{\citenamefont {Jonay}\ \emph {et~al.}(2018)\citenamefont {Jonay}, \citenamefont {Huse},\ and\ \citenamefont {Nahum}}]{Jonay2018}%
  \BibitemOpen
  \bibfield  {author} {\bibinfo {author} {\bibfnamefont {C.}~\bibnamefont {Jonay}}, \bibinfo {author} {\bibfnamefont {D.~A.}\ \bibnamefont {Huse}},\ and\ \bibinfo {author} {\bibfnamefont {A.}~\bibnamefont {Nahum}},\ }\href@noop {} {\bibinfo {title} {{Coarse-grained dynamics of operator and state entanglement}}} (\bibinfo {year} {2018}),\ \Eprint {https://arxiv.org/abs/1803.00089} {arXiv:1803.00089} \BibitemShut {NoStop}%
\bibitem [{\citenamefont {Bertini}\ \emph {et~al.}(2020{\natexlab{b}})\citenamefont {Bertini}, \citenamefont {Kos},\ and\ \citenamefont {Prosen}}]{Kos2020}%
  \BibitemOpen
  \bibfield  {author} {\bibinfo {author} {\bibfnamefont {B.}~\bibnamefont {Bertini}}, \bibinfo {author} {\bibfnamefont {P.}~\bibnamefont {Kos}},\ and\ \bibinfo {author} {\bibfnamefont {T.}~\bibnamefont {Prosen}},\ }\bibfield  {title} {\bibinfo {title} {{Operator entanglement in local quantum circuits I: Chaotic dual-unitary circuits}},\ }\href {https://doi.org/10.21468/SciPostPhys.8.4.067} {\bibfield  {journal} {\bibinfo  {journal} {SciPost Phys.}\ }\textbf {\bibinfo {volume} {8}},\ \bibinfo {pages} {067} (\bibinfo {year} {2020}{\natexlab{b}})}\BibitemShut {NoStop}%
\bibitem [{\citenamefont {Bertini}\ \emph {et~al.}(2026)\citenamefont {Bertini}, \citenamefont {Klobas}, \citenamefont {Kos},\ and\ \citenamefont {Malz}}]{Bertini_2026}%
  \BibitemOpen
  \bibfield  {author} {\bibinfo {author} {\bibfnamefont {B.}~\bibnamefont {Bertini}}, \bibinfo {author} {\bibfnamefont {K.}~\bibnamefont {Klobas}}, \bibinfo {author} {\bibfnamefont {P.}~\bibnamefont {Kos}},\ and\ \bibinfo {author} {\bibfnamefont {D.}~\bibnamefont {Malz}},\ }\bibfield  {title} {\bibinfo {title} {Random permutation circuits beyond qubits are quantum chaotic},\ }\href {https://doi.org/10.1103/ql9x-6kzj} {\bibfield  {journal} {\bibinfo  {journal} {Phys. Rev. B}\ }\textbf {\bibinfo {volume} {113}},\ \bibinfo {pages} {L100302} (\bibinfo {year} {2026})}\BibitemShut {NoStop}%
\bibitem [{\citenamefont {Jacoby}\ and\ \citenamefont {Gopalakrishnan}(2026)}]{Jacoby_2026}%
  \BibitemOpen
  \bibfield  {author} {\bibinfo {author} {\bibfnamefont {J.~A.}\ \bibnamefont {Jacoby}}\ and\ \bibinfo {author} {\bibfnamefont {S.}~\bibnamefont {Gopalakrishnan}},\ }\bibfield  {title} {\bibinfo {title} {Long-time limits of local operator entanglement in interacting integrable models},\ }\href {https://doi.org/10.21468/scipostphys.20.6.162} {\bibfield  {journal} {\bibinfo  {journal} {SciPost Phys.}\ }\textbf {\bibinfo {volume} {20}},\ \bibinfo {pages} {162} (\bibinfo {year} {2026})}\BibitemShut {NoStop}%
\bibitem [{\citenamefont {Prosen}\ and\ \citenamefont {Znidari\ifmmode~\check{c}\else \v{c}\fi{}}(2007)}]{Prosen2007}%
  \BibitemOpen
  \bibfield  {author} {\bibinfo {author} {\bibfnamefont {T.}~\bibnamefont {Prosen}}\ and\ \bibinfo {author} {\bibfnamefont {M.}~\bibnamefont {Znidari\ifmmode~\check{c}\else \v{c}\fi{}}},\ }\bibfield  {title} {\bibinfo {title} {Is the efficiency of classical simulations of quantum dynamics related to integrability?},\ }\href {https://doi.org/10.1103/PhysRevE.75.015202} {\bibfield  {journal} {\bibinfo  {journal} {Phys. Rev. E}\ }\textbf {\bibinfo {volume} {75}},\ \bibinfo {pages} {015202(R)} (\bibinfo {year} {2007})}\BibitemShut {NoStop}%
\bibitem [{\citenamefont {Hubig}\ \emph {et~al.}(2017)\citenamefont {Hubig}, \citenamefont {McCulloch},\ and\ \citenamefont {Schollwöck}}]{Hubig_2017}%
  \BibitemOpen
  \bibfield  {author} {\bibinfo {author} {\bibfnamefont {C.}~\bibnamefont {Hubig}}, \bibinfo {author} {\bibfnamefont {I.~P.}\ \bibnamefont {McCulloch}},\ and\ \bibinfo {author} {\bibfnamefont {U.}~\bibnamefont {Schollwöck}},\ }\bibfield  {title} {\bibinfo {title} {Generic construction of efficient matrix product operators},\ }\href {https://doi.org/10.1103/physrevb.95.035129} {\bibfield  {journal} {\bibinfo  {journal} {Phys. Rev. B}\ }\textbf {\bibinfo {volume} {95}},\ \bibinfo {pages} {035129} (\bibinfo {year} {2017})}\BibitemShut {NoStop}%
\bibitem [{\citenamefont {Bianchi}\ \emph {et~al.}(2021)\citenamefont {Bianchi}, \citenamefont {Hackl},\ and\ \citenamefont {Kieburg}}]{PhysRevB.103.L241118}%
  \BibitemOpen
  \bibfield  {author} {\bibinfo {author} {\bibfnamefont {E.}~\bibnamefont {Bianchi}}, \bibinfo {author} {\bibfnamefont {L.}~\bibnamefont {Hackl}},\ and\ \bibinfo {author} {\bibfnamefont {M.}~\bibnamefont {Kieburg}},\ }\bibfield  {title} {\bibinfo {title} {Page curve for fermionic gaussian states},\ }\href {https://doi.org/10.1103/PhysRevB.103.L241118} {\bibfield  {journal} {\bibinfo  {journal} {Phys. Rev. B}\ }\textbf {\bibinfo {volume} {103}},\ \bibinfo {pages} {L241118} (\bibinfo {year} {2021})}\BibitemShut {NoStop}%
\bibitem [{\citenamefont {Bianchi}\ \emph {et~al.}(2022)\citenamefont {Bianchi}, \citenamefont {Hackl}, \citenamefont {Kieburg}, \citenamefont {Rigol},\ and\ \citenamefont {Vidmar}}]{Bianchi_Hackl_Kieburg_Rigol_Vidmar_2022}%
  \BibitemOpen
  \bibfield  {author} {\bibinfo {author} {\bibfnamefont {E.}~\bibnamefont {Bianchi}}, \bibinfo {author} {\bibfnamefont {L.}~\bibnamefont {Hackl}}, \bibinfo {author} {\bibfnamefont {M.}~\bibnamefont {Kieburg}}, \bibinfo {author} {\bibfnamefont {M.}~\bibnamefont {Rigol}},\ and\ \bibinfo {author} {\bibfnamefont {L.}~\bibnamefont {Vidmar}},\ }\bibfield  {title} {\bibinfo {title} {Volume-law entanglement entropy of typical pure quantum states},\ }\href {https://doi.org/10.1103/PRXQuantum.3.030201} {\bibfield  {journal} {\bibinfo  {journal} {PRX Quantum}\ }\textbf {\bibinfo {volume} {3}},\ \bibinfo {pages} {030201} (\bibinfo {year} {2022})}\BibitemShut {NoStop}%
\bibitem [{\citenamefont {Bittel}\ \emph {et~al.}(2026)\citenamefont {Bittel}, \citenamefont {Eisert}, \citenamefont {Leone}, \citenamefont {Mele},\ and\ \citenamefont {Oliviero}}]{Bittel_2026}%
  \BibitemOpen
  \bibfield  {author} {\bibinfo {author} {\bibfnamefont {L.}~\bibnamefont {Bittel}}, \bibinfo {author} {\bibfnamefont {J.}~\bibnamefont {Eisert}}, \bibinfo {author} {\bibfnamefont {L.}~\bibnamefont {Leone}}, \bibinfo {author} {\bibfnamefont {A.~A.}\ \bibnamefont {Mele}},\ and\ \bibinfo {author} {\bibfnamefont {S.~F.}\ \bibnamefont {Oliviero}},\ }\bibfield  {title} {\bibinfo {title} {{A complete theory of the Clifford commutant}},\ }\href {https://doi.org/10.22331/q-2026-07-22-2171} {\bibfield  {journal} {\bibinfo  {journal} {Quantum}\ }\textbf {\bibinfo {volume} {10}},\ \bibinfo {pages} {2171} (\bibinfo {year} {2026})}\BibitemShut {NoStop}%
\bibitem [{\citenamefont {Shenker}\ and\ \citenamefont {Stanford}(2014)}]{Shenker_Stanford_2014}%
  \BibitemOpen
  \bibfield  {author} {\bibinfo {author} {\bibfnamefont {S.~H.}\ \bibnamefont {Shenker}}\ and\ \bibinfo {author} {\bibfnamefont {D.}~\bibnamefont {Stanford}},\ }\bibfield  {title} {\bibinfo {title} {Black holes and the butterfly effect},\ }\href {https://doi.org/10.1007/JHEP03(2014)067} {\bibfield  {journal} {\bibinfo  {journal} {J. High Energy Phys.}\ }\textbf {\bibinfo {volume} {2014}}\bibinfo  {number} { (3)},\ \bibinfo {pages} {67}}\BibitemShut {NoStop}%
\bibitem [{\citenamefont {Dowling}\ \emph {et~al.}(2023)\citenamefont {Dowling}, \citenamefont {Kos},\ and\ \citenamefont {Modi}}]{dowling2023scrambling}%
  \BibitemOpen
\bibfield  {number} {  }\bibfield  {author} {\bibinfo {author} {\bibfnamefont {N.}~\bibnamefont {Dowling}}, \bibinfo {author} {\bibfnamefont {P.}~\bibnamefont {Kos}},\ and\ \bibinfo {author} {\bibfnamefont {K.}~\bibnamefont {Modi}},\ }\bibfield  {title} {\bibinfo {title} {Scrambling {is} {necessary} but {not} {sufficient} for {chaos}},\ }\href {https://doi.org/10.1103/PhysRevLett.131.180403} {\bibfield  {journal} {\bibinfo  {journal} {Phys. Rev. Lett.}\ }\textbf {\bibinfo {volume} {131}},\ \bibinfo {pages} {180403} (\bibinfo {year} {2023})}\BibitemShut {NoStop}%
\bibitem [{\citenamefont {Dowling}(2025)}]{Dowling2025thesis}%
  \BibitemOpen
  \bibfield  {author} {\bibinfo {author} {\bibfnamefont {N.}~\bibnamefont {Dowling}},\ }\emph {\bibinfo {title} {{Multi-time structures from many-body dynamics}}},\ \href {https://doi.org/10.26180/29133272.v1} {Ph.D. thesis},\ \bibinfo  {school} {Monash University}, \bibinfo {address} {Melbourne, Australia} (\bibinfo {year} {2025})\BibitemShut {NoStop}%
\bibitem [{\citenamefont {Voiculescu}(1991)}]{Voiculescu1991}%
  \BibitemOpen
  \bibfield  {author} {\bibinfo {author} {\bibfnamefont {D.}~\bibnamefont {Voiculescu}},\ }\bibfield  {title} {\bibinfo {title} {Limit laws for random matrices and free products},\ }\href {https://doi.org/10.1007/bf01245072} {\bibfield  {journal} {\bibinfo  {journal} {Invent. Math.}\ }\textbf {\bibinfo {volume} {104}},\ \bibinfo {pages} {201–220} (\bibinfo {year} {1991})}\BibitemShut {NoStop}%
\bibitem [{\citenamefont {Vallini}\ and\ \citenamefont {Pappalardi}(2026)}]{Vallini2026longtimefreenessin}%
  \BibitemOpen
  \bibfield  {author} {\bibinfo {author} {\bibfnamefont {E.}~\bibnamefont {Vallini}}\ and\ \bibinfo {author} {\bibfnamefont {S.}~\bibnamefont {Pappalardi}},\ }\bibfield  {title} {\bibinfo {title} {Long-time {F}reeness in the {K}icked {T}op},\ }\href {https://doi.org/10.22331/q-2026-06-08-2129} {\bibfield  {journal} {\bibinfo  {journal} {{Quantum}}\ }\textbf {\bibinfo {volume} {10}},\ \bibinfo {pages} {2129} (\bibinfo {year} {2026})}\BibitemShut {NoStop}%
\bibitem [{\citenamefont {Popescu}\ \emph {et~al.}(2006)\citenamefont {Popescu}, \citenamefont {Short},\ and\ \citenamefont {Winter}}]{Popescu2006}%
  \BibitemOpen
  \bibfield  {author} {\bibinfo {author} {\bibfnamefont {S.}~\bibnamefont {Popescu}}, \bibinfo {author} {\bibfnamefont {A.~J.}\ \bibnamefont {Short}},\ and\ \bibinfo {author} {\bibfnamefont {A.}~\bibnamefont {Winter}},\ }\bibfield  {title} {\bibinfo {title} {Entanglement and the foundations of statistical mechanics},\ }\href {https://doi.org/10.1038/nphys444} {\bibfield  {journal} {\bibinfo  {journal} {Nat. Phys.}\ }\textbf {\bibinfo {volume} {2}},\ \bibinfo {pages} {754} (\bibinfo {year} {2006})}\BibitemShut {NoStop}%
\bibitem [{\citenamefont {Debertolis}(2025)}]{debertolis2025naturalsuper}%
  \BibitemOpen
  \bibfield  {author} {\bibinfo {author} {\bibfnamefont {M.}~\bibnamefont {Debertolis}},\ }\href@noop {} {\bibinfo {title} {Natural super-orbitals representation of many-body operators}} (\bibinfo {year} {2025}),\ \Eprint {https://arxiv.org/abs/2507.10690} {arXiv:2507.10690} \BibitemShut {NoStop}%
\bibitem [{\citenamefont {Sierant}\ \emph {et~al.}(2026{\natexlab{b}})\citenamefont {Sierant}, \citenamefont {Stornati},\ and\ \citenamefont {Turkeshi}}]{Sierant_2026}%
  \BibitemOpen
  \bibfield  {author} {\bibinfo {author} {\bibfnamefont {P.}~\bibnamefont {Sierant}}, \bibinfo {author} {\bibfnamefont {P.}~\bibnamefont {Stornati}},\ and\ \bibinfo {author} {\bibfnamefont {X.}~\bibnamefont {Turkeshi}},\ }\bibfield  {title} {\bibinfo {title} {Fermionic magic resources of quantum many-body systems},\ }\bibfield  {journal} {\bibinfo  {journal} {PRX Quantum}\ }\textbf {\bibinfo {volume} {7}},\ \href {https://doi.org/10.1103/3yx4-1j27} {10.1103/3yx4-1j27} (\bibinfo {year} {2026}{\natexlab{b}})\BibitemShut {NoStop}%
\bibitem [{\citenamefont {Eisert}(2021)}]{Eisert2021EntanglingPower}%
  \BibitemOpen
  \bibfield  {author} {\bibinfo {author} {\bibfnamefont {J.}~\bibnamefont {Eisert}},\ }\bibfield  {title} {\bibinfo {title} {Entangling power and quantum circuit complexity},\ }\href {https://doi.org/10.1103/PhysRevLett.127.020501} {\bibfield  {journal} {\bibinfo  {journal} {Phys. Rev. Lett.}\ }\textbf {\bibinfo {volume} {127}},\ \bibinfo {pages} {020501} (\bibinfo {year} {2021})},\ \Eprint {https://arxiv.org/abs/2104.03332} {arXiv:2104.03332 [quant-ph]} \BibitemShut {NoStop}%
\bibitem [{\citenamefont {Miller}\ \emph {et~al.}(2025)\citenamefont {Miller}, \citenamefont {Favre}, \citenamefont {Holmes}, \citenamefont {Özlem Salehi}, \citenamefont {Chakraborty}, \citenamefont {Nykänen}, \citenamefont {Zimborás}, \citenamefont {Glos},\ and\ \citenamefont {García-Pérez}}]{miller2025simulation}%
  \BibitemOpen
  \bibfield  {author} {\bibinfo {author} {\bibfnamefont {A.}~\bibnamefont {Miller}}, \bibinfo {author} {\bibfnamefont {J.}~\bibnamefont {Favre}}, \bibinfo {author} {\bibfnamefont {Z.}~\bibnamefont {Holmes}}, \bibinfo {author} {\bibnamefont {Özlem Salehi}}, \bibinfo {author} {\bibfnamefont {R.}~\bibnamefont {Chakraborty}}, \bibinfo {author} {\bibfnamefont {A.}~\bibnamefont {Nykänen}}, \bibinfo {author} {\bibfnamefont {Z.}~\bibnamefont {Zimborás}}, \bibinfo {author} {\bibfnamefont {A.}~\bibnamefont {Glos}},\ and\ \bibinfo {author} {\bibfnamefont {G.}~\bibnamefont {García-Pérez}},\ }\href@noop {} {\bibinfo {title} {Simulation of fermionic circuits using {Majorana} propagation}} (\bibinfo {year} {2025}),\ \Eprint {https://arxiv.org/abs/2503.18939} {arXiv:2503.18939} \BibitemShut {NoStop}%
\bibitem [{\citenamefont {Nunez-Fernandez}\ \emph {et~al.}(2025)\citenamefont {Nunez-Fernandez}, \citenamefont {Debertolis},\ and\ \citenamefont {Florens}}]{2025resolvingspacetimestructuresquantum}%
  \BibitemOpen
  \bibfield  {author} {\bibinfo {author} {\bibfnamefont {Y.}~\bibnamefont {Nunez-Fernandez}}, \bibinfo {author} {\bibfnamefont {M.}~\bibnamefont {Debertolis}},\ and\ \bibinfo {author} {\bibfnamefont {S.}~\bibnamefont {Florens}},\ }\href@noop {} {\bibinfo {title} {Resolving space-time structures of quantum impurities with a numerically exact few-body algorithm}} (\bibinfo {year} {2025}),\ \Eprint {https://arxiv.org/abs/2503.13706} {arXiv:2503.13706} \BibitemShut {NoStop}%
\bibitem [{\citenamefont {Klich}(2014)}]{Klich_2014}%
  \BibitemOpen
  \bibfield  {author} {\bibinfo {author} {\bibfnamefont {I.}~\bibnamefont {Klich}},\ }\bibfield  {title} {\bibinfo {title} {A note on the full counting statistics of paired fermions},\ }\href {https://doi.org/10.1088/1742-5468/2014/11/p11006} {\bibfield  {journal} {\bibinfo  {journal} {Journal of Statistical Mechanics: Theory and Experiment}\ }\textbf {\bibinfo {volume} {2014}},\ \bibinfo {pages} {P11006} (\bibinfo {year} {2014})}\BibitemShut {NoStop}%
\bibitem [{\citenamefont {Chang}\ \emph {et~al.}(2026)\citenamefont {Chang}, \citenamefont {Krumtünger}, \citenamefont {Larocca},\ and\ \citenamefont {West}}]{chang2026classicalshadowssymmetricspaces}%
  \BibitemOpen
  \bibfield  {author} {\bibinfo {author} {\bibfnamefont {R.}~\bibnamefont {Chang}}, \bibinfo {author} {\bibfnamefont {M.}~\bibnamefont {Krumtünger}}, \bibinfo {author} {\bibfnamefont {M.}~\bibnamefont {Larocca}},\ and\ \bibinfo {author} {\bibfnamefont {M.}~\bibnamefont {West}},\ }\href@noop {} {\bibinfo {title} {Classical shadows over symmetric spaces}} (\bibinfo {year} {2026}),\ \Eprint {https://arxiv.org/abs/2605.05518} {arXiv:2605.05518 [quant-ph]} \BibitemShut {NoStop}%
\bibitem [{\citenamefont {S{\"u}nderhauf}\ \emph {et~al.}(2019)\citenamefont {S{\"u}nderhauf}, \citenamefont {Piroli}, \citenamefont {Qi}, \citenamefont {Schuch},\ and\ \citenamefont {Cirac}}]{Piroli2019}%
  \BibitemOpen
  \bibfield  {author} {\bibinfo {author} {\bibfnamefont {C.}~\bibnamefont {S{\"u}nderhauf}}, \bibinfo {author} {\bibfnamefont {L.}~\bibnamefont {Piroli}}, \bibinfo {author} {\bibfnamefont {X.-L.}\ \bibnamefont {Qi}}, \bibinfo {author} {\bibfnamefont {N.}~\bibnamefont {Schuch}},\ and\ \bibinfo {author} {\bibfnamefont {J.~I.}\ \bibnamefont {Cirac}},\ }\bibfield  {title} {\bibinfo {title} {Quantum chaos in the brownian syk model with large finite n : Otocs and tripartite information},\ }\href {https://doi.org/10.1007/JHEP11(2019)038} {\bibfield  {journal} {\bibinfo  {journal} {Journal of High Energy Physics}\ }\textbf {\bibinfo {volume} {2019}},\ \bibinfo {pages} {38} (\bibinfo {year} {2019})}\BibitemShut {NoStop}%
\bibitem [{\citenamefont {Usoltcev}\ \emph {et~al.}(2026)\citenamefont {Usoltcev}, \citenamefont {Wille}, \citenamefont {Eisert},\ and\ \citenamefont {Altland}}]{AltlandMatchgates}%
  \BibitemOpen
  \bibfield  {author} {\bibinfo {author} {\bibfnamefont {M.}~\bibnamefont {Usoltcev}}, \bibinfo {author} {\bibfnamefont {C.}~\bibnamefont {Wille}}, \bibinfo {author} {\bibfnamefont {J.}~\bibnamefont {Eisert}},\ and\ \bibinfo {author} {\bibfnamefont {A.}~\bibnamefont {Altland}},\ }\href@noop {} {\bibinfo {title} {Continuum field theory of matchgate tensor network ensembles}} (\bibinfo {year} {2026}),\ \Eprint {https://arxiv.org/abs/arXiv:2603.06202} {arXiv:arXiv:2603.06202} \BibitemShut {NoStop}%
\end{thebibliography}
